\DeclareFontShape{T1}{lmr}{b}{sc}{<->ssub*cmr/bx/sc}{}
\DeclareFontShape{T1}{lmr}{bx}{sc}{<->ssub*cmr/bx/sc}{}
\setlist{  
  listparindent=\parindent,
  parsep=0pt,
}
\crefname{equation}{}{}
\tikzset{v/.style={circle, draw, minimum size=3em, scale=1}}
\tikzset{shade/.style={fill=gray}}
\tikzset{farrow/.style={-{Latex},thick}}
\tikzset{garrow/.style={-{Latex},thick,dashed}}
\tikzset{circlearrow/.style={farrow,decorate,decoration={zigzag,amplitude=0.1em}}}
\DeclareRobustCommand\bfseries{%
  \not@math@alphabet\bfseries\mathbf
  \fontseries\bfdefault\selectfont\boldmath}
\theoremstyle{definition}
\declaretheorem[name=Theorem, numberwithin=section]{theorem}
\declaretheorem[name=Lemma,sibling=theorem]{lemma}
\declaretheorem[name=Corollary, sibling=theorem]{corollary}
\declaretheorem[name=Claim,sibling=theorem]{claim}
\declaretheorem[name=Definition, sibling=theorem, style=definition]{definition}
\declaretheorem[name=Remark,sibling=theorem]{remark}
\declaretheorem[name=Example,sibling=theorem]{example}
\newenvironment{claimproof}[1][\proofname]
{\renewcommand\qedsymbol{$\diamond$}\proof[#1]}
{\endproof}
\newcommand{\mylabel}[2]{\begingroup
\def\@itemlabel{\textbf{(#1)}}
\def\@currentlabel{\textbf{(#1)}}
\phantomsection\label{#2}
\hfill{\textbf{(#1)}}
\endgroup
}
\newcommand{\poly}{\textrm{\upshape poly}}
\newcommand{\polylog}{\textrm{\upshape polylog}}
\newcommand{\newclass}[2]{\newcommand{#1}{{\text{\upshape\sffamily #2}}\xspace}}
\renewcommand{\P}{{\text{\upshape\sffamily P}}\xspace}
\newclass{\NP}{NP}
\newclass{\AC}{AC}
\newclass{\coNP}{coNP}
\newclass{\FP}{FP}
\newclass{\TFNP}{TFNP}
\newclass{\TFAP}{TFAP}
\newclass{\TFSigma}{TF$\Sigma$}
\newclass{\PLS}{PLS}
\newclass{\PPA}{PPA}
\newclass{\PPAD}{PPAD}
\newclass{\PPADS}{PPADS}
\newclass{\PPP}{PPP}
\newclass{\PAP}{PAP}
\newclass{\SAP}{SAP}
\newclass{\PWPP}{PWPP}
\newclass{\CLS}{CLS}
\newclass{\EOPL}{EOPL}
\newclass{\SOPL}{SOPL}
\newclass{\UEOPL}{UEOPL}
\newclass{\RAMSEY}{RAMSEY}
\newclass{\BIRAM}{BIRAMSEY}
\newclass{\cA}{A}
\newclass{\cB}{B}
\newclass{\BPP}{BPP}
\newclass{\PLC}{PLC}
\newclass{\UPLC}{UPLC}
\newclass{\PiH}{PiH}
\newclass{\rwPHP}{rwPHP}
\newclass{\ZPP}{ZPP}
\newclass{\Lossy}{LOSSY}
\newclass{\E}{E}
\newclass{\APEPP}{APEPP}
\newcommand{\newprob}[2]{\newcommand{#1}{{\text{\upshape\scshape #2}}\xspace}}
\newprob{\leaf}{Leaf}
\newprob{\eol}{EoL}
\newprob{\eolLong}{End-of-Line}
\newprob{\sol}{SoL}
\newprob{\solLong}{Sink-of-Line}
\newprob{\iter}{Iter}
\newprob{\sod}{SoD}
\newprob{\sodLong}{Sink-of-Dag}
\newprob{\kkt}{KKT}
\newprob{\pA}{A}
\newprob{\pB}{B}
\newprob{\eopl}{EoPL}
\newprob{\eoplLong}{End-of-Potential-Line}
\newprob{\ueoplLong}{Unique-EoPL}
\newprob{\ueopl}{UEoPL}
\newprob{\eoml}{EoML}
\newprob{\eomlLong}{End-of-Metered-Line}
\newprob{\sopl}{SoPL}
\newprob{\soplLong}{Sink-of-Potential-Line}
\newprob{\pigeon}{Pigeon}
\newprob{\wpigeon}{WeakPigeon}
\newprob{\lonely}{Lonely}
\newprob{\reversiblepigeon}{RPigeon}
\newprob{\reversiblepigeonlong}{Reversible-Pigeon}
\newprob{\factoring}{Factoring}
\newprob{\nash}{Nash}
\newprob{\Or}{Or}
\newprob{\ramsey}{Ramsey}
\newprob{\biramsey}{BiRamsey}
\newprob{\sunflower}{Naive Sunflower}
\newprob{\wsol}{wSoL}
\newprob{\wsolLong}{weak-Sink-of-Line}
\newprob{\Search}{Search}
\newprob{\emptychild}{Empty-Child}
\newprob{\ECnR}{Empty-Child-no-Root}
\newprob{\ECwH}{Empty-Child-w-Height}
\newprob{\RECwH}{Relaxed-Empty-Child-w-Height}
\newprob{\REC}{Relaxed-Empty-Child}
\newprob{\pEC}{Promise-Empty-Child}
\newprob{\ppEC}{Path-Promise-Empty-Child}
\newprob{\ILP}{Nephew}
\newprob{\ILPwithInverse}{Nephew-w-Inverse}
\newprob{\ILPdegree}{Low-Degree-Nephew}
\newprob{\NDILP}{ND-Nephew}
\newprob{\NDILPwS}{ND-Nephew-w-Son}
\newprob{\BTreeLeaf}{Leaf-of-Rooted-Tree}
\newprob{\lossycode}{Lossy-Code}
\newprob{\lossycodep}{Bij-Lossy-Code}
\newprob{\BadTC}{Bad-2-Coloring}
\newprob{\wBadTC}{Weak-B2C}
\newprob{\AMGMLC}{AMGM-LC}
\newprob{\BEC}{Binary-Empty-Child}
\newprob{\DLO}{Dense-Linear-Ordering}
\newprob{\PHP}{PHP}
\newprob{\Avoid}{Avoid}
\newprob{\weopl}{Weak-End-of-Potential-Line}
\newcommand{\Enc}{\mathsf{Enc}}
\newcommand{\Dec}{\mathsf{Dec}}
\newcommand{\newdt}[2]{\newcommand{#1}{{\text{\upshape\ttfamily #2}}\xspace}}
\newdt{\FindChildren}{Find-Children}
\newdt{\FindChildrenAndParent}{Find-Children-and-Parent}
\newdt{\FindPair}{Find-Pair}
\newdt{\EndOfPath}{End-of-Path}
\newdt{\LeafOfPath}{Leaf-of-Path}
\newdt{\emptychildreduction}{Reduce-EC-to-Nephew}
\newdt{\emptychildreductionInv}{Reduce-EC-to-NwI}
\newdt{\Checksol}{CheckSol}
\newdt{\Traverse}{Traverse}
\newclass{\wSA}{wSA}
\newcommand{\APC}{\mathsf{APC}}
\def\calU{\mathcal{U}}
\def\med{\mathsf{med}}
\newcommand{\Jerabek}{Je\v{r}\'{a}bek\xspace}
\newcommand{\LL}{\mathtt{L}}
\newcommand{\RR}{\mathtt{R}}
\newcommand{\class}[1]{\textup{\textsf{#1}}\xspace}
\newcommand{\TFZPP}{\class{TFZPP}}
\newcommand{\N}{\mathbb{N}}
\definecolor{betterYellow}{RGB}{255,255,0}
\newcommand{\EmptyChilditem}{
    \item[$\emptychild$.] Given a set $V$ of vertices and three functions $F, L, R: V \rightarrow V$, where $F(u)$ is the father of $u$, and $L(u), R(u)$ are the left and the right child of $u$ respectively, a solution is one of the following.
    \begin{description}
        \item[s1.] $u \in V$ such that $F(L(u)) \neq u$ or $F(R(u)) \neq u$ or $L(u) = R(u) \neq u$; \hfill (Empty child)
        \item[s2.] $1$, if $L(1) = 1$ or $R(1) = 1$ or $F(1)\ne 1$. \hfill (Wrong root)
    \end{description}
}
\newcommand{\Nephewitem}{
\item[$\ILP$.] Given a set $V$ of vertices and two functions $f : V \rightarrow V$ and $g : V \rightarrow V$. Think of $f(v)$ as the \emph{father} of $v$ and $g(v)$ as the \emph{nephew} of $v$. A solution is one of the following.
\begin{description}

    \item[s1.] $v \in V$ such that $f(f(g(v))) \not= f(v)$ \hfill (your nephew's grandparent is not your parent)

    \item[s2.] $v \in V$ such that $f(g(v)) = v$\hfill (you are your nephew's parent)
\end{description}
}
\newlength{\nodeline}
\newlength{\arrowline}
\definecolor{color_gadget_PHP}{RGB}{219, 48, 122}
\colorlet{color_gadget_PHP_inner}{color_gadget_PHP!10!white}
\colorlet{color_gadget_PHP_label}{color_gadget_PHP!70!black}
\definecolor{color_gadget_SOPL}{RGB}{255, 153, 0}
\colorlet{color_gadget_SOPL_inner}{color_gadget_SOPL!10!white}
\colorlet{color_gadget_SOPL_label}{color_gadget_SOPL!70!black}
\definecolor{color_gadget_PATHPHP}{RGB}{0, 153, 255}
\colorlet{color_gadget_PATHPHP_inner}{color_gadget_PATHPHP!10!white}
\colorlet{color_gadget_PATHPHP_label}{color_gadget_PATHPHP!70!black}
\definecolor{color_gadget_ITER}{RGB}{153, 204, 0}
\colorlet{color_gadget_ITER_inner}{color_gadget_ITER!10!white}
\colorlet{color_gadget_ITER_label}{color_gadget_ITER!70!black}
\definecolor{color_gadget_EOPL}{RGB}{0, 153, 255}
\colorlet{color_gadget_EOPL_inner}{color_gadget_EOPL!10!white}
\colorlet{color_gadget_EOPL_label}{color_gadget_EOPL!70!black}
\tikzstyle{node} = [circle, line width=\nodeline, draw = black, fill = white, inner sep = 0mm, minimum size = 3.5mm]
\tikzstyle{node_small} = [node, circle, line width = 0.25mm, minimum size = 2.5mm]
\tikzstyle{solution} = [fill=red!90!,draw=black!50!red]
\tikzstyle{side} = [fill=Goldenrod,draw=Brown]
\tikzstyle{node_text} = [] %
\tikzstyle{node_regular} = [node]
\tikzstyle{node_regular_small} = [node_small]
\tikzstyle{node_solution} = [node, solution]
\tikzstyle{node_green} = [node, fill=Green!20!LimeGreen,draw=black!80!Green]
\tikzstyle{node_notice} = [node, draw = Green!20!LimeGreen, line width=2pt, fill=none, minimum size = 5.93mm,dotted]
\tikzstyle{node_solution_small} = [node_small,solution]
\tikzstyle{node_a} = [node, side, rectangle, minimum size = 4.3mm]
\tikzstyle{node_a_solution} = [node_a, solution]
\tikzstyle{node_a_small} = [node_small, side, rectangle, minimum size = 2.15mm]
\tikzstyle{node_a_solution_small} = [node_a_small,solution]
\tikzstyle{node_b} = [node, side, diamond, minimum size = 6.2mm]
\tikzstyle{node_b_small} = [node_a_small, diamond, minimum size = 3.1mm]
\tikzstyle{node_notice_small} = [node_notice, line width=2*\nodeline, minimum size = 8mm]
\tikzstyle{naive} = [minimum size = 4.5mm,rectangle]
\tikzstyle{node_regular_intro} = [node,fill=Gray!10!white]
\tikzstyle{edge} = [-{Latex[round]}, line width=\arrowline]
\tikzstyle{edge_regular} = [edge]
\tikzstyle{edge_regular_small} = [-{Latex[round]}, line width = 0.25mm]
\tikzstyle{edge_php} = [edge, color=color_gadget_PHP!70!black]
\tikzstyle{edge_php_small} = [edge_php, edge_regular_small]
\tikzstyle{edge_eopl} = [edge, color=color_gadget_EOPL!70!black]
\tikzstyle{edge_eopl_small} = [edge_eopl, edge_regular_small]
\tikzstyle{edge_iter} = [edge, color=color_gadget_ITER!70!black]
\tikzstyle{edge_iter_small} = [edge_iter, edge_regular_small]
\tikzstyle{edge_pathphp_small} = [line width = 0.25mm, -{Latex[round]}, decorate, decoration={snake, segment length=2.5mm, amplitude=1mm, pre length=7pt,post length=8pt}, shorten < = 3pt, shorten >=3pt, color=color_gadget_PATHPHP_label]
\tikzstyle{gadget} = [rounded corners, line width = 0.4mm, dashed]
\tikzstyle{gadget_PHP} = [gadget, draw = color_gadget_PHP, fill=color_gadget_PHP_inner]
\tikzstyle{gadget_PHP_small} = [gadget_PHP, line width = 0.2mm]
\tikzstyle{gadget_SOPL} = [gadget, draw = color_gadget_SOPL, fill=color_gadget_SOPL_inner]
\tikzstyle{gadget_SOPL_small} = [gadget_SOPL, line width = 0.2mm]
\tikzstyle{gadget_PATHPHP} = [gadget, draw = color_gadget_PATHPHP, fill=color_gadget_PATHPHP_inner]
\tikzstyle{gadget_PATHPHP_small} = [gadget_PATHPHP, line width = 0.2mm]
\tikzstyle{gadget_ITER} = [gadget, draw = color_gadget_ITER, fill=color_gadget_ITER_inner]
\tikzstyle{gadget_ITER_small} = [gadget_ITER, line width = 0.2mm]
\tikzstyle{gadget_EOPL} = [gadget, draw = color_gadget_EOPL, fill=color_gadget_EOPL_inner]
\tikzstyle{gadget_EOPL_small} = [gadget_EOPL, line width = 0.2mm]
\begin{document}

\begin{center}
{\huge Total Search Problems in \ZPP}
\\[15mm] \large

\setlength\tabcolsep{1em}
\begin{tabular}{cccc}
Noah Fleming&
Stefan Grosser&
Siddhartha Jain&
Jiawei Li\\[-1mm]
\small\slshape Lund \& Columbia  &
\small\slshape McGill &
\small\slshape UT Austin&
\small\slshape UT Austin

\end{tabular}
\begin{tabular}{ccc}\\[0.1em]
Hanlin Ren & Morgan Shirley & Weiqiang Yuan
\\[-1mm]
\small\slshape IAS  &
\small\slshape Lund  &
\small\slshape EPFL  
\end{tabular}

\vspace{9mm}

\large
\today

\vspace{5mm}

\normalsize
\end{center}
\begin{abstract}
    We initiate a systematic study of ${\sf TFZPP}$, the class of total ${\sf NP}$ search problems solvable by polynomial time randomized algorithms. ${\sf TFZPP}$ contains a variety of important search problems such as \textsc{Bertrand-Chebyshev} (finding a prime between $N$ and $2N$), refuter problems for many circuit lower bounds, and $\textsc{Lossy-Code}$. The $\textsc{Lossy-Code}$ problem has found prominence due to its fundamental connections to derandomization, catalytic computing, and the metamathematics of complexity theory, among other areas.
    
    While ${\sf TFZPP}$ collapses to ${\sf FP}$ under standard derandomization assumptions in the white-box setting, we are able to separate ${\sf TFZPP}$ from the major ${\sf TFNP}$ subclasses in the black-box setting. In fact, we are able to separate it from every uniform ${\sf TFNP}$ class assuming that $\NP$ is not in quasi-polynomial time. To do so, we extend the connection between proof complexity and black-box ${\sf TFNP}$ to randomized proof systems and randomized reductions.
    
    Next, we turn to developing a taxonomy of ${\sf TFZPP}$ problems. We highlight a problem called $\textsc{Nephew}$, originating from an infinity axiom in set theory. We show that $\textsc{Nephew}$ is in $\mathsf{PWPP}\cap \mathsf{TFZPP}$ and conjecture that it is not reducible to $\textsc{Lossy-Code}$. Intriguingly, except for some artificial examples, most other black-box ${\sf TFZPP}$ problems that we are aware of reduce to $\textsc{Lossy-Code}$:
    \begin{itemize}
        \item We define a problem called $\textsc{Empty-Child}$ capturing finding a leaf in a rooted (binary) tree, and show that this problem is \emph{equivalent} to $\textsc{Lossy-Code}$. We also show that a variant of $\textsc{Empty-Child}$ with ``heights'' is complete for the intersection of $\sf SOPL$ and $\textsc{Lossy-Code}$.
        \item We strengthen $\textsc{Lossy-Code}$ with several combinatorial inequalities such as the AM-GM inequality. Somewhat surprisingly, we show the resulting new problems are still reducible to $\textsc{Lossy-Code}$. A technical highlight of this result is that they are proved by \emph{formalizations in bounded arithmetic}, specifically in Jeřábek's theory $\mathsf{APC}_1$ (JSL 2007).
        \item Finally, we show that the \textsc{Dense-Linear-Ordering} problem reduces to $\textsc{Lossy-Code}$.
    \end{itemize}

\end{abstract}

\newpage
\setcounter{tocdepth}{2}
\tableofcontents
\newpage

\section{Introduction}

Total search problems are abundant in theoretical computer science. The formal study of these problems has been highly impactful to a wide range of areas including game theory~\cite{DGC09, DBLP:conf/focs/ChenD06}, cryptography~\cite{Hubacek20, Foletal24, BGS25}, proof complexity~\cite{beame1995relative, DavisR23, PR24, GHJ+22,Thapen24, GoosKRS19, LiLR24, FlemingMD25}, and recently in the study of explicit construction problems and derandomization \cite{DBLP:conf/innovations/KleinbergKMP21, Kor21, Korten25}. Central to the latter has been the \emph{Range Avoidance} (or simply $\Avoid$) problem.

\begin{description}
\item[$\Avoid$.]  Given a circuit $D:\{0,1\}^{n-1} \rightarrow \{0,1\}^{n}$, find $x \in \{0,1\}^{n}$ such that for every $y\in \{0,1\}^{n-1}$, $D(y) \neq x$.
\end{description}

$\Avoid$ captures the explicit construction problems for many combinatorial objects whose existence follows from the probabilistic method. Notable examples include functions with high circuit complexity, rigid matrices, Ramsey graphs, strong error correcting codes, and many more~\cite{Kor21, Jerabek07, DBLP:journals/toct/GuruswamiLW25, DBLP:conf/approx/GajulapalliGNS23}. By developing algorithms for $\Avoid$, a line of work has shown circuit lower bounds against a variety of classes~\cite{RenSW22, ChenHLR23, CHR24, Li24}. 

$\Avoid$ belongs to the class $\mathsf{TF}\Sigma_2^P$, the second level of the total function polynomial hierarchy. If one \emph{Herbrandizes}\footnote{Herbrandization is a basic construction in logic; see \autoref{appendix: Herbrandization} for a short overview.} the $\mathrm{Avoid}$ problem, then one obtains its $\TFNP$ sibling, the \lossycode problem (see \cite{Korten25} for a survey). This problem asks to find an element that is not in the range of a pair of compressing and decompressing maps $C$ and $D$.

\begin{description}
\item[$\lossycode$.]  Given a pair of circuits $C:\{0,1\}^n \rightarrow \{0,1\}^{n-1}$ and $D:\{0,1\}^{n-1} \rightarrow \{0,1\}^{n}$, find $x \in \{0,1\}^{n}$ such that $D(C(x)) \neq x$.
\end{description}
$\lossycode$ was originally defined by Je{\v r}\'abek in \cite{Jerabek07, Jerabek-independence}, under the name \emph{retraction weak pigeonhole principle}, showing that it is equivalent to the set of problems whose totality is provable in $\APC_1$. Since then, it has been considered predominantly through the lens of bounded arithmetic as a $\TFNP$ problem and as a combinatorial principle \cite{Muller21, KolodziejczykT22}.
Korten~\cite{DBLP:conf/coco/Korten22} asked to understand the set of $\TFNP$ problems that are reducible to $\lossycode$. 
Besides being an interesting problem on its own, $\lossycode$ also arises naturally in a few other places, further motivating its study: %
\begin{itemize}
    \item {\bf Derandomization.} In the recent \emph{certified derandomization} framework~\cite{PyneRZ23}, the derandomization algorithm is required to either output the correct answer, or report that the underlying circuit lower bound assumption is false by providing a small circuit violating the assumption. It turns out that certified derandomization is characterized by $\lossycode$~\cite{DBLP:conf/focs/LiPT24}.
    
    Such derandomization ideas are particularly explored in the context of \emph{catalytic computing}~\cite{DBLP:conf/stoc/BuhrmanCKLS14, DBLP:journals/eatcs/Mertz23} in a framework known as ``compress-or-random''~\cite{Pyne24, CookLMP25, KMPS25, AgarwalaM25}: If the contents of the catalytic tape is ``incompressible'' (which usually means that it is not a solution of a certain $\lossycode$ instance), then it can be used for derandomization; otherwise we can compress the catalytic tape and obtain more free space. As a result, although we are currently unable to prove that $\mathsf{CL}$ (catalytic logspace) is in $\P$, we can show that $\mathsf{CL}$ reduces to $\lossycode$~\cite{CookLMP25}.
    \item {\bf Metamathematics of complexity theory.} It turns out that (variants of) $\lossycode$ captures the complexity of many \emph{refuter} problems~\cite{CJSW24}, which are natural total search problems reflecting the metamathematical complexity of proving lower bounds. Many lower bounds in circuit complexity and communication complexity have refuter problems equivalent to $\lossycode$~\cite{DBLP:conf/coco/Korten22, CLO24}, and the refuter complexity for some proof complexity lower bounds is captured by variants of $\lossycode$ as well~\cite{LiLR24}.
    \item {\bf Bounded arithmetic.} A basic theory of bounded arithmetic for approximate counting and reasoning about randomized computation is $\APC_1$, developed in a series of papers by Jeřábek~\cite{Jerabek04, Jerabek-phd, Jerabek07}. Wilkie's witnessing theorem~\cite{Thapen-PhD, Jerabek04} implies that $\lossycode$ is ``complete'' for $\APC_1$ in the following sense: $\APC_1$ proves the totality of $\lossycode$, and every $\TFNP$ problem provably total in $\APC_1$ reduces to $\lossycode$.
\end{itemize}

$\lossycode$ belongs to the class $\TFZPP$, the subclass of $\TFNP$ containing the total search problems that admit polynomial-time randomized algorithms, introduced in \cite{buresh2006tfnp}.
Since we are dealing with total $\NP$ search problems, every randomized algorithm that may make mistakes can be turned into one that does not make any mistakes.\footnote{This was observed by \Jerabek~\cite{Jerabek16}; in his terminology, we have $\sf TFRP = TFZPP$.} Hence, it seems that $\TFZPP$ is the only natural (semantic) subclass of $\sf TFNP$ capturing randomized polynomial time. %

Besides $\lossycode$, there are a variety of important total search problems that sit inside $\TFZPP$. We list two of them that we think reflect the importance of $\TFZPP$ the best:

\begin{mdframed}[innertopmargin=0]\footnotesize
\begin{example}
    The Bertrand--Chebyshev theorem states that for every integer $N\ge 1$, there is a prime number between $N$ and $2N$. This motivates the following total search problem called \textsc{Bertrand-Chebyshev}: Given an integer $N$ (represented in binary), output a prime number between $N$ and $2N$. In fact, the Prime Number Theorem implies that there are $\Theta(N/\log N)$ such prime numbers, and the AKS primality test~\cite{AKS04} provides a deterministic method for verifying solutions, hence \textsc{Bertrand-Chebyshev} is in $\TFZPP$.
    
    The complexity of \textsc{Bertrand-Chebyshev} remains elusive. Unless one makes strong assumptions such as Cramér's conjecture~\cite{cramer1936order} or $\P = \BPP$~\cite{ImpagliazzoW97}, the best known deterministic algorithm needs to spend $\approx N^{1/2}$ time~\cite{LagariasO87, BHP01} (which is \emph{exponential} in the input length). Improving this time bound was exactly the focus of the Polymath 4 project~\cite{TaoCH12}; however, despite much effort, no unconditional progress was made. This problem is also the flagship problem in the study of \emph{pseudodeterministic algorithms}~\cite{GatG11, OliveiraS17, LuOS21, ChenLORS23}.
    
    On the complexity-theoretic side, the only upper bound known for $\textsc{Bertrand-Chebyshev}$ is that it reduces to $\lossycode^{\textsc{Factoring}}$, i.e., the $\lossycode$ problem where both input circuits $C, D$ have access to a $\textsc{Factoring}$ oracle~\cite{ParisWW88, DBLP:conf/coco/Korten22}. It is unclear if it belongs to any standard $\TFNP$ subclasses such as $\PLS$ or $\PPA$~\cite{GoldbergP18, GhentiyalaLi25}. %
\end{example}
\end{mdframed}

\def\ckt{\mathscr{C}}
\def\Refuter{\textsc{Refuter}}
\begin{mdframed}[innertopmargin=0]\footnotesize
\begin{example}
    A family of important total search problems is \emph{refuter problems}~\cite{CLW20, CJSW24, CTW23} for complexity lower bounds. Let $\ckt$ be a circuit class and $L$ be a hard problem for $\ckt$, the refuter problem, $\Refuter(L\not\in\ckt)$, is the following total search problem: given a small $\ckt$ circuit $C$ attempting to compute $L$, the goal is to output an instance $x$ such that $C(x) \ne L(x)$. The complexity of these refuter problems are closely related to the provability of complexity lower bounds~\cite{CLO24, LiLR24}.
    
    Such refuter problems are often in $\TFZPP$: In fact, if $L$ is \emph{average-case hard} against $\ckt$ (and both $\ckt$ and $L$ are in polynomial-time), then $\Refuter(L\not\in\ckt)\in\TFZPP$ as the algorithm for the refuter problem can repeatedly sample inputs from the hard distribution until it finds a solution $x$ where $C(x) \ne L(x)$. Even though average-case lower bounds against $\AC^0[p]$ circuits have been proved for nearly 40 years~\cite{Razborov87, Smolensky87}, we are not aware of any non-trivial $\TFNP$ upper bound for the problem $\Refuter(\mathrm{MAJ}\not\in\AC^0[2])$.
    
    Another example is when $L = \textrm{search-SAT}$ and $\ckt$ is the family of polynomial-size circuits: Given a circuit $C$ attempting to solve $\textrm{search-SAT}$, the refuter problem asks to find a formula $\varphi$ (along with a satisfying assignment $a$ of $\varphi$) such that $C(\varphi)$ fails to satisfy $\varphi$. The complexity of this problem is of significant interest to the bounded arithmetic community~\cite{Krajicek95, BussSurvey, Pich15, PichS23} as its hardness would imply the unprovability of $\NP\not\subseteq\P/_\poly$. However, this problem is in $\TFZPP$ under the existence of one-way functions against non-uniform adversaries, hence it is unclear how its complexity sheds light on the aforementioned unprovability question. %
\end{example}
\end{mdframed}

Finally, an additional motivation for studying $\TFZPP$ is its connection to $\Avoid$ and $\APEPP$ (the class of total search problems mapping reducible to $\Avoid$~\cite{DBLP:conf/innovations/KleinbergKMP21}): it is the ``projection'' of $\Avoid$ to $\TFNP$ in the following sense: %

\begin{restatable}{theorem}{avoidIntersect}
\label{thm:avoidIntersect}
    $\TFZPP=\TFNP \cap \APEPP$.
\end{restatable}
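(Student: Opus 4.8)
The plan is to prove the two inclusions separately. The direction $\TFNP\cap\APEPP\subseteq\TFZPP$ is the easy one. Let $R\in\TFNP$ with polynomial-time verifier $V$, and fix a mapping reduction of $R$ to $\Avoid$: a polynomial-time map $x\mapsto D_x$, where $D_x:\{0,1\}^{n-1}\to\{0,1\}^n$, together with a polynomial-time decoder that turns any $y\notin\mathrm{range}(D_x)$ into a valid $R$-solution for $x$. The randomized algorithm for $R$ on input $x$ builds $D_x$, then repeatedly samples a uniform $y\in\{0,1\}^n$, decodes it to a candidate $z$, and checks $V(x,z)$, halting as soon as the check passes. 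Since $|\mathrm{range}(D_x)|\le 2^{n-1}$, a uniform $y$ misses the range with probability at least $1/2$, in which case $z$ is guaranteed to be valid; hence the algorithm halts in expected $O(1)$ rounds and never outputs an incorrect answer, so $R\in\TFZPP$.

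For $\TFZPP\subseteq\APEPP$, membership in $\TFNP$ is immediate, so we only argue reducibility to $\Avoid$. Let $R\in\TFZPP$ be witnessed by a randomized algorithm $A$ that on input $x$ uses $m=\poly(|x|)$ random bits; using the verifier $V$ we may assume $A(x,r)$ is always either a valid $R$-solution or the symbol $\bot$, with $\Pr_r[A(x,r)=\bot]\le 1/3$. Consider the $\poly(m)$-size circuit $C_x(r):=[\,A(x,r)\ne\bot\,]$, which accepts a $\ge 2/3$ fraction of its inputs; it suffices to find one accepting input $r^\star$, since then $A(x,r^\star)$ is a solution. One cannot do this with a single $\Avoid$ call in the obvious way: the rejecting set of $C_x$ is polynomial-time recognizable but is not known to be the range of a small efficiently computable compressing circuit, so we cannot just ``cover the bad seeds.'' Instead we route through hardness-versus-randomness: an $\Avoid$ instance will hand us a (near-maximally) hard truth table, which we convert into a pseudorandom generator fooling $C_x$, after which we only have to try the polynomially many PRG seeds.

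Concretely, set $L=m^C$ for a large constant $C$ chosen as a function of the degree of $C_x$, and let $D^{\mathrm{tt}}$ be the circuit that interprets its input as (the description of) a Boolean circuit on $\log L$ variables of size $s:=L/\log^2 L$ and outputs that circuit's truth table. There are $2^{O(s\log s)}=2^{o(L)}<2^L$ such circuits, so $D^{\mathrm{tt}}$ can be taken to have input length below $L$; this is a legal stretching instance of $\Avoid$, which reduces to $\Avoid$. Any solution of $\Avoid(D^{\mathrm{tt}})$ is the truth table of a function $f$ on $\log L$ bits with circuit complexity at least $s=L^{1-o(1)}$. By the Impagliazzo--Wigderson construction (with its built-in worst-case to average-case hardness amplification) one builds from $f$, in time $\poly(L)=\poly(m)$, a generator $G_f:\{0,1\}^{O(\log m)}\to\{0,1\}^m$ that fools circuits of the size of $C_x$ with error $<1/6$; here the freedom in choosing $C$ absorbs the fixed polynomial blow-up of the construction relative to the (fixed) degree of $C_x$. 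Hence some seed $\sigma$ satisfies $C_x(G_f(\sigma))=1$. The reduction of $R$ to $\Avoid$ is thus: on input $x$ output the (essentially input-independent) circuit $D^{\mathrm{tt}}$; and given any $\Avoid$-solution $f$, decode it by computing $A(x,G_f(\sigma))$ over all $2^{O(\log m)}=\poly(m)$ seeds $\sigma$ and returning the first non-$\bot$ value (which exists and is valid). Since this works for every hard $f$, it is a correct mapping reduction, so $R\in\APEPP$.

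The main obstacle is exactly the step in the last two paragraphs: ``find a good random seed'' does not reduce to $\Avoid$ by covering the bad seeds, because that set, though in \P, need not be an efficiently enumerable small set; the fix is to let $\Avoid$ output a hard function instead and then derandomize. The remaining work is routine parameter bookkeeping — ensuring the truth-table length $L$ stays polynomial in $m$ while still exceeding the description length of the enumerated small circuits, and ensuring the extracted function is hard enough for the IW generator to fool the fixed-degree circuit $C_x$ with small constant error and only $O(\log m)$ seed length — together with the standard facts that stretching instances of $\Avoid$ reduce to $\Avoid$ and that $\TFZPP$ membership tolerates the Las Vegas, verify-and-repeat algorithm used in the easy direction.
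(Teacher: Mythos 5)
Your proposal is correct and follows essentially the same route as the paper: the easy inclusion samples a uniform candidate non-range string, decodes, and verifies, while the hard inclusion reduces to the (input-independent) truth-table-generator instance of $\Avoid$ and uses a hardness-based pseudorandom generator built from the returned hard truth table to cycle through polynomially many seeds of the zero-error algorithm. The only cosmetic differences are your Las Vegas repeat-until-success phrasing and your explicit remark that large-stretch $\Avoid$ instances reduce to standard $\Avoid$, both of which the paper handles implicitly.
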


\subsection*{Our Contributions.}
In this work, we initiate a formal study of $\TFZPP$ as a class of total search problems. Analogous to the setting of decision problems, we expect that $\TFZPP = \FP$. Indeed, this follows from the same assumption used in \cite{ImpagliazzoW97}---namely that $\E$ requires circuits of exponential size. Moreover, $\TFZPP=\FP$ appears to be weaker than a full derandomization of $\BPP$.

However, we show that this is not the case in the \emph{black-box} setting in a very strong sense. In the black-box setting one only has access to the input via an oracle; black-box classes are denoted by a $dt$ superscript (for ``\emph{d}ecision \emph{t}rees''). We say that a $\TFNP^{dt}$ class is \emph{uniformly generated} if it has a complete problem $R=\{R_n\}_{n \in \mathbb{N}}$ such that there is a Turing Machine which on input $1^n$ outputs $R_n$ in polynomial time. Note that all of the major $\TFNP^{dt}$ subclasses in the literature are uniformly generated. Under the assumption that $\NP$ is not in quasi-polynomial time ($\mathsf{QP}$), we show that no uniformly generated $\TFNP^{dt}$ class contains $\TFZPP^{dt}$.

\begin{theorem}
\label{thm:sepALL}
    $\TFZPP^{dt} \not \subseteq {\cal C}$ for every uniformly generated class ${\cal C} \subseteq \TFNP^{dt}$, unless $\NP \subseteq \mathsf{QP}$.   
\end{theorem}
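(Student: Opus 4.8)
The plan is to prove the contrapositive: assuming $\TFZPP^{dt}\subseteq\mathcal{C}$ for some uniformly generated $\mathcal{C}$, I would derive $\NP\subseteq\mathsf{QP}$. Fix a complete problem $R=\{R_n\}_{n\in\N}$ for $\mathcal{C}$ together with the machine $M_{\mathcal{C}}$ printing $R_n$ from $1^n$ in time $\poly(n)$. The route is through the correspondence between black-box $\TFNP$ and propositional proof systems: a uniformly generated $\mathcal{C}$ is captured by a proof system $P_{\mathcal{C}}$ whose refutations are essentially the $\le^{dt}$-reductions to $R$, and---crucially, using $M_{\mathcal{C}}$---$P_{\mathcal{C}}$ is an honest Cook--Reckhow system, so a purported proof of size $s$ is checkable in time $\poly(s)$. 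Under this dictionary ``$Q\in\mathcal{C}$'' becomes ``the propositional encoding of the totality of $Q$ has quasi-polynomial-size $P_{\mathcal{C}}$-refutations''. I would then extend the dictionary to the zero-error setting: the problems of $\TFZPP^{dt}$ correspond to a \emph{randomized proof system}---a samplable distribution over $\TFNP^{dt}$-style refutations that is sound with probability $1$---which, in view of the black-box analogue of \Cref{thm:avoidIntersect}, is exactly the family of ``probabilistic-method'' refutations that additionally carry short solution certificates. The hypothesis $\TFZPP^{dt}\subseteq\mathcal{C}$ then says precisely that $P_{\mathcal{C}}$ quasi-polynomially simulates this randomized proof system, and it suffices to contradict this for a single $\mathcal{C}$ under the assumption $\NP\not\subseteq\mathsf{QP}$.

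To do so I would construct an ``artificial'' refuter-type family $\{Q_N\}_N$ with two features. First, each $Q_N$ lies \emph{unconditionally} in $\TFNP^{dt}\cap\APEPP^{dt}=\TFZPP^{dt}$: totality and the reduction to $\Avoid$ come from a counting argument on the gadget, and the zero-error randomized decision-tree algorithm merely samples a random sub-instance---which is falsifying with high probability---and verifies it via its short certificate. Second, the encodings of totality of $Q_N$ are arranged so that a quasi-polynomial-size $P_{\mathcal{C}}$-refutation of them (which exists by the first part) can be converted, using the $\poly$-time verifiability of $P_{\mathcal{C}}$ together with $M_{\mathcal{C}}$, into a $\mathsf{QP}$-time algorithm for $\mathsf{SAT}$: concretely, I would pad each $n$-variable formula $\varphi$ into a sub-instance of $Q_N$ with $N=n^{\polylog n}$, so that the space of candidate short $P_{\mathcal{C}}$-objects relevant to that sub-instance has quasi-polynomial size and ``reads off'' whether $\varphi$ is unsatisfiable; exhaustively enumerating that space and running the $P_{\mathcal{C}}$-verifier then decides $\varphi$ in time $2^{\polylog n}$, giving $\NP\subseteq\mathsf{QP}$.

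The heart of the argument---and the main obstacle---is the simultaneous design of $\{Q_N\}$: it must belong to $\TFZPP^{dt}$ for \emph{unconditional} reasons (a genuine probabilistic-method totality proof plus short certificates), while at the same time \emph{any} efficiently verifiable deterministic proof system that proves totality of $Q_N$ quasi-quickly must be mineable for $\mathsf{SAT}$-answers, rather than for something strictly weaker such as ``$P_{\mathcal{C}}$ has short proofs of circuit lower bounds''. Getting the decoding to land on $\NP\subseteq\mathsf{QP}$ exactly is precisely where the extension to randomized proof systems and randomized reductions does the real work; what remains is bookkeeping---tracking reduction blow-ups, decision-tree depths, and the $\poly$-versus-$2^{\poly}$ size conventions of the black-box model so that ``quasi-polynomial'' survives every translation.
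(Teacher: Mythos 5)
Your high-level architecture matches the paper's: use the Buss et al.~characterization of uniformly generated $\TFNP^{dt}$ classes by Cook--Reckhow proof systems, observe that $\TFZPP^{dt}$ corresponds to a randomized proof system (the paper makes this precise as random tree-like resolution, \autoref{lem: rtR equal TFZPP}), and then exhibit a family of formulas with short randomized proofs whose short provability in \emph{any} Cook--Reckhow system forces $\NP \subseteq \mathsf{QP}$. But the proposal stops exactly where the real content lies: you explicitly defer ``the simultaneous design of $\{Q_N\}$'' as the main obstacle, and no mechanism is given for it. The paper does not need a bespoke construction here; it invokes Pudl\'ak--Thapen (\autoref{thm:noPSim}), whose family is obtained by running the PCP theorem on SAT instances so that unsatisfiable inputs map to \emph{highly unsatisfiable} 3-CNFs. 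High unsatisfiability is what delivers the unconditional $\TFZPP^{dt}$ membership (equivalently, $O(\log n)$-complexity random tree-like resolution proofs): a uniformly random clause is falsified with constant probability, so the zero-error decision-tree algorithm just samples a clause and checks it. Without this (or an equivalent) construction, your argument has no hard family, and hence no theorem.

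The one concrete step you do propose for extracting $\NP \subseteq \mathsf{QP}$ is also flawed as stated: ``exhaustively enumerating that space and running the $P_{\mathcal C}$-verifier'' cannot run in $2^{\polylog n}$ time, because the number of candidate proofs of quasi-polynomial size is $2^{\mathsf{qpoly}(n)}$, not quasi-polynomial; padding $\varphi$ into a sub-instance does not by itself shrink the proof space, and you give no reason why the ``relevant short $P_{\mathcal C}$-objects'' form a quasi-polynomially enumerable set. In the actual argument the algorithmic consequence comes from the structure of the PCP-based family (satisfiable $\varphi$ map to satisfiable formulas, which have no sound refutation at all, while unsatisfiable $\varphi$ map to formulas with short refutations), and the passage from ``short proofs exist'' to ``$\NP \subseteq \mathsf{QP}$'' is exactly the content of Pudl\'ak--Thapen's Proposition~10, which the paper uses as a black box. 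So the gap is twofold: the hard family is missing, and the proof-enumeration decoding you sketch in its place would not give the claimed time bound.
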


 To prove these separations, we employ a close connection between total search problems and proof complexity \cite{beame1995relative, GoosKRS19, BussFI23, FlemingMD25}. This connection shows that, in the black-box setting, a search problem belongs to a class if and only if an associated proof system can prove the totality of that search problem. In this case, we say that the class is \emph{characterized} by that proof system. To prove our separations, we first show that $\TFZPP^{dt}$ is characterized by the random tree-like resolution proof systems of Buss et al.~\cite{BussKT14}.

 \begin{theorem}
\label{thm:TFZPPChar}
    $\TFZPP^{dt}$ is characterized by random tree-like resolution. 
\end{theorem}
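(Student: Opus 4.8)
The plan is to prove both implications of the characterization through the standard dictionary between black-box \TFNP and proof complexity. Recall that any $\TFNP^{dt}$ problem $R$ is $\le^{dt}$-equivalent to the \emph{falsified-clause search problem} $S_F$ of an unsatisfiable, bounded-width CNF $F$ --- concretely, the clauses of $F$ are the negations of the accepting branches of the shallow verifiers that check $R$'s solutions --- so it suffices to show that $R\in\TFZPP^{dt}$ if and only if $F$ admits a short \emph{random tree-like resolution refutation}. Following Buss et al.~\cite{BussKT14}, such a refutation is a short tree-like resolution refutation of $F\cup\Delta$, where $\Delta$ is an auxiliary CNF over the variables of $F$ together with fresh ``coin'' variables $\bar q$, subject to the semantic promise that for every assignment $\alpha$ to the variables of $F$, at least a $2/3$ fraction of the settings of $\bar q$ satisfy $\Delta$; here ``short'' is measured in whatever precise sense (proof size, or decision-tree depth) is used to define $\TFNP^{dt}$ reductions. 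Both directions run on the classical equivalence between tree-like resolution refutations of $F$ and decision trees solving $S_F$: a decision tree for $S_F$ becomes, leaf by leaf, a tree-like refutation of the same shape (at each leaf the axiom named there is falsified along the path to it, and resolving upward gives the empty clause at the root), and conversely a tree-like refutation, read top-down, is a decision tree.

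\textbf{From a refutation to an algorithm.} Suppose $F\cup\Delta$ has a short tree-like refutation $\Pi$; read $\Pi$ as a shallow decision tree over the variables $\bar p\cup\bar q$ solving $S_{F\cup\Delta}$. On a real input $\alpha$, the algorithm for $S_F$ samples $\bar q$ uniformly at random and simulates $\Pi$ on $(\alpha,\bar q)$, making only few genuine queries to $\alpha$ (at most the depth of $\Pi$), since queries to $\bar q$ are answered by fresh coin flips at no cost. It reaches a clause of $F\cup\Delta$ falsified by $(\alpha,\bar q)$; by the semantic promise, with probability at least $2/3$ the pair $(\alpha,\bar q)$ satisfies $\Delta$, and then that clause lies in $F$ and is a valid solution of $S_F$ on $\alpha$. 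This is a shallow one-sided-error decision tree for $R$; since solutions of $\TFNP^{dt}$ problems are checkable by shallow decision trees, the equality $\mathsf{TFRP}^{dt}=\TFZPP^{dt}$ (as first observed by \Jerabek) promotes it to a zero-error tree, so $R\in\TFZPP^{dt}$.

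\textbf{From an algorithm to a refutation.} Suppose $R=S_F\in\TFZPP^{dt}$ via a distribution over shallow decision trees that succeeds with probability at least $2/3$ on every input; we may assume each such tree queries every variable of the clause it finally names. The first step is to \emph{compress the randomness}: by a Chernoff bound and a union bound over all inputs, there is a multiset $\{T_\beta : \beta\in\{0,1\}^m\}$ of these trees --- with $m$ within the same budget that bounds decision-tree depth --- whose uniform mixture still succeeds with probability at least $1/2$ on every input; this selection is legitimate because the $dt$ setting is non-uniform. Now introduce coin variables $\bar q=(q_1,\dots,q_m)$ and put into $\Delta$, for each $\beta$ and each leaf $\ell$ of $T_\beta$ at which the clause named by $T_\beta$ is \emph{not} falsified along the path to $\ell$ (an \emph{erring} leaf), the clause $D_\ell\vee(\bar q\neq\beta)$, where $D_\ell$ is the negation of that path and $(\bar q\neq\beta)$ is the clause asserting $\bar q\neq\beta$. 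For any $\alpha$, the settings $\beta$ of $\bar q$ that falsify some clause of $\Delta$ are exactly those for which $T_\beta$ errs on $\alpha$ --- a fraction of at most $1/3$ --- so the semantic promise holds and $\Delta$ stays within budget. Finally, a short tree-like refutation of $F\cup\Delta$ arises from the decision tree that queries all of $\bar q$ to learn $\beta$, runs $T_\beta$ on $\bar p$ reaching a leaf $\ell$, and names the $F$-axiom output there when $\ell$ is not erring, or the $\Delta$-axiom $D_\ell\vee(\bar q\neq\beta)$ when it is; in either case the named clause is contained in the negation of the queried path, so converting this decision tree into a tree-like refutation as above completes the argument.

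\textbf{The main obstacle.} The crux is the ``algorithm to refutation'' direction, and within it the randomness-compression step: a $\TFZPP^{dt}$ algorithm may flip many coins for free, whereas the decision tree underlying a short tree-like refutation can branch on only few variables in total, so the coins must genuinely be reduced in number before being hard-wired into $\Delta$, and this has to be done while preserving success probability \emph{on every input}, not merely on average --- which is exactly why the argument ranges over all inputs and must be non-uniform. Everything else is careful bookkeeping: keeping $\Delta$ and the reconstructed refutation inside the size/depth budget of the $dt$-correspondence, and checking that the standard apparatus (bounded-width CNF encodings of $\TFNP^{dt}$ problems, shallow verifiability of solutions, and the one-sided-to-zero-error promotion) goes through uniformly.
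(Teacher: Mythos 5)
Your proposal is correct in substance, but it proves the characterization for a different (though equivalent) formalization of random tree-like resolution than the one the paper uses, and this changes the shape of the argument. The paper (following Buss--Ko\l{}odziejczyk--Thapen and Pudl\'ak--Thapen) defines a random tree-like resolution proof of $F$ as a \emph{distribution} over pairs $(\Pi,B)$, where $B$ is a CNF over the variables of $F$ and $\Pi$ refutes $F\wedge B$, with $\Pr_{(\Pi,B)}[B(x)=1]\ge 2/3$ for every $x$. Under that definition both directions are almost immediate: a sampled $\TFZPP^{dt}$ tree $T$ yields $B$ by negating its root-to-$\bot$ paths and $\Pi$ by relabelling those leaves with the corresponding clauses of $B$, and conversely a sampled pair $(\Pi,B)$ is read as a decision tree for $\Search_{F\wedge B}$ whose $B$-leaves are relabelled $\bot$; no manipulation of the randomness is needed. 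You instead work with the coin-variable formulation (a single refutation of $F\cup\Delta$ with fresh variables $\bar q$ and a per-input $2/3$ promise over $\bar q$), and the price is exactly your randomness-compression step: a Chernoff-plus-union-bound sparsification of the $\TFZPP^{dt}$ distribution to a multiset of $O(n)$ trees so that $|\bar q|=O(\log n)$ fits the depth/width budget. That step is sound in the non-uniform $dt$ setting and is, in effect, a proof that the two formulations of random tree-like resolution coincide up to polylogarithmic overhead, which is a nice byproduct; but it is extra machinery the paper's definition avoids, and if the target definition is the distribution-over-pairs one you should either state and use that definition directly or add the (easy) restriction argument translating your coin-variable refutation into a distribution of pairs. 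Two small repairs: after compressing to success probability $1/2$ on every input, the erring fraction is only bounded by $1/2$, not $1/3$, so either amplify the original error below a small constant before sparsifying or aim the union bound at success $\ge 2/3$ so the semantic promise on $\Delta$ actually holds; and note that in your ``refutation to algorithm'' direction the invocation of $\mathsf{TFRP}^{dt}=\TFZPP^{dt}$ is unnecessary, since outputting $\bot$ whenever the reached axiom lies in $\Delta$ already gives a zero-error tree, exactly as in the paper's relabelling.
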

More generally, we show that if a class $\cal C$ of total search problems is characterized by a proof system $\Pi$, then the class of problems that are efficiently \emph{randomized} reducible to a complete problem in $\cal C$ is characterized by the proof system random $\Pi$. \autoref{thm:sepALL} then follows by combining the following two results:  (1) Buss et al.~\cite{BussFI23} showed that every uniformly generated $\TFNP$ class has a characterizing proof system, and (2) Pudl\'ak and Thapen \cite{PudlakT19} showed that a propositional proof system simulating random tree-like resolution would imply faster algorithms for $\NP$. 

\paragraph{A Highly Unsatisfiable Cook-Reckhow Program.}
\autoref{thm:sepALL} is striking as it suggests that $\TFZPP$ problems can be arbitrarily hard in the black-box model. This motivates an interesting direction of research: find explicit $\TFZPP$ problems that are hard for stronger and stronger subclasses of $\TFNP^{dt}$. By the close connection between $\TFNP^{dt}$ and proof complexity \cite{BussFI23}, this can be seen as a Cook-Reckhow program for highly unsatisfiable formulas: for increasingly more expressive proof systems, exhibit a highly unsatisfiable CNF formula which is hard for that system. %

Towards this program, we provide explicit separations of $\TFZPP^{dt}$ from every major $\TFNP$ class defined in the 1990s~\cite{DBLP:journals/jcss/JohnsonPY88, DBLP:journals/jcss/Papadimitriou94}. Note that all those $\TFNP$ classes are contained in $\PLS$, $\PPP$, and $\PPA$. The separation from $\PPA$ was shown by Beame et al.~\cite{beame1995relative}. Leveraging the recent work of Hopkins and Lin~\cite{HopkinsL22}, we are able to show the following.

\begin{theorem}
\label{thm:sepExplicit}
    There are \emph{explicit} (polynomial-time constructable) problems in $\TFZPP^{dt}$ which are not in either $\PPP^{dt}$ or $\PLS^{dt}$.
\end{theorem}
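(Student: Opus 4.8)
The plan is to present the two separations as instances of the ``highly unsatisfiable Cook--Reckhow program'' sketched above: to exhibit explicit CNF families that are unsatisfiable in a sufficiently \emph{robust} way --- which forces their search problems into $\TFZPP^{dt}$ --- yet hard to refute in the proof systems attached to $\PLS^{dt}$ and $\PPP^{dt}$, which keeps them out of those classes. First I would record a convenient sufficient condition for membership in $\TFZPP^{dt}$. By \Cref{thm:TFZPPChar}, a problem given by a CNF family $F$ lies in $\TFZPP^{dt}$ exactly when $F$ has a polynomial-size random tree-like resolution refutation; concretely, when there is a randomized decision tree of $\polylog$ depth that on \emph{every} assignment $\alpha$ to the variables of $F$ queries a few variables and, with probability at least $2/3$, outputs a clause of $F$ falsified by $\alpha$. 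I would then observe that this condition is automatic for any \emph{robustly unsatisfiable} formula: an unsatisfiable CNF of $\polylog$ clause width in which every assignment falsifies at least a constant fraction of the clauses. Indeed, the decision tree that samples $O(1)$ uniformly random clauses and reads the $\polylog$-many variables occurring in them finds, on any fixed $\alpha$, a falsified clause with constant probability, which can then be boosted. So the problem reduces to producing \emph{explicit} (uniformly polynomial-time constructible) robustly unsatisfiable formulas that are hard for the relevant proof systems.

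For the negative direction I would use the known proof-complexity characterizations: $\PLS^{dt}$ is characterized by (dag-like) Resolution, and $\PPP^{dt}$, being uniformly generated, is characterized by \emph{some} propositional proof system $\Pi$ (by the theorem of Buss et al.~\cite{BussFI23} invoked earlier). Hence a super-polynomial Resolution lower bound for a robustly unsatisfiable explicit formula gives a problem in $\TFZPP^{dt}\setminus\PLS^{dt}$, and a super-polynomial $\Pi$-lower bound for such a formula gives a problem in $\TFZPP^{dt}\setminus\PPP^{dt}$. The input for this step is the work of Hopkins and Lin~\cite{HopkinsL22}: I would take their explicit (pseudorandom/expander-type) hard formulas --- hard for Resolution and for $\Pi$ --- observe that they are, or can be arranged to be, robustly unsatisfiable, and feed them into the recipe above. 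Together with the $\PPA^{dt}$ separation of Beame et al.~\cite{beame1995relative} and the fact that every major $\TFNP$ class from the 1990s lies inside $\PLS$, $\PPP$, or $\PPA$, this produces explicit $\TFZPP^{dt}$ problems outside all of them.

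The hard part will be the tension between the two halves of the argument. Membership in $\TFZPP^{dt}$ demands that the unsatisfiability be ``spread out'', with no assignment able to hide all violations in a tiny set of clauses; but the textbook proof-complexity-hard families --- the (weak) pigeonhole principle, Tseitin formulas on expanders --- are only \emph{marginally} unsatisfiable, since an adversary can localize every violation into a single clause, which even fails to place their search problems in $\TFZPP^{dt}$. The crux is therefore to verify that the Hopkins--Lin constructions genuinely achieve both properties simultaneously: that they remain explicit and uniformly generated, and that their lower bounds survive whatever robustification (gap-amplification- or composition-style) is needed to make every assignment falsify a constant fraction of the clauses.
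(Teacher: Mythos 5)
Your positive direction matches the paper exactly: the paper also takes an explicit formula family in which every assignment falsifies a constant fraction of the clauses, and places $\Search_F$ in $\TFZPP^{dt}$ by sampling $O(1)$ random clauses and checking them (no robustification is needed, by the way --- the Hopkins--Lin instances already have this property as stated in \Cref{thm: SoS_lower_bound}, so the ``gap-amplification'' worry in your last paragraph does not arise).

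The gap is in your negative direction, specifically for $\PPP^{dt}$. You propose to take the proof system $\Pi$ that Buss et al.~\cite{BussFI23} attach to $\PPP^{dt}$ (which exists only because the class is uniformly generated, and is not any standard, concretely analyzable system) and then to invoke a ``super-polynomial $\Pi$-lower bound'' for the Hopkins--Lin formulas. But Hopkins and Lin~\cite{HopkinsL22} prove a \emph{Sum-of-Squares degree} lower bound, not a lower bound against an arbitrary abstract system; nothing in your argument supplies hardness for $\Pi$, and proving lower bounds for such tailor-made systems is exactly the difficulty one must avoid. The paper sidesteps this by using a single known fact: if $\Search_F \in \PLS^{dt}$ or $\Search_F \in \PPP^{dt}$, then $F$ admits a small, low-degree SoS refutation. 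With that, the linear SoS degree lower bound of \Cref{thm: SoS_lower_bound} kills both classes simultaneously, with no need for a Resolution lower bound or for the Buss et al.\ characterization at all. (Your $\PLS^{dt}$ half could be repaired along your lines, since a Resolution width lower bound for these 3-XOR instances is available, e.g.\ it follows from the SoS degree bound or from standard expansion arguments, but as written you attribute a Resolution lower bound to \cite{HopkinsL22} that they do not state; the SoS route makes this detour unnecessary.)
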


\subsection*{A $\TFZPP$ Zoo.} 
We now turn to studying the structure of problems inside of $\TFZPP$.  \autoref{fig:tfzpp-zoo} shows the  zoo of problems within $\TFZPP$ that we consider, as well as their relationships. 

Like $\ZPP$, $\TFZPP$ is a semantic class and therefore it is unlikely to admit complete problems unless $\FP = \TFZPP$. However, we observe an interesting phenomenon: almost every $\TFZPP$ problem that has been studied in the literature is reducible to $\lossycode$!\footnote{One exception is the Bertrand--Chebyshev problem for which it is not known whether it is reducible to $\lossycode$. However, it is unclear how to define the Bertrand--Chebyshev problem in the \emph{black-box} model, and we are unable to separate any natural black-box $\TFZPP$ problem from $\lossycode$.}  This raises the question: are there ``natural'' $\TFZPP$ problems which are not reducible $\lossycode$ in the black-box setting, and what do they look like? \autoref{thm:sepALL} and \autoref{thm:sepExplicit} already provide examples of problems that are not reducible to $\lossycode$; however, we do not consider these problems natural---we are looking for problems that would be studied outside of the context of proving such separations. 

While we are unable to resolve this question---indeed, many of our conjectured separating examples turned out to be reducible to $\lossycode$ in surprising ways!---we provide natural $\TFZPP$ problems which we conjecture witness a separation, and which we believe are of independent interest. As well, we show several surprising reductions to $\lossycode$.%

\begin{figure}
    \centering
    \begin{tikzpicture}[scale=1.1]
\tikzset{inner sep=0,outer sep=3}

\tikzstyle{a}=[inner sep=6pt, inner ysep=6pt,outer sep=0.5pt,
draw=black!20!white, fill=Cerulean!2!white, very thick, rounded corners=6pt, align=center]
\tikzstyle{b}=[inner sep=4pt, inner ysep=4pt,outer sep=0.5pt,
draw=black!20!white, fill=Cerulean!2!white, thick, align=center]

\begin{scope}[yscale=1.145]
    \large
    
    \node[a] (LOSSY) at (3,2) {\small\upshape{Lossy-Code}};
    \node[a] (DLO) at (0.6,-0.2) {\small\upshape{Dense-Linear-Ordering}};
    \node[a] (INJLOSSY) at (3, -1) {\small\upshape{Bij-Lossy-Code}};
    \begin{scope}[local bounding box=amgmbox]
        \node[a] (AMGM) at (-2,3) {\small\upshape{AMGM-LC}};
        \node[a] (DualAMGM) at (-2, 1) {\small\upshape{Dual-AMGM-LC}};
        \node[a] (IncExc) at (-2, 2) {\small\upshape{Inclusion-Exclusion}};
    \end{scope}
    \begin{scope}[local bounding box=nephewbox]
        \node[a] (PWPP) at (2,6) {$\PWPP$};
        \node[a] (NEPHEW) at (3,4.5) {\upshape{Nephew}};
        \node[a] (TFZPP) at (4,6) {$\TFZPP$};
    \end{scope}
    \node[a] (PPADS) at (-1.5,6) {\PPADS};
    \node[a] (PPP) at (0,7) {$\PPP$};
    \node[a] (TFNP) at (3,8) {$\TFNP$};
    \begin{scope}[local bounding box=ecbox]
        \begin{scope}[local bounding box=nwibox]
            \node[a] (NEPHEWwSon) at (6,3) {\footnotesize\upshape{Nephew-w-Inverse}};
        \end{scope}
        \node[a] (EMPTYC) at (6, 2) {\small\upshape{Empty-Child}};
        \begin{scope}[local bounding box=ecwhbox]
            \node[a] (EMPTYwH) at (9, 1) {\footnotesize\upshape{Empty-Child-w-Height}\\ \footnotesize($=\Lossy\cap\mathsf{SOPL}$)};
        \end{scope}
        \begin{scope}[local bounding box=becbox]
            \node[a] (BEC) at (6, 0) {\footnotesize \upshape{Binary-Empty-Child}};
        \end{scope}
        \begin{scope}[local bounding box=becwhbox]
            \node[a] (BECwH) at (9, -1) {\footnotesize\upshape{Binary-Empty-Child-w-Height}};
        \end{scope}
    \end{scope}
    \node[a] (PLS) at (6,7) {$\PLS$};
    \node[a] (SOPL) at (9, 6) {$\mathsf{SOPL}$};
    \node[b] (TFZPPinfo) at (6.7,5.2){
        \small \textcolor{MidnightBlue}{$=$ Random TreeRes}\\
        \small \textcolor{YellowOrange}{$=\TFNP\cap\mathsf{APEPP}$}
    };
    \node[below=0 of TFZPPinfo] {\footnotesize{\cref{sec: prelim,sec: tfzpp general results}}};
\end{scope}

\path[-{Stealth[length=6pt]},line width=.6pt,gray]
(NEPHEW) edge (PWPP)
(NEPHEW) edge (TFZPP)
(PWPP) edge (PPP)
(TFZPP) edge (TFNP)
(LOSSY) edge (NEPHEW)
(DLO) edge node [above, sloped] {\textcolor{black}{\footnotesize\autoref{sec: DLO}}} (LOSSY)
(AMGM) edge (LOSSY)
(LOSSY) edge (AMGM)
(DualAMGM) edge (LOSSY)
(LOSSY) edge (DualAMGM)
(IncExc) edge (LOSSY)
(LOSSY) edge (IncExc)
(EMPTYwH) edge (EMPTYC)
(EMPTYC) edge (LOSSY)
(LOSSY) edge (EMPTYC)
(EMPTYwH) edge (SOPL)
(BEC) edge (EMPTYC)
(BECwH) edge (EMPTYwH)
(BECwH) edge (BEC)
(INJLOSSY) edge (LOSSY)
(INJLOSSY) edge (BEC)
(SOPL) edge (PLS)
(PLS) edge (TFNP)
(LOSSY) edge (PPADS)
(PPADS) edge (PPP)
(PPP) edge (TFNP)
(NEPHEWwSon) edge (NEPHEW)
(EMPTYC) edge (NEPHEWwSon)
(NEPHEWwSon) edge (EMPTYC)
(TFZPP) edge[bend left](TFZPPinfo);

\newcommand{\zoosep}{0.15}

\draw[thick, dotted] 
    ($(nephewbox.south west) + (-\zoosep,-\zoosep)$) -- 
    ($(nephewbox.north west) + (-\zoosep,\zoosep)$) -- 
    node[above, near start, align=center]{\footnotesize \autoref{sec: Nephew}} 
    ($(nephewbox.north east) + (\zoosep,\zoosep)$) -- 
    ($(nephewbox.south east) + (\zoosep,-\zoosep)$) -- 
    ($(nephewbox.south west) + (-\zoosep,-\zoosep)$);

\draw[thick, dotted] 
    ($(amgmbox.north west) + (-\zoosep,\zoosep)$) -- 
    ($(amgmbox.south west) + (-\zoosep,-\zoosep)$) -- 
    node[below, align=center] {\footnotesize \autoref{sec: APC1}} ($(amgmbox.south east) + (\zoosep,-\zoosep)$) -- 
    ($(amgmbox.north east) + (\zoosep,\zoosep)$) -- 
    ($(amgmbox.north west) + (-\zoosep,\zoosep)$);

\draw[thick, dotted] 
    let \p1=(becwhbox.south west), \p2=(becbox.south west), \p3=(ecbox.south east), 
        \p4=(ecwhbox.north east), \p5=(nwibox.north east) in 
    ($(ecbox.north west) + (-\zoosep,\zoosep)$) -- 
    ($(\p2) + (-\zoosep,-\zoosep)$) -- 
    ($(\x1, \y2) + (-\zoosep,-\zoosep)$) -- 
    ($(\p1) + (-\zoosep,-\zoosep)$) -- 
    ($(\p3) + (\zoosep,-\zoosep)$) -- 
    ($(\x3, \y4) + (\zoosep,\zoosep)$) -- 
    ($(\x5, \y4) + (\zoosep,\zoosep)$) -- 
    ($(\p5) + (\zoosep,\zoosep)$) -- 
    node[above, align=center] {\footnotesize\autoref{sec: Empty-Child}} ($(ecbox.north west) + (-\zoosep,\zoosep)$);

\small
\hypersetup{hidelinks}
\tikzset{new/.style={-{Stealth[length=6pt]},dashed,line width=1pt,YellowOrange}}

\end{tikzpicture}
    \caption{The $\TFZPP$ zoo.}
    \label{fig:tfzpp-zoo}
\end{figure}
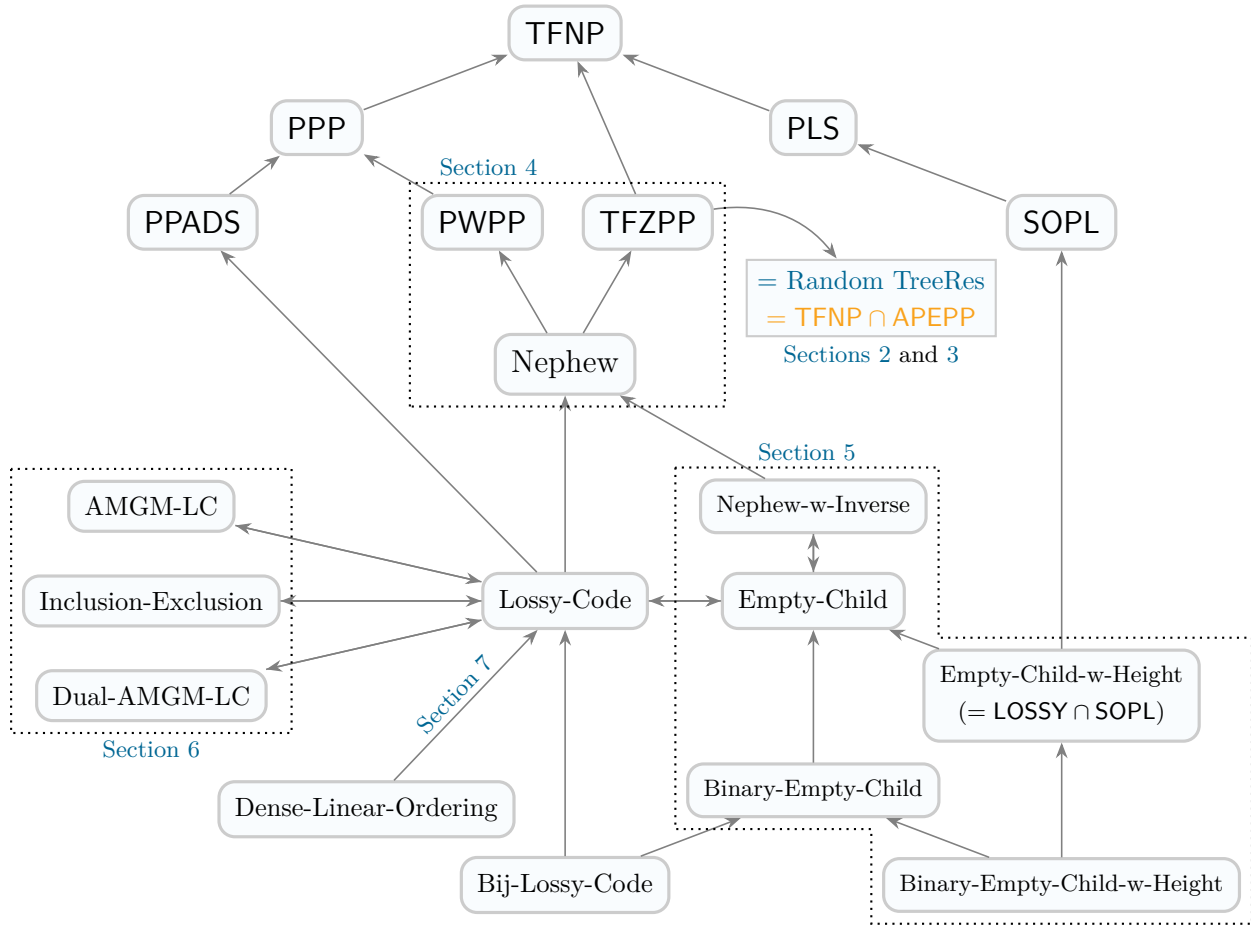

\paragraph{Nephew.} Our primary candidate is the following.

\begin{description}
    \Nephewitem
\end{description}

It may not be immediately obvious that $\ILP$ is a total search problem. A proof may be found in the textbook of Börger, Grädel, and Gurevich~\cite[Proposition 6.5.5]{Borger01}; we sketch the argument here. Create a directed graph $G_f$ with vertex set $V$ and with an edge from $u$ to $v$ if $f(u) = v$. Then one can assign to each vertex a ``level'' that represents its distance to the core\footnote{We intentionally leave ``core'' undefined for this brief sketch.} of $G_f$. Let $v^*$ be a vertex with maximum level $\ell_{\text{max}}$. However, if $v^*$ is not a solution to $\ILP$, it must be that $g(v^*)$ has level $\ell_{\text{max}} + 1$, a contradiction. A similar intuition will be used in \cref{sec: Nephew} for other proofs involving $\ILP$.

$\ILP$ is derived from the minimal \emph{axioms of infinity} in model theory. %
One method for constructing a total search problem is to begin with a sentence in logic that has an infinite model but no finite model. Such a sentence is known as an \emph{axiom of infinity}, since any model for it must be infinite. Axioms of infinity are classified under the number of quantifiers and predicate and function symbols of certain arity that they contain, and there are 10 minimal classes (\cite[Theorem 6.5.4]{Borger01}). Each of these classes corresponds to a total search problem, and in most cases, this problem is complete for a well-known $\TFNP$ class. The only exception is the one to which $\ILP$ belongs; $\ILP$ can be interpreted as the Herbrandization of
\begin{equation} \label{eq: ilp-from-borger}
    \forall x \exists y \left( F(F(y)) = F(x) \land F(y) \neq x \right).
\end{equation}

The proof that $\ILP$ belongs to $\TFZPP$ is highly non-trivial. Furthermore, our best upper bound on the complexity of $\ILP$ is that it is contained within $\PWPP$, the problems reducible to the weak pigeonhole principle, a relaxation of the $\lossycode$ problem.
\begin{restatable}{theorem}{inTFZPPandPWPP} \label{thm: tfzpp-and-pwpp-containment}
    $\ILP \in\TFZPP \cap \PWPP$.
\end{restatable}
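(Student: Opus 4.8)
Both containments rest on a structural analysis of the functional graph $G_f$ of the ``father'' map (vertex set $V$, an edge $v\to f(v)$). Each weakly connected component of $G_f$ is a cycle --- the \emph{core} --- with trees hanging off it; let $h(v)$ denote the \emph{height} of $v$, the number of $f$-steps needed to reach the core. The elementary but crucial observation is: \emph{if $v$ is not a solution, then $f(g(v))$ is a ``sibling'' of $v$} --- i.e.\ $f(f(g(v)))=f(v)$ --- \emph{distinct from $v$, and $g(v)$ is a child of that sibling}. A one-line computation then gives $h(g(v))=h(v)+1$ whenever $h(v)\ge 2$, and the few remaining configurations (when the maximum height is at most $1$, including $f$ a permutation) are handled by hand; taking a vertex $v^*$ of maximum height recovers the totality argument of \cite{Borger01}. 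The same observation yields a \emph{forced-solution criterion}: if every child of $f(v)$ other than $v$ is a leaf of $G_f$ --- in particular if $v$ is the unique $f$-preimage of $f(v)$ --- then $g(v)$ has nowhere legal to land, so $v$ must be a solution. A double count over the fibres $f^{-1}(p)$ (using that every non-leaf has a non-leaf parent) turns this into the quantitative statement that the number of solutions is at least $\tfrac12\lvert\mathrm{Im}(f)\rvert$; in particular a solution-free instance forces $f$ to be at least $2$-to-$1$ on its image, so $\lvert\mathrm{Im}(f)\rvert\le 2^{n-1}$.

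For $\ILP\in\PWPP$ I would reduce to the weak pigeonhole problem \wpigeon. The plan is to build, from the circuits for $f,g$, a map $\phi\colon\{0,1\}^m\to\{0,1\}^{m-1}$ (which therefore has a collision) together with a poly-time decoder that turns any collision $\phi(x)=\phi(x')$, $x\ne x'$, into a vertex satisfying s1 or s2 using $\mathrm{poly}(n)$ queries to $f,g$. Concretely $\phi$ is assembled from $f$ using the $\ge 2$-to-$1$ structure above, so that a collision exposes two distinct siblings; feeding them through $g$ and walking the $f\circ g$ orbit inside the shared fibre, a short case analysis (either $f(f(g(\cdot)))\ne f(\cdot)$ at some step, or $f(g(\cdot))\in\{x,x'\}$, or one ``descends a level'' and repeats) isolates an actual s1/s2 witness, and totality guarantees the walk terminates in a solution within $\mathrm{poly}(n)$ steps. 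I expect this part to be essentially bookkeeping, though the exact form of $\phi$ (and the reduction of a size-$\le 2^{n-1}$ image to a literal $\{0,1\}^{m-1}$ codomain) needs care.

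For $\ILP\in\TFZPP$ the plan is a randomized search: sample a uniform $v\in V$ and, for $T=\mathrm{poly}(n)$ rounds, grow a polynomial-size ``neighborhood'' of $v$ --- the iterates $f^i(v)$, $g^i(v)$, and short mixed compositions such as $f^j(g^i(v))$ and $g(f^j(v))$ --- testing each vertex against s1/s2 and outputting the first solution found. To show this succeeds with probability $\ge 1/\mathrm{poly}(n)$ I would split on $M:=\lvert\mathrm{Im}(f)\rvert$. If $M\ge 2^n/\mathrm{poly}(n)$, the quantitative count gives $\Omega(2^n/\mathrm{poly}(n))$ forced solutions, so a uniform sample lands on one with noticeable probability. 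If $M\le\mathrm{poly}(n)$, then the longest $f$-path has length $\le M$, so $h$ is bounded by $\mathrm{poly}(n)$ and following $g$ from any vertex reaches a solution within $\mathrm{poly}(n)$ steps by the height-increase lemma. The substantive case is the intermediate regime, where the intuition is that a solution-sparse instance is forced either to branch --- making the forest shallow, so $g$-iteration is fast --- or to be ``thin'' over long stretches --- but thin stretches manufacture forced solutions (unique non-leaf children, and leaves of penultimate vertices) that sit only $O(1)$ $f$- or $g$-steps away from the bulk of the vertices feeding into them, which is exactly what the neighborhood search detects.

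The main obstacle is making this intermediate-regime dichotomy quantitative and uniform: one must show that for \emph{every} instance a $1/\mathrm{poly}(n)$ fraction of vertices has a solution inside a $\mathrm{poly}(n)$-radius ball of $G_f\cup G_g$. This seems to require a careful recursive accounting of ``branch mass versus depth'' along the deepest paths of $G_f$ (a branch that avoids creating a nearby forced solution must itself be long and branch further, recursively), balanced against the $\ge\tfrac12 M$ count; controlling that recursion --- in particular ruling out ``caterpillar-of-caterpillars'' instances that spread the few solutions too thin --- is where I expect the real work to lie, consistent with this part being flagged as highly non-trivial.
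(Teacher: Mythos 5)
Your structural starting point (the functional graph of $f$, levels, and the observation that $g$ sends a non-solution of level $\ge 2$ to level $+1$) matches the paper's, but both containments in your plan have genuine gaps, and in each case the missing piece is the same: the paper's implicit binary tree with children $v\mapsto(g(v),\,g(f(g(v))))$ and a path-versus-vertex counting argument. For $\TFZPP$, your algorithm (uniform vertex plus a polynomial-radius neighborhood search) is analyzed only in two extreme regimes, and you explicitly leave the intermediate regime open; that regime is where the entire content of the theorem lies, so the proposal is not a proof. The extreme cases are also not sound as justified: when $|\mathrm{Im}(f)|$ is small it does \emph{not} follow that iterating $g$ reaches a solution. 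Take $f$ a $3$-cycle $p_0\to p_1\to p_2\to p_0$ with $N/3$ leaves attached to each $p_i$, and $g$ mapping each leaf of $p_i$ to a leaf of $p_{i-1}$: the only solutions are the three cycle vertices, every level is $\le 1$, and the $g$-orbit of a leaf wanders among leaves forever; a nearby solution is found only via $f$, not via the height-increase lemma you invoke. The paper avoids all of this: from an arbitrary $u$, after checking a constant number of vertices, one of $g(u)$, $g(f(u))$ has level $\ge 2$ and serves as the root of a tree built by the children map above; one proves it is a genuine induced tree (distinct children, distinct $f$-images, strictly increasing levels, no shared children), and then a \emph{random descent} of length $\lceil\log|V|\rceil+1$ hits a leaf (hence exposes a solution) with constant probability, because non-solution descent paths inject into disjoint vertex triples. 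Your plan has no analogue of this descent-and-counting mechanism.

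For $\PWPP$ the gaps are also concrete. First, ``$f$ is $2$-to-$1$ on its image in a solution-free instance'' does not yield a map into $\{0,1\}^{n-1}$: $\mathrm{Im}(f)$ is an arbitrary subset of $V$ with no efficient re-indexing into a shorter codomain, so your $\phi$ is not a valid $\wpigeon$ instance; moreover a reduction must work on \emph{every} instance (solutions always exist), with the decoder, not a solution-freeness hypothesis, doing the work. Second, your decoder takes a collision, i.e.\ two distinct siblings, and claims a short walk reaches a solution ``by totality''; totality gives no step bound, and in tree-like instances two colliding siblings are generic internal vertices from which no single poly-length deterministic walk is guaranteed to meet a solution. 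The paper instead compresses \emph{descent paths}: from a level-$\ge 2$ root, the $2|V|$ strings in $\{\LL,\RR\}^{\lceil\log|V|\rceil+1}$ map to the vertex they reach (solution paths map to the root); by the tree property non-solution paths reach distinct vertices, so any collision is a pair of solution paths, from which an $\ILP$ solution is read off with polynomially many queries, and the uncertainty over which of the two candidate roots is good is handled by running the construction on $V\times V$. So the statement is true and your local observations are in the right spirit, but as written both containment arguments would fail without the tree construction and the path-counting step.
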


To obtain evidence that $\ILP$ is not reducible $\lossycode$, we attempt to isolate the potential hardness in $\ILP$. In doing so, we define a number of natural problems in $\TFZPP$ which may be of independent interest.

The proof that $\ILP$ is in $\TFZPP$ proceeds by reducing it to the observation that a leaf in a binary tree can be found in logarithmic time in expectation. We define a total search problem whose membership in $\TFZPP$ formalizes this observation.

\begin{description}
    \EmptyChilditem
\end{description}

Surprisingly, $\emptychild$ is equivalent to $\lossycode$ under decision tree reductions, denoted $=_{dt}$. Thus, if $\ILP$ is indeed not reducible to $\lossycode$, the hardness of $\ILP$ does not come from this portion of the reduction.

\begin{theorem}\label{thm:intro_lossy_empty_eq}
    $\emptychild =_{dt} \lossycode$.
\end{theorem}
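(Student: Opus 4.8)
The plan is to prove the two decision-tree reductions $\emptychild \le_{dt} \lossycode$ and $\lossycode \le_{dt} \emptychild$ separately.

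For $\emptychild \le_{dt} \lossycode$, the point is that a leafless rooted binary tree would be infinite, which is exactly a weak-pigeonhole obstruction. Given an $\emptychild$ instance $(V,F,L,R)$, first pad $V$ up to size $N = 2^m$ by adjoining self-loop vertices (a self-loop vertex is never a solution of \textbf{s1} and, not being vertex $1$, is irrelevant to \textbf{s2}, so padding does not change the solution set); then identify $V$ with $\{0,1\}^m$, so $[2N] \cong \{0,1\}^{m+1}$. Define the $\lossycode$ instance by letting $f(s)$ be the vertex reached from vertex $1$ by following the length-$(m+1)$ root-to-node path encoded by $s\in\{0,1\}^{m+1}$ (bit $0$ = go to the left child, bit $1$ = go to the right child), and letting $g(v)$ be the length-$(m+1)$ string obtained by walking up from $v$ along $F$ for $m+1$ steps, recording at each step whether we were a left or a right child of the next vertex. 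Both maps are computable by decision trees of depth $O(m) = O(\log |V|)$. For extraction: given $s$ with $g(f(s))\neq s$, recompute the down-path $1 = p_0,p_1,\dots,p_{m+1} = f(s)$ and the up-walk $f(s) = q_0,q_1,\dots,q_{m+1}$; these are length-$(m+1)$ sequences from the common vertex $f(s)$, and $g(f(s))\neq s$ forces them to fail to be reverses of one another. Locating the first discrepancy yields either a vertex $u=p_j$ on the path one of whose children has $F$-value $\neq u$ (an \textbf{s1} solution), or a vertex $w = p_{j-1}$ with $L(w)=R(w)$; in the latter case $u=w$ is an \textbf{s1} solution if $L(w)\neq w$, and otherwise the discrepancy propagates back up the path to an earlier \textbf{s1} solution or to the \textbf{s2} solution $1$. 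In particular a degenerate vertex ($L=R$, or a self-loop) is never reached by the down-path except via a step that is already an \textbf{s1}/\textbf{s2} solution, so no spurious $\lossycode$ solution survives the trace.

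For $\lossycode \le_{dt} \emptychild$, the natural idea is to realize an arbitrary $\lossycode$ instance $(f,g)$, $f:[2N]\to[N]$, $g:[N]\to[2N]$, directly as a binary tree. Writing $[2N]\cong [N]\times\{0,1\}$, put (essentially) $V=[N]$ and set $L(v):=f(v,0)$, $R(v):=f(v,1)$, and $F(v):=$ the $[N]$-component of $g(v)$; each is a depth-one decision tree over the $\lossycode$ oracle. An $\emptychild$ solution $u$ then unpacks to a $\lossycode$ solution: if $F(L(u))\neq u$ then the $[N]$-component of $g(f(u,0))$ is not $u$, so $g(f(u,0))\neq(u,0)$ (symmetrically for $F(R(u))\neq u$); and if $L(u)=R(u)\neq u$, writing $w$ for the common value and $g(w)=(c,b')$, the element $x=(u,1-b')$ satisfies $g(f(x))=g(w)=(c,b')\neq(u,1-b')=x$ since $b'\neq 1-b'$.

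The only thing this construction does not control is the ``wrong root'' condition \textbf{s2}: there is no reason vertex $1$ should satisfy $F(1)=1$, $L(1)\neq 1$, $R(1)\neq 1$, and naively forcing these introduces overridden father-pointers and hence spurious \textbf{s1}/\textbf{s2} solutions that do not trace back to $\lossycode$ solutions. The fix is to install a designated root by a gadget --- adjoining $O(\log N)$ fresh vertices forming a short binary stem below vertex $1$ that attaches to the $[N]$-part --- arranged so that any violation inside the stem or at its interface is itself traceable to a genuine $\lossycode$ solution; equivalently one factors the reduction through a rootless variant of $\emptychild$ and then reduces that variant to $\emptychild$. This root bookkeeping is the main obstacle: reconciling the hardwired root of $\emptychild$ with the rootless nature of $\lossycode$ without manufacturing spurious solutions. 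By contrast the forward direction is comparatively routine once one commits to the fixed-length path/up-walk encoding above, since every discrepancy between the two walks is pinned to a concrete malformed edge of the purported tree.
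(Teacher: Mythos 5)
Your forward direction ($\emptychild \le_{dt} \lossycode$) is essentially the paper's own argument (\autoref{lemma: emptychild in lossy}): encode length-$(\ell+1)$ root-to-node walks as the domain of the decompressor, walk up via $F$ for the compressor, and trace the discrepancy between the down-path and the up-walk to a malformed edge (the paper works from the discrepancy closest to the bottom, where the two walks agree, but this is a detail). That half is fine.

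The backward direction is where the real content of the theorem lies, and your proposal has a genuine gap there. You correctly isolate the obstacle (the hardwired root of $\emptychild$ versus the rootless nature of $\lossycode$), but the sketched fix---an $O(\log N)$-vertex stem of fresh vertices attached to the naive realization on $V=[N]$---cannot be made to work, and no amount of local bookkeeping repairs it. The stem's bottom vertices must either have children among fresh vertices (then some fresh vertex has an empty child that is not traceable to a $\lossycode$ violation) or point into $[N]$. In the latter case, for the stem leaf $t$ not to be a spurious empty-child solution you must override $F(v):=t$ at the attachment vertices $v\in[N]$; but then any unknown $u\in[N]$ with $f(u,b)=v$ whose parent pointer was originally consistent now satisfies $F(L(u))\ne u$ (or $F(R(u))\ne u$), an empty-child solution with no corresponding $\lossycode$ solution. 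Since $f$ may be surjective, there are in general \emph{no} safe attachment points in $[N]$, and they cannot be located with $\polylog(N)$ queries. Your alternative phrasing---factoring through ``a rootless variant of $\emptychild$''---does not help either: the variant with only the empty-child solution type is not total (take all vertices to be self-loops), and the paper's rooted-but-weakened variant $\emptychild'$ still cannot be realized by your naive construction, since $L(1)=1$ there need not trace to a $\lossycode$ solution.

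The paper's proof (\autoref{thm: lossycode to emptychild}) resolves exactly this issue with a structurally different, larger construction: the vertex set is blown up to roughly $[2N]\times[N]$ plus a complete binary tree of depth $2\log N$ on top. The $\lossycode$ pair $(f,g)$ is used \emph{twice}---to determine the level of a vertex's parent/children (viewing $(f,g)$ as a sink-of-line-like structure on $2N$ levels with $N$ ``source'' levels) and to determine the position within a level. Because children always land in the first $N$ levels, the $N^2$ vertices in the last $N$ levels are never claimed as children by the $\lossycode$-derived part, so they can safely be made the leaves of the top binary tree without overriding any pointer; every empty-child solution then sits in the bottom part and unpacks to $f(g(i))\ne i$, $f(g(2j-1))\ne 2j-1$, $f(g(2j))\ne 2j$, or $g(2j-1)=g(2j)$. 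This identification of an abundance of guaranteed-unclaimed attachment points is the missing idea in your proposal.
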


As a warm-up to the techniques needed to prove this theorem, we consider a variant of $\emptychild$ which includes an additional ``height'' function that outputs the height of a given node in the tree. We show that this is a complete problem for the class $\Lossy \cap \PLS = \Lossy \cap \SOPL$, where $\Lossy$ is the class of problems efficiently reducible to $\lossycode$. The proof resembles previous intersection theorems from $\TFNP$ \cite{FearnleyGHS23, GoosHJMPRT24}.

Then, we use $\emptychild$ as an intermediate problem to study the relationship between $\ILP$ and $\lossycode$.

\begin{theorem}
    $\emptychild \leq_{dt} \ILP$.
\end{theorem}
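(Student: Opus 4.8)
The plan is to encode the (purported) rooted binary tree of an $\emptychild$ instance $(V,F,L,R)$ directly into the functional graph of an $\ILP$ instance, with $f$ playing the role of "walk toward the root" and $g$ the role of "walk to a nephew", i.e.\ to a child of one's sibling. Morally this must work because $\ILP$ is total for the same reason $\emptychild$ is: an $\ILP$ solution is obtained by taking a vertex of maximal level in the functional graph of $f$ and noting it can have no nephew, while an $\emptychild$ solution is obtained by noting that the left spine of a genuine finite rooted binary tree must end at a leaf that cannot exist. Under the encoding, "$x$ has no valid nephew'' will translate to "the sibling of $x$ is an empty child.''

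Concretely, I would first query node $1$ and, if any of $F(1)=1$, $L(1)\ne 1$, $R(1)\ne 1$, $L(1)\ne R(1)$, $F(L(1))=1$, $F(R(1))=1$ fails, output the corresponding $\emptychild$ solution (node $1$ in all cases). Otherwise I build the $\ILP$ instance on $V'=V$ by classifying each $v$: call $v$ a \emph{legitimate child} if $F(v)\ne v$ and $v\in\{L(F(v)),R(F(v))\}$; a \emph{pseudo-root} if $F(v)=v$ and $L(v)\ne R(v)$ (this includes $v=1$); and \emph{garbage} otherwise. Set $f(v):=F(v)$ for legitimate children, $f(v):=v$ for pseudo-roots, and $f(v):=1$ for garbage (folding ill-formed vertices harmlessly onto the true root). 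For $g$, set $g(v):=L(s)$, where $s$ is the child of $F(v)$ other than $v$, whenever $v$ is a legitimate child or a pseudo-root, and $g(v):=1$ when $v$ is garbage. Both $f$ and $g$ are computable by depth-$O(\log|V|)$ decision trees over the input, and $|V'|=|V|$, so this is a valid $\leq_{dt}$ reduction provided the solution map works; in particular, since the constructed instance is a bona fide $\ILP$ instance it has a solution, and it suffices to map any such solution back.

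For the solution map, given an $\ILP$ solution $x$ I would query the constant-radius neighbourhood of $x$: its father $p=F(x)$, the children $L(p),R(p)$, the sibling $s$, and $L(s)$, $F(L(s))$, $F(s)$ (plus $L,R,F$ of a couple of these). A case analysis on the type of $x$ then does the job. One checks directly that garbage vertices and the pseudo-root $1$ can never satisfy either the $f(g(v))=v$ clause or the $f(f(g(v)))\ne f(v)$ clause of $\ILP$, so $x$ is either a legitimate child or a pseudo-root $x\ne1$. In either remaining case, one verifies that if all three of "$F(L(s))=s$'', "$F(s)=p$'' and "$L(p)\ne R(p)$'' held, then neither $\ILP$ clause could be satisfied at $x$ — a contradiction; hence one of them fails, and this exhibits an empty child: at $s$ when $F(L(s))\ne s$, and at $p$ when $F(s)\ne p$ or $L(p)=R(p)\ne p$.

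The delicate point — and the main obstacle — is that genuinely pathological $\emptychild$ instances (several "roots'', detached $F$-cycles, vertices whose claimed father does not list them) are \emph{not} $\emptychild$ solutions, so the reduction must be arranged so that no such configuration produces an $\ILP$ solution whose neighbourhood query fails to turn up an empty child. This is exactly what forces the pseudo-root/garbage split in the definition of $f$ (e.g.\ a $v$ with $F(v)=v$, $L(v)=v\ne R(v)$ must be kept as a self-loop so that a legitimate child hanging below it is not spuriously flagged, whereas a $v$ with $F(v)=v$, $L(v)=R(v)=v$ must instead be routed to $1$). Getting this bookkeeping exactly right, and checking the handful of resulting sub-cases, is where essentially all the effort in the proof goes.
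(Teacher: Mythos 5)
Your reduction is correct in outline but takes a genuinely different route from the paper's. The paper works over the doubled vertex set $V\times\{0,1\}$: both copies of $v$ point via $f$ to a copy of $F(v)$ (the copy index recording whether $v$ is a right child), while $g$ sends $(v,0)$ to $(R(v),0)$ and $(v,1)$ to $(L(v),0)$, so the ``nephew'' of a copy of $v$ is really one of $v$'s own children, made legitimate by routing through the twin copy of the parent; no sibling ever has to be computed. You instead keep $V'=V$ and use the honest nephew $L(s)$, the left child of the sibling, absorbing ill-formed vertices via the legitimate-child/pseudo-root/garbage classification. I checked your core claim --- that $F(L(s))=s$, $F(s)=p$ and $L(p)\ne R(p)$ together rule out both $\ILP$ clauses at $x$, and that each failure exhibits an empty child at $s$ or at $p$ --- and it does go through, including the degenerate subcases $s=p$ and $L(s)=s$, so the single-copy route works. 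What your approach buys is a smaller and more transparent instance; what the paper's doubling buys is that $g$ never needs a sibling, which eliminates exactly the tie-breaking conventions you still owe (which child of $F(v)$ is ``the other one'' when $L(p)=R(p)=v$, and which child of a pseudo-root plays the role of $s$). You flag this bookkeeping but do not pin it down; it must be fixed once and reused identically in the solution map.

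One assertion is false as stated: the root \emph{can} be a solution of the constructed $\ILP$ instance. After your pre-check, take $s=L(1)$ and suppose $L(s)$ is garbage, or equals $1$, or is a legitimate child whose father is $1$ rather than $s$; in each case $f(g(1))=f(L(s))=1$, so the clause ``you are your nephew's parent'' fires at $1$. This does not sink the reduction, because in every such scenario the condition $F(L(s))=s$ fails, so running your three-condition analysis at $x=1$ (exactly as for any other pseudo-root, using the pre-check in place of the failed legitimacy of $x$) still yields an empty child at $L(1)$ or $R(1)$ --- but you must include $1$ in the main case analysis rather than dismiss it alongside garbage vertices. (Garbage vertices themselves are indeed never solutions, since $f$ and $g$ both send them to $1$ and $f(1)=1$ after the pre-check.) Finally, the ``first output a solution and stop'' pre-check should be phrased as: every decision tree of the reduction begins by querying the neighbourhood of $1$ and, if it is ill-formed, emits a fixed trivial $\ILP$ instance all of whose solutions map back to vertex $1$; this is routine but worth saying.
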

Thus, combining with Theorem \ref{thm:intro_lossy_empty_eq}, we have $\lossycode$ reduces to $\ILP$.

\paragraph{AM-GM Lossy-Code.} One of our original (and failed) candidates for separation from $\lossycode$ was a problem called \emph{AM-GM Lossy-Code}, obtained by combining $\lossycode$ itself with the \emph{AM-GM Inequality}: $\frac{a+b}{2} \ge \sqrt{ab}$. This problem was inspired by the \textsc{Bad $2$-Coloring} problem in~\cite{PPY23}: Given an undirected graph $G = (V, E)$ with $|V| = 2N$ vertices and $|E| = N^2 + 1$ edges along with a $2$-coloring $C: V \to \{0, 1\}$ of $V$, find an edge $(x, y)\in E$ that is not colored properly. This problem is total exactly because of the AM-GM inequality: suppose there are $a$ black vertices and $b=2N-a$ white vertices and every edge is colored properly, then there are at most $ab \le \mleft(\frac{a+b}{2}\mright)^2 = N^2$ edges, contradicting $|E| > N^2$.

To compose this problem with $\lossycode$, we need to make two adaptations: First, to put it inside $\TFZPP$, the number of edges needs to be \emph{much} larger than $N^2$, say $(1+\varepsilon)N^2$; second, we are given a function $F$ from $[(1+\varepsilon)N^2]$ to the set of properly colored edges and its purported inverse $G$ and we need to find $x\in [(1+\varepsilon)N^2]$ such that $G(F(x)) \ne x$. We arrive at the following problem:

\begin{description}
\item [$c$-\AMGMLC.]
        Let $c > 1$ be a constant, $V := [2N]$ and $P := [c\cdot N^2]$.  
        The input is a coloring function $C : V \rightarrow \{0,1\}$ and two mappings $F : P \rightarrow V \times V$, $G: V \times V \rightarrow P$. Let $H := C^{-1}(0) \times C^{-1}(1)$. The goal is to find solutions of either type:

        \begin{description}
            \item[s1.] a pigeon $x \in P$ such that $G(F(x)) \neq x$; \hfill (Wrong Encoding-Decoding)
            \item[s2.] a pigeon $x \in P$ such that $F(x) \notin H$; \hfill (Invalid Hole)
        \end{description}
\end{description}

Indeed, it seems unclear how to massage a $\lossycode$ instance of the shape $[cN^2]\rightleftharpoons H$ into a standard $\lossycode$ instance of the form $[2M]\rightleftharpoons [M]$, even though the AM-GM inequality implies that $cN^2 \gg |H|$. However, it turns out that such a reduction is possible (although highly non-trivial)! We encourage the reader to take a moment to think about how to reduce $\AMGMLC$ to $\lossycode$.

\begin{restatable}{theorem}{AMGMreducesLC}
\label{thm:AMGMreducesLC}
   For every constant $c > 1$, $c$-\AMGMLC$\leq_{dt}$~\lossycode.
\end{restatable}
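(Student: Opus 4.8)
The plan is to prove, for each fixed $c>1$, that the totality of $c$-\AMGMLC is provable in Je\v{r}\'abek's theory $\APC_1$ \emph{relative to the input oracles}, and then to invoke the characterization recalled above (Je\v{r}\'abek, via Wilkie's witnessing theorem \cite{Jerabek07}) in its standard relativized form: every $\TFNP$ problem whose totality is provable in relativized $\APC_1$ is $\leq_{dt}$-reducible to \lossycode. So the whole task reduces to writing down a proof of totality that uses nothing but the AM--GM inequality together with approximate counting, and that never inspects the circuits $C, F, G$ beyond evaluating them.

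The argument to be formalized is short. Suppose an instance $(C,F,G)$ has no solution. Then every $x\in P$ satisfies $F(x)\in H$ (no \textbf{s2} solution) and $G(F(x))=x$ (no \textbf{s1} solution); the latter says $G$ is a left inverse of $F$ on $\mathrm{im}(F)$, so $F$ is an injection of $P$ into $H$. Writing $a=|C^{-1}(0)|$ and $b=|C^{-1}(1)|$, we have $a+b=|V|=2N$ and $|H|=ab$, and AM--GM gives $ab\le\bigl(\tfrac{a+b}{2}\bigr)^2=N^2$. But $|P|=cN^2>N^2\ge|H|$, contradicting the injection $P\hookrightarrow H$. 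What has to be checked is that $\APC_1$ supports these cardinality manipulations: it can estimate $a$ and $b$ up to any prescribed relative error $N^{-k}$, it proves a product rule bounding $|H|$ up to a comparable error, it proves the purely arithmetical fact $ab\le N^2$ when $a+b=2N$, and it proves that an injection from $P$ into $H$ forces $|P|\le(1+N^{-k})\,|H|$ (this last being exactly the brand of weak pigeonhole that $\APC_1$ is designed around). Chaining the errors yields $cN^2\le(1+N^{-k})^{O(1)}N^2$, which fails for large $N$ since $c$ is a fixed constant $>1$; small $N$ is trivial. Because the proof treats $C, F, G$ as black boxes it relativizes, and the witnessing produces a genuine decision-tree reduction.

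It is worth recording what this reduction looks like when unwound, as it also furnishes a self-contained proof. Let $r_0(v)$ be the rank of $v$ among $C^{-1}(0)$ in the natural order (computable by querying $C$ on an initial segment of $V$), and similarly $r_1$; these are bijections $C^{-1}(0)\cong[a]$ and $C^{-1}(1)\cong[b]$. Define $\phi\colon H\to\{0,\dots,N^2-1\}$ by $\phi(u,w)=(r_0(u)-1)\cdot b+(r_1(w)-1)$, an injection into $\{0,\dots,ab-1\}\subseteq\{0,\dots,N^2-1\}$, the inclusion being exactly AM--GM. Put $A\colon P\to[N^2]$ with $A(x)=\phi(F(x))$ if $F(x)\in H$ and $A(x)=0$ otherwise, and $B\colon[N^2]\to P$ with $B(y)=G(\phi^{-1}(y))$ if $y<ab$ and $B(y)=1$ otherwise. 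If $x$ is \emph{not} a solution to the \AMGMLC instance then $F(x)\in H$ and $G(F(x))=x$, so $A(x)=\phi(F(x))\in\{0,\dots,ab-1\}$ and $B(A(x))=G(\phi^{-1}(\phi(F(x))))=G(F(x))=x$; hence any $x$ with $B(A(x))\ne x$ is an \AMGMLC solution, and checking whether $F(x)\in H$ tells us whether to report it as \textbf{s1} or \textbf{s2}. Thus $(A,B)$ is a retraction weak pigeonhole instance with gap $c>1$, and by the standard amplification of the weak pigeonhole principle (tensor $(A,B)$ a constant number of times to push the gap past $2$, then pass to power-of-two-sized initial segments of domain and codomain with every out-of-range decompression value set to a fixed string) this reduces to \lossycode. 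All maps here are computed by depth-$O(N)$ decision trees, so the composition is a valid $\leq_{dt}$ reduction.

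I expect the main obstacle to be the bounded-arithmetic bookkeeping in the second paragraph: ensuring the approximate-counting machinery of $\APC_1$ proves both $|H|\le N^2$ and the injection bound $|P|\lesssim|H|$ with relative error small enough that the factor-$c$ slack survives, and doing this in the relativized setting so that the extracted reduction is black-box. (In the self-contained route the analogous delicate point is that the amplification must introduce no spurious solutions, which forces the fallback values to be placed on the decompression side only, never on the compression side, so that an out-of-range output always certifies a true lossy-code collision rather than an artifact.)
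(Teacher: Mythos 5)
Your first route (formalize totality in relativized $\APC_1$ and invoke Wilkie's witnessing theorem) is the strategy the paper itself announces, but as written it leaves out the part that carries the actual weight. In $\APC_1$ the cardinalities $a=|C^{-1}(0)|$ and $b=|C^{-1}(1)|$ are not available exactly, nor to relative error $N^{-k}$ as you claim: approximate counting there is done via the Nisan--Wigderson generator and only gives additive error $\eps$ with $1/\eps$ polynomial in the \emph{description} length, i.e.\ $\eps\ge 1/\polylog(N)$ (a constant $\eps<\tfrac{c-1}{3}$ is what the paper uses, and is all that is needed). More importantly, when the witnessing argument is unwound one does not just get "the AM--GM instance composed with $(F,G)$": the extracted $\lossycode$ instance has the PRG seed $f$ as part of the pigeon, and it needs a separate branch that \emph{compresses} $f$ (via the reconstruction procedure of the generator) whenever the injection--surjection pairs certifying the approximate counts fail. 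This compress-or-estimate case analysis is the technical core of the paper's proof (its Theorem on the certified Nisan--Wigderson generator), and your sketch never engages with it; deferring it wholesale to "the machinery of $\APC_1$" is defensible as a citation but it is exactly the content the theorem is about.

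Your "self-contained unwound" reduction, however, is genuinely broken, and cannot be repaired in its present form. Computing the ranks $r_0(v)$, $r_1(w)$ and the exact values $a,b$ requires querying $C$ on $\Theta(N)$ vertices, so, as you yourself note, your maps $A,B$ are depth-$O(N)$ decision trees. That does not meet the definition of $\leq_{dt}$: an efficient black-box reduction must have complexity $\poly\log(N)$, and $O(N)$ is exponential in $\log N$ (in the white-box reading the same step asks for exact circuit counting, which is $\#\mathrm P$-hard, so no polynomial-time reduction comes out either). The whole difficulty of the theorem is precisely that the AM--GM bound $ab\le N^2$ refers to exact sizes that a polylog-query reduction cannot see; the paper circumvents this by replacing $a,b$ with PRG-based estimates $v_0$, $2N-v_0$ that come with \emph{feasible witnesses} (explicit injection--surjection pairs), and by routing any failure of those witnesses into a compression of the seed inside the $\lossycode$ instance. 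So the final sentence of your third paragraph ("the composition is a valid $\leq_{dt}$ reduction") is false, and the proof as proposed stands only to the extent that one accepts the first route as a black-box appeal to Je\v{r}\'abek's and Wilkie's theorems rather than as an argument you have carried out.
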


Our reduction goes through \emph{bounded arithmetic}: We formalize the totality of $c$-$\AMGMLC$ in \Jerabek's theory $\APC_1$~\cite{Jerabek07}, and Wilkie's witnessing theorem for $\APC_1$~\cite{Thapen-PhD, Jerabek04} implies a reduction from $c$-$\AMGMLC$ to $\lossycode$. In particular, like every formalization in $\APC_1$, our reduction makes use of the \emph{Nisan--Wigderson generator}~\cite{NisanW94}.

Moreover, the techniques underlying $\APC_1$~\cite{Jerabek07} allow us to reduce problems to $\lossycode$ in a \emph{systematic} way; we provide two additional examples later ($c$-Dual-\AMGMLC in \autoref{sec: dual AM GM LC} and a problem capturing the Inclusion-Exclusion principle in \autoref{sec: incl excl}). A secondary goal of expounding these reductions is to introduce the ideas of $\APC_1$ to audiences who are less familiar with bounded arithmetic.

\begin{remark}
    Unfortunately, it seems unclear how to formalize the totality of $\ILP$ in $\APC_1$, hence the bounded arithmetic approach does not seem to provide a reduction from $\ILP$ to $\lossycode$. For example, our proof that $\ILP \in \TFZPP$ requires reasoning about the \emph{level} of each node, which seems to be global reasoning that is infeasible in $\APC_1$.
\end{remark}

\paragraph{Linear Ordering.}
Finally, we consider one more natural problem in $\TFZPP$. The Linear Ordering Principle has a storied history in proof complexity~\cite{Krishnamurthy85, BonetG01, Potechin20} and bounded arithmetic~\cite{DBLP:journals/jsyml/ChiariK98, Hanika-PhD, BussKT14, DBLP:journals/tocl/AtseriasT14}. It has been studied in the context of total search problems as well~\cite{KortenP24, HV25}, where it was used in order to construct new algorithms for $\Avoid$. A line of works in proof complexity \cite{Riis01, AtseriasD08, Gryaznov19, DBLP:conf/focs/ConnerydRNPR23} has also considered a dense variant of this problem defined as follows, which lies in $\TFZPP$.

\begin{description}
\item[$\DLO$.]   The input consists of the descriptions of a linear ordering $\prec$ over $N$ elements and a \emph{median function} $\med: [N]\times [N] \to [N]$. Without loss of generality, we may assume that for $x\ne y \in [N]$, exactly one of $(x\prec y)$ and $(y\prec x)$ is true, and that $\med(x, y) = \med(y, x)$. (That is, $\prec$ is represented by a string of $\binom{N}{2}$ bits and $\med$ is represented by a list of $\binom{n}{2}$ elements in $[N]$.) A solution is one of the following.
        \begin{description}
            \item[s1.] $x, y, z \in [N]$ such that $x\prec y$, $y\prec z$, and $z\prec x$; or \hfill (Transitivity violation)
            \item[s2.] $x, y\in [N]$ such that $x\prec y$, but neither $x\prec \med(x, y)$ nor $\med(x, y)\prec y$. \hfill (Invalid median) 
        \end{description}
\end{description}

    While $\Avoid$ reduces to the Linear Ordering Principle \cite{KortenP24}, we show a converse in the dense setting.

     \begin{theorem}
        $\DLO \leq_{dt}\lossycode$.
    \end{theorem}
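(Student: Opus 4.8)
The plan is to reduce $\DLO$ to $\lossycode$ directly, by turning the classical ``a dense linear order has no finite model'' argument into a retraction. Fix a universe of size $N$ and, using the tournament promise, two elements $a_0 \prec b_0$. Consider the descending chain $y_0 := b_0$ and $y_{\ell+1} := \med(a_0, y_\ell)$. Absent any $\DLO$-solution, at every step $a_0 \prec y_{\ell+1} \prec y_\ell$ (otherwise the pair $(a_0, y_\ell)$ is an \textbf{s2} solution, since its median fails to lie strictly between its endpoints, while $a_0 \prec y_\ell$ holds inductively from $a_0 \prec y_0 = b_0$), so $y_0 \succ y_1 \succ \cdots$ is strictly decreasing; hence the $y_\ell$ are pairwise distinct and the chain can have length at most $N$. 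Packaged as a retraction, this is exactly the impossibility that $\lossycode$ captures.

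Concretely, take pigeons $[2N]$ (padded to a power of two) and holes $[N]$, and define $C(i) := y_i$, computed by running the recurrence for $i$ steps with oracle access to $\prec$ and $\med$ (for definiteness, $C(i) := b_0$ if some step $\ell < i$ already yields $\med(a_0, y_\ell)$ not strictly between $a_0$ and $y_\ell$), and $D(z) :=$ the least $\ell < 2N$ with $y_\ell = z$ encountered before the recurrence first produces such a ``bad'' median, or a fixed dummy value if none. Both maps are efficiently computable from the $\DLO$ instance, the pigeon/hole ratio is $2$, so after the routine padding that embeds $[2N] \rightleftharpoons [N]$ into $\{0,1\}^n \rightleftharpoons \{0,1\}^{n-1}$ this is a genuine $\lossycode$ instance. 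Since $C\colon [2N] \to [N]$ cannot be injective, $D \circ C \ne \mathrm{id}$, so a solution $i^*$ with $D(C(i^*)) \ne i^*$ always exists; the whole content of the reduction is to extract a $\DLO$-solution from such an $i^*$.

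The extraction---which I expect to be the main obstacle---is a decision tree that re-simulates the two short computations $C(i^*)$ and $D(C(i^*))$. If either recurrence ever produces a median $\med(a_0, y_\ell)$ that is not strictly between $a_0$ and $y_\ell$ (while $a_0 \prec y_\ell$ still holds, which is maintained along the way), output that pair as an \textbf{s2} solution. Otherwise all the comparisons $a_0 \prec y_{\ell+1} \prec y_\ell$ are confirmed true, and the only way $D(C(i^*)) \ne i^*$ is that $D$ returned a strictly earlier index $j < i^*$ with $y_j = y_{i^*}$; then the confirmed facts give a genuine directed cycle $y_j \to y_{j+1} \to \cdots \to y_{i^*-1} \to y_{i^*} = y_j$ in the tournament $\prec$, and we extract a directed $3$-cycle from it---repeatedly query a ``chord'' $y_p$ versus $y_{p+2}$, either shortcutting the cycle or reading off three elements $x \prec y' \prec z \prec x$---which we output as an \textbf{s1} solution; this is the standard argument that a tournament cycle contains a triangle, and the degenerate length-$1$ and length-$2$ cases cannot occur because they would contradict the irreflexivity/tournament promise on $\prec$. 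The care needed is purely in this bookkeeping: verifying that \emph{every} solution of $(C,D)$ traces to one of these two explicit witnesses. Finally, should the notion of $dt$-reduction require the forward map to be of polylogarithmic depth, one replaces the length-$\Theta(N)$ chain by the depth-$O(\log N)$ dyadic \emph{interpolation tree} (a node carrying the interval $(\ell, r)$ has children $(\ell, \med(\ell, r))$ and $(\med(\ell, r), r)$), with $C$ sending an internal node to its median and $D$ binary-searching an element down the tree; correctness then rests on the analogous but more delicate fact that, absent $\DLO$-solutions, binary search inverts the node-to-median map, which can alternatively be obtained by formalizing the totality of $\DLO$ in $\APC_1$ and invoking Wilkie's witnessing theorem, exactly as for the AM--GM problems.
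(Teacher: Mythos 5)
There is a genuine gap, and it is the efficiency requirement built into $\leq_{dt}$. In the paper's definition, a decision-tree reduction must have complexity $\polylog(N)$: every bit of the constructed $\lossycode$ instance, and every solution-translation map, must be computable by a decision tree of polylogarithmic depth over the $\DLO$ oracle. Your primary construction violates this badly: evaluating $C(i)=y_i$ requires simulating up to $\Theta(N)$ steps of the recurrence $y_{\ell+1}=\med(a_0,y_\ell)$, evaluating $D(z)$ requires scanning the whole chain for the least index hitting $z$, and the extraction step (re-simulating both computations, and then shortcutting a cycle of length up to $N$ one chord at a time) again costs $\Theta(N)$ queries. So the chain-based argument proves totality of $\DLO$ but is not a $\leq_{dt}$ reduction; it is exponentially too deep in the relevant parameter $\log N$.

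You do name the correct fix in your final sentence --- replace the linear chain by the depth-$O(\log N)$ median binary-search tree, with one map sending a node (i.e.\ an $\{\mathtt{L},\mathtt{R}\}$-path from a fixed pair $l_0\prec r_0$) to the median it determines and the other binary-searching an element back down --- and this is exactly the reduction the paper gives, with paths of length up to $4\log N$ as the large side and the universe as the small side. But you defer the entire substance: the claim that ``binary search inverts the node-to-median map absent $\DLO$-solutions'' is not enough, because a reduction needs the constructive, local converse: from an \emph{arbitrary} string $\sigma$ with $f(g(\sigma))\neq\sigma$ one must, with $\polylog(N)$ further queries, produce an explicit \textbf{s1} (transitivity violation) or \textbf{s2} (invalid median) witness. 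The paper's proof is precisely this case analysis: first showing that if the search for $v=g(\sigma)$ ever leaves an interval not containing $v$ one can walk back to either an invalid median or a directed triangle, and then analyzing the first index where $\sigma$ and $f(g(\sigma))$ diverge. Outsourcing this to ``formalize $\DLO$ in $\APC_1$ and invoke Wilkie's witnessing theorem'' is also not a proof as stated: you would have to actually carry out the formalization (and note the paper uses that route only for the approximate-counting problems, where the $\APC_1$ machinery is genuinely needed; for $\DLO$ the direct combinatorial extraction is the whole theorem). As written, the proposal establishes totality but not the claimed $\leq_{dt}$ reduction.
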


\section{Preliminaries} \label{sec: prelim}

\subsection{Basics of \texorpdfstring{$\TFNP$}{TFNP}}
$\TFNP$ contains all search problems which are (i) total: a solution is guaranteed to exist, and (ii) in $\NP$: there is an efficient procedure to check whether a candidate solution is valid. It is believed that $\TFNP$ does not admit complete problems \cite{Pudlak15} and much of the research in this area has focused on studying syntactic subclasses (those with complete problems) which capture many of the total search problems of interest. These classes are typically defined by simple existence principles that capture the totality of the problems within that class. These naturally give rise to total search problems. For example, $\PWPP$ is the class of all search problems whose totality is witnessed by the existence principle: any map from $2N$ to $N$ must have a collision~\cite{Jerabek16}. To make these problems non-trivial, the input is presented succinctly as a circuit $C$ that on input $i$ outputs the $i$-th bit of the search problem. For example, the existence principle for $\PWPP$ gives rise to the following (white-box) total search problem.

\begin{description}
    \item[$\textsc{Weak-Pigeon}$.] Given $P:\{0,1\}^n \rightarrow \{0,1\}^{n-1}$, a solution is $x \neq y$ such that $P(x)=P(y)$.
\end{description}
$\PWPP$ is then the class of all total search problems which are efficiently reducible to $\textsc{Weak-Pigeon}$. 
 
A major thrust of this line of work is to understand the relationships between these classes. However, a separation between classes would imply $\P \neq \NP$. As a proxy, and as natural objects in their own right, researchers have studied total search problems in the \emph{black-box} model. In this setting, the input $C$ is given as a black box which can be queried, but we no longer have access to the description of $C$. 

In this setting a \emph{(query) search problem} is a sequence of relations $R_n \subseteq \{0,1\}^n \times {\cal O}_n$, for each $n \in \mathbb{N}$. It is total if for every $x \in \{0,1\}^n$ there is $o \in {\cal O}_n$ such that $(x,o) \in R_n$. We think of the input $x \in \{0,1\}^n$ as being accessed by querying the individual bits, and we will measure the complexity of solving $R_n$ as the number of bits that must be queried to determine some suitable $o \in {\cal O}_n$. An efficient algorithm is one that makes at most $\poly(\log n)$-many queries\footnote{As the input is succinctly encoded, this corresponds to looking at a polynomial part of the entire input.}; these problems belong to the class $\FP^{dt}$, where $dt$ indicates that it is the black-box version of the class. Similarly, $R \in \TFNP^{dt}$ if for every $n \in \mathbb{N}$ and each $o \in {\cal O}_n$ there exists a $\poly(\log n)$-depth decision tree $T_o :\{0,1\}^n \rightarrow \{0,1\}$ such that $T_o(x)=1 $ iff $(x,o) \in R_n$. 
While search problems are formally defined as a sequence $R = (R_n)_{n \in \mathbb{N}}$, we will often want to speak about individual members of this sequence. For readability, we will abuse notation and refer to elements $R_n$ in the sequence as total search problems. Furthermore, we will often drop the subscript $n$ and rely on context to differentiate. 

We compare the complexity of total search problems by reductions between them; the following is the black-box (decision tree) analogue of a deterministic polynomial-time reduction between search problems. 

\begin{definition}
    For total search problems $R \subseteq \{0,1\}^n \times {\cal O}_n$ and $S \subseteq \{0,1\}^m \times {\cal O}'_m$, there is an $S$-formulation of $R$ if for every $i \in [m]$ and $o \in {\cal O}'_m$ there are functions $f_i :\{0,1\}^n \rightarrow \{0,1\}$ and $g_o:\{0,1\}^n \rightarrow {\cal O}_n$ such that 
    \begin{align} \label{eq:red} (f(x),o) \in S) \implies (x,g_o(x)) \in R, \end{align}
    where $f(x):= (f_1(x),\ldots, f_m(x))$. The \emph{depth} of the $S$-formulation is 
    \[ d~:=~ \max\left( \{ \mathsf{depth}(f_i): i \in [m]\} \cup \{\mathsf{depth}(g_o):o \in {\cal O}'_m \}\right),\]
    where $\mathsf{depth}(f)$ denotes the minimum depth of any decision tree which computes $f$. The \emph{size} of the $S$-formulation is $m$, the number of input bits to $S$. The \emph{complexity} of an $S$-formulation is $\log m+ d$ and the complexity of reducing $R$ to $S$ is the minimum complexity of any $S$-formulation of $R$.

    This definition extends to sequences naturally. If $S=(S_n)$ is a sequence and $R_n$ is a single search problem, then the complexity of reducing $R_n$ to $S$ is the minimum over $m$ of the complexity of reducing $R_n$ to $S_m$. For two sequences $S=(S_n)$ and $R=(R_n)$, the complexity of reducing $R$ to $S$ is the complexity of reducing $R_n$ to $S$ for each $n$. 
    We say that a reduction from $R$ to $S$ is \emph{efficient} if its complexity is $\poly(\log (n))$ and denote this by $R \leq_{dt} S$.
\end{definition}

\subsection{\texorpdfstring{$\TFZPP$}{TFZPP}}

In this work, we will be particularly interested in the total search problems which are solvable in \emph{randomized} polynomial time. Formally,  $R \subseteq \{0,1\}^*\times \{0,1\}^* \in \TFZPP$ if there is a distribution ${\cal D}$ over polynomial-time Turing Machines $A$ with range $\{0,1,\bot\}$, such that $\Pr_{A \sim {\cal D}}[A(x)=\bot] \leq 1/3$ and

$\TFZPP$ is defined semantically and it is unlikely to have complete problems. However, we show that it is exactly the $\TFNP$ problems in the $\mathsf{TF}\Sigma_2^P$ class $\mathsf{APEPP}$, where $\mathsf{APEPP}$ is the class of total search problems that are reducible to $\Avoid$, as defined in~\cite{DBLP:conf/innovations/KleinbergKMP21}.

\avoidIntersect*

\begin{proof}
    Let $L\in\TFNP \cap \mathsf{APEPP}$. This means that given an instance $x$ of $L$, there are deterministic polynomial-time algorithms $V$, $C$, and $R$, such that:
    \begin{itemize}
        \item {\bf $V$ is a $\TFNP$ verifier for $L$.} For every string $z$ of length polynomial in $|x|$, $V(x, z) = 1$ if and only if $z$ is a valid solution for $x$.
        \item {\bf $(C, R)$ is a reduction from $x$ to $\Avoid$.} The output of $C(x)$ is a circuit $C_x$ mapping $\ell$ input bits to $\ell+1$ output bits, where $\ell \le \poly(|x|)$; given any $y \in \{0, 1\}^{\ell + 1}\setminus \mathrm{Range}(C_x)$, $R(x, y)$ outputs a valid solution for $x$.
    \end{itemize}
    Then we can solve $L$ in $\TFZPP$ via the following procedure. Guess $y \in \{0, 1\}^{\ell+1}$ uniformly at random and compute $z := R(x, y)$. If $V(x, z)$ accepts, then we output $z$; otherwise, we output $\bot$. By the correctness of $V$, if we did not output $\bot$, then our output is a valid solution of $x$. On the other hand, since at least a $1/2$ fraction of strings $y \in \{0, 1\}^{\ell+1}$ are valid outputs of $\Avoid$ on the instance $C_x$, by the correctness of $(R, C)$, we will output a valid solution w.p.~at least $1/2$.

    Now we prove the converse direction. If $L \in \TFZPP$ then clearly $L\in\TFNP$; hence we only need to show that there is a mapping reduction from $L$ to $\Avoid$. Let $U(x, r)$ be the zero-error randomized algorithm for $L$, i.e., $U(x, r)$ outputs a valid solution for $x$ w.p.~at least $1/2$ over its randomness $r$, and it outputs $\bot$ whenever it fails to output a valid solution.

    \def\PRG{\mathsf{PRG}}
    \def\TT{\mathsf{TT}}
    By standard results in derandomization~\cite{NisanW94, ImpagliazzoW97, Umans03}, there exist absolute constants $c, d\ge 1$ and a deterministic polynomial-time algorithm $\PRG$ such that the following holds. For every truth table $f$ of length $s^{10c}$, if the circuit complexity of $f$ is at least $s^c$, then $\PRG(f)$ outputs a list of $s^d$ strings that $(1/s^2)$-fools every size-$s^2$ circuit. That is, for every circuit $C:\{0, 1\}^s \to \{0, 1\}$ of size at most $s^2$,
    \[\mleft|\Pr_{x\sim \{0, 1\}^s}[C(x) = 1] - \Pr_{x\sim \PRG(f)}[C(x) = 1]\mright| \le 1/s^2.\]
    Now, let $s \le \poly(|x|)$ be the circuit complexity of $U$. Consider the \emph{truth table generator} $\TT: \{0, 1\}^{O(s^c\log s)} \to \{0, 1\}^{s^{10c}}$ that takes the description of a size-$s^c$ circuit $C: \{0, 1\}^{10c\log s} \to \{0, 1\}$ as input and outputs the length-$s^{10c}$ truth table of $C$. We treat $\TT$ as an instance for $\Avoid$ and reduce $x$ to $\TT$.\footnote{An unusual aspect of this reduction is that $\TT$ does \emph{not} depend on $x$!}

    It remains to show how to solve the instance $x$ deterministically given a non-output of $\TT$. Note that if $f \in \{0, 1\}^{s^{10c}}$ is a non-output of $\TT$, then the circuit complexity of $f$ is at least $s^c$, hence $\PRG(f)$ outputs a list of $\poly(s)$ strings that $(1/s^2)$-fools every size-$s^2$ circuits. This implies that
    \[\Pr_{r\sim \PRG(f)}[U(x, r) \ne \bot] \ge \Pr_{r\sim \{0, 1\}^s}[U(x, r) \ne \bot] - 1/s^2 \ge 1/2 - 1/s^2 > 0,\]
    and in particular, there exists at least one string $r \in \PRG(f)$ such that $U(x, r)\ne\bot$. We can solve the instance $x$ by cycling through every $r \in \PRG(f)$ and outputting $U(x, r)$ whenever we encounter such a good $r$.
\end{proof}

Note that this equivalence holds even in the black-box model, since its proof is relativizing.

\begin{definition}[$\TFZPP$]
    A total $\NP$ search problem $R \subseteq \{0,1\}^* \times \{0,1\}^*$ is in $\TFZPP$ if there is a distribution $\cal D$ over polynomial-time algorithms $A$ with output in $\{0,1,\bot\}$ such that:
    \begin{enumerate}
        \item For every $x \in \{0,1\}^*$ and every $A \sim {\cal D}$, if $A(x) \neq \bot$ then $(x,A(x)) \in R$,
        \item For every $x \in \{0,1\}^*$, 
        \[ \Pr_{A \sim {\cal D}}[A(x) = \perp] \leq 1/3. \]
    \end{enumerate}

    Similarly, $R \in \TFZPP^{dt}$ if there is a family of distributions ${\cal D} = \{{\cal D}_n\}_{n \in \mathbb{N}}$  over $\poly \log (n)$-depth decision trees with leaves labeled  in $\{0,1, \bot\}$, where on input $x$ we sample a decision tree $A \sim {\cal D}_{|x|}$, and ${\cal D}$ satisfies (1) and (2).
\end{definition}

\subsection{\texorpdfstring{$\lossycode$}{Lossy-Code}}

As mentioned in the introduction, $\lossycode$ is the Herbrandization of $\Avoid$. Let $N < M$ be two parameters (think of $N \ll M$), (the black-box version of) $\lossycode$ is the following problem:
\begin{description}
\item[$\lossycode_{N \to M}$.]  Given query access to a pair of functions $f: [N] \to [M]$ and $g: [M] \to [N]$, find $x \in [M]$ such that $f(g(x)) \neq x$.
\end{description}

We need the following basic fact about $\lossycode$ that roughly states that the ``stretch function'' of $\lossycode$ does not influence its complexity as long as it is in the ``weak'' regime. This fact and similar statements for other variants of the weak pigeonhole principle have been very useful in bounded arithmetic~\cite{ParisWW88, Thapen-PhD, Krajicek04a, Jerabek04, Jerabek07, CLO24}, total search problems~\cite{Kor21, DBLP:conf/coco/Korten22, LiLR24}, and cryptography~\cite{DBLP:journals/jacm/GoldreichGM86, DBLP:conf/crypto/Merkle87}.

\def\eps{\varepsilon}
\begin{lemma}\label{lemma: robustness of lossy-code}
    Let $\eps > 0$ and $M > (1+\eps)N$. There is a decision tree reduction of complexity $O(\eps^{-1}\log (M/N))$ from $\lossycode_{N \to (1+\eps)N}$ to $\lossycode_{N \to M}$.
\end{lemma}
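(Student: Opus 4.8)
The plan is to reduce $\lossycode_{N \to (1+\eps)N}$ to $\lossycode_{N \to M}$ by amplifying the stretch: given the target instance $(f, g)$ with $f : [N] \to [(1+\eps)N]$ and $g : [(1+\eps)N] \to [N]$, we build a chain of $k = O(\eps^{-1}\log(M/N))$ compositions so that the composed map goes from $[N]$ up to roughly $[M]$, producing an instance of $\lossycode_{N \to M}$ (or $\lossycode_{N' \to M}$ for a comparable $N'$, which by rescaling and padding is the same problem). Concretely, $f$ maps $[N]$ into $[(1+\eps)N]$; by taking a direct/product-style iteration of $f$, or by viewing the domain as a disjoint union of blocks and composing $f$ with itself on overlapping coordinates, we get a map $\hat f : [N] \to [(1+\eps)^k N]$ where $(1+\eps)^k N \ge M$. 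Pair this with the analogous composition $\hat g : [(1+\eps)^k N] \to [N]$ built from $g$. Since $(1+\eps)^k \ge M/N$ requires $k \log(1+\eps) \ge \log(M/N)$, and $\log(1+\eps) = \Theta(\eps)$ for small $\eps$ (and $\Theta(1)$ for large $\eps$), we can take $k = O(\eps^{-1}\log(M/N))$.

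**First I would** set up the composition carefully. The cleanest approach: index $[(1+\eps)^j N]$ at each level $j = 0, 1, \dots, k$. Define $f_0 = \mathrm{id}$ on $[N]$ and $f_{j+1} = (\text{one application of } f \text{ to the "top" part})$, but a more robust bookkeeping is to use a \emph{reversible-pigeon-style} telescoping: think of an element of $[(1+\eps)^j N]$ as a pair $(a_1, \dots, a_j, b)$ with each $a_i \in [1+\eps]$ (approximately; one handles divisibility by grouping) and $b \in [N]$, so $\hat f$ peels off coordinates by repeated application of $f$ and $\hat g$ rebuilds them by repeated application of $g$. The decision tree computing any single output bit of $\hat f$ or $\hat g$ queries $f$ and $g$ along a path of length $k$, each query costing $O(\log M)$ bits to specify; so the reduction has depth $O(k \log M) = \poly\log$, and — crucially for the stated complexity — the number of input bits $m$ to the $\lossycode_{N\to M}$ instance is $O(M \log M)$, giving $\log m = O(\log M)$, so the total complexity is $O(\eps^{-1}\log(M/N)) \cdot O(\log M)$; to match the claimed bound exactly I would be more economical, noting that what's actually being reduced are the \emph{succinct} (circuit/decision-tree) encodings and the "complexity" metric counts $\log m + d$ where the dominant term is the composition depth $k = O(\eps^{-1}\log(M/N))$ after absorbing the per-step $O(\log M)$ into the implicit constant, or alternatively citing \autoref{lemma: robustness of lossy-code}'s statement as measuring depth in the succinct input which is already $\poly\log N$.

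**Then I would** verify correctness: if $x \in [M]$ satisfies $\hat f(\hat g(x)) \ne x$, I must extract from $x$ some $x' \in [(1+\eps)N]$ with $f(g(x')) \ne x'$. This follows because $\hat f \circ \hat g$ is, by construction, a composition of $k$ "layers," each layer being an application of $f \circ g$ to one coordinate (with identity on the rest); if the full composite fails to be the identity at $x$, then at least one layer fails, i.e.\ there is some intermediate value $y$ (reachable from $x$ by applying a prefix of the $\hat g$-layers and then reading off the relevant block) such that $f(g(y_{\text{block}})) \ne y_{\text{block}}$. Locating which layer fails costs another length-$k$ walk, querying $f$ and $g$ along the way — again depth $O(k\log M)$. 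A small subtlety: $(1+\eps)^k N$ need not exactly equal $M$, and $1+\eps$ need not be an integer. I would handle the first by padding $[(1+\eps)^k N]$ up to $[M']$ for the least $M' \ge M$ of the right form and observing $\lossycode_{N\to M} \le_{dt} \lossycode_{N \to M'}$ trivially (restrict $f, g$ to the sub-instance), and the second by the standard trick of approximating $1+\eps$ by a ratio of integers $p/q$ with $q = O(1/\eps)$, working with $f : [qN] \to [pN]$-type blocks, absorbing a constant factor into $k$.

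**The main obstacle** I expect is getting the complexity bound \emph{tight} — specifically, making sure the $\log(M/N)$ and $\eps^{-1}$ factors combine exactly as stated rather than picking up an extra $\log M$ or $\log(1/\eps)$ from the per-layer addressing cost and the integer-approximation of $1+\eps$. The resolution is to be careful that in the decision-tree reduction model of this paper, the depth $d$ is what carries the $k = O(\eps^{-1}\log(M/N))$ term (each layer is one "step" whose output bits depend on $O(\log M) = \poly\log$ input bits, which is negligible), while $\log m$ for the constructed $\lossycode_{N\to M}$ instance is $O(\log M) = \poly\log N$ and hence also absorbed; so the \emph{reduction complexity} $\log m + d$ is dominated by $d = O(\eps^{-1}\log(M/N))$ as claimed. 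A secondary obstacle is cleanly formalizing the layered composition with non-integer stretch without notational overload; I'd sidestep this by first proving the clean case $M = (1+\eps)^k N$ with $1+\eps$ rational of small denominator, then reducing the general case to it by the two padding/approximation remarks above.
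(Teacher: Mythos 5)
The paper itself never spells out a proof of this lemma---it is stated as a known fact with pointers to the weak-pigeonhole-amplification literature---and your proposal is essentially that standard argument: iterate the given retraction pair to amplify the stretch, and map a failure of the composite back to a failure of a single layer. The core is sound: with $\hat g=g_1\circ\cdots\circ g_k$ and $\hat f=f_k\circ\cdots\circ f_1$, if every layer satisfies $f_j(g_j(y_j))=y_j$ along the chain $y_k=x$, $y_{j-1}=g_j(y_j)$, a one-line induction gives $\hat f(\hat g(x))=x$, so a solution of the big instance yields a failing layer after an $O(k)$ walk, and $k=O(\eps^{-1}\log(M/N))$ since $\log(1+\eps)=\Theta(\min(\eps,1))$. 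Three small repairs. First, your patching step has the arrow backwards: after amplification you hold an instance of $\lossycode_{N\to M'}$ with $M'\ge M$, and what you need---and what ``restrict to the sub-instance'' actually gives (truncate $\hat g$ to $[M]$, send $\hat f$-values outside $[M]$ to a fixed element)---is $\lossycode_{N\to M'}\le_{dt}\lossycode_{N\to M}$; the statement you wrote, $\lossycode_{N\to M}\le_{dt}\lossycode_{N\to M'}$, is the nontrivial direction and is essentially the lemma itself. Second, coordinates $a_i\in[1+\eps]$ are not meaningful for non-integer $1+\eps$; the clean bookkeeping is to partition level $j$ (size $\approx(1+\eps)^jN$) into blocks of size $N$ and apply $f$, resp.\ $g$, blockwise into blocks of size $(1+\eps)N$ at level $j+1$, with no rational approximation needed. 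Third, the per-step $O(\log M)$ addressing cost is not a constant to be ``absorbed''; the claimed bound should be read at the level of function-value queries (the depth is then $O(k)$), with the additive $\log m=\Theta(\log M)$ and $\polylog$ factors being slack in the paper's own statement rather than a defect of your construction.
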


By \autoref{lemma: robustness of lossy-code}, $\lossycode_{N \to 1.01N}$, $\lossycode_{N \to 2N}$, and $\lossycode_{N \to N^{100}}$ are equivalent up to decision tree reductions of $\polylog(N)$ depth. In this paper, unless otherwise stated, $\lossycode$ always stands for $\lossycode_{N \to 2N}$.

We denote $\Lossy$ as the class of total search problems reducible to $\lossycode$.\footnote{Previous literature~\cite{DBLP:conf/focs/LiPT24, CookLMP25} defined $\Lossy$ as the class of \emph{decision problems} reducible to $\lossycode$. In our context, it is more natural to define $\Lossy$ as a class of total search problems.} It follows from \autoref{lemma: robustness of lossy-code} that $\Lossy$ is robust in the sense that it does not matter whether it is defined using $\lossycode_{N \to 1.01N}$ or $\lossycode_{N \to N^{100}}$ as the complete problem. In fact, $\Lossy$ is \emph{extremely} robust: it is closed under Turing reductions ($\FP^\Lossy = \Lossy$~\cite{buss2012propositional, DBLP:conf/focs/LiPT24}) and it is self-low ($\Lossy^\Lossy = \Lossy$~\cite{GhentiyalaLi25}).

We will also consider the \emph{bijective} version of $\lossycode$, called $\lossycodep$. Let $N < M$, define:
\begin{description}
\item[$\lossycodep_{N \to M}$.]  Given query access to a pair of functions $f: [N] \to [M]$ and $g: [M] \to [N]$, find either $x \in [M]$ such that $f(g(x)) \neq x$, or $y\in [N]$ such that $g(f(y)) \neq y$.
\end{description}

Clearly, $\lossycodep_{N\to M}$ reduces to $\lossycode_{N\to M}$. We also use $\lossycodep$ to denote $\lossycodep_{N \to 2N}$ by default.
It is not difficult to see that the ``strong'' versions of these problems, $\lossycode_{N\to N+1}$ and $\lossycodep_{N\to N+1}$, are complete for $\PPADS$ and $\PPAD$ respectively. %

\section{Randomized Proof Complexity and Explicit Separations} \label{sec: tfzpp general results}

We begin by describing the connection between black-box $\TFZPP$ and proof complexity, and how this can be leveraged to obtain explicit separations from other natural classes. Proof complexity is concerned with the efficient provability of propositional theorems (unsatisfiable CNF formulas) in various \emph{proof systems}---simply a verifier for the language $\mathrm{UNSAT}$ of unsatisfiable CNF formulas.

\begin{definition}
\label{def:proofSystem}
    A propositional proof system is a polynomial-time machine $\cal P$ such that for every CNF formula $F$, $F \in \mathrm{UNSAT}$ iff there exists a proof $\Pi \in \{0,1\}^*$ such that ${\cal P}(F,\Pi)=1$. We say that $\Pi$ is a $\cal P$-proof of $F$ and define the size of $\Pi$ to be $s(\Pi):=|\Pi|$. 

    When studying connections between proof systems and $\TFNP$ classes, it is standard to also consider an associated notion of the \emph{width} of a proof $w(\Pi)$. This is typically specific to the proof system---for example, in resolution (defined next), it is the maximum number of literals in a clause in $\Pi$, while for algebraic systems such as Sum-of-Squares, the width is the degree of the polynomials occurring in the proof. 

    With a definition of width, the \emph{complexity} of proving $F$ in $\cal P$ is
    \[ \mathcal{P}(F)\coloneqq\min_{{\cal P}\text{-proof } \Pi \text{ of } F} w(\Pi)+\log s(\Pi).\]
\end{definition}

A standard example is the \emph{resolution} proof system. A resolution proof of an unsatisfiable CNF formula $F$ consists of a sequence of clauses $C_1,\ldots, C_t = \emptyset$ ending with the empty clause which contains no literals, such that each clause $C_i$ either belongs to $F$ or is derived from earlier clauses in the sequence according to the \emph{resolution rule}.
\begin{center}
    \textbf{Resolution rule:} From two clauses with complementary literals $A \lor x$ and $B \lor \overline x$, derive $A \lor B$.
\end{center}

The size of a resolution proof is the number of clauses that it contains, while the width is the maximum number of literals within any clause in the proof.  A resolution proof $C_1,\ldots, C_t$ is \emph{tree-like} if each $C_i$ is used at most once as a premise for the resolution proof. They are named as such because the implication graph of such proofs is a tree. 
 
There is a long line of work connecting proof complexity and black-box $\TFNP$ \cite{beame1995relative, GoosKRS19,GHJ+22, BussFI23, FlemingMD25,HubacekKT24,Thapen24, DavisR23,PR24, FlemingGPR24}. These connections show that a total search problem is contained within a  class iff an associated proof system can prove the totality of that search problem. We can phrase the totality of any total search problem $R \subseteq \{0,1\}^n \times {\cal O}$ as an unsatisfiable CNF formula in the following way: for each $o \in {\cal O}$ let $V_o$ be a decision tree which checks whether $o$ is a solution; that is, $V_o(x)=1$ iff $(x,o) \in R$. A root-to-leaf path in $V_o$ is a \emph{$1$-path} if its leaf is labeled $1$. We will associate with any path $p$ the conjunction of literals that it follows. Then the totality of $R$ is expressed as 
\[ F_R~ \coloneqq~\neg \mleft( \bigvee_{o \in {\cal O}} \bigvee_{1\text{-path }p \in V_o} p \mright). \]
If $R \in \TFNP^{dt}$ then $V_o$ can be assumed to have depth $\poly \log (n)$, and hence the width of $F_R$ is also $\poly \log (n)$.

Similarly, we can associate with any unsatisfiable CNF formula $F = C_1 \wedge \ldots \wedge C_m$ a total search problem $\Search_F \subseteq \{0,1\}^n \times [m]$ such that $(x,o) \in \Search_F$ iff $C_o(x)=0$. Observe that whenever $F$ has $\poly(\log(n))$ width then $\Search_F \in \TFNP^{dt}$ and furthermore that $\Search_{F_R}$ is reducible to $R$ by decision trees of depth at most the width of $F_R$. 

For a syntactic class ${\cal C} \subseteq \TFNP^{dt}$ we will denote by ${\cal C}(R)$ the complexity of reducing $R$ to $S$, where $S$ is any complete problem for ${\cal C}$. We say that a proof system ${\cal P}$ is \emph{characterized} by a class ${\cal C} \subseteq \TFNP^{dt}$ if $R \in {\cal C}$ iff ${\cal P}(F) = \poly({\cal C}(R))$. A standard example is that $\FP^{dt}$ characterizes tree-like resolution. Said differently, decision trees are equivalent to tree-like resolution proofs.

We extend these characterizations to capture randomized reductions. We show that randomized reductions between total search problems give rise to proofs in randomized proof systems, a notion introduced by Buss, Kołodziejczyk, and Thapen~\cite{BussKT14}.

\begin{definition}
    Let $\cal P$ be any propositional proof system. A \emph{randomized} $\cal P$-proof, denoted $r \cal P$,  of an unsatisfiable formula $F$ is a distribution ${\cal D}$ supported on pairs $(\Pi,B)$, such that 
    \begin{enumerate}
        \item Each $B$ is a CNF formula over the variables of $F$,
        \item $\Pi$ is a $\cal P$ proof of $F \wedge B$,
        \item For any assignment $x \in\{0,1\}^n$, $\Pr_{(\Pi,B)\sim {\cal D}}[B(x)=1] \geq 2/3$.
    \end{enumerate}
    The \emph{size} $s({\cal D})$, and \emph{width} 
     $w({\cal D})$ of an $r\cal P$-proof $\cal D$ is the maximum width and size of a proof $\Pi$  in the support of $\cal D$. The \emph{complexity} of proving $F$ in $r \cal P$ is 
    \[r{\cal P}(F) := \min_{\textnormal{$r{\cal P}$-proof $\cal D$ of $F$}} w({\cal D})+\log s({\cal D}) \]
\end{definition}

Note that a randomized proof system is not a Cook-Reckhow proof system in the sense of \autoref{def:proofSystem} since its proofs typically cannot be polynomial-time verified~\cite{PudlakT19}.

The main theorem of this section, \autoref{thm:mainChar}, shows that  a proof system $\cal P$ is characterized by class ${\cal C}$ iff the totality of the total search problems $R$ which are randomly reducible to any complete problem for ${\cal C}$ is provable in $r{\cal P}$. The following definition is equivalent to the probabilistic reduction in \cite{Jerabek16}.

\begin{definition}
	A randomized (ZPP) reduction from a search problem $S \subseteq [t] \times {\cal O}$ to $R \subseteq [n] \times {\cal Q}$ is a distribution ${\cal D}$ over deterministic reductions ${\cal T} = (T,\{T_o\})$ such that each output decision tree is labeled  either by some $j \in {\cal O}$ or by $\bot$, and ${\cal D}$ satisfies
	\begin{enumerate}
		\item For every $x \in [t]$ and every ${\cal T} \sim {\cal D}$, if $(T(x),o) \in R$ then either $T_o(x) = \bot$ or $(x,T_o(x)) \in S$.
		\item For every $x \in [t]$,
			\[\Pr_{{\cal T} \sim {\cal D}}[\exists o \in {\cal O} : (T(x),o) \in R ~\wedge~ T_o(x)=\bot] \leq \varepsilon\]
	\end{enumerate}
\end{definition}

\begin{theorem}
\label{thm:mainChar}
    If a proof system ${\cal P}$ is characterized by the total search problems reducible to $R \in \TFNP^{dt}$, then $r{\cal P}$ is characterized by the total search problems that are randomized-reducible to $R$.
\end{theorem}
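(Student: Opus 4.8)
The plan is to prove both halves of the claimed characterization — that the complexity of a randomized reduction from $S$ to $R$ and the quantity $r\mathcal{P}(F_S)$ are polynomially related — by exploiting a clean correspondence: the ``booster'' CNF $B$ attached to a randomized $\mathcal{P}$-proof plays exactly the role of (the complement of) the failure set of a randomized reduction, i.e.\ the set of inputs on which the reduction may return $\bot$. Write $\mathcal{C}$ for the class of total search problems reducible to $R$, so that the hypothesis ``$\mathcal{P}$ is characterized by $\mathcal{C}$'' means $\mathcal{P}(F_{S'})$ and $\mathcal{C}(S')$ are polynomially related for every total search problem $S'$. I will use two elementary facts: the totality formula of $\Search_G$ equals $G$ for every unsatisfiable CNF $G$; and every clause of $F_S$ is the negation of some $1$-path $p$ of an $S$-verifier $V_o$, so an assignment $x$ falsifies that clause exactly when $(x,o)\in S$.

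\textbf{From a randomized reduction to a randomized proof.} Suppose $S$ has a randomized reduction to $R$ of complexity $d$, i.e.\ a distribution $\mathcal{D}$ over deterministic reductions $\mathcal{T}=(T,\{T_q\})$ of complexity $\le d$ with failure probability $\le\varepsilon$ on every input. For each $\mathcal{T}$ in the support, define an auxiliary total search problem $S'_\mathcal{T}$ on the same inputs as $S$: its solutions are (i) every solution $o$ of $S$ at $x$, verified by the original $V_o$; and (ii) every $q$ with $(T(x),q)\in R$ and $T_q(x)=\bot$, verified by composing $T$, the $R$-verifier $V^R_q$, and $T_q$ (depth $\poly(d)$, using $R\in\TFNP^{dt}$). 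Totality of $S'_\mathcal{T}$ is immediate from totality of $R$. Moreover $S'_\mathcal{T}$ reduces to $R$ deterministically with complexity $\poly(d)$: send $x$ to $T(x)$, and decode an $R$-solution $q$ as the type-(ii) solution $q$ when $T_q(x)=\bot$ and as $T_q(x)$ otherwise. By the characterization, $\mathcal{P}(F_{S'_\mathcal{T}})=\poly(d)$; fix such a proof $\Pi_\mathcal{T}$. De Morgan's law gives $F_{S'_\mathcal{T}}=F_S\wedge B_\mathcal{T}$, where $B_\mathcal{T}$ is the CNF consisting of the negations of the $1$-paths of the type-(ii) verifiers, and tracing the definitions one sees $B_\mathcal{T}(x)=1$ iff no $q$ satisfies $(T(x),q)\in R\wedge T_q(x)=\bot$, i.e.\ iff $\mathcal{T}$ does not fail on $x$. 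Hence the pushforward of $\mathcal{D}$ along $\mathcal{T}\mapsto(\Pi_\mathcal{T},B_\mathcal{T})$ is a randomized $\mathcal{P}$-proof of $F_S$: condition (3) of a randomized proof is exactly condition (2) of the randomized reduction, after the standard amplification bringing $\varepsilon$ below $1/3$, and its width and size are $\poly(d)$. Thus $r\mathcal{P}(F_S)=\poly(d)$.

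\textbf{From a randomized proof to a randomized reduction.} Conversely, let $\mathcal{D}$ be a randomized $\mathcal{P}$-proof of $F_S$ of complexity $\le D$, supported on pairs $(\Pi,B)$. For each such pair, $\Pi$ is a $\mathcal{P}$-proof of $F_S\wedge B$, which (since the totality formula of $\Search_G$ is $G$ itself) exhibits $\mathcal{P}$-complexity $\le D$ for the search problem $\Search_{F_S\wedge B}$; by the characterization, $\Search_{F_S\wedge B}$ reduces to $R$ by a deterministic reduction of complexity $\poly(D)$, which on an $R$-solution returns a clause $C$ of $F_S\wedge B$ falsified by $x$. Post-process $C$: if $C$ is a clause of $F_S$, it is $\neg p$ for a $1$-path $p$ of some $V_o$ that $x$ follows, so output $o$ (a genuine $S$-solution at $x$); if $C$ is a clause of $B$, then $B(x)=0$, so output $\bot$. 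The pushforward of $\mathcal{D}$ along $(\Pi,B)\mapsto$ this reduction is a randomized reduction from $S$ to $R$: condition (1) holds because we only ever output a valid $S$-solution or $\bot$, and condition (2) holds because outputting $\bot$ forces $B(x)=0$, and $\Pr_{(\Pi,B)\sim\mathcal{D}}[B(x)=0]\le 1/3$. All decision trees involved have depth $\poly(D)$, so the reduction has complexity $\poly(D)$, and $r\mathcal{P}(F_S)\ge\poly^{-1}(d')$ for the optimal randomized reduction of complexity $d'$.

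\textbf{Main obstacle.} The crux is the first direction, and within it the realization that a possibly-failing deterministic reduction $\mathcal{T}$ can be packaged as a bona fide total search problem $S'_\mathcal{T}$ whose totality formula is literally $F_S$ conjoined with a low-width CNF $B_\mathcal{T}$ detecting the failure of $\mathcal{T}$; once this is set up, both directions reduce to matching $1$-paths with clauses and transferring the probability bounds, together with routine depth/width bookkeeping (kept under control because $R\in\TFNP^{dt}$, so composing with $R$-verifiers costs only a $\polylog$ factor) and the standard error-amplification identifying the reduction parameter $\varepsilon$ with the $1/3$ in the definition of a randomized proof.
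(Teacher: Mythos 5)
Your proposal is correct, and it is half-shared, half-novel relative to the paper's argument. The direction ``randomized proof $\Rightarrow$ randomized reduction'' is essentially the paper's: interpret $\Pi$ as certifying that $\Search_{F_S\wedge B}$ reduces to $R$ via the characterization, then relabel by $\bot$ every output that names a falsified clause of $B$; your observation that outputting $\bot$ (within the event $(T(x),q)\in R$) forces $B(x)=0$ is exactly how the paper transfers the $2/3$ bound. The other direction is where you genuinely diverge. The paper extracts $B$ \emph{syntactically} from a sampled deterministic reduction $\mathcal{T}$ using the reduced formula $F_{\mathcal{T}}$ of Buss et al.: $B$ is the set of clauses of $F_{\mathcal{T}}$ that are not weakenings of clauses of the target formula, after which $\mathcal{T}$ is literally a reduction from $\Search_{F_S\wedge B}$ and the characterization supplies $\Pi$, with an explicit $O(c^2)$ width bound for $B$. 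You instead package the failure events into an auxiliary \emph{total} search problem $S'_{\mathcal{T}}$ (solutions of $S$, plus ``failure certificates'' $q$ with $(T(x),q)\in R$ and $T_q(x)=\bot$), observe that $\mathcal{T}$ trivially reduces $S'_{\mathcal{T}}$ to $R$, and apply the characterization to $S'_{\mathcal{T}}$, reading off $F_{S'_{\mathcal{T}}}=F_S\wedge B_{\mathcal{T}}$ by De Morgan so that $B_{\mathcal{T}}(x)=1$ exactly when $\mathcal{T}$ does not fail on $x$. Your route uses the characterization strictly as a black box and gives $B$ a clean semantic meaning (the non-failure predicate), at the cost of constructing new composed verifiers for the failure certificates and relying on $R\in\TFNP^{dt}$ to keep their depth (hence the width of $B_{\mathcal{T}}$) at $\poly(d)$; the paper's route avoids building a new search problem and stays purely at the level of CNFs and decision trees, yielding the width bound directly. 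Minor loosenesses in your write-up --- the amplification aside (unnecessary if $\varepsilon\le 1/3$, and nontrivial to carry out for search-problem reductions otherwise), and the implicit restriction of ``outputting $\bot$'' to the event $(T(x),q)\in R$ --- are at the same level of informality as the paper's own proof and do not affect correctness.
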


The intuition for this theorem is most clear in the case of randomized reductions to $\FP^{dt}$ (which is $\TFZPP^{dt}$) and random tree-like resolution. This is also the case that we will use to derive consequences about $\TFZPP^{dt}$. We leave the proof of \autoref{thm:mainChar} to the \autoref{sec:appendixPC}. 

We remark that \autoref{thm:mainChar} reduces the task of showing that a $\TFNP$ class is closed under randomized reduction to showing that the corresponding proof system is \emph{closed} in the sense that ${\cal P} = r{\cal P}$. We are not aware of any such proof system, and it would be interesting to exhibit one.

\begin{lemma}\label{lem: rtR equal TFZPP}
    There is a quasi-polynomial size random tree-like resolution proof of $F = C_1 \wedge \ldots \wedge C_m$ iff $\mathsf{Search}_F \in \TFZPP^{dt}$.
\end{lemma}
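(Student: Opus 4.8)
The plan is to observe that \autoref{lem: rtR equal TFZPP} is exactly the instance of \autoref{thm:mainChar} obtained by taking $\mathcal{P}$ to be tree-like resolution and $R$ to be a complete problem for $\FP^{dt}$: by the classical correspondence, polylogarithmic-depth decision trees characterize tree-like resolution, and (as noted right after \autoref{thm:mainChar}) the total search problems randomized-reducible to an $\FP^{dt}$-complete problem are precisely those in $\TFZPP^{dt}$. So the lemma follows immediately. Nonetheless, since this is the case that drives everything later, I would also give the short direct argument. Throughout, I use that $\mathsf{Search}_F \in \TFNP^{dt}$ — a prerequisite for the right-hand side, and implicit in calling $F$ ``quasi-polynomial size random tree-like refutable'' — forces $F$ to have width $\polylog(n)$, and I read ``quasi-polynomial size random tree-like resolution proof'' as one of complexity $w(\mathcal{D})+\log s(\mathcal{D}) = \polylog(n)$ (for tree-like resolution over narrow formulas these coincide, by the size--width tradeoff, once the auxiliary CNFs are also taken narrow).

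For the direction $\mathsf{Search}_F \in \TFZPP^{dt} \Rightarrow$ random tree-like refutation: let $\mathcal{D}$ be a distribution over depth-$d$ decision trees $A$ with $d = \polylog(n)$, leaves labeled in $[m]\cup\{\bot\}$, such that a leaf labeled $o\in[m]$ is reached only along a path falsifying $C_o$, and $\Pr_A[A(x)=\bot]\le 1/3$. For each $A$ in the support I would set $B_A := \bigwedge_{p}\,\overline{p}$, where $p$ ranges over the $\bot$-leaves of $A$ and $\overline{p}$ is the clause of negated literals queried along $p$; then $\overline{p}(x)=0$ iff $x$ follows $p$, so $B_A(x)=1$ exactly when $A(x)\ne\bot$, giving $\Pr_A[B_A(x)=1]\ge 2/3$. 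Now every leaf of $A$ is covered by a clause of $F\wedge B_A$ that the path to it falsifies (a clause of $F$ at solution-leaves, a clause of $B_A$ at $\bot$-leaves), and the standard translation of such a decision tree into a tree-like resolution refutation — resolve bottom-up on the variable queried at each internal node, maintaining the invariant that the clause at a node is falsified by the path to it — yields a tree-like resolution refutation $\Pi_A$ of $F\wedge B_A$ of size $2^{O(d)}$ and width $O(d)$, i.e. quasi-polynomial. The distribution $A\mapsto(\Pi_A,B_A)$ is the required random tree-like resolution proof.

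For the converse: given a random tree-like resolution proof $\mathcal{D}$ of $F$ of quasi-polynomial size, supported on pairs $(\Pi,B)$, I would first normalize using the size--width tradeoff for tree-like resolution and the narrowness of $F$: each $\Pi$ may be replaced by a tree-like refutation of $F\wedge B$ of width $\polylog(n)$, after which the clauses of $B$ actually appearing in it are themselves narrow; discarding the others from $B$ preserves condition (3) (shrinking $B$ only makes it easier to satisfy). So assume each $\Pi$ is a width-$w$, size-$s$ tree-like refutation of $F\wedge B$ with $w=\polylog(n)$ and $s=2^{\polylog(n)}$. The standard balancing argument — repeatedly pick a node whose subtree has size a constant fraction of the current total, query the $\le w$ variables of its clause, and recurse on the at-most-$2/3$-as-large remaining refutation — turns $\Pi$ into a depth-$O(w\log s)=\polylog(n)$ decision tree $T_\Pi$ that on any $x$ outputs the index of a clause of $F\wedge B$ falsified by $x$. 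The randomized decision tree for $\mathsf{Search}_F$ then samples $(\Pi,B)\sim\mathcal{D}$, runs $T_\Pi(x)$, outputs the returned clause index if that clause is in $F$, and outputs $\bot$ otherwise. A non-$\bot$ output is a genuinely falsified clause of $F$, hence a valid solution; and $\bot$ is output only when $T_\Pi$ returns a clause of $B$, which requires $B(x)=0$, so $\Pr[\text{output}=\bot]\le\Pr_{(\Pi,B)\sim\mathcal{D}}[B(x)=0]\le 1/3$. Thus $\mathsf{Search}_F\in\TFZPP^{dt}$.

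The main obstacle — and essentially the only place the argument goes beyond bookkeeping mirroring the deterministic $\FP^{dt}$/tree-resolution correspondence — is width control in the converse direction: turning ``quasi-polynomial size'' into ``polylogarithmic width'' via the size--width tradeoff while keeping the auxiliary CNF $B$ narrow, which is what makes querying the variables of a clause affordable in the balancing step. Dually, in the forward direction the point worth checking carefully is that the decision-tree-to-resolution translation really produces a refutation of $F\wedge B_A$ that uses the clauses of $B_A$ (this is what keeps it quasi-polynomially small), rather than an arbitrary refutation of the unsatisfiable formula $F$.
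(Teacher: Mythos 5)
Your proof is correct and follows essentially the same route as the paper's: in the forward direction you build $B$ from the negations of the $\bot$-paths and invoke the decision-tree/tree-like-resolution equivalence exactly as the paper does, and in the converse you turn each refutation of $F \wedge B$ back into a decision tree for $\Search_{F\wedge B}$ and relabel the leaves corresponding to clauses of $B$ by $\bot$. The only divergence is bookkeeping for the depth bound: the paper works directly with the proof's \emph{complexity} (width plus log-size) and bounds the resulting tree's depth by it, whereas you recover polylogarithmic depth from ``quasi-polynomial size'' via the tree-like size--width tradeoff plus a balancing argument giving depth $O(w\log s)$ --- which is fine, but note that the tradeoff only yields width $\le \log s + w(F\wedge B)$, so your normalization step genuinely rests on the complexity (narrow-$B$) reading you announce at the outset rather than on size alone, which is the same tacit assumption made in the paper's own converse (``proof of $F$ with complexity $c$''), so there is no gap relative to the paper.
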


\begin{proof}
    Suppose that $\Search_F \in \TFZPP^{dt}$ and let $\cal D$ be a distribution over depth-$d$ decision trees solving $\Search_F$ as in the definition of $\TFZPP^{dt}$. We construct a $\varepsilon$-error randomized tree resolution proof $\cal P$, which is defined by the following sampling procedure:
\begin{enumerate}
    \item Sample $T \sim \cal D$.
    \item Let $B$ be the set of clauses obtained by taking the negation of each root-to-leaf path in $T$ ending in $\bot$,
    \[B := \{\neg p: p \in T \mbox{ is a root-to-$\bot$ path}\},\]
    where we think of a path as the conjunctions of the literals that appear along it (a queried variable is a positive literal if $p$ took the $1$-branch, and a negative literal if $p$ took the $0$-branch).
    \item To construct $\Pi$, we will use the equivalence between a decision tree solving the false-clause search problem and tree-resolution proofs. Let $T^*$ be obtained from $T$ by relabeling each path $p$ ending in $\bot$ by the clause $\neg p \in B$. Let $F \cup B$ be the CNF formula whose clauses are the clauses of $F$ and those in $B$, and observe that $T^*$ is a depth-$d$ decision tree solving $\Search_{F \cup B}$. Thus, there is a depth-$d$ tree resolution proof $\Pi$ of $F \cup B$.
\end{enumerate}
As $B$ states that we do not follow any root-to-$\bot$ path in $T$, for any $x \in \{0,1\}^n$, $T(x)\neq \bot$ iff $x$ satisfies all of $B$. Therefore, $\Pr_{T \sim {\cal D}}[T(x)=\bot]\leq \varepsilon$ implies that $\Pr_{(B, \Pi) \sim {\cal P}}[B(x)=1] \geq 1-\varepsilon$ for every $x \in \{0,1\}^n$.

In the other direction, suppose that $\cal P$ is a $\varepsilon$-random tree resolution proof of $F$ with complexity $c$. We construct a distribution $\cal D$ over decision trees solving $\Search_F$ by the following sampling procedure:
\begin{enumerate}
    \item Sample $(\Pi,B) \sim {\cal P}$.
    \item Let $T^*$ be the depth-$d$ decision tree solving $\Search_{F \cup B}$ obtained from $\Pi$ obtained by the equivalence between tree resolution proofs and decision trees in the same manner as in point (3) above. It is well-known that the depth of a resolution proof is bounded by its width, and hence $T^*$ has depth at most $c$.
    \item Let $T$ be the decision tree obtained from $T^*$ by relabeling each leaf of $\Pi$ that is labeled by a clause in $B$ by $\bot$. 
\end{enumerate}
As for any $x \in \{0,1\}^n$, $\Pr_{(B,\Pi) \sim \cal D}[B(x)=1] \geq 1-\varepsilon$, we have that $\Pr_{T \sim {\cal D}}[T(x)=\bot] \leq \varepsilon$.
\end{proof}

\subsection{Separations}

We now use \autoref{lem: rtR equal TFZPP} to show that $\TFZPP^{dt}$ is not contained within any uniformly generated $\TFNP^{dt}$ class unless $\NP$ is contained within quasi-polynomial time ($\mathsf{QP}$). 

\begin{theorem}
\label{thm:TFZPPSepALL}
    $\TFZPP^{dt} \not \subseteq {\cal C}$ for any uniformly generated class ${\cal C} \subseteq \TFNP^{dt}$ unless $\NP \subseteq \mathsf{QP}$. 
\end{theorem}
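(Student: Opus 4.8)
The plan is to argue by contraposition, assembling three ingredients: the characterization of $\TFZPP^{dt}$ by random tree-like resolution (\autoref{lem: rtR equal TFZPP}, the concrete form of \autoref{thm:TFZPPChar}); the result of Buss et al.~\cite{BussFI23} that every uniformly generated $\TFNP^{dt}$ class has a characterizing Cook--Reckhow proof system; and the theorem of Pudl\'ak and Thapen~\cite{PudlakT19} that no Cook--Reckhow proof system quasi-polynomially simulates random tree-like resolution unless $\NP\subseteq\mathsf{QP}$. So I would suppose $\TFZPP^{dt}\subseteq\mathcal{C}$ for some uniformly generated $\mathcal{C}\subseteq\TFNP^{dt}$ and derive $\NP\subseteq\mathsf{QP}$.

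First I would apply \cite{BussFI23} to fix a Cook--Reckhow proof system $\mathcal{P}$ characterizing $\mathcal{C}$: for every $R\in\TFNP^{dt}$, one has $R\in\mathcal{C}$ if and only if $\mathcal{P}(F_R)=\poly(\mathcal{C}(R))$. This is exactly the step where uniform generation is used, since it is what guarantees $\mathcal{P}$ is polynomial-time verifiable, as required by \cite{PudlakT19}. In particular, whenever a family of polylogarithmic-width unsatisfiable CNFs $F=\{F_n\}$ satisfies $\Search_F\in\mathcal{C}$ via an efficient reduction, each $F_n$ has a $\mathcal{P}$-refutation of complexity $\polylog(n)$, hence of quasipolynomial size and polylogarithmic width.

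Next I would show that $\mathcal{P}$ quasi-polynomially simulates random tree-like resolution. Let $F=\{F_n\}$ be any family of unsatisfiable CNFs of width $\polylog(n)$ admitting random tree-like resolution refutations of quasipolynomial size. By \autoref{lem: rtR equal TFZPP}, $\Search_F\in\TFZPP^{dt}$, and hence $\Search_F\in\mathcal{C}$ by the assumption. Now observe that with the canonical solution-verifiers the formula $F_{\Search_F}$ coincides with $F$ (the unique $1$-path of the verifier for a clause $C_i$ negates to $C_i$ itself), so the characterization of $\mathcal{C}$ gives $\mathcal{P}(F_n)=\poly(\mathcal{C}(\Search_{F_n}))=\polylog(n)$; thus $\mathcal{P}$ refutes every $F_n$ in quasipolynomial size and polylogarithmic width. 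Since the formulas witnessing the barrier of \cite{PudlakT19} are themselves of low width, this is precisely the simulation their theorem rules out absent $\NP\subseteq\mathsf{QP}$, and we conclude $\NP\subseteq\mathsf{QP}$. Taking the contrapositive yields the theorem.

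I expect the only genuine obstacle to be bookkeeping: ensuring the three ``characterized by'' relations are stated with compatible complexity measures (width plus logarithm of size on the proof side, $\log m + d$ on the reduction side), so that ``$\Search_F$ reduces efficiently into $\mathcal{C}$'' really does translate into ``$F$ has a quasipolynomial-size, polylogarithmic-width $\mathcal{P}$-refutation'', and checking that the hard instances of \cite{PudlakT19} lie in the polylogarithmic-width regime where the simulation is valid. No new combinatorial argument is needed beyond \autoref{lem: rtR equal TFZPP} and the two cited results.
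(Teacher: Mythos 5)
Your proposal is correct and follows essentially the same route as the paper: fix the characterizing proof system for $\mathcal{C}$ via Buss et al., use \autoref{lem: rtR equal TFZPP} to place the Pudl\'ak--Thapen formulas' search problems in $\TFZPP^{dt}$ (hence in $\mathcal{C}$, giving $\polylog$-complexity $\mathcal{P}$-proofs), and invoke \autoref{thm:noPSim} to conclude $\NP\subseteq\mathsf{QP}$. The only cosmetic difference is that you phrase the middle step as a general quasi-polynomial simulation of random tree-like resolution before specializing to the hard family, while the paper applies the argument directly to that family.
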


This theorem follows by combining the characterization of uniform $\TFNP^{dt}$ classes by proof systems of Buss et al.~\cite{BussFI23}, \autoref{lem: rtR equal TFZPP}, and the following result of Pudl\'ak and Thapen \cite{PudlakT19}.

This separation relies on a theorem of Pudl\'ak and Thapen \cite{PudlakT19} who showed that random resolution cannot be simulated by any propositional proof system unless $\P \neq \NP$. A straightforward examination of their theorem reveals that it also holds for tree-like resolution and can be stated in the following form.

\begin{theorem}[Proposition 10 in \cite{PudlakT19}]
\label{thm:noPSim}
    There is a family of unsatisfiable $3$-CNFs ${\cal F}$ such that:
    \begin{enumerate}
        \item There are $O(\log (n))$-complexity random tree-like resolution proofs of ${\cal F}$.
        \item If there is a propositional proof system which has $\poly \log (n)$-complexity proofs of ${\cal F}$ then $\NP \subseteq {\mathsf{QP}}$.
    \end{enumerate}
\end{theorem}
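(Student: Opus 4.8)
The plan is to follow the construction of Pudl\'ak and Thapen~\cite{PudlakT19}, checking along the way that their short random refutations can be made tree-like and that their inefficiency conclusion can be phrased as the uniform assumption $\NP \subseteq \mathsf{QP}$. At the center is a polynomial-time reduction: fix a $\coNP$-complete language $L$ (concretely, unsatisfiability of $3$-CNFs in a suitable clause-density window), together with the obvious $\NP$-verifier ``$w$ satisfies $\psi$'' for its complement; the reduction maps each instance $\psi$ to a $3$-CNF $\mathcal{F}_\psi$ that is unsatisfiable iff $\psi \in L$ and that additionally carries a \emph{planted} (trapdoor) structure: a polynomial-time samplable distribution $\mathcal{D}_\psi$ over pairs $(B,\Pi)$ in which $B$ is a CNF of width $O(\log|\psi|)$ over the variables of $\mathcal{F}_\psi$, $\Pi$ is a tree-like resolution refutation of $\mathcal{F}_\psi \wedge B$ of width $O(\log|\psi|)$ and size $\poly(|\psi|)$, and $\Pr_{(B,\Pi)\sim\mathcal{D}_\psi}[B(x)=1]\ge 2/3$ for every assignment $x$; moreover the planting is arranged so that no deterministic procedure can read off from $\mathcal{F}_\psi$ whether $\psi\in L$. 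The family $\mathcal{F}$ is then the collection of all $\mathcal{F}_\psi$ with $\psi\in L$.

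For part (1), the distribution $\mathcal{D}_\psi$ is itself the desired random tree-like resolution proof of $\mathcal{F}_\psi$: the three requirements in the definition of an $r\mathcal{P}$-proof hold by construction (each $B$ is a CNF over the variables of $\mathcal{F}_\psi$; each $\Pi$ is a tree resolution refutation of $\mathcal{F}_\psi\wedge B$; and $\Pr[B(x)=1]\ge 2/3$ for all $x$), and since $B$ and $\Pi$ both have width $O(\log n)$ and $\Pi$ has polynomial size, the complexity $w(\mathcal{D}_\psi)+\log s(\mathcal{D}_\psi)$ is $O(\log n)$ (where $n = |\mathcal{F}_\psi|$). The only point needing the ``straightforward examination'' mentioned in the surrounding text is that the refutation of $\mathcal{F}_\psi\wedge B$ can be taken tree-like; in Pudl\'ak--Thapen's construction this refutation is a short derivation driving $\mathcal{F}_\psi\wedge B$ to an explicit pair of contradictory narrow clauses while reusing no clause, so it is already tree-like.

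For part (2), suppose some propositional proof system $\mathcal{P}$ has $\poly \log(n)$-complexity proofs of every member of $\mathcal{F}$. Then $L$ is decidable in quasi-polynomial time: on input $\psi$, compute $\mathcal{F}_\psi$, search for a $\mathcal{P}$-proof of $\mathcal{F}_\psi$ of complexity $\poly\log(|\mathcal{F}_\psi|)$ — by brute force over the candidates of that bounded complexity, which for resolution-type systems is the usual width-bounded dynamic-programming search — and accept iff one is found, checking each candidate in polynomial time since $\mathcal{P}$ is a Cook--Reckhow system. If $\psi\in L$ then $\mathcal{F}_\psi$ is unsatisfiable and by hypothesis such a proof exists, so we accept; if $\psi\notin L$ then $\mathcal{F}_\psi$ is satisfiable, so by soundness $\mathcal{P}$ has no proof of it at all and we reject. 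Hence $L\in\mathsf{QP}$, and since $L$ is $\coNP$-complete and $\mathsf{QP}$ is closed under complement, $\NP\subseteq\mathsf{QP}$; the precise accounting of this search and of the complexity measure is exactly what~\cite[Proposition~10]{PudlakT19} carries out.

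The step I expect to be the main obstacle is the construction of the planted family $\mathcal{F}_\psi$: one must design a width-$O(\log n)$, polynomially-sized hint distribution that simultaneously (a) collapses $\mathcal{F}_\psi\wedge B$ to a trivial contradiction in narrow, short tree resolution, (b) is spread out enough that every total assignment is consistent with a random hint with probability at least $2/3$, and (c) genuinely hides the satisfiability of $\psi$ from deterministic algorithms, so that $L$ stays outside $\mathsf{QP}$ unless $\NP\subseteq\mathsf{QP}$. Reconciling (a)--(c) is the technical heart of Pudl\'ak and Thapen's argument; everything else here is an unwinding of definitions (part~(1)) or a standard proof-search-plus-Cook--Reckhow argument (part~(2)), together with the two observations that the refutations may be taken tree-like and that the conclusion can be stated in the $\NP\subseteq\mathsf{QP}$ form.
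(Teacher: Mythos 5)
The main gap is in your part (2). The theorem quantifies over an \emph{arbitrary} propositional proof system ${\cal P}$, and for such a system there is no analogue of width-bounded dynamic programming: that search procedure is specific to resolution-type systems. All that ``$\polylog(n)$-complexity'' gives you for a general ${\cal P}$ is a size bound of $2^{\polylog (n)}$, so your ``brute force over the candidates of that bounded complexity'' runs in time roughly $2^{2^{\polylog (n)}}$, not quasi-polynomial. More fundamentally, the mere \emph{existence} of short ${\cal P}$-proofs of the family only yields a guess-and-verify certificate for unsatisfiability, i.e.\ something like $\coNP\subseteq\mathsf{NTIME}(2^{\polylog (n)})$, which is weaker than the stated conclusion $\NP\subseteq\mathsf{QP}$. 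The way Pudl\'ak and Thapen reach the deterministic conclusion is constructive, exploiting exactly the feature your proposal does not use: for their formulas the random tree-like refutation is canonical, and its entire distribution has polynomial-size support computable in polynomial time from the formula, so a simulation by a Cook--Reckhow system yields a deterministic algorithm---on input $\phi$, build the amplified formula, construct the canonical random refutation, translate it into a candidate ${\cal P}$-proof, and run the polynomial-time verifier; soundness of ${\cal P}$ rejects when $\phi$ is satisfiable. This is what the surrounding text means by using the system's verifiability to decide SAT; your proof-search step would have to be replaced by this argument.

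Separately, your description of the family ${\cal F}$ replaces the actual mechanism by a ``planted/trapdoor'' one and adds a requirement (your condition (c), that $\mathcal{F}_\psi$ ``hides'' satisfiability from deterministic algorithms) that is neither needed nor achievable unconditionally: the hardness in part (2) comes solely from the $\NP$-hardness of SAT together with the soundness of ${\cal P}$, not from any hiding property of the reduction. What the construction actually needs, and what you never identify, is \emph{high unsatisfiability} obtained by PCP gap amplification: $\phi$ maps to a $3$-CNF that is satisfiable when $\phi$ is, and otherwise every assignment falsifies a constant fraction $\mu$ of its clauses (the same notion as in the first item of \Cref{thm: SoS_lower_bound}). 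For such formulas the random tree-like refutation is immediate: sample $t=O(1)$ clauses, let $B$ be the CNF expansion of the disjunction of their negations, and refute $F\wedge B$ by a constant-size tree-like case analysis; every assignment satisfies $B$ with probability at least $1-(1-\mu)^t\ge 2/3$. Your conditions (a)--(b) are instantiated by exactly this, so once the construction is stated correctly your part (1) accounting is fine; but as written, both the construction and the correct mechanism for part (2) are deferred to the citation in a form that does not match how the cited argument actually works.
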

Indeed, to prove their theorem Pudl\'ak and Thapen observe that random (treelike) resolution has small proofs of any highly unsatisfiable formula (one for which any assignment falsifies many clauses), and that the PCP theorem can be used to put any $3$-CNF formula into this form (the family ${\cal F}$). Using this, they show that if there existed a propositional proof system which could (quasi-polynomially) simulate random tree-like resolution, then one could use its variability to decide SAT. 

Combining this theorem with \autoref{lem: rtR equal TFZPP} and the characterization of $\TFNP^{dt}$ classes by propositional proof systems due to Buss et al.~\cite{BussFI23} proves \autoref{thm:TFZPPSepALL}.  Say that $ R=\{R_n\} \in \TFNP^{dt}$ is \emph{uniformly generated} if there is a Turing Machine which on input $1^n$ outputs $R_n$, and say that a class is uniformly generated if it has a uniformly generated complete problem. Note that all major $\TFNP^{dt}$ subclasses are uniformly generated.

\begin{proof}[Proof of \autoref{thm:TFZPPSepALL}]
    Let $\cal C$ be a uniformly generated class such that $\TFZPP^{dt} \subseteq {\cal C}$.
    Buss et al.~\cite{BussFI23} showed that every uniformly generated $\TFNP^{dt}$ subclass is characterized by a proof system; let ${\cal P}$ be the system for $\cal C$. Consider the family of formulas ${\cal F}$ from \autoref{thm:noPSim}. As $\TFZPP^{dt} \subseteq {\cal C}$, there are $\poly \log (n)$-complexity ${\cal P}$-proofs of ${\cal F}$. Hence, \autoref{thm:noPSim} implies that $\NP \subseteq \mathsf{QP}$. 
\end{proof}

\subsection{Explicit Separations}
The separating examples in \autoref{thm:TFZPPSepALL} rely on an unproven hypothesis. We end this section by proving explicit separations between $\TFZPP^{dt}$ and every major $\TFNP^{dt}$ subclass which do not rely on any unproven assumptions. A separation of $\PPA^{dt}$ from $\TFZPP^{dt}$ was implicitly shown by Beame et.~al. \cite{beame1995relative}, who proved Nullstellensatz lower bounds for $\lossycode$.\footnote{More specifically, Beame et.~al.~\cite[Theorem 12]{beame1995relative} proved the Nullstellensatz degree lower bounds for $\textsc{Weak-Pigeon}$. They also showed that any Nullstellensatz degree lower bounds for $\textsc{Weak-Pigeon}$ implies the same Nullstellensatz degree lower bounds for $\lossycode$ in \cite[Lemma 10]{beame1995relative} (see also Definition 3.1, 3.2 in \cite{beame1995relative} for the definition of $\lossycode$ and $\textsc{Weak-Pigeon}$).}
The remaining major $\TFNP^{dt}$ classes are contained within $\PLS^{dt}$ and $\PPP^{dt}$. We show the following.

\begin{theorem}
\label{thm:explicitSeps}
    There exist explicit total search problems in $\TFZPP^{dt}$ which are not contained in $\PLS^{dt}$ nor $\PPP^{dt}$.
\end{theorem}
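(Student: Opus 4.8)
The plan is to exhibit a single explicit $3$-CNF $F$ that is simultaneously ``highly unsatisfiable''---every assignment falsifies a constant fraction of its clauses---and hard both for Resolution and for the proof system characterizing $\PPP^{dt}$. Concretely, I would take $F$ to be the $3$-CNF encoding of the explicit pseudorandom $3$-XOR instances of Hopkins and Lin~\cite{HopkinsL22}, replacing each XOR constraint on three variables by the four width-$3$ clauses that rule out its non-satisfying patterns. These systems have $m=\Theta(n)$ constraints over $n$ variables, are uniform polynomial-time constructible, are $\delta$-far from satisfiable for an absolute constant $\delta>0$ (every assignment violates at least $\delta m$ of the XOR constraints), and require Sum-of-Squares refutations of degree $\Omega(n)$.

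Membership $\Search_F\in\TFZPP^{dt}$ is the routine part. A violated XOR constraint contributes exactly one violated clause among its four, so every assignment falsifies at least $\delta m$ of the $4m$ clauses of $F$. The algorithm samples a uniformly random clause, queries its at most three variables, and outputs that clause if it is falsified and $\bot$ otherwise; this is a distribution over depth-$3$ decision trees solving $\Search_F$ with probability at least $\delta/4$, which amplifies to $2/3$ by a constant number of independent trials. The problem lies in $\TFNP^{dt}$ since each clause is checked by a depth-$3$ tree, and it is explicit because the Hopkins--Lin construction is uniform polynomial time.

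For $\Search_F\notin\PLS^{dt}$ I would use that $\PLS^{dt}$ is characterized by Resolution, so it suffices to show $F$ has no Resolution refutation of complexity $\polylog(n)$. Any Resolution refutation of width $w$ embeds (via Sherali--Adams) into a Sum-of-Squares refutation of degree $O(w)$, so the $\Omega(n)$ Sum-of-Squares degree lower bound of Hopkins--Lin forces every Resolution refutation of $F$ to have width $\Omega(n)$, hence complexity $\Omega(n)=\omega(\polylog(n))$, and therefore $\Search_F\notin\PLS^{dt}$.

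The crux is $\Search_F\notin\PPP^{dt}$: equivalently, $\Search_F$ has no efficient decision-tree reduction to \pigeon, or---invoking that every uniformly generated $\TFNP^{dt}$ class is characterized by a proof system (Buss et al.~\cite{BussFI23})---$F$ has no $\polylog(n)$-complexity refutation in the system characterizing $\PPP^{dt}$. The plan is to transfer to the explicit instances the analogous lower bound for genuinely \emph{random} $3$-XOR---an instance of $\PPA^{dt}\not\subseteq\PPP^{dt}$, morally that the pigeonhole principle cannot efficiently certify a dense mod-$2$ contradiction---exploiting that the Hopkins--Lin systems are indistinguishable from random by exactly the low-degree (and local) tests this lower bound uses; alternatively one could attempt a direct decision-tree adversary argument against reductions to \pigeon. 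Isolating the precise pseudorandomness required and verifying that the Hopkins--Lin construction supplies it is where the real work lies; the remaining steps are routine. (If preferred, the $\PLS^{dt}$ and $\PPP^{dt}$ arguments may instead be applied to two separate explicit instances; and the one remaining ``major $1990$s'' class, $\PPA^{dt}$, is already separated from $\TFZPP^{dt}$ by the Nullstellensatz lower bounds for $\textsc{Weak-Pigeon}$/$\lossycode$ of Beame et al.~\cite{beame1995relative}.)
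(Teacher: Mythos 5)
Your $\TFZPP^{dt}$ membership argument and your $\PLS^{dt}$ separation are sound and essentially the paper's: the paper also takes the explicit Hopkins--Lin $3$-XOR instances, solves $\Search_F$ by sampling a random constraint and checking it with $O(1)$ queries, and kills $\PLS^{dt}$ via the Sum-of-Squares degree lower bound (your detour through the Resolution characterization plus the standard width-to-SoS-degree simulation is a harmless variant of the same fact).

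The genuine gap is exactly where you say ``the real work lies'': the $\PPP^{dt}$ separation. The missing ingredient is not a pseudorandomness transfer from random $3$-XOR at all, but a known black-box implication that the paper invokes for both classes at once: if $\Search_F$ has an efficient decision-tree reduction to $\pigeon$ (or to a $\PLS$-complete problem), then $F$ admits a low-degree Sum-of-Squares refutation, with degree polynomial in the reduction complexity (see~\cite{FlemingKP19} and the references there). Given that, the $\Omega(n)$ SoS degree bound of~\cite{HopkinsL22} immediately rules out both $\PPP^{dt}$ and $\PLS^{dt}$, with no further work. Your proposed route is problematic on its own terms: the only known way to prove that (even genuinely random) $3$-XOR is hard for $\PPP^{dt}$ is precisely this SoS connection, so there is no independent ``random-instance lower bound'' to transfer; and proof-complexity lower bounds do not transfer from random to pseudorandom instances via indistinguishability arguments unless the lower-bound proof itself only uses properties (e.g.\ expansion, or here SoS degree) that the explicit instance provably shares---which is what Hopkins--Lin already supply in the form of the SoS bound. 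A ``direct decision-tree adversary against reductions to $\pigeon$'' is likewise an open-ended project, not a routine step. So as written the crux of the theorem is unproved; once you replace the transfer plan with the $\PPP^{dt}\Rightarrow$ low-degree SoS lemma, your argument collapses into the paper's one-line deduction.
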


It is known that if a total search problem $\Search_F$ is in $\PLS^{dt}$ or $\PPP^{dt}$, then $F$ has a small low-degree \emph{Sum-of-Squares} (SoS) proof; see \cite{FlemingKP19} for an exposition on this proof system. Our hard instance is based on the recent work of Hopkins and Lin~\cite{HopkinsL22}, who exhibited the first explicit hard $3$-XOR instance for SoS.
\begin{theorem}[\cite{HopkinsL22}]\label{thm: SoS_lower_bound}
There exist constants $\mu_1,\mu_2\in (0,1)$ and a polynomial time algorithm which, given $1^n$ as input, outputs a $3$-XOR formula $F=C_1\land \cdots\land C_m$ on $n$ variables such that:
\begin{itemize}[noitemsep]
    \item For every $x\in \{0,1\}^n$, $\Pr_{i\sim [m]}[C_i(x)=1]\le 1-\mu_1$.
    \item Any Sum-of-Squares refutation of $F$ requires degree at least $\mu_2n$.
\end{itemize}
\end{theorem}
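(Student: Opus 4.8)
The plan is to derandomize the classical sum-of-squares lower bound for random sparse $3$-XOR (Grigoriev; Schoenebeck), which already yields both bullet points for a \emph{random} instance but non-explicitly, so the task is to make that argument constructive. Write a candidate instance as a linear system $Ax=b$ over $\mathbb{F}_2$ with $A\in\mathbb{F}_2^{m\times n}$ of row-weight exactly $3$ and $m=\Theta(n)$; each row is one XOR constraint $C_i$. The argument factors into two independent requirements on $(A,b)$. First, \emph{boundary expansion of the constraint hypergraph}: every set $S$ of at most $\delta n$ constraints has $|\partial S|>\tfrac32|S|$, where $\partial S$ is the set of variables occurring in exactly one constraint of $S$. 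This is Schoenebeck's condition, and it forces every SoS refutation to have degree $\Omega(\delta n)$, giving the second bullet with $\mu_2=\Omega(\delta)$. Second, \emph{strong unsatisfiability}: $b$ lies at Hamming distance at least $\mu_1 m$ from the column space $\{Ax:x\in\mathbb{F}_2^n\}$, which is exactly $\mathrm{val}(F)\le 1-\mu_1$, i.e. the first bullet. So the theorem reduces to exhibiting, in deterministic polynomial time from $1^n$, a $3$-uniform hypergraph with the first property together with a right-hand side with the second.

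For the hypergraph I would start from an explicit unbalanced bipartite expander between $m$ constraint nodes and $n$ variable nodes: any graph whose left-sets $S$ of size $\le\delta n$ expand losslessly, $|N(S)|\ge(1-\gamma)D|S|$ with $D$ the left-degree and $\gamma$ small, automatically satisfies $|\partial S|\ge(1-2\gamma)D|S|$, which beats $\tfrac32|S|$ once $D\ge 3$. The nuisance is forcing the left-degree to be \emph{exactly} $3$: replace each constraint node of degree $D>3$ by a short chain of $D-2$ degree-$3$ XOR constraints linked through $D-3$ fresh auxiliary variables, and check that this gadget preserves small-set boundary expansion up to a $\mathrm{poly}(D)$ factor loss in $\delta$. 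The delicate point — and, I expect, the heart of the proof — is that degree-reducing a $\mathrm{polylog}(n)$-degree expander only leaves expansion for sets of size $n/\mathrm{polylog}(n)$, yielding an SoS lower bound of $n/\mathrm{polylog}(n)$ rather than $\Omega(n)$; getting $\Omega(n)$ requires an explicit \emph{constant}-degree hypergraph whose unique-neighbor/boundary expansion persists up to linearly many constraints, which is right at the frontier of explicit expander technology and is where the genuinely new construction is needed.

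For the right-hand side, a uniformly random $b$ is at distance $\ge\mu_1 m$ from the $(\le n)$-dimensional column space of $A$ with probability $1-2^{n}\binom{m}{\le\mu_1 m}2^{-m}>0$ once $m/n$ is a large enough constant, so what is missing is an explicit, deterministically computable $b$ passing this test. Two routes: (a) draw $b$ from an explicit pseudorandom family — a small-bias set, or a generator fooling the ``within distance $\mu_1 m$ of $\mathrm{colspace}(A)$'' predicate — and certify farness via a covering-radius estimate; or (b) precompose with the PCP theorem, feeding an explicit unsatisfiable $3$-SAT instance through the polynomial-time PCP-to-$3$-XOR reduction so that unsatisfiability is robust ($\mathrm{val}\le 1/2+\varepsilon$) by construction, taking care that the $3$-XOR instance emerging from the reduction still carries enough boundary expansion for the first requirement (which may force interleaving the PCP gadget with the expander structure rather than applying it as a black box).

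Finally I would assemble the degree lower bound from the first requirement in the standard way: for a monomial on a variable set $U$ with $|U|\le\mu_2 n$, let $\mathrm{cl}(U)$ be the closure of $U$ under constraint propagation; boundary expansion guarantees $\mathrm{cl}(U)$ involves only $O(|U|)$ constraints, hence is a consistent affine local view, and one sets $\widetilde{\mathbb{E}}$ to be the uniform average over that local affine subspace. Checking that $\widetilde{\mathbb{E}}$ is well-defined (the closure stabilizes before reaching $\delta n$ constraints), linear, normalized, satisfies every constraint of $F$, and is PSD on degree-$\le\mu_2 n$ polynomials is the by-now routine ``local distributions / Gaussian width'' analysis. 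Since this $\widetilde{\mathbb{E}}$ pretends $\mathrm{val}=1$ while $\mathrm{val}(F)\le 1-\mu_1$, no degree-$\mu_2 n$ SoS refutation of $F$ exists. In short, the combinatorial inputs (expansion; farness of $b$) are the whole difficulty, and their \emph{simultaneous} explicit construction at constant degree and linear scale is the main obstacle; the SoS analysis layered on top is standard.
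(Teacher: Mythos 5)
There is a genuine gap here, and it is exactly the part that carries all the weight. The paper does not prove this statement at all --- it is imported verbatim from Hopkins and Lin \cite{HopkinsL22} and used as a black box --- so the relevant question is whether your outline would constitute a proof of their theorem. It would not: you correctly reduce the statement to two combinatorial inputs (a $3$-uniform constraint hypergraph with boundary expansion $|\partial S|>\tfrac{3}{2}|S|$ for all sets of up to $\delta n$ constraints, plus a right-hand side $b$ at distance $\mu_1 m$ from $\mathrm{colspace}(A)$), and you correctly observe that the Grigoriev/Schoenebeck closure-and-local-distribution argument then yields degree $\Omega(\delta n)$. But you then explicitly defer both inputs: you concede that degree-reducing an explicit $\mathrm{polylog}(n)$-degree lossless expander only gives expansion (hence SoS degree) $n/\mathrm{polylog}(n)$, that an explicit constant-degree construction with linear-scale expansion ``is where the genuinely new construction is needed,'' and that for $b$ ``what is missing is an explicit, deterministically computable $b$ passing this test.'' Those two missing pieces \emph{are} the theorem; everything you do supply (the reduction of the statement to these two properties and the standard pseudo-expectation analysis) was already folklore before \cite{HopkinsL22}. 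Your fallback routes also do not close the gap as stated: known explicit lossless or unique-neighbor expanders at constant degree do not directly give the linear-scale $3$-uniform boundary expansion you need after the degree-reduction gadget, the chain gadget's effect on expansion is asserted rather than verified, and the PCP route gives robust unsatisfiability but offers no control over boundary expansion of the resulting instance, which you acknowledge ``may force interleaving'' without saying how. Explicit farness from $\mathrm{colspace}(A)$ via small-bias sets is likewise not a routine derandomization, since ``distance to a given linear code'' is not a property such generators are known to fool.

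For comparison, Hopkins and Lin resolve precisely this obstacle by a different route: rather than degree-reducing a generic expander and choosing $b$ pseudorandomly, they build the instance from explicit bounded-degree high-dimensional expanders (Ramanujan-type complexes), whose link/local expansion supplies the small-set expansion needed for the $\Omega(n)$-degree SoS bound and whose cosystolic/coboundary expansion supplies the constant fraction of violated constraints for every assignment, i.e.\ the first bullet, with both properties holding simultaneously by construction. So your decomposition of the problem is sound and matches the standard template, but the proposal as written reduces the theorem to the open problem it was introduced to solve, rather than proving it.
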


\begin{proof}[Proof of \autoref{thm:explicitSeps}]
    Let $F$ be as in~\Cref{thm: SoS_lower_bound}. We show that $R\coloneqq \Search_F$ satisfies the desired properties.
    We first prove $R\in \TFZPP^{dt}$.
    Consider the following simple algorithm: sample $i\sim [m]$ uniformly at random, and make three queries to check if $C_i(x)=0$.
    If so, output $i$; otherwise, output $\bot$.
    By the first item of~\Cref{thm: SoS_lower_bound}, the algorithm succeeds with probability at least $\mu_1$.
    Repeating the procedure $O(1/\mu_1)$ times boosts the success probability to at least $2/3$.

    To separate $R$ from $\PPP^{dt}$ and $\PLS^{dt}$, we only need to show there is no efficient black-box reduction from $R$ to a complete problem for one of these classes. If there was, then there would be an efficient SoS proof of $F$, contradicting the second item of~\Cref{thm: SoS_lower_bound}.
    \end{proof}

\section{\texorpdfstring{$\ILP$}{Nephew}}\label{sec: Nephew}

Recall the $\ILP$ problem, which is our main candidate for a total search problem in $\TFZPP$ but not reducible to $\lossycode$:

\begin{description}
    \Nephewitem 
\end{description}

The main result of this section is the inclusion theorem:

\inTFZPPandPWPP*

The intuition behind \cref{thm: tfzpp-and-pwpp-containment} is as follows:
\begin{itemize}
    \item We can treat certain vertices in a $\ILP$ instance like the root of a directed binary tree, where the leaves of the tree correspond to solutions of the $\ILP$ instance.
    \item Finding a leaf of a rooted binary tree is easy for both $\PWPP$ and $\TFZPP$ computations.
\end{itemize}
Once we have proven the above, we are nearly done. First, choose an arbitrary vertex. Perhaps that vertex is one of the many that we can treat as the root of a binary tree, and so from it we can find a solution. Otherwise, we use a procedure to find such a root vertex. In fact, we will be able to find two vertices, one of which must be a good root vertex. To show inclusion in $\TFZPP$, we just need to pick one of these two randomly. For $\PWPP$, we will have to consider both. We make an adjustment to the argument for a single root vertex so that it works for two potential root vertices.

To be more concrete, we will reduce $\ILP$ to the following \emph{promise} search problem.

\begin{description}
    \item[$\BTreeLeaf.$] An instance consists of a set $V$ of vertices, a special vertex $v^*$, and two functions $L: V \to V \cup\{\bot\}$ and $R: V \to V \cup\{\bot\}$. We define a subset $V^* \subseteq V$ recursively: $v^* \in V^*$ and, for every $v \in V^*$, we add $L(v)$ to $V^*$ if $L(v) \neq \bot$ and $R(v)$ to $V^*$ if $R(v) \neq \bot$. We promise that the induced subgraph on $V^*$ is a (directed) tree rooted at $v^*$ and that for all $v \in V^*$ either:
    \begin{itemize}
        \item $L(v) = R(v) = \bot$, or
        \item $L(v) \neq \bot$ and $R(v) \neq \bot$ and $L(v) \neq R(v)$.
    \end{itemize}

    A solution is a $(\lceil \log |V| \rceil + 1)$-length path, represented by a string $p \in \{L, R\}^{\lceil \log |V| \rceil+1}$, where starting at $v^*$ and descending by the functions specified by the characters of $p$ in order will at some point reach a vertex $v$ where $L(v) = R(v) = \bot$.
\end{description}

Note that instead of simply asking for a leaf, we require a root-to-leaf path for a solution. This is to confirm that the leaf is in the binary tree rooted at $v^*$.

\paragraph{Finding a root-to-leaf path in a rooted binary tree.} It is easy for a $\PWPP$ or $\TFZPP$ computation to find a solution to $\BTreeLeaf$. This follows from the simple observation that there are (many) more paths of length $(\lceil\log|V|\rceil + 1)$ than vertices in the tree, so most paths must be solutions.

\begin{lemma} \label{lem: solve using tfzpp}
    A solution to $\BTreeLeaf$ can be found with high probability using a randomized algorithm.
\end{lemma}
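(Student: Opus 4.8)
The plan is to exploit that there are far more descent strings than tree vertices, so a uniformly random descent reaches a leaf with high probability. Fix $\ell := \lceil \log |V| \rceil + 1$ and consider the following one-shot algorithm: sample $p \in \{L, R\}^{\ell}$ uniformly at random, trace the descent from $v^*$ dictated by the characters of $p$, and output $p$ if this descent reaches some $v$ with $L(v) = R(v) = \bot$, and output $\bot$ otherwise. Since the algorithm verifies the path before committing, it never outputs a non-solution; thus it suffices to lower bound the probability that a random $p$ is a solution, after which repeating independently $k$ times amplifies the success probability to $1 - 2^{-k}$, in particular above any constant.

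The heart of the proof is a counting bound on the ``bad'' strings. By the promise, the subgraph induced on $V^*$ is a full binary tree rooted at $v^*$: every $v \in V^*$ is either a leaf (with $L(v) = R(v) = \bot$) or has two distinct non-$\bot$ children, both again lying in $V^*$; and since a tree has no cycles, the descent never leaves $V^*$ and never stalls. If the tree has $I$ internal nodes then it has $I+1$ leaves, so $|V^*| = 2I+1 \le |V|$ and hence $I < |V|/2$. Because $L(v) \neq R(v)$ at every internal node, each string $p$ determines a unique descent $v^* = u_0, u_1, \dots$, and conversely each root-to-vertex path in the tree is traced by exactly one string. A string $p$ is a \emph{non-solution} precisely when its length-$\ell$ descent meets no leaf, i.e.\ when $u_0, u_1, \dots, u_\ell$ are all internal; such strings are in bijection with the internal nodes at depth exactly $\ell$ (the root-to-$u$ path of an internal node $u$ is automatically all-internal, as every ancestor of $u$ has a child and therefore two children), so there are at most $I < |V|/2$ of them. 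Since $2^{\ell} \ge 2|V|$ (because $2^{\lceil \log |V| \rceil} \ge |V|$), we have $|V|/2 \le 2^{\ell}/4$, so at least $2^{\ell} - 2^{\ell}/4 = \tfrac{3}{4}\,2^{\ell}$ of the $2^{\ell}$ strings are solutions, giving $\Pr_p[p \text{ is a solution}] \ge 3/4$.

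It remains to note efficiency: tracing and checking one sampled $p$ reads $L$ or $R$ only at the $\ell = O(\log |V|) = \polylog(n)$ vertices along the path, so the algorithm is computed by a $\polylog(n)$-depth randomized decision tree whose leaves are labeled by a valid path $p$ or by $\bot$---exactly the form required for $\TFZPP^{dt}$---and the identical description works in the white-box model. I do not expect a genuine obstacle here, as this lemma is a warm-up; the only points requiring care are the promise bookkeeping (that the descent stays inside $V^*$ and cannot get stuck at a vertex with exactly one non-$\bot$ child) and the degenerate cases ($v^*$ itself being a leaf, so that every $p$ is a solution, or $|V|$ very small), both of which are consistent with the bound above.
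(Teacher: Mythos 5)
Your proof is correct and takes essentially the same route as the paper: sample one uniformly random descent string of length $\ell = \lceil\log|V|\rceil+1$, use the promised tree structure to show that non-solution strings inject into vertices of $V$, and compare against $2^{\ell}\ge 2|V|$ to get a constant success probability (with failure always detectable, so the algorithm is zero-error). The only difference is bookkeeping: the paper charges each bad path to the three vertices $\{v_p, L(v_p), R(v_p)\}$ to get success probability $5/6$ in a single shot, whereas your full-binary-tree internal-node count gives $3/4$ per trial, which you then amplify by repetition---both suffice for the lemma.
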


\begin{proof}
    The algorithm guesses a random path of length $\lceil \log |V|\rceil + 1$, which will be a solution with probability at least $5/6$. This is because if a path $p$ is not a solution, then following $p$ reaches a vertex $v_p$ with two children. The path $p$ is the only non-solution path of length $\lceil \log |V| \rceil + 1$ to reach $v_p$; furthermore, because $V^*$ is an induced tree, no other path ends at a vertex with $L(v_p)$ or $R(v_p)$ as children. This means that the set $\{v_p, L(v_p), R(v_p)\}$ are uniquely reached by $p$ out of all of the non-solution paths. Therefore, there are at least 3 times as many vertices in $V$ as there are non-solution paths. Let the fraction of non-solution paths be $\alpha$. Then

    \[ |V| \geq 3 \alpha \left|\{L, R\}^{\lceil \log |V| \rceil+1}\right| \geq 3 \alpha \cdot 2 |V|, \]
 and so $\alpha \leq 1/6$.
\end{proof}

To show inclusion in $\PWPP$, we reduce to the $\PWPP$-complete problem $\textsc{Weak-Pigeon}$. (Note that this is not a reduction between $\TFNP$ problems, as $\BTreeLeaf$ is a promise problem.) We recall its definition here:

\begin{description}
   \item[$\textsc{Weak-Pigeon}$.] Given $h:\{0,1\}^n \rightarrow \{0,1\}^{n-1}$ a solution is $x \neq y$ such that $h(x)=h(y)$.
\end{description}

\begin{lemma} \label{lem: solve using pwpp}
    $\BTreeLeaf$ reduces to $\textsc{Weak-Pigeon}$.
\end{lemma}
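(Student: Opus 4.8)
The plan is to reduce $\BTreeLeaf$ to $\textsc{Weak-Pigeon}$ by encoding the ``following a path of length $\lceil \log |V| \rceil + 1$'' process as a compressing map. First I would set $n := \lceil \log |V| \rceil + 1$, so that the set of all paths $\{L,R\}^n$ has size $2^n \ge 2|V|$, and identify $\{L,R\}^n$ with $\{0,1\}^n$ in the obvious way. I would then define a map $h : \{0,1\}^n \to \{0,1\}^{n-1}$ as follows: given a path $p$, simulate descending from $v^*$ according to $p$. Since any path may run past a leaf, I need to fix a convention: if $p$ reaches a vertex $v$ with $L(v)=R(v)=\bot$ before all $n$ characters are consumed, then $p$ is already a solution and I can short-circuit the whole reduction. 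More precisely, the decision-tree reduction works coordinate-by-coordinate: the output functions $f_i$ of the $\textsc{Weak-Pigeon}$ formulation compute the $i$-th bit of $h(p)$ by querying $L$ and $R$ along the path traced by $p$; this requires depth $O(n \cdot (\text{bits to specify a vertex})) = \poly(\log |V|)$, which is efficient.

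The key point is choosing what $h(p)$ outputs so that a collision yields a solution. I would have $h(p)$ output the index (in $\{0,1\}^{n-1}$, which has size $\ge |V|$) of the vertex $v_p$ reached by following $p$ for the full $n$ steps, \emph{unless} during the descent we hit a leaf, in which case $p$ is a solution and the reduction's solution-extraction map $g$ just outputs $p$ directly. But there is a subtlety: a non-solution path has length exactly $n$ and never hits a leaf, so it always reaches some internal vertex $v_p$ at depth $n$; since the induced subgraph on $V^*$ is a tree of at most $|V|$ vertices, and $\{0,1\}^{n-1}$ has $\ge |V|$ elements, this alone does not force a collision among non-solution paths. The fix, mirroring the counting in \autoref{lem: solve using tfzpp}, is to observe that the tree has depth at most $\lceil \log |V| \rceil$ (since it is binary with $\le |V|$ vertices... actually this is not guaranteed), so instead I would argue via the collision directly: if $\textsc{Weak-Pigeon}$ returns $p \ne q$ with $h(p) = h(q)$, then either one of $p, q$ already hit a leaf (hence is a solution), or both reach the same internal vertex $v_p = v_q$ at depth $n$; since $V^*$ is an induced tree, two distinct paths of the same length from $v^*$ reaching the same vertex is impossible — so in fact this case cannot happen, meaning every collision must involve a path that hits a leaf.

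Let me reconsider: the cleanest design is to make $h$ map into $\{0,1\}^{n-1}$ by sending each non-solution path $p$ to a bit-string encoding the vertex $v_p$ at its tip, and to send each solution-path to a fixed reserved value (or simply note that solution-paths are what we want). Since $|\{0,1\}^{n-1}| \ge |V| \ge |V^*|$, injectivity on non-solution paths is consistent with the tree structure, so I cannot immediately conclude a collision forces a solution. The resolution is to compress more aggressively: map $\{0,1\}^n \to \{0,1\}^{n-1}$ where $h(p)$ encodes the tip vertex of $p$ \emph{together with} enough information that two distinct non-solution paths with the same tip are impossible (automatic, by the induced-tree promise), so actually the only way to get a collision in a $2$-to-$1$-on-average sense is that many paths are solutions. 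Concretely: let $S$ be the set of solution-paths; each non-solution path injects into $V^*$, and by the argument of \autoref{lem: solve using tfzpp} the image has size $\ge 3\alpha \cdot 2|V|$ where... — hmm, more directly, $|\{0,1\}^n \setminus S| \le |V| \le |\{0,1\}^{n-1}|$ once $|S| \ge |\{0,1\}^n| - |V| \ge 2|V| - |V| = |V|$, which always holds since $|S| \ge \tfrac{5}{6}|\{0,1\}^n| \ge |V|$. So the non-solution paths fit injectively into the codomain $\{0,1\}^{n-1}$, meaning a collision of $h$ \emph{must} involve at least one solution path, which the map $g$ then outputs. I expect the main obstacle to be getting this counting/encoding exactly right so that $h$ is total into $\{0,1\}^{n-1}$ and every $\textsc{Weak-Pigeon}$ solution provably decodes to a $\BTreeLeaf$ solution; once the reserved-value bookkeeping for solution-paths and the injectivity-from-the-tree-promise are pinned down, verifying the $O(\log |V|)$-depth bound on the reduction's decision trees is routine.
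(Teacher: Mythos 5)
Your proposal is correct and is essentially the paper's proof: map each length-$(\lceil\log|V|\rceil+1)$ path to (an encoding of) its tip vertex, note that the induced-tree promise makes this injective on non-solution paths, and conclude that any $\textsc{Weak-Pigeon}$ collision must involve a solution path. The paper simply makes your ``reserved value'' choice cleanly by sending all solution paths to $v^*$ (which no non-solution path can reach, since it is the root), so that both elements of any collision are solutions; your variant with an arbitrary reserved value plus a final check of which colliding path is a solution works just as well.
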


\begin{proof} 
    Let $n = \lceil \log |V| \rceil+1$. Let $v_p$ be the vertex reached by $p$ if $p$ is a non-solution path. Then we define $h$ as a map from $(\lceil \log |V| \rceil+1)$-length paths to vertices:

    \[ h(p) = \begin{cases} v_p & \mbox{if $p$ is not a solution}; \\ v^* & \mbox{otherwise}. \end{cases} \] 

    (Recall that since $v^*$ is the root, we have that $v_p\ne v^*$ for any path $p$.) Thus, paths $(x, y)$ are a collision if and only if $x$ and $y$ are both solutions to $\BTreeLeaf$, and so from any collision we can find a solution to $\BTreeLeaf$ by arbitrarily choosing one from the pair.
\end{proof}

\paragraph{The structure of $\ILP$ instances and finding a rooted binary tree.} For any $\ILP$ instance $(V, f, g)$, let $G_f$ be the directed graph with vertex set $V$ and where $(u, v)$ is an edge if and only if $v = f(u)$. Then $G_f$ is a directed graph with out-degree one, and therefore the connected components of $G_f$ have a simple structure: they are composed of a cycle (perhaps a self-loop) and trees (with edges oriented leaf-to-root) that are rooted at vertices in the cycle. For any vertex $v \in V$, define its \emph{level} (denoted $\ell(v)$) as the distance from $v$ to any vertex on the cycle of its connected component. So, any vertex on the cycle has level 0, any non-cycle vertex pointing to a cycle vertex has level 1, and so on. See \cref{fig: graph structure} for an illustration.

\begin{figure}
    \centering
    \begin{tikzpicture}[level1/.style={fill=Cerulean!10!white, text opacity=1}, level2/.style={fill=Cerulean!30!white, text opacity=1}, level3/.style={fill=Cerulean!40!white}, level0/.style={fill=Cerulean!2!white}]
        \node[v,level3] (n00) at (0, 0) {3};
        \node[v,level3] (n01) at (2, 0) {3};
        \node[v,level2] (n02) at (4, 0) {2};
        \node[v,level1] (n03) at (6, 0) {1};
        \node[v,level1] (n04) at (8, 0) {1};
        \node[v,level1] (n05) at (10, 0) {1};
        \node[v,level1] (n06) at (12, 0) {1};
        \node[v,level3] (n10) at (0, 2) {3};
        \node[v,level2] (n11) at (2, 2) {2};
        \node[v,level1] (n12) at (4, 2) {1};
        \node[v, level0] (n13) at (6, 2) {0};
        \node[v,level0] (n14) at (8, 2) {0};
        \node[v,level0] (n15) at (10, 2) {0};
        \node[v,level0] (n16) at (12, 2) {0};
        \node[v,level3] (n20) at (0, 4) {3};
        \node[v,level3] (n21) at (2, 4) {3};
        \node[v,level2] (n22) at (4, 4) {2};
        \node[v,level1] (n23) at (6, 4) {1};
        \node[v,level0] (n24) at (8, 4) {0};
        \node[v,level0] (n25) at (10, 4) {0};
        \node[v,level3] (n26) at (12, 4) {3};
        \node[v,level0] (n30) at (0, 6) {0};
        \node[v,level0] (n31) at (2, 6) {0};
        \node[v,level1] (n32) at (4, 6) {1};
        \node[v,level0] (n33) at (6, 6) {0};
        \node[v,level0] (n34) at (8, 6) {0};
        \node[v,level1] (n35) at (10, 6) {1};
        \node[v,level2] (n36) at (12, 6) {2};

        \draw[farrow] (n00) to (n11);
        \draw[farrow] (n01) to (n02);
        \draw[farrow] (n02) to (n12);
        \draw[farrow] (n03) to (n13);
        \draw[farrow] (n04) to (n15);
        \draw[farrow] (n05) to (n15);
        \draw[farrow] (n06) to (n16);
        \draw[farrow] (n10) to (n11);
        \draw[farrow] (n11) to (n12);
        \draw[farrow] (n12) to (n13);
        \draw[farrow] (n13) to (n24);
        \draw[farrow] (n14) to (n13);
        \draw[farrow] (n15) to (n14);
        \draw[farrow] (n16) to (n15);
        \draw[farrow] (n20) to (n11);
        \draw[farrow] (n21) to (n11);
        \draw[farrow] (n22) to (n23);
        \draw[farrow] (n23) to (n24);
        \draw[farrow] (n24) to (n25);
        \draw[farrow] (n25) to (n16);
        \draw[farrow] (n26) to (n36);
        \draw[farrow] (n30) to[bend right] (n31);
        \draw[farrow] (n31) to[bend right] (n30);
        \draw[farrow] (n32) to (n31);
        \draw[farrow] (n33) to[loop left] (n33);
        \draw[farrow] (n34) to[loop left] (n34);
        \draw[farrow] (n35) to (n34);
        \draw[farrow] (n36) to (n35);
        
    \end{tikzpicture}
    \caption{An example $G_f$ with levels marked.} \label{fig: graph structure}
\end{figure}
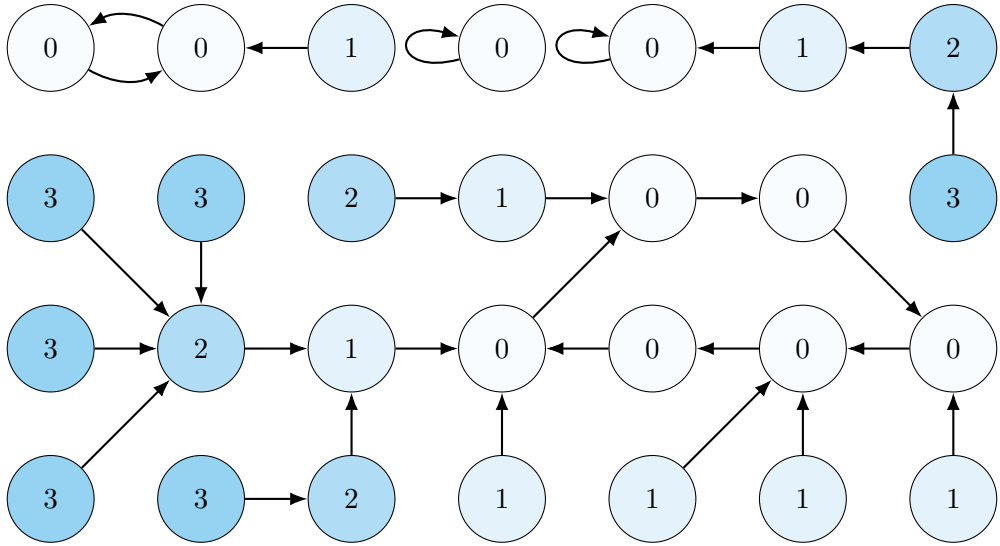

We give a reduction from $\ILP$ to $\BTreeLeaf$ under the assumption that we can find a vertex $v^*$ with $\ell(v^*) \geq 2$. After proving this, the hard part will be finding such a vertex. The main component of this reduction is the procedure $\FindChildren$ (\cref{alg: findchildren}), which is based on the functions $f$ and $g$ from an $\ILP$ instance. Define $\Checksol(u)$ to be the procedure that returns ${\sf True}$ iff $u$ is a solution to the $\ILP$ instance, that is, iff $f(f(g(u))) \neq f(u)$ or $f(g(u)) = u$.

\begin{algorithm}[h!] 
    \caption{Procedure $\FindChildren_{f, g}(v)$ } \label{alg: findchildren}
    \begin{algorithmic}[1]
        \State $h(v) \gets g(f(g(v)))$ 
        \Comment{Rename for notational brevity}
        \If{$\Checksol(v) \lor \Checksol\left(g (v)\right) \lor \Checksol \left(f(g(v))\right) \lor \Checksol \left(h(v)\right)$}
            \State \Return $(\bot, \bot)$ 
            \Comment{We have found a solution, so we can stop here}
        \Else
            \State \Return $(g(v), h(v))$
        \EndIf
    \end{algorithmic}
\end{algorithm}

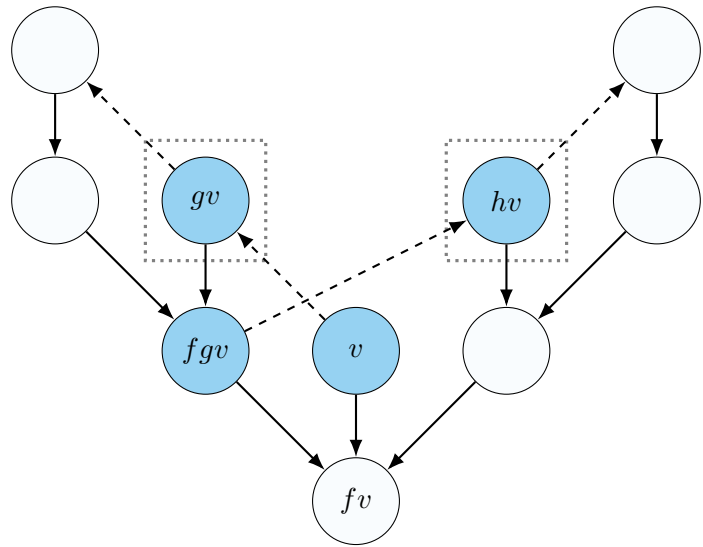
\begin{figure}
    \centering
    \begin{tikzpicture}[level2/.style={fill=Cerulean!40!white, text opacity=1},level0/.style={fill=Cerulean!2!white}] %
        \node[v,level0] (fv) at (0, 0) {$f v$};
        \node[v,level2] (v) at (0, 2) {$v$};
        \node[v,level2] (fw) at (-2, 2) {$f g v$};
        \node[v,level0] (fx) at (2, 2) {};
        \node[v, level2] (w) at (-2, 4) {$g v$};
        \node[v,level0] (leftw) at (-4, 4) {};
        \node[v, level2] (x) at (2, 4) {$h v$};
        \node[v,level0] (rightx) at (4, 4) {};
        \node[v,level0] (inlaww) at (-4, 6) {};
        \node[v,level0] (inlawx) at (4, 6) {};

        \draw[dotted, very thick,gray] (-2.8, 4.8) rectangle (-1.2, 3.2);
        \draw[dotted, very thick,gray] (2.8, 4.8) rectangle (1.2, 3.2);
        
        \draw[farrow] (v) to (fv);
        \draw[farrow] (fw) to (fv);
        \draw[farrow] (fx) to (fv);
        \draw[farrow] (w) to (fw);
        \draw[farrow] (leftw) to (fw);
        \draw[farrow] (x) to (fx);
        \draw[farrow] (rightx) to (fx);
        \draw[farrow] (inlaww) to (leftw);
        \draw[farrow] (inlawx) to (rightx);
        
        \draw[garrow] (v) to (w);
        \draw[garrow] (w) to (inlaww);
        \draw[garrow] (fw) to (x);
        \draw[garrow] (x) to (inlawx);
    \end{tikzpicture}
    \caption{The procedure performed by $\FindChildren_{f, g}(v)$.  Solid arrows represent $f$ and dashed arrows represent $g$. Parentheses are omitted in labels. The dotted boxes indicate that vertices $g(v)$ and $h(v)$ will be returned as the children of $v$. The procedure will check if the shaded vertices are $\ILP$ solutions, and in doing so will visit the unshaded vertices (but will not detect if these are solutions). Note that $f(h(v)) = v$ is possible, but $f(h(v)) = f(g(v))$ is not.} \label{fig: findchildren}
\end{figure}

See \cref{fig: findchildren} for an illustration. The idea is to construct the tree by defining the left and right children of $v$ to be two nodes reachable from $v$, unless the procedure finds a nearby solution, in which case we can make $v$ into a leaf, as it is easy to compute a solution given $v$. 

Although the intuition behind the reduction is straightforward, some work needs to be done to show that $\FindChildren$ gives a valid binary tree. The following properties will help.

\begin{lemma} \label{lem: findchildren properties}
    Let $\FindChildren_{f, g}(v) = (a, b)$. If $(a, b) \neq (\bot, \bot)$, then
    \begin{enumerate}[label=(\roman*)]
        \item \label{lem: findchildren properties distinct} $a \neq b$,
        \item \label{lem: findchildren properties distinct children} $f(a) \neq f(b)$,
        \item \label{lem: findchildren properties grandchildren} $f(f(a)) = f(f(b)) = f(v)$, and
        \item \label{lem: findchildren properties deeper nodes} if $\ell(v) \geq 2$, then $\ell(a) = \ell(b) = \ell(v) + 1$.
    \end{enumerate}
\end{lemma}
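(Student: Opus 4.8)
The plan is to unwind the definition of $\FindChildren$ together with the defining condition of an $\ILP$ non-solution, and then read off (i)--(iv) essentially by substitution. Write $a = g(v)$ and $x := f(g(v))$, so that $f(a) = x$, and observe that the quantity $h(v) = g(f(g(v)))$ appearing inside $\FindChildren$ is exactly $g(x)$; hence $b = g(x)$. The hypothesis $(a,b) \neq (\bot,\bot)$ means that all four calls $\Checksol(v),\ \Checksol(g(v)),\ \Checksol(f(g(v))),\ \Checksol(h(v))$ returned False, i.e.\ none of $v,\ g(v),\ f(g(v)),\ h(v)$ is an $\ILP$ solution. Recall that $u$ being a non-solution means both $f(f(g(u))) = f(u)$ (so \textbf{s1} fails) and $f(g(u)) \neq u$ (so \textbf{s2} fails). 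I would only use this at $u = v$ and at $u = x$: at $v$ it gives $f(x) = f(v)$ and $x \neq v$; at $x$ it gives $f(f(g(x))) = f(x)$ and $f(g(x)) \neq x$, which, since $g(x) = b$ and $f(x) = f(v)$, translate into $f(f(b)) = f(v)$ and $f(b) \neq x$.

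From here parts (iii), (ii), (i) come out immediately, in that order. For (iii): $f(f(a)) = f(f(g(v))) = f(v)$ directly from $v$ being a non-solution, and $f(f(b)) = f(v)$ as just derived. For (ii): $f(a) = x$ while $f(b) \neq x$, so $f(a) \neq f(b)$. Part (i) then follows, since $a = b$ would force $f(a) = f(b)$, contradicting (ii).

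For part (iv) I would invoke the structure of $G_f$ recalled just before the lemma: every connected component of $G_f$ is a directed cycle with in-trees hanging off it, so each application of $f$ drops the level by one until it bottoms out at $0$ and then stays there; concretely $\ell(f^{k}(u)) = \max\{\ell(u) - k,\ 0\}$ for all $u$ and all $k \geq 0$. Applying this with $k = 2$ to the identity from part (iii): $f(f(a)) = f(v)$ yields $\max\{\ell(a) - 2,\ 0\} = \ell(f(v))$, and when $\ell(v) \geq 2$ we have $\ell(f(v)) = \ell(v) - 1 \geq 1 > 0$, which forces $\ell(a) - 2 = \ell(v) - 1$, i.e.\ $\ell(a) = \ell(v) + 1$. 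The identical computation with $b$ in place of $a$ gives $\ell(b) = \ell(v) + 1$.

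The only point requiring any care is the bookkeeping of the composed maps --- in particular recognising $g(f(g(v)))$ as $g(x)$ so that the non-solution condition \emph{at $x$} becomes a direct statement about $b$, and keeping track of which of the four $\Checksol$ calls supplies which equality. (In fact only the checks at $v$ and at $f(g(v))$ are used for this lemma; the checks at $g(v)$ and $h(v)$ are needed elsewhere in the reduction, e.g.\ to guarantee that a leaf of the constructed tree yields an extractable $\ILP$ solution.) Everything else is a short chain of substitutions together with the elementary level computation in a functional graph, so I do not anticipate a genuine obstacle here.
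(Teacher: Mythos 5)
Your proof is correct and follows essentially the same route as the paper's: it uses the non-solution conditions only at $v$ and at $f(g(v))$ to get the identities behind (i)--(iii) (just deriving them in the order (iii), (ii), (i) rather than (i), (ii), (iii)), and it proves (iv) by the same observation that $\ell(f(v)) = \ell(v)-1 > 0$ forces any $u$ with $f(f(u)) = f(v)$ to sit at level $\ell(v)+1$.
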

\begin{proof}
    \leavevmode %
    \begin{enumerate}[label=(\roman*)]
        \item Since $(a, b) \neq (\bot, \bot)$, we have that $f(g(v)) = f(a)$ is not an $\ILP$ solution, and in particular $f(g(f(a))) \neq f(a)$. This means that $a \neq b$, as $b = g(f(a))$.

        \item As $f(g(v))$ is not a solution, applying $f(g(u)) \neq u$ to $u = f(g(v))$ gives $f(b) = f(g(f(g(v)))) \neq f(g(v)) = f(a)$.
    
        \item As $v$ is not a solution, $f(f(a)) = f(f(g(v))) = f(v)$. As $f(g(v))$ is not a solution, applying $f(f(g(u))) = f(u)$ for $u = f(g(v))$ gives $f(f(b)) = f(f(g(f(g(v))))) = f(f(g(v))) = f(v)$.
    
        \item Under the assumption that $\ell(v) \geq 2$, we know that $\ell(f(v)) = \ell(v) - 1 > 0$, which implies that any vertex with $f(f(u)) = f(v)$ has $\ell(u) = \ell(f(v)) + 2 = \ell(v) + 1$. By \cref{lem: findchildren properties grandchildren}, this applies to $a$ and $b$. \qedhere
    \end{enumerate}
\end{proof}

\begin{remark}
    \cref{lem: findchildren properties}~\ref{lem: findchildren properties deeper nodes} does not hold without the hypothesis $\ell(v) \geq 2$, as if $\ell(f(v)) = 0$ there is no guarantee that vertices with edges pointing to $f(v)$ have level greater than 0. This is why it is necessary to assume $\ell(v^*) \geq 2$ for the simple reduction given here.
\end{remark}

\begin{lemma} \label{lem: findchildren reduction}
    Let $(V, f, g)$ be an instance of $\ILP$. Then define $L$ and $R$ by \[(L(v), R(v)) \gets \FindChildren_{f, g}(v).\] Given $v^* \in V$ with $\ell(v^*) \geq 2$, $(V, v^*, L, R)$ is a valid instance to $\BTreeLeaf$. 
\end{lemma}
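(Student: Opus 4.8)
The plan is to verify directly the two promise conditions in the definition of $\BTreeLeaf$ for the instance $(V, v^*, L, R)$. The second condition---that for every $v \in V^*$ either $L(v) = R(v) = \bot$ or $L(v), R(v) \in V$ with $L(v) \ne R(v)$---will be immediate from the definition of $\FindChildren_{f,g}$ in \cref{alg: findchildren}: it returns either $(\bot,\bot)$ or a pair $(g(v), h(v))$ of genuine vertices, and in the latter case $g(v)\ne h(v)$ by \cref{lem: findchildren properties}~\ref{lem: findchildren properties distinct}. So essentially all the work lies in showing that the induced subgraph $G^*$ on $V^*$ is a directed tree rooted at $v^*$, which I would break into: (a) $G^*$ is acyclic and $v^*$ has in-degree $0$; (b) every other vertex of $V^*$ has in-degree exactly $1$.

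Part (a) I expect to be routine via the level function. First I would show, by induction along the recursive construction of $V^*$, that every $v \in V^*$ has $\ell(v) \ge 2$: this holds at $v^*$ by hypothesis, and whenever $v \in V^*$ has $\ell(v) \ge 2$ and $\FindChildren_{f,g}(v) = (a,b) \ne (\bot,\bot)$, \cref{lem: findchildren properties}~\ref{lem: findchildren properties deeper nodes} gives $\ell(a) = \ell(b) = \ell(v) + 1 \ge 3$. Hence \cref{lem: findchildren properties}~\ref{lem: findchildren properties deeper nodes} applies at every vertex of $V^*$, so the level strictly increases (by exactly $1$) along every edge of $G^*$; in particular $G^*$ is acyclic, $\ell(v) = \ell(v^*) + (\text{distance from }v^*\text{ to }v)$ for all $v \in V^*$ (so every $v \in V^*\setminus\{v^*\}$ has $\ell(v) > \ell(v^*)$), and a parent of $v^*$ in $G^*$ would need level $\ell(v^*)-1$, which is impossible. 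Having at least one parent for each $v \in V^*\setminus\{v^*\}$ is built into the construction of $V^*$.

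The hard part will be part (b), i.e.\ ruling out ``merges'' in the construction, and I plan to reduce it to the single claim that $f$ is injective on $V^*$. Indeed, if $w \in V^*\setminus\{v^*\}$ had two distinct parents $p_1, p_2 \in V^*$, then $w$ is an output of both $\FindChildren_{f,g}(p_1)$ and $\FindChildren_{f,g}(p_2)$, so \cref{lem: findchildren properties}~\ref{lem: findchildren properties grandchildren} gives $f(p_1) = f(f(w)) = f(p_2)$, contradicting injectivity. To prove $f$ is injective on $V^*$, suppose for contradiction that $v_1 \ne v_2$ lie in $V^*$ with $f(v_1)=f(v_2)$, and choose such a pair minimizing $\ell := \ell(v_1)$ (note $\ell(v_1)=\ell(v_2)$, since $f$ drops the level by one at every level-$\ge 1$ vertex and all of $V^*$ has level $\ge 2$). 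The case $\ell = \ell(v^*)$ is impossible since $v^*$ is the unique element of $V^*$ at that level, so $v_1,v_2 \ne v^*$ and each has a parent $p_i \in V^*$; by \cref{lem: findchildren properties}~\ref{lem: findchildren properties grandchildren}, $f(p_1) = f(f(v_1)) = f(f(v_2)) = f(p_2)$, and $\ell(p_1) = \ell(p_2) = \ell - 1$. If $p_1 \ne p_2$ this contradicts the minimality of $\ell$; if $p_1 = p_2$, then $\{v_1,v_2\}$ is exactly the pair of outputs of $\FindChildren_{f,g}(p_1)$, which have distinct $f$-images by \cref{lem: findchildren properties}~\ref{lem: findchildren properties distinct children}, again a contradiction. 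This establishes injectivity of $f$ on $V^*$, hence (b), and combining (a), (b), and the reachability of all of $V^*$ from $v^*$ shows $G^*$ is a tree rooted at $v^*$, completing the check. The anticipated main obstacle is precisely isolating ``$f$ injective on $V^*$'' as the right invariant and feeding it the induction on levels above; without it, the pieces $g(v)$ and $h(v)$ are only ``grandchildren of $f(v)$'' in $G_f$ and need not be genuine children of $v$, so tree-ness is not locally visible.
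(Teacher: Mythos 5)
Your proposal is correct and follows essentially the same route as the paper: levels strictly increase along edges by \cref{lem: findchildren properties}~\ref{lem: findchildren properties deeper nodes} (giving a DAG rooted at $v^*$), and tree-ness is reduced to the statement that distinct vertices of $V^*$ have distinct $f$-images, proved level-by-level with the same-parent case settled by \ref{lem: findchildren properties distinct children} and the different-parent case by \ref{lem: findchildren properties grandchildren}. The only cosmetic differences are that you phrase the key claim as a minimal counterexample rather than the paper's explicit induction (\cref{claim: different children}) and cite property~\ref{lem: findchildren properties grandchildren} directly where the paper re-derives it inline for the two ways a vertex can be a child.
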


\begin{proof}
    All we need to show is that $v^*$ is the root of an induced tree. Recall that $V^*$ denotes the set of vertices reachable from $v^*$ via $L$ and $R$ (i.e., using edges of the form $(u\to L(u))$ and $(u\to R(u))$). By \cref{lem: findchildren properties}~\ref{lem: findchildren properties deeper nodes}, the induced subgraph on $V^*$ can be assigned levels such that edges are directed only from lower levels to higher levels, i.e.\ it is a DAG. By \cref{lem: findchildren properties}~\ref{lem: findchildren properties distinct}, the DAG has outdegree 2. To prove this induced DAG is indeed a tree, it suffices to show that no two vertices share the same child. Indeed, if there are vertices $u, u'\in V^*$ such that $\{L(u), R(u)\} \cap \{L(u'), R(u')\} \ne \varnothing$, then by \cref{lem: findchildren properties}~\ref{lem: findchildren properties grandchildren}, we have $f(u) = f(u')$. Hence, it suffices to prove the following claim:

    \begin{claim} \label{claim: different children}
        If $u, u' \in V^*$ are two different vertices, then $f(u) \neq f(u')$.
    \end{claim}
    \begin{claimproof}[Proof of \cref{claim: different children}]
        We may assume that $\ell(u) = \ell(u')$, as otherwise $f(u) \neq f(u')$ by the observation about levels in $V^*$. The proof is by induction on levels. At level $\ell(v^*)$, there is only one vertex, hence the base case is trivially true. Assume that at level $k$ (where $k \geq \ell(v^*)$), no two vertices in $V^*$ map via $f$ to the same vertex. We will prove that it also holds for level $k+1$. 

        For two different vertices $u, u'$ with $\ell(u) = \ell(u') = k+1$, we may assume $u$ and $u'$ are the children of two different vertices in $V^*$: if they are children of the same vertex, by \cref{lem: findchildren properties}~\ref{lem: findchildren properties distinct children} $\{u, u'\} = \{L(w), R(w)\}$ implies $f(u) \neq f(u')$ and we are done. So, let $w \neq w'$ be such that $u \in \{L(w), R(w)\}$ and $u' \in \{L(w'), R(w')\}$. Then $\ell(w) = \ell(w') = k$ and by the inductive hypothesis $f(w) \neq f(w')$. 
        
        Assume that $f(u) = f(u')$. We will prove shortly that all vertices in $V^*$ that map to $u$ or $u'$ by $L$ or $R$ map to some common vertex $z$ by $f$, a contradiction with $f(w) \neq f(w')$. Indeed, set $z = f(f(u)) = f(f(u'))$. If a vertex $x \in V^*$ has $L(x) = g(x) = u$ (the same arguments will apply to $x$ that maps to $u'$), then \[z = f(f(u)) = f(f(g(x))) = f(x).\] 
        If a vertex $x \in V^*$ has $R(x) = g(f(g(x))) = u$, then \[z = f(f(u)) = f(f(g(f(g(x))))) = f(f(g(x))) = f(x), \] where we have applied $f(f(g(y))) = f(y)$ to $y = f(g(x))$ which we know is not a $\ILP$ solution because it was checked by $\FindChildren$.
    \end{claimproof}

    See \cref{fig: different children claim} for an illustration of the argument behind \cref{claim: different children}.
\end{proof}

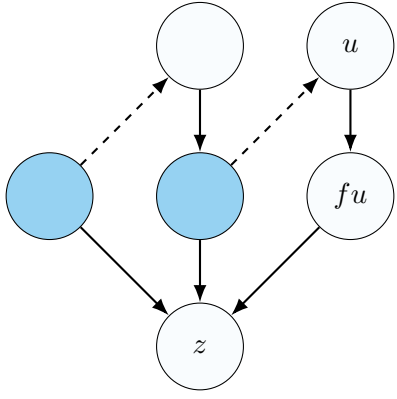
\begin{figure}
    \centering
    \begin{tikzpicture}[level2/.style={fill=Cerulean!40!white, text opacity=1},level0/.style={fill=Cerulean!2!white}] %
        \node[v, level2] (x) at (0, 0) {};
        \node[v, level2] (xx) at (-2, 0) {};
        \node[v,level0] (gxx) at (0, 2) {};
        \node[v,level0] (u) at (2, 2) {$u$};
        \node[v,level0] (fu) at (2, 0) {$f u$};
        \node[v,level0] (z) at (0, -2) {$z$};

        \draw[farrow] (x) to (z);
        \draw[farrow] (xx) to (z);
        \draw[farrow] (gxx) to (x);
        \draw[farrow] (u) to (fu);
        \draw[farrow] (fu) to (z);
        
        \draw[garrow] (xx) to (gxx);
        \draw[garrow] (x) to (u);
    \end{tikzpicture}
    \caption{The argument behind \cref{claim: different children}. Solid arrows represent $f$ and dashed arrows represent $g$. Parentheses are omitted in labels. The shaded vertices are possible locations of the $w \in V^*$ such that either $L(w) = u$ or $R(w) = u$.} \label{fig: different children claim}
\end{figure}

\paragraph{Completing the argument.} It remains to find a vertex $v^*$ with $\ell(v^*)\ge 2$. We do not know how to achieve this deterministically; instead, we will find a pair of vertices $(v, v')$ such that \emph{either} $\ell(v) \ge 2$ or $\ell(v') \ge 2$.

\begin{lemma} \label{lem: pair of vertices}
    Let $u \in V$ be an arbitrary starting vertex. Let $v = g(u)$ and $v' = g(f(u))$. Then either
    \begin{itemize}
        \item at least one of $\{u, v, v'\}$ is a solution, or
        \item at least one of $\{v, v'\}$ is at level at least 2.
    \end{itemize}
\end{lemma}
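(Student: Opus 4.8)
The plan is to argue by contradiction. Suppose that $u$ is not a solution — and, as the argument will make clear, that the vertex $f(u)$ is not a solution either — and suppose $\ell(v)\le 1$ and $\ell(v')\le 1$; I will derive a contradiction, which gives the lemma (with $f(u)$ also inspected for being a solution; this is the one place the list of ``nearby'' vertices must be slightly larger than $\{u,v,v'\}$). The engine of the proof is a single structural observation about the functional graph $G_f$, together with the behaviour of the level function under $f$, namely $\ell(f(w))=\max\{\ell(w)-1,\,0\}$, so that $\ell(w)=0$ precisely when $w$ lies on the cycle of its component. The observation is: \emph{if $\ell(w)=0$ and $w$ is not a solution, then $\ell(g(w))=2$.} Indeed, $w$ not being a solution means $f(f(g(w)))=f(w)$ and $f(g(w))\neq w$; since $w$ is on the cycle, so is $f(w)$, and the \emph{unique} cycle-predecessor of $f(w)$ is $w$ itself, so the predecessor $f(g(w))$ of $f(w)$ must be a level-$1$ tree-predecessor, forcing $\ell(f(g(w)))=1$ and hence $\ell(g(w))=2$.

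With this in hand the case analysis is short. Since $u$ is not a solution we have $f(f(v))=f(u)$ and $f(v)\neq u$ for $v=g(u)$. Because $\ell(f(f(v)))=\max\{\ell(v)-2,\,0\}$ and $\ell(v)\le 1$, we get $\ell(f(u))=0$, so $\ell(u)\in\{0,1\}$. If $\ell(u)=0$, the observation applied to $w=u$ yields $\ell(v)=\ell(g(u))=2$, contradicting $\ell(v)\le 1$. Hence $\ell(u)=1$, and then $\ell(f(u))=0$ means $f(u)$ lies on the cycle. Now apply the observation to $w=f(u)$ (not a solution, by assumption): it gives $\ell(v')=\ell(g(f(u)))=2$, contradicting $\ell(v')\le 1$. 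This closes all cases.

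I expect the crux — and the reason a single arbitrary starting vertex does not suffice — to be the case $\ell(u)=1$: when $u$ sits at level $1$, its ``nephew'' $g(u)$ is allowed to fall back onto the cycle, landing at the cycle-predecessor of $f(u)$ and hence at level $0$ or $1$ rather than strictly deeper, so $v$ alone carries no guarantee. The remedy is to first step down to $f(u)$, which is then certain to be on the cycle, and to invoke the level-$0$ observation for $v'=g(f(u))$; the price is that the solution status of $f(u)$ must be tracked as well (harmless, since producing $f(u)$ as a solution is an acceptable outcome). The rest is routine verification: that the predecessor set of a cycle vertex splits as ``the unique cycle-predecessor'' together with ``tree-predecessors, all at level $1$'', and the degenerate shapes where the cycle is a self-loop or has length two, which are handled in exactly the same way.
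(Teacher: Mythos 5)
Your proof is correct and, at its core, it is the same argument as the paper's: the level function $\ell$, the observation that a non-solution $w$ on the cycle forces $\ell(g(w))=2$ via uniqueness of the cycle-predecessor of $f(w)$, and the split according to $\ell(u)=0$, $\ell(u)=1$, $\ell(u)\ge 2$ (your derivation of $\ell(f(u))=0$ from $\ell(v)\le 1$ is just a contrapositive packaging of the paper's third case). The one place you deviate --- insisting that $f(u)$ also be inspected and assumed to be a non-solution before applying the level-$0$ observation to $w=f(u)$ --- is not over-caution but a genuine catch. The paper's Case II applies exactly that observation to $f(u)$, which requires $f(f(g(f(u))))=f(f(u))$ and $f(g(f(u)))\ne f(u)$, i.e.\ that $f(u)$ is not a solution; this is not implied by the stated hypothesis that $u,v,v'$ are non-solutions. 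In fact the lemma is false as literally stated: take $V=\{c,u,v\}$ with $f(c)=f(u)=f(v)=c$ and $g(u)=v$, $g(v)=c$, $g(c)=u$. Then $v=g(u)$ and $v'=g(f(u))=g(c)=u$; none of $u,v,v'$ is a solution (for instance $f(f(g(u)))=c=f(u)$ and $f(g(u))=c\ne u$, and similarly for $v$), yet $\ell(v)=\ell(v')=1$. The culprit is precisely that $f(u)=c$ is a type-s2 solution, since $f(g(c))=f(u)=c$. So your amended statement --- allowing $f(u)$ to be returned as a solution alongside $u,v,v'$ --- is the right formulation; the change is harmless downstream, since both the $\TFZPP$ algorithm and the $\PWPP$ reduction built on this lemma can check whether $f(u)$ is a solution with $O(1)$ extra queries and output it if so.
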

\begin{proof}
    Assume that there are no solutions among $\{u, v, v'\}$. There are three cases:
    \begin{itemize}
        \item {\bf Case I:} Suppose that $\ell(u) = 0$. We claim that $\ell(v) = 2$ in this case. Indeed, $u$ is the only vertex at level $0$ that maps via $f$ to $f(u)$. Since $f(f(v)) = f(f(g(u))) = f(u)$ but $f(v) = f(g(u)) \ne u$, $f(v)$ has to be at level $1$, hence $v$ is at level $2$.
        \item {\bf Case II:} Suppose that $\ell(u) = 1$. In this case, we have $\ell(f(u)) = 0$, hence the same argument as above applies to show that $v' = g(f(u))$ is at level $2$.
        \item {\bf Case III:} Suppose that $\ell(u) \ge 2$. In this case, $f(u)$ is at level at least $1$. Since $f(f(v)) = f(f(g(u))) = f(u)$, we have that $v$ is at level at least $2$.\qedhere
    \end{itemize}
\end{proof}

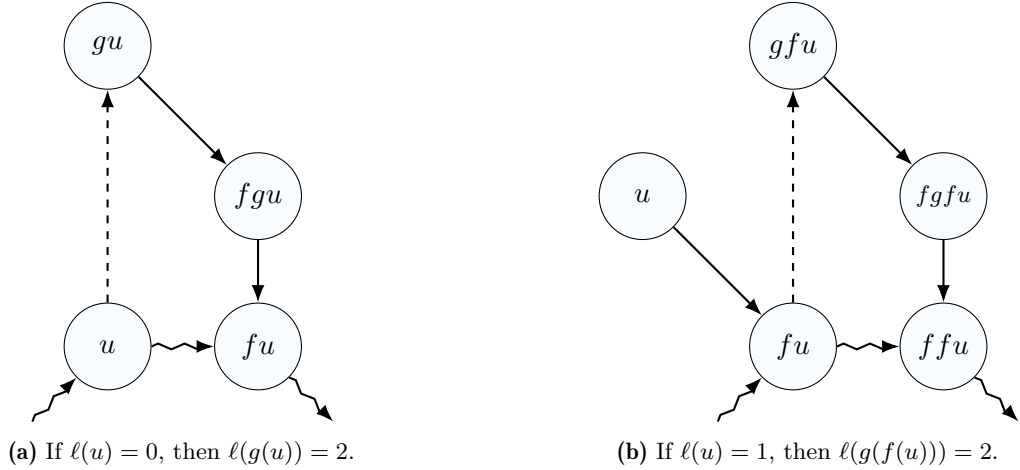
\begin{figure}
  \centering
  \begin{subfigure}[b]{0.45\textwidth}
    \centering
    \begin{tikzpicture}[level0/.style={fill=Cerulean!2!white}]
        \node[v,level0] (u) at (0, 0) {$u$};
        \node[v,level0] (fu) at (2, 0) {$f u$};
        \node[v,level0] (gu) at (0, 4) {$g u$};
        \node[v,level0] (fgu) at (2, 2) {$f g u$};

        \draw[circlearrow] (-1, -1) to (u);
        \draw[circlearrow] (u) to (fu);
        \draw[circlearrow] (fu) to (3, -1);

        \draw[farrow] (gu) to (fgu);
        \draw[farrow] (fgu) to (fu);

        \draw[garrow] (u) to (gu);
    \end{tikzpicture}
    \caption{If $\ell(u) = 0$, then $\ell(g(u)) = 2$.}
    \label{fig: level 0}
  \end{subfigure}
  \quad \quad 
  \begin{subfigure}[b]{0.45\textwidth}
    \centering
    \begin{tikzpicture}[level0/.style={fill=Cerulean!2!white}]
        \node[v,level0] (u) at (-2, 2) {$u$};
        \node[v,level0] (ffu) at (2, 0) {$f f u$};
        \node[v,level0] (fu) at (0, 0) {$f u$};
        \node[v,level0] (gfu) at (0, 4) {$g f u$};
        \node[v,level0] (fgfu) at (2, 2) {\footnotesize $f g f u$};

        \draw[circlearrow] (-1, -1) to (fu);
        \draw[circlearrow] (fu) to (ffu);
        \draw[circlearrow] (ffu) to (3, -1);

        \draw[farrow] (u) to (fu);
        \draw[farrow] (gfu) to (fgfu);
        \draw[farrow] (fgfu) to (ffu);

        \draw[garrow] (fu) to (gfu);
    \end{tikzpicture}
    \caption{If $\ell(u) = 1$, then $\ell(g(f(u))) = 2$.}
    \label{fig: level 1}
  \end{subfigure}
  \caption{Illustration of cases I and II of \cref{lem: pair of vertices}. Solid arrows represent $f$ and dashed arrows represent $g$. Zigzag $f$ arrows are between vertices of level 0. Parentheses are omitted in labels.}
  \label{fig: circle}
\end{figure}

See \cref{fig: circle} for an illustration of cases I and II of \cref{lem: pair of vertices}.

Now we are ready to prove our main results. Recall the statement of \cref{thm: tfzpp-and-pwpp-containment}:
\inTFZPPandPWPP*
\begin{proof}
    The proof of $\ILP \in \TFZPP$ is easy. Pick an arbitrary vertex $u$ and define $v = g(u)$ and $v' = g(f(u))$. By \autoref{lem: pair of vertices}, if none of $u, v, v'$ are solutions of $\ILP$, then at least one of $v$ and $v'$ is at level at least $2$. Therefore, we can randomly select one vertex in $\{v, v'\}$ as the root $v^*$ and run the randomized algorithm for $\BTreeLeaf$ on the instance $(V, v^*, L, R)$. The correctness is guaranteed by \autoref{lem: findchildren reduction}.

    Now we prove that $\ILP\in\PWPP$. Intuitively, we will run the reduction of \cref{lem: findchildren reduction} in parallel on (the direct product of) two graphs where the root is $v$ in one graph and is $v'$ in the other. Because we can only guarantee that our starting point is at level $\ge 2$ in only one of these graphs, we only know that one of the graphs has an induced tree rooted at its respective starting point by the reduction process. Fortunately, the overall graph on $V \times V$ will have a rooted induced tree.

    By \cref{lem: pair of vertices}, starting with an arbitrary vertex, we can obtain either a $\ILP$ solution (in which case we are done) or a pair $(v, v')$ where at least one is at level at least 2. We will reduce to $\BTreeLeaf$ on the vertex set $V \times V$. Set
    \[ (A(u, u'), B(u, u')) \gets \FindChildren_{f, g}(u), \]
    \[ (C(u, u'), D(u, u')) \gets \FindChildren_{f, g}(u'). \]
    Set $L(u, u') = (A(u, u'), C(u, u'))$, unless one of $A(u, u'), C(u, u')$ is $\bot$, in which case $L(u, u') = \bot$. Set $R(u, u') = (B(u, u'), D(u, u'))$, again unless one side of the pair is $\bot$ in which case $R(u, u') = \bot$. By the construction of $\FindChildren$, if one of $L, R$ is set to $\bot$ the other one will as well. The special vertex in $V \times V$ will be $(v, v')$. 
    
    Then $V^* \subset V \times V$ induces a binary tree (recall that $V^*$ is the set of vertices reachable from $(v, v')$ by $L, R$). Say that $\ell(v) \geq 2$; the other case follows by symmetry. Starting at $(v, v')$, $L$ and $R$ will yield a tree structure on the first member of the pair by the same argument as \cref{lem: findchildren reduction}. Let $p, p'$ be two distinct paths of any length which start at $(v, v')$ and traverse with $L$ and $R$. The vertex reached by $p$ will have a different first pair element than the vertex reached by $p'$ (by the tree structure on the first element of the pair), and thus $p$ and $p'$ arrive at distinct vertices. This means that there are no (undirected) cycles in the induced graph on $V^*$, and so it is a tree.
\end{proof}

\section{Finding a Leaf in a Binary Tree}\label{sec: Empty-Child}

In the previous section, we reduced $\ILP$ to the promise problem $\BTreeLeaf$, which captures the following principle:
\begin{center}
	\emph{Randomly walking down a binary tree will reach a leaf in logarithmic time in expectation.}
\end{center}

In this section, we study a family of \emph{total} (instead of promise) search problems that capture the same principle. We then study the power of these search problems. 

The problem $\emptychild$ captures the task of finding a leaf in a tree where each non-leaf vertex $v$ has at least two different children $L(v)$ and $R(v)$. This is made total (i.e.\ we do not need the promises as in $\BTreeLeaf$) by adding a parent function, denoted $F(v)$. 

\begin{description}
    \item[\emptychild.] The input is a set of vertices $V$ and three functions $F, L, R: V \rightarrow V$, where $F(u)$ is the father of $u$, and $L(u), R(u)$ are the left and the right child of $u$, respectively. There are two possible solutions:
	\begin{description}
		\item[s1.] $u \in V$ such that $F(L(u)) \neq u$ or $F(R(u)) \neq u$ or $L(u) = R(u) \neq u$. \mylabel{Empty Child}{item: (Empty Child)}
		\item[s2.] $1$, if $L(1) = 1$ or $R(1) = 1$ or $F(1)\ne 1$. \mylabel{Wrong Root}{item: (Wrong Root)}
	\end{description}
\end{description}

\paragraph*{Intuition behind $\emptychild$.} We think of $\emptychild$ as defining a directed graph on $V$. Draw an edge from $v$ to $w$ if $F(w) = v$ and either $L(v) = w$ or $R(v) = w$. If $v$ does not point to two distinct vertices in this graph, then $v$ has an ``empty child'' and $v$ is a leaf\hspace{0.07em}\footnote{We slightly stretch the definition of leaf to include vertices with only one child. Equivalently, we could redefine the graph such that a vertex may only have two or zero children.} of the graph.

The directed graph defined above may have multiple connected components. They will be of the following sorts:
\begin{itemize}
	\item Isolated vertices. We may have $F(v) = L(v) = R(v) = v$.
	\item Trees. 
	\item Cycles with trees rooted at each node in the cycle. 
\end{itemize}

Unless there is a solution of type \ref{item: (Wrong Root)}, we are forced to have at least one tree, rooted at vertex 1. This forces the graph to have structures other than just isolated vertices. Then we can find a leaf of that tree using the principle captured by $\BTreeLeaf$. An important observation is that in the third type of structure---cycles with trees---walking randomly will also reach a leaf in logarithmic time. We will show shortly that solution \ref{item: (Wrong Root)} can be relaxed to allow for graphs without trees, as long as at least one cycle-with-trees is present.

In $\emptychild$, it is possible that there are vertices $v, w$ such that $F(w) = v$ but $L(v) \neq w$ and $R(v) \neq w$. This is fine: structurally, we can interpret $w$ as being the root of a new tree. We will consider a variant of $\emptychild$ where such a situation is forbidden.

In $\BTreeLeaf$, we needed a root-to-leaf path to prove that the leaf vertex was in the promised tree. In contrast, here we enforce the tree structure syntactically, and hence do not require a path. Indeed, a solution to $\emptychild$ need not be in the connected component that contains the canonical root vertex.

\paragraph*{Results about $\emptychild$.} The main result of this section is the $\Lossy$-completeness of $\emptychild$:
\begin{theorem}\label{thm: empty child main}
	$\emptychild$ is equivalent to $\lossycode$.
\end{theorem}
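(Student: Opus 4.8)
I would prove the equivalence by establishing the two reductions $\emptychild \leq_{dt} \lossycode$ and $\lossycode \leq_{dt} \emptychild$ separately. The first direction says $\emptychild \in \Lossy$ and is essentially the ``random walk down a tree'' argument (the one already used for $\BTreeLeaf$) repackaged as an explicit pair of maps; the second direction says $\emptychild$ is $\Lossy$-hard and is where the genuine construction lies.

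\textbf{Direction $\emptychild \leq_{dt} \lossycode$.} Given an $\emptychild$ instance $(V,F,L,R)$ with designated root $1$, set $k := \lceil \log|V|\rceil + 1$ and build a $\lossycode$ instance whose pigeons are the vertices $V$ and whose holes are the direction strings $p \in \{L,R\}^k$ (the stretch $|V| \to 2^k$ with $2^k \le 4|V|$ is harmless: pad the pigeon set up to $2^{k-1}$, turning it into an instance of $\lossycode_{2^{k-1}\to 2^k}$, and invoke \autoref{lemma: robustness of lossy-code}). The decoding map $g(p)$ simulates the walk from $1$ that follows the directions of $p$: if along the way it meets the root when $1$ is a ``Wrong Root'' solution, or an internal vertex $u$ that is an ``Empty Child'' solution, it halts and returns a fixed default pigeon; otherwise it returns the vertex $v_p$ reached after $k$ steps. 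The encoding map $f(v)$ walks \emph{up} $k$ steps from $v$ using $F$, recording at each step whether the current vertex is the left or right child of its $F$-parent, and returns the recorded path if this walk is consistent and reaches $1$, and a fixed default path otherwise. The key observation is that if a string $p$ does not meet an $\emptychild$ solution on its way down, then every vertex $1 = u_0, u_1, \dots, u_k = v_p$ on it is a proper internal node, so the upward walk from $v_p$ retraces $p$ exactly and hence $f(g(p)) = p$; contrapositively, every lossy-code solution $p$ must meet an $\emptychild$ solution when walked down, which we recover by re-simulating the walk. The maps $f$, $g$ and the recovery step are computable by $O(\log^2|V|)$-depth decision trees and there are $O(|V|)$ holes, so the reduction has $\polylog$ complexity.

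\textbf{Direction $\lossycode \leq_{dt} \emptychild$.} Here I would build, from $f : [N]\to[2N]$ and $g : [2N]\to[N]$, a rooted binary tree whose leaves are exactly the lossy-code solutions. Identify $[2N]$ with $[N]\times\{0,1\}$ and write $f(z) = (f_1(z), f_2(z))$. Take vertex set $V := [N]$ and set $F(z) := f_1(z)$, $L(z) := g(z,0)$, $R(z) := g(z,1)$; that is, the two would-be children of $z$ are the two claimed $f$-preimages of the holes $(z,0)$ and $(z,1)$. From any ``Empty Child'' solution $z$ one then reads off a lossy-code solution: if $F(L(z))\neq z$ then $f(g(z,0))\neq(z,0)$, so $(z,0)$ is a solution (symmetrically for $R$); and if $L(z)=R(z)\neq z$ then $g(z,0)=g(z,1)$, so $f(g(z,0))=f(g(z,1))$ cannot equal both $(z,0)$ and $(z,1)$ and one of these holes is a solution. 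What remains is to supply a designated root that is neither an s1 nor an s2 solution, and this is the only delicate point: I would first \emph{normalize} the lossy code — by a conditional check (if the hole $1\in[2N]$ is not a non-solution we have already found a solution) together with a bounded relabeling — so that a prescribed element plays the role of a clean root whose two children are genuine preimages; after this normalization the only solutions of the $\emptychild$ instance are of type s1, and the one returned by the oracle extracts as above and is translated back through the normalization.

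\textbf{Main obstacle.} The hard part is exactly the root handling in the second direction: one must guarantee that the constructed $\emptychild$ instance is syntactically valid and that its designated vertex $1$ cannot be a spurious ``Wrong Root'' solution from which no lossy-code solution can be recovered. Unlike the rest of the tree, the root has no natural parent coming from the lossy-code data, so a short (but slightly fiddly) preprocessing argument — or, alternatively, an enlargement of the vertex set attaching a constant-size consistent top to the tree — is required to rule this out. By contrast the first direction is comparatively routine: it is the $\BTreeLeaf$ argument of \autoref{lem: solve using tfzpp} and \autoref{lem: solve using pwpp} made into an explicit pair of maps, the only subtlety being to check that the ``walk up'' map is a genuine left inverse of the ``walk down'' map on all direction strings that avoid solutions.
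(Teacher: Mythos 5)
Your first direction is fine and is, up to the naming of the two maps, the paper's own argument for \autoref{lemma: emptychild in lossy}: pigeons are direction strings, the down-map follows the string from the root, the up-map records left/right-child bits via $F$, and a string that is not recovered must have met a solution on its way down (the ``subtlety'' you defer is exactly the case analysis carried out there, e.g.\ ruling out $L(u)=R(u)$ ambiguities). The substance of \autoref{thm: empty child main} is the other direction, and there your proposal has a genuine gap at precisely the point you flag — but it is not repairable by the local normalization you suggest. In your construction ($V=[N]$, $F(z)=f_1(z)$, $L(z)=g(z,0)$, $R(z)=g(z,1)$), consider inputs with $g(f(x))=x$ for every $x\in[N]$: a perfectly consistent retraction, whose $\lossycode$ solutions are exactly the $N$ holes outside the image of $f$ and are invisible to shallow local probing. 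Then no vertex can serve as the designated root: $f_1(r)=r$ forces $f(r)\in\{(r,0),(r,1)\}$, hence $g(r,0)=r$ or $g(r,1)=r$, so $r$ is a Wrong Root (s2) from which no violation of $f\circ g$ can be read off; and if $f_1(r)\ne r$ it is an s2 solution for that reason, again undecodable. Worse, any ``bounded relabeling'' that overrides a parent pointer $F(a)$ in order to hang $a$ under an artificial root creates an Empty Child (s1) solution at $u:=f_1(a)$, since $g(u,c)=a$ for the bit $c$ with $f(a)=(u,c)$ while $F(a)$ no longer equals $u$; everything a $\polylog$-depth decoder can query near $u$ is consistent, so by a standard adversary argument it cannot output a hole outside the image of $f$. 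A self-contained constant-size top fails for the same reason: a finite gadget has leaves, and leaves are s1 solutions that decode to nothing. Since a correct reduction must decode \emph{every} solution of the instance it builds, no local fix of this kind can work.

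This is why the paper's proof of \autoref{thm: lossycode to emptychild} uses a genuinely different, $\Theta(N^2)$-size construction rather than the one-dimensional one you propose. It reads the $\lossycode$ instance as a sink-of-line/$\PPADS$-style structure on $2N$ \emph{levels}, with the hole-levels $N+1,\dots,2N$ playing the role of $N$ distinguished sources; each level carries $N$ positions, and both the level index and the position of a vertex's parent and children are computed by applying $(f,g)$, so every inconsistency anywhere in this forest decodes to a $\lossycode$ solution. The $N^2$ vertices sitting at the hole-levels are then attached, by explicit formulas on both sides, as the leaves of a complete binary tree of depth $2\log N$ with an honest root. The structural point your plan is missing is that the only vertices whose parents are defined by fiat live at hole-levels, and hole-levels are never the target of any child pointer (children always go to level $g(\cdot)\in[N]$), so these overrides can never collide with a ``correctly matched'' edge; in your construction every element of $[N]$ is a potential child, so there is no safe attachment point, and the root problem cannot be localized away.
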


The problem from the previous section, $\ILP$, is at least as strong as $\emptychild$: we give a reduction in \autoref{sec: EC: REC}. We conjecture that $\ILP$ is strictly more powerful than $\emptychild$. A variant of $\ILP$ is considered in \cref{sec: nephew with inverse} and is shown to be equivalent to $\emptychild$.

\paragraph*{Variants of $\emptychild$.} For a proof later in this section it will be useful to weaken \ref{item: (Wrong Root)}. Instead of enforcing that a \emph{root} vertex exists, we only enforce that an \emph{internal} vertex that is not a self-loop exists.

\begin{description}
    \item[$\emptychild'$.] Given the same inputs as in \emptychild, we accept the solutions \ref{item: (Empty Child)} and:
    \begin{description}
        \item[s2a.] $1$, if $L(1) = 1$ or $R(1) = 1$. \mylabel{Wrong Internal Vertex}{item: (Wrong Internal Vertex)}
    \end{description}
\end{description}

It turns out that $\emptychild'$ is equivalent to $\emptychild$ in terms of decision tree reductions, but we do not know of a direct way to prove this. Instead, we will use $\lossycode$ as an intermediate step in the reduction.

\begin{theorem} \label{thm: emptychild equiv to weakened version}
    $\emptychild$ and $\emptychild'$ are equivalent under black-box reductions.
\end{theorem}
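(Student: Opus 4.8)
The plan is to prove the two inclusions $\emptychild \le_{dt} \emptychild'$ and $\emptychild' \le_{dt} \emptychild$ separately. The first is essentially immediate: given an $\emptychild$ instance $(V,F,L,R)$, output the identical tuple, now read as an $\emptychild'$ instance. Any $\emptychild'$-solution of it is either an \ref{item: (Empty Child)} solution, which is literally an \ref{item: (Empty Child)} solution of the original $\emptychild$ instance, or it is the vertex $1$ witnessing $L(1)=1$ or $R(1)=1$ (a \ref{item: (Wrong Internal Vertex)} solution), which in particular satisfies $L(1)=1\lor R(1)=1\lor F(1)\neq 1$ and so is a \ref{item: (Wrong Root)} solution of the $\emptychild$ instance. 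Hence the identity recovery maps $g_o$ work, and totality of $\emptychild'$ guarantees the produced instance has a solution; this reduction makes no queries.

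For the interesting direction $\emptychild' \le_{dt} \emptychild$, the naive attempt---turn an $\emptychild'$ instance into an $\emptychild$ instance by resetting $F(1):=1$ to force the root condition---fails: if the original parent $w:=F(1)$ already pointed back at $1$ (i.e.\ $L(w)=1$ or $R(w)=1$), then $w$ becomes a spurious \ref{item: (Empty Child)} solution of the modified instance with no obvious $\emptychild'$-preimage, and severing that edge merely pushes the defect along the path out of $1$. I would instead route through $\lossycode$: prove $\emptychild' \le_{dt} \lossycode$ and then compose with $\lossycode \le_{dt} \emptychild$, which is available since $\emptychild =_{dt} \lossycode$ by \cref{thm: empty child main}.

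To prove $\emptychild' \le_{dt} \lossycode$, first query $L(1)$ and $R(1)$. If $L(1)=R(1)\neq 1$, or $L(1)=1$, or $R(1)=1$, then vertex $1$ is itself a solution (an \ref{item: (Empty Child)} or \ref{item: (Wrong Internal Vertex)} solution), so output a trivial $\lossycode$ instance whose unique solution is mapped back to $1$. Otherwise $1$ has two distinct children, neither equal to $1$, and I claim this is exactly the configuration handled by the leaf-compression reduction behind the $\emptychild \le_{dt} \lossycode$ direction of \cref{thm: empty child main}. Indeed, in the digraph on $V$ with $v\to w$ iff $F(w)=v$ and $w\in\{L(v),R(v)\}$, an instance with no \ref{item: (Empty Child)} solution has the property that every vertex reachable from $1$ via $L,R$ again has two distinct, correctly-parented children; such vertices have out-degree $2$ while every vertex has in-degree at most $1$, so the reachable set cannot be closed, and an \ref{item: (Empty Child)} solution must occur within descent-depth $O(\log|V|)$ of $1$. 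This is precisely the leaf-finding principle of \cref{lem: solve using tfzpp,lem: solve using pwpp}, and I would encode it as a $\lossycode$ instance in the same way that proof does (descent sequences from $1$ as pigeons, vertices as holes, a non-solution sequence compressing to the obstructed vertex it reaches); the point is that this encoding never uses the root condition $F(1)=1$, only that $1$ has two distinct children together with the \ref{item: (Empty Child)} conditions, so it applies verbatim to $\emptychild'$, and every resulting $\lossycode$ solution unwinds to an \ref{item: (Empty Child)} solution of the $\emptychild'$ instance.

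The main obstacle is this last step: one must revisit the reduction underlying \cref{thm: empty child main} and verify that it is genuinely insensitive to the weakened root axiom---that it invokes only ``$1$ has two distinct children'' plus the \ref{item: (Empty Child)} conditions, and that its solution-recovery map never returns anything other than an \ref{item: (Empty Child)} solution (or, in the degenerate branch above, a \ref{item: (Wrong Internal Vertex)} solution)---while keeping all decision-tree depths and the size of the $\lossycode$ instance polylogarithmic. Once that is in place, the composition with \cref{thm: empty child main} and the degenerate-root bookkeeping are routine.
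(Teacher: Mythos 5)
Your proposal follows the paper's route exactly: the easy direction $\emptychild \le_{dt} \emptychild'$, plus composing $\emptychild' \le_{dt} \lossycode$ with $\lossycode \le_{dt} \emptychild$ (\cref{thm: lossycode to emptychild}). The verification you defer as the "main obstacle"---that the path-encoding reduction never invokes $F(1)=1$ and that its recovery map only ever outputs \ref{item: (Empty Child)} or \ref{item: (Wrong Internal Vertex)} solutions---is precisely what the paper supplies: \cref{lemma: emptychild in lossy} is stated and proved directly for $\emptychild'$, with a case analysis producing exactly those solution types, so no separate degenerate-root branch or re-examination is needed.
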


It is easy to see that $\emptychild$ reduces to $\emptychild'$, as the difference between the two is just a weakening of one of the solutions. In \cref{sec: empty-child to lossycode}, we will prove that $\emptychild'$ reduces to $\lossycode$ (\cref{lemma: emptychild in lossy}). In \cref{sec: lossycode to empty-child}, we will prove that $\lossycode$ reduces to $\emptychild$ (\cref{thm: lossycode to emptychild}), completing the proof of \cref{thm: emptychild equiv to weakened version}.

We also consider a stricter variant where the $F$ function and the $L$ and $R$ functions are required to ``agree'':

\begin{description}
    \item[$\BEC$.] Given the same inputs as in \emptychild but in addition to \ref{item: (Empty Child)} and \ref{item: (Wrong Root)}, we also accept the following solution:
	\begin{description}
		\item[s3.] a vertex $u \in V \backslash \{1\}$ such that $u \notin \{L(F(u)), R(F(u))\}$. \mylabel{Wrong Father}{item: (Wrong Father)}
	\end{description}
\end{description}

Finally, we will consider variants of $\emptychild$ with added \emph{height} functions in \cref{sec: empty child with height}. The analysis of these will be a warm-up for the proof that $\lossycode$ reduces to $\emptychild$ in \cref{sec: lossycode to empty-child}.

    \subsection{\texorpdfstring{$\emptychild$}{Empty-Child} reduces to \texorpdfstring{$\lossycode$}{Lossy-Code}} \label{sec: empty-child to lossycode}
	It is easy to see that $\emptychild$ is in $\TFZPP$: Start from the root $1$ and walk down the tree using $L$ or $R$ randomly until we reach a leaf. In fact, the same proof reduces $\emptychild$ (in fact, also $\emptychild'$) to $\lossycode$:%
    \begin{lemma}\label{lemma: emptychild in lossy}
        $\emptychild'$ reduces to \lossycode via a black-box reduction.
    \end{lemma}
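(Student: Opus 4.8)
The plan is to mirror the randomized algorithm witnessing $\emptychild'\in\TFZPP$: starting from the root $1$, walk down using $L$ and $R$, and after $\ell:=\lceil\log_2|V|\rceil+1$ steps one reaches an \ref{item: (Empty Child)} with probability at least $5/6$ --- by the counting argument behind \cref{lem: solve using tfzpp}, since a length-$\ell$ walk that has not yet hit a leaf ends at a vertex $v$ with two distinct children, the triple $\{v,L(v),R(v)\}$ is determined by the walk and disjoint over distinct non-terminating walks, and there are only $|V|$ vertices but $2^\ell\ge 2|V|$ walks. I will compile this into a decision-tree reduction to $\lossycode$ whose \emph{holes} are the vertices $V$ and whose \emph{pigeons} are the length-$\ell$ strings $\{\LL,\RR\}^\ell$; since $2^\ell\ge 2|V|$ this is a legitimate $\lossycode$ instance, and padding the hole set up to $2^{\ell-1}\ge|V|$ vertices makes it an instance of $\lossycode=\lossycode_{N\to 2N}$ with $N=2^{\ell-1}$.

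The compression map $g\colon\{\LL,\RR\}^\ell\to V$ simulates the walk from $1$ on input $p$: at each vertex $v$ it first tests whether $v$ is a \emph{clean internal node} --- meaning $F(L(v))=v$, $F(R(v))=v$, $L(v)\ne R(v)$, and additionally, when $v=1$, that $L(1)\ne 1\ne R(1)$; if $v$ fails this test the walk halts with $g(p):=v$, and otherwise it steps to $L(v)$ or $R(v)$ according to the next symbol of $p$, returning the vertex reached after $\ell$ steps. The decompression map $f\colon V\to\{\LL,\RR\}^\ell$ walks \emph{upward}: from $y$ it applies $F$ exactly $\ell$ times, recording at each step whether it arrived from a left or a right child; if after $\ell$ steps it sits at $1$ it outputs the recorded symbols arranged as the root-to-$y$ path, and otherwise it outputs the fixed string $\LL^\ell$. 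The crux is that $F$ points backward along any walk through clean internal nodes: if the walk from $1$ on input $p$ never halts, say $1=v_0,v_1,\dots,v_\ell$ with $v_0,\dots,v_{\ell-1}$ clean internal, then $F(v_{i+1})=v_i$ for all $i$, so $F^i(v_\ell)=v_{\ell-i}$; walking up from $v_\ell$ therefore retraces $v_{\ell-1},\dots,v_0=1$ and recovers $p$, giving $f(g(p))=p$.

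Correctness then follows: given a $\lossycode$ solution $p$ (a pigeon with $f(g(p))\ne p$), the walk computing $g(p)$ cannot have run all $\ell$ steps, so it halted at a vertex $v=g(p)$ that is not a clean internal node. A short case analysis --- distinguishing the halt at $v=1$ (where failing the test means $L(1)=1$, $R(1)=1$, $L(1)=R(1)$, $F(L(1))\ne 1$, or $F(R(1))\ne 1$) from a halt at a strictly later $v=v_d$, which was reached as a child of the clean internal node $v_{d-1}$ and hence has $F(v)=v_{d-1}\ne v$ --- shows $v$ is either an \ref{item: (Empty Child)} solution or, when $v=1$, a \ref{item: (Wrong Internal Vertex)} solution. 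Reading $v$ off from the $O(\ell)$ queries already made along the walk yields the $\emptychild'$ solution. Each of $f$, $g$, and this solution map makes $O(\ell)$ function evaluations, so has depth $O(\ell\log|V|)=\polylog(|V|)$, and the constructed instance has size $\poly(|V|)$; hence $\emptychild'\leq_{dt}\lossycode$.

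The step I expect to be the main obstacle is verifying $f\circ g=\mathrm{id}$ on all non-halting inputs, since $\emptychild'$ does not forbid $F(1)\ne 1$: the root $1$ may lie on a cycle of clean internal vertices, so the downward walk may revisit $1$ (and other vertices) rather than forming a simple path. This is precisely why $f$ walks up for a \emph{fixed} number $\ell$ of steps instead of ``until reaching the root'': the identity $F^i(v_\ell)=v_{\ell-i}$ holds whether or not the walk cycles, so after exactly $\ell$ steps $f$ is at $v_0=1$ and reads off $p$ --- and it is also why the holes may simply be $V$, with no need to annotate vertices by depth. The rest (that a halting non-clean vertex is genuinely one of the two listed solution types, and that the stretch of $\lossycode$ is immaterial) is routine bookkeeping.
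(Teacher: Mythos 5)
Your reduction is correct, and it uses the same underlying encoding as the paper's proof (pigeons are length-$(\lceil\log|V|\rceil+1)$ strings over $\{\LL,\RR\}$, holes are vertices, compression walks down from the root, decompression walks up via $F$ recording left/right), but the correctness argument is organized quite differently. The paper's maps are \emph{oblivious}: the downward walk never checks anything, and all the work is in a post-hoc case analysis comparing the string $x$ with $g(f(x))$ at the last index where they (or the corresponding vertices) disagree, after first disposing of the degenerate case $v_i=v_{i+1}$. You instead build the check into the compressor: the downward walk halts at the first vertex failing a local ``cleanness'' test, which makes $f\circ g=\mathrm{id}$ on non-halted walks essentially immediate (consistency of $F$ along the walk holds by construction, and $L(v_{i-1})\ne R(v_{i-1})$ makes the recorded symbols unambiguous even when the walk revisits vertices or cycles through the root), and it localizes solution extraction to the halting vertex. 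This buys a shorter, more transparent case analysis at the price of a self-checking map, which is perfectly legitimate in a decision-tree reduction. The one genuinely delicate point is the one you flag: your cleanness test rejects $L(v)=R(v)$ outright, but a vertex with $L(v)=R(v)=v$ is \emph{not} an \ref{item: (Empty Child)} solution, so a halted non-root vertex could a priori fail the test without being a solution; your observation that $F(v_d)=v_{d-1}\ne v_d$ (where $v_d\ne v_{d-1}$ because the test is a deterministic function of the vertex, so the same vertex cannot both pass and fail it) gives $F(L(v_d))=F(v_d)\ne v_d$ in that subcase and closes the gap. The extra ``am I at the root after $\ell$ upward steps'' check in your $f$ is harmless but not needed, and the padding of the hole set is routine; both your maps and the solution-recovery trees have depth $O(\log^2|V|)$, as required.
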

    \begin{proof}
        Consider an instance of $\emptychild'$ $(V, F, L, R)$ and let $\ell \coloneqq \lceil\log |V|\rceil$. We construct a \lossycode instance $(f,g)$ with $N \coloneqq 2^\ell$. Intuitively, $f: [2N] \rightarrow [N]$ maps a tree-path of length $\ell+1$ to a vertex in $u$ by travelling downwards in the tree, and $g: [N] \rightarrow [2N]$ maps a vertex in $V$ to a tree-path by travelling upwards.
        
		More formally, identify $[2N]$ with $\{\LL, \RR\}^{\ell+1}$, the space of length-$(\ell+1)$ strings over the alphabet $\{\LL, \RR\}$. Every such string $x\in \{\LL, \RR\}^{\ell+1}$ corresponds to a path of length $\ell+1$ starting from the root: Let $v_0 \coloneqq 1$ be the root and for each $i \ge 1$,
        \[v_i \coloneqq \begin{cases} L(v_{i-1}) & \text{if }x_i = \LL; \\ R(v_{i-1}) & \text{if }x_i = \RR.\end{cases}\]
        Then, $x$ corresponds to the path $p_x \coloneqq (v_0, v_1, \dots, v_{\ell+1})$. We define $f(x)$ to be $v_{\ell+1}$, the endpoint of this path. For instance, $f(\LL \LL \RR) = R(L(L(1)))$ when $\ell = 2$.
        
        Conversely, for any $u \in [N]$, we can construct a string $x \in \{\LL, \RR\}^{\ell+1}$ by traversing $\ell+1$ steps upwards from $u$: in the $i$-th step, we set $x_{\ell+2-i} \leftarrow \LL$ if $u = L(F(u))$, or $x_{\ell+2-i} \leftarrow \RR$ otherwise\footnote{It is possible that $u \notin \{L(F(u)), R(F(u))\}$.}; then update $u \leftarrow F(u)$. We set $g(u) = x$.

		Now we prove the correctness of the reduction. Let $x\in [2N]$ be such that $g(f(x)) \ne x$, we use a case analysis to find a solution of $\emptychild'$ from $x$. First we find the path $p_x = (v_0, v_1, \dots, v_{\ell+1})$ as defined above. Normally, two adjacent vertices in the path should be distinct (since a node should be different from its child); we deal with the abnormal case before proceeding. Suppose there exists some $i\ge 0$ such that $v_i = v_{i+1}$, and let $i$ be the smallest such index.
		\begin{itemize}
			\item If $i = 0$, then either $L(1) = 1$ or $R(1) = 1$, hence $1$ is a \ref{item: (Wrong Internal Vertex)}.%
			\item Otherwise, we have $v_{i-1} \ne v_i$. Without loss of generality, let us assume that $v_i = L(v_{i-1})$ and $v_{i+1} = L(v_i)$ ($= v_i$); the cases where we walk down the right child are completely symmetric. \begin{itemize}
				\item If $F(v_i) \ne v_{i-1}$, then $F(L(v_{i-1})) \ne v_{i-1}$ and hence $v_{i-1}$ has an \ref{item: (Empty Child)}.%
				\item Otherwise, $F(L(v_i)) = F(v_i) = v_{i-1} \ne v_i$ and hence $v_i$ has an \ref{item: (Empty Child)}.%
			\end{itemize}
		\end{itemize}
		
		Now we are in the ``normal'' case where for every $i\ge 0$, $v_i \ne v_{i+1}$. Let $x' \coloneqq g(f(x))$. Recall that the precise definition of $x'$ is as follows: let $u_{\ell + 1} \coloneqq v_{\ell+1}$ ($=f(x)$), and let $u_i \coloneqq F(u_{i+1})$ for each $i$ from $\ell$ down to $0$, then $x'_i = 0$ if $u_i = L(F(u_i))$ and $x'_i = 1$ otherwise.
		
		Let $i$ be the largest index such that either $x'_i \ne x_i$ or $v_{i-1} \ne u_{i-1}$. (Note that such $i$ exists because $x' = g(f(x)) \ne x$.) Then $v_i = u_i$. We claim that $v_{i-1}$ has an \ref{item: (Empty Child)}, which would finish the proof.
		\begin{itemize}
			\item If $v_{i-1} \ne u_{i-1} = F(v_i)$, then $v_{i-1} \ne F(L(v_{i-1}))$ or $v_{i-1} \ne F(R(v_{i-1}))$ depending on whether $x_i$ is $0$ or $1$. In either case, $v_{i-1}$ has an \ref{item: (Empty Child)}.
			\item If $x_i = 0$ and $x'_i = 1$, then $v_i = L(v_{i-1})$ but $u_i \ne L(F(u_i))$ (hence $v_i \ne L(F(v_i))$). Note that this implies that $v_{i-1} \ne F(v_i)$, hence $F(L(v_{i-1})) = F(v_i) \ne v_{i-1}$ and $v_{i-1}$ has an \ref{item: (Empty Child)}.
			\item If $x_i = 1$ and $x'_i = 0$, then $v_i = R(v_{i-1})$ but $u_i = L(F(u_i))$ (hence $v_i = L(F(v_i))$). This implies that $F(R(v_{i-1})) = F(v_i)$. If $F(v_i) \ne v_{i-1}$, then $v_{i-1}$ has an \ref{item: (Empty Child)}. Otherwise we have $L(v_{i-1}) = R(v_{i-1}) = v_i$ and $v_{i-1} \ne v_i$ (by the reasoning above) and $v_{i-1}$ has an \ref{item: (Empty Child)}.\qedhere
		\end{itemize}
    \end{proof}

    \subsection{Finding a Leaf with Heights}\label{sec: empty child with height}

	Before showing that $\emptychild$ is complete for $\Lossy$, we take a detour to study variants of $\emptychild$ with \emph{heights}. Indeed, the proof that $\emptychild$ is $\Lossy$-complete is inspired by the investigations of these variants. We first define the variants of $\emptychild$ and $\BEC$ with heights:

    \def\BECwH{\textsc{Binary-Empty-Child-w-Height}}
    \begin{description}
        \item[\ECwH.] The input is a set $V$ of vertices and four functions $F, L, R: V \rightarrow V$, $H: V \rightarrow [|V|]$. Besides \ref{item: (Empty Child)} and \ref{item: (Wrong Root)} listed in the definition of \emptychild, the following solution is also valid:
        \begin{description}
            \item[s4.] a vertex $u \in V \backslash \{1 \}$ such that $u\ne F(u)$ and $H(u) \neq H(F(u))+1$; or $1$, if $H(1) \neq 1$.
            
            \mylabel{Wrong Height}{item: (Wrong Height)}
        \end{description}
    \end{description}
    \begin{description}
        \item [$\BECwH$.] All of \ref{item: (Empty Child)}, \ref{item: (Wrong Root)}, \ref{item: (Wrong Father)}, and \ref{item: (Wrong Height)} are valid solutions.
    \end{description}

	Again, we remark that an isolated vertex $u$ with $F(u) = L(u) = R(u) = u$ is \emph{not} a solution of $\ECwH$ or $\BECwH$.

	Due to the presence of heights, these problems are in $\PLS$ now:
	\begin{lemma}\label{lemma: ECwH in SOPL}
		\ECwH is in \PLS, and therefore, \SOPL.
	\end{lemma}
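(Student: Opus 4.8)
The plan is to place $\ECwH$ inside $\SOPL$ by giving a shallow decision-tree reduction to the canonical $\SOPL$-complete problem $\soplLong$; since $\SOPL\subseteq\PLS$, this immediately yields membership in $\PLS$ as well. The guiding picture is the directed graph in which every vertex $v$ points to its \emph{genuine} children (children $w$ with $F(w)=v$ and $w\neq v$). Walking down this graph always taking the \emph{left} genuine child traces out a single path, and the height function $H$ is a strictly increasing potential along it; so this path, started at the root $1$, is a ``potential line'' which must terminate, and it can only terminate at an empty child.

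Concretely, I would first relabel so that $V=\{0,1\}^{n}$ with $n=\lceil\log|V|\rceil$ and the distinguished vertex $1$ becomes $0^{n}$, extending $H$ arbitrarily to $[2^{n}]$. Then I would build the $\soplLong$ instance: the successor $S(v):=L(v)$ when $F(L(v))=v$ and $L(v)\neq v$, and $S(v):=v$ otherwise; the predecessor $P(v):=F(v)$ when $L(F(v))=v$ and $F(v)\neq v$, and $P(v):=v$ otherwise; the potential $\phi:=H$. Each of $S,P,\phi$ is computed from $O(1)$ probes into $F,L,R,H$, so the reduction has complexity $O(\log|V|)$. The verifications I would carry out, in order, are: (i) whenever $S(v)=L(v)\neq v$ one has $P(L(v))=F(L(v))=v$, and symmetrically $S(P(v))=v$ whenever $P(v)\neq v$, so the non-self-loop edges already form a disjoint union of simple paths and cycles and no ``predecessor/successor mismatch'' solution can arise; (ii) if $S(v)=L(v)\neq v$ then, barring a \ref{item: (Wrong Root)} solution, $L(v)\neq 1$ and $L(v)\neq F(L(v))$, so unless $H(L(v))=H(F(L(v)))+1=H(v)+1$ holds the vertex $L(v)$ is a \ref{item: (Wrong Height)} solution --- hence the line through $1$ is acyclic and finite; (iii) $P(1)=1$ (since $F(1)=1$, else \ref{item: (Wrong Root)}) and no vertex maps to $1$ under $S$, so $1$ is a genuine source, and $S(1)=L(1)\neq 1$ unless $1$ is itself a \ref{item: (Wrong Root)} or \ref{item: (Empty Child)} solution; (iv) if $v$ lies on the line from $1$ and $S(v)=v$, then either $F(L(v))\neq v$, which is an \ref{item: (Empty Child)}, or $L(v)=v$, in which case $F(L(v))=F(v)$ and (as $v$ is reachable from $1$) $F(v)\neq v$ unless $v=1$ (a \ref{item: (Wrong Root)}), so again $v$ is an \ref{item: (Empty Child)}.

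I do not expect a serious obstacle: the content is essentially that the ``father'' pointer makes the children-graph reversible and the ``height'' makes it a potential DAG. The one place that needs care is \emph{soundness} --- checking that \emph{every} solution the $\soplLong$ solver can return (a sink of the $0^{n}$-line, a potential violation, a predecessor/successor mismatch, or a malformed source) is mapped back by a depth-$O(1)$ decision tree to one of the solution types s1/s2/s4 of $\ECwH$, and matching this to the exact conventions of whichever formulation of $\soplLong$ one adopts; the four cases above are precisely what must be checked. A secondary point worth stating explicitly is that isolated vertices (with $F(v)=L(v)=R(v)=v$) are non-solutions of $\ECwH$ but also never appear on the line through $1$, since every vertex reached from $1$ via $S$ has a father distinct from itself.
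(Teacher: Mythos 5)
Your proposal is correct in substance, but it takes a genuinely different route from the paper. The paper keeps the \PLS part minimal: it reduces \ECwH to $\textsc{Sink-Of-DAG}$ by taking the successor to be $L(v)$ and the potential to be $H(v)$ (the father pointer is not used at all in the construction, only in the three-case soundness analysis), and then obtains \SOPL membership ``for free'' by citing the intersection theorem $\SOPL=\PLS\cap\PPADS$ together with the fact that $\ECwH\in\Lossy\subseteq\PPADS$ (via the reduction to \lossycode). You instead build a full $\soplLong$ instance directly, using $F$ to define a consistent predecessor function, which proves \SOPL membership without invoking either the intersection theorem or the \Lossy upper bound; the price is a longer construction and a soundness analysis that depends on the exact conventions of $\soplLong$, which you correctly flag. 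Each approach buys something: the paper's is shorter and modular (it leans on known structural results), while yours is self-contained for \SOPL and makes the ``reversible line with increasing potential'' picture explicit.

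One small repair is needed in your case (iv): a decision-tree reduction cannot verify that the returned sink ``lies on the line from $1$'' or is ``reachable from $1$,'' so you should not use reachability to conclude $F(v)\neq v$. Fortunately this is unnecessary: with your definitions, the locally checkable sink condition $P(S(v))\neq v$ together with claim (i) forces $S(v)=v$ and $P(v)\neq v$, and $P(v)\neq v$ already gives $F(v)\neq v$ and $L(F(v))=v$; then either $F(L(v))\neq v$ (an s1 solution at $v$) or $L(v)=v$, in which case $F(L(v))=F(v)\neq v$ is again an s1 solution at $v$. With that local argument substituted, the reduction is sound for every solution type of $\soplLong$, and $\PLS$ membership then follows from $\SOPL\subseteq\PLS$ as you say.
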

	\begin{proof}
		We reduce the $\ECwH$ instance to the instance of $\textsc{Sink-Of-DAG}$ where for each node $v\in V$, the successor of $v$ is $L(v)$ and the potential of $v$ is $H(v)$. Let $v$ be a solution of $\textsc{Sink-Of-DAG}$, then one of the following cases happens:
		\begin{itemize}
			\item $v = 1$ and $L(v) = 1$. Then $1$ is a \ref{item: (Wrong Root)}.
			\item $L(v) \ne v$ and $L(L(v)) = L(v)$. This implies that either $F(L(v)) \ne v$ or $F(L(L(v))) \ne L(v)$, hence either $v$ or $L(v)$ is a node with \ref{item: (Empty Child)}.
			\item $L(v) \ne v$ and $H(L(v)) \le H(v)$. If $F(L(v)) \ne v$ then $v$ is a node with \ref{item: (Empty Child)}. Otherwise, let $u \coloneqq L(v)$, we have that $u\ne F(u) = v$ and that $H(u) \le H(F(u))$, hence $u$ has \ref{item: (Wrong Height)}.
		\end{itemize}

		Therefore, $\ECwH$ reduces to $\textsc{Sink-Of-DAG}$ and is in $\PLS$. Finally, since $\ECwH$ is also in $\Lossy \subseteq \PPADS$, it is in $\SOPL = \PLS \cap \PPADS$~\cite{GoosHJMPRT24}.
	\end{proof}

	In contrast, it is easy to show that $\emptychild$ (as well as $\lossycode$ and $\lossycodep$) is not in $\PLS$ using Prover-Delayer games and resolution width lower bounds (see, e.g., \cite[Proposition 3.4]{PudlakT19}).

	\begin{remark}
		Consider the following seemingly harder variant of $\ECwH$ where the definition of \ref{item: (Wrong Height)} is changed to
		\begin{description}
			\item [s4'.] a vertex $u \in V\setminus \{1\}$ such that \underline{$H(u) \le H(F(u))$}.\mylabel{Wrong Height'}{item: (Wrong Height')}
		\end{description}
		Call this variant $\ECwH'$. We can define $\BECwH'$ likewise.
        Clearly, the proof of \autoref{lemma: ECwH in SOPL} also shows that $\ECwH'$ is in $\PLS$ (hence $\SOPL$). Since $\ECwH$ is $\Lossy\cap\SOPL$-complete (as we will show in \autoref{thm: rwPHP-cap-EOPL to empty-child}), it follows that $\ECwH'$ is equivalent to $\ECwH$.

		The difference between \ref{item: (Wrong Height)} and \ref{item: (Wrong Height')} resembles the difference between \textsc{Sink-Of-Metered-Line} and \textsc{Sink-Of-Potential-Line}. Indeed, our proof below (that $\ECwH'$ reduces to $\ECwH$) starts from the fact that \textsc{Sink-Of-Metered-Line} is $\SOPL$-complete, hence it makes black-box use of the reduction from \textsc{Sink-Of-Potential-Line} to \textsc{Sink-Of-Metered-Line} \cite{FearnleyGMS20}.\footnote{Although \cite{FearnleyGMS20} only claims a reduction from \textsc{End-Of-Potential-Line} to \textsc{End-Of-Metered-Line}, the same reduction also proves that \textsc{Sink-Of-Potential-Line} reduces to \textsc{Sink-Of-Metered-Line}.}
	\end{remark}

	\def\SoML{\textsc{Sink-Of-Metered-Line}}
    \def\EoML{\textsc{End-Of-Metered-Line}}
	Next, we show that $\ECwH$ is complete for the class $\Lossy \cap \SOPL$. The proof is inspired by the recent intersection results in $\TFNP$: $\CLS = \PPAD \cap \PLS$~\cite{FearnleyGHS23}, $\EOPL = \PPAD\cap\PLS$, and $\SOPL = \PPADS \cap \PLS$~\cite{GoosHJMPRT24}. Given a $\PPAD$ instance $A$ and a $\PLS$ instance $B$, \cite{FearnleyGHS23} showed how to ``combine'' $A$ and $B$ into a $\CLS$ instance $C$ such that a solution of $C$ implies either a solution of $A$ or a solution of $B$. Our idea is similar here: Given a $\lossycode$ instance $(f, g)$ and a $\SoML$ instance $(S, P, V)$, we show how to create an instance $(F, L, R, H)$ of $\ECwH$ such that a solution of $(F, L, R, H)$ implies either a solution of $(f, g)$ or a solution of $(S, P, V)$. Roughly speaking, we use $(f, g)$ to implicitly define an exponentially large tree and use $(S, P, V)$ to define the heights on the tree.

    \begin{description}
        \item[$\SoML$ {\cite{HubacekY20,FearnleyGMS20}}.]  Given functions $S, P: [N] \to [N]$ and $V: [N] \to [N] \cup \{0\}$ there are four types of possible solutions:
        \begin{description}
            \item[s1.] $1$, if $P(1) \ne 1$ or $S(1) = 1$ or $V(1) \ne 1$,  \mylabel{Bad Source}{item: (Bad Source)}
            \item[s2.] A vertex $x\in [N]$ such that $P(S(x)) \ne x$, \mylabel{Sink of Line}{item: (Sink of Line)}
            \item[s3.] A vertex $x\in [N]\setminus \{1\}$ such that $V(x) = 1$, \mylabel{Bad Meter I}{item: (Bad Meter I)}
            \item[s4.] A vertex $x\in [N]$ such that ($V(x) > 0$ and $V(S(x)) - V(x) \ne 1$) or ($V(x) > 1$ and $V(x) - V(P(x)) \ne 1$). \mylabel{Bad Meter II}{item: (Bad Meter II)}
        \end{description}
    \end{description}

    \begin{theorem}\label{thm: rwPHP-cap-EOPL to empty-child}
        $\lossycode \cap \SoML$ reduces to $\ECwH$ by a black-box reduction.
    \end{theorem}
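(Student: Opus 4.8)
The plan is to combine the two inputs into a single $\ECwH$ instance on a \emph{product} vertex set, in the spirit of the $\TFNP$ intersection theorems of \cite{FearnleyGHS23,GoosHJMPRT24}: from the $\lossycode$ part we read off a binary \emph{tree}, from the $\SoML$ part we read off \emph{heights}, and we arrange that every solution of the resulting $\ECwH$ instance decodes to a $\lossycode$ solution or to a $\SoML$ solution. First I would normalize the $\lossycode$ instance to maps $f\colon[N]\to[2N]$, $g\colon[2N]\to[N]$ (harmless by \autoref{lemma: robustness of lossy-code}) and fix an identification $[2N]\cong\{0,1\}\times[N]$, with projection $\pi_2\colon\{0,1\}\times[N]\to[N]$. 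A $\lossycode$ instance then implicitly defines a directed graph on $[2N]$ in which the father of $z$ is $f(\pi_2(z))$ and the two children of $z$ are $(0,g(z))$ and $(1,g(z))$. If $f(g(z))=z$ then $z$ is a \emph{genuine internal node}: its two children are distinct and each points back to $z$; if $f(g(z))\ne z$ — that is, if $z$ is a $\lossycode$ solution — then $z$ is a \emph{leaf}, since its child $(0,g(z))$ has father $f(g(z))\ne z$. Separately, the $\SoML$ instance $(S,P,V)$ on $[n]$ is a ``line with heights'': $P$ steps back, $S$ steps forward, $V$ records the height.

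The $\ECwH$ instance lives on $W\coloneqq[2N]\times[n]$ (up to relabeling so that its distinguished vertex is where we want the root), with the first coordinate running the $\lossycode$-tree and the second the $\SoML$-line:
\[
 F(z,i)=\bigl(f(\pi_2(z)),\,P(i)\bigr),\qquad L(z,i)=\bigl((0,g(z)),\,S(i)\bigr),\qquad R(z,i)=\bigl((1,g(z)),\,S(i)\bigr),\qquad H(z,i)=V(i).
\]
The root is wired to behave like $(f(1),1)$ using the $\SoML$ source $i=1$; to keep this airtight I would (i) make every vertex whose second coordinate is $1$ its own father, so that this ``layer'' consists of genuine roots rather than generating height mismatches, (ii) turn any vertex $(z,i)$ with meter value $V(i)=0$ into an isolated self-loop of dummy height $1$, and (iii) use a handful of local overrides near the root. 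Each of $F,L,R,H$ is computed with $O(1)$ queries to $f,g,S,P,V$, and $|W|=2Nn$ with $H$ taking values in $[N]\subseteq[|W|]$, so this is a decision-tree reduction of complexity $O(\log(Nn))$.

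For correctness I would go through the solution types of $\ECwH$. A \ref{item: (Wrong Root)} at the root forces $P(1)\ne 1$, $S(1)=1$, or $V(1)\ne 1$, i.e.\ a \ref{item: (Bad Source)} of $\SoML$. A \ref{item: (Wrong Height)} at $(z,i)$ means $H(z,i)\ne H(F(z,i))+1$, which unwinds to $V(i)\ne V(P(i))+1$ and hence — by a short case split on whether $V(i)$ is $0$, $1$, or $\ge 2$ and whether $S,P$ are consistent at $i$ — to a \ref{item: (Bad Meter I)} or \ref{item: (Bad Meter II)} of $\SoML$. An \ref{item: (Empty Child)} at $(z,i)$ means $F(L(z,i))\ne(z,i)$ or $F(R(z,i))\ne(z,i)$; since $F(L(z,i))=(f(g(z)),P(S(i)))$, either $f(g(z))\ne z$ (a $\lossycode$ solution) or $P(S(i))\ne i$ (a \ref{item: (Sink of Line)} of $\SoML$). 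Finally, the $O(1)$ overridden/isolated vertices, together with any vertex one of whose children happens to be such a vertex, are checked by hand: the point is that such a vertex becomes an \ref{item: (Empty Child)} or \ref{item: (Wrong Height)} only when one of the $\SoML$ source/meter conditions or the $\lossycode$ equation is \emph{already} violated, so it still decodes. Combined with \autoref{lemma: ECwH in SOPL} (which gives $\ECwH\in\Lossy\cap\SOPL$) and the fact that $\SoML$ is $\SOPL$-complete, this also shows $\ECwH$ is complete for $\Lossy\cap\SOPL$ ($=\Lossy\cap\PLS$).

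The main obstacle is precisely making the gadgetry airtight. An $\ECwH$-solver may return \emph{any} solution, not merely one in the component of the distinguished root, so every corner of the construction must be arranged so that it cannot present a ``spurious'' \ref{item: (Empty Child)} or \ref{item: (Wrong Height)} with no underlying $\lossycode$ or $\SoML$ witness — the delicate points being the second-coordinate-$1$ layer, the meter-$0$ vertices, and the root overrides, since naively splicing in a root creates exactly the $F$-inconsistency (the spliced-in children think their father is their ``real'' product father, not the gadget) that $\ECwH$ is designed to flag. Getting the conventions right so that every such flag decodes \emph{uniformly} — rather than having to case on where in $W$ the returned solution sits — is where the real work lies; the remainder is bookkeeping, and it is also what makes this theorem a useful warm-up for the full $\lossycode\leq_{dt}\emptychild$ reduction proved later in this section.
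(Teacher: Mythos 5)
Your proposal is correct and takes essentially the same route as the paper's proof: a product construction in which the \lossycode part supplies the tree wiring (children via $g$, fathers via $f$ on the paired index, so that Empty-Child violations are exactly $f(g(\cdot))\ne\cdot$ or $P(S(\cdot))\ne\cdot$) and the \SoML part supplies the levels and heights, with the same case analysis showing every Empty-Child, Wrong-Root, or Wrong-Height solution decodes to a \lossycode or \SoML witness, and with meter-zero levels turned into self-loops exactly as in the paper. The only deviations are cosmetic and sound: the paper realizes the distinguished root by gluing a complete binary tree on top of the first product level, whereas you make the entire first layer self-fathered and pick one such vertex as the root (legitimate, since $\ECwH$ does not penalize extra self-fathered vertices), and your ``$O(1)$ overridden vertices'' are really whole layers --- but the local decoding you describe, crucially relying on heights depending only on the second coordinate, goes through unchanged.
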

    \begin{proof}
        An input of $\lossycode \cap \SoML$ consists of an instance $(f, g)$ of $\lossycode$ and an instance $(S, P, V)$ of $\SoML$; either a solution for $(f, g)$ or a solution for $(S, P, V)$ is a valid solution. Now we show how to reduce this instance $(f, g, S, P, V)$ to an instance of $\ECwH$. Let $N, M$ denote the dimensions of the inputs, where $f: [N] \to [2N]$, $g: [2N] \to [N]$, $S, P: [M] \to [M]$, and $V: [M] \to [M] \cup \{0\}$. Without loss of generality, we may assume that $N = 2^n$ is a power of $2$ (this follows from the robustness of $\lossycode$ as shown in \autoref{lemma: robustness of lossy-code}).

        We will create a ``binary tree'' with $n + M$ levels. Roughly speaking, the first $n$ levels form a complete binary tree, and each of the last $M$ levels contains $N$ vertices. Let $(i, j)$ denote the $j$-th vertex in the $i$-the level; we abuse notation and write $(i, j)$ also as the \emph{index} of the node $(i, j)$, hence
       \[
        (i, j) = \begin{cases}
            j+2^{i-1}-1 & \text{if }i \le n,\\
            j+N\cdot (i-n)-1 & \text{otherwise}.
        \end{cases}
        \]

        We now specify the structure of the tree.
        \begin{itemize}
            \item The father of the root is $F(1, 1) \coloneqq (1, 1)$.
            
            \item For every $2\le i\le n + 1$ and $j\in [2^{i-1}]$, $F(i, j) \coloneqq (i-1, \lceil j/2\rceil)$.
            
            \item For every $2\le i \le M$ and $j\in [N]$ s.t.~$V(i)\ne 0$, $F(i + n, j) \coloneqq (P(i) + n, \lceil f(j) / 2\rceil)$.

            \item For every $2\le i \le M$ and $j\in [N]$ s.t.~$V(i)= 0$, $F(i + n, j) \coloneqq (i+n,j)$.
            
            \item For every $i\in [n]$ and $j \in [2^{i-1}]$, the left and right children of $(i, j)$ are $L(i, j) \coloneqq (i+1, 2j-1)$ and $R(i, j) \coloneqq (i+1, 2j)$ respectively.
            
            \item For every $i\in [M]$ and $j\in [N]$ s.t.~$V(i)\ne 0$, $L(i + n, j) \coloneqq (S(i) + n, g(2j-1))$ and $R(i + n, j) \coloneqq (S(i) + n, g(2j))$.

            \item For every $i\in [M]$ and $j\in [N]$ s.t.~$V(i)= 0$, $L(i + n, j) \coloneqq R(i+n,j)\coloneqq (i+n,j)$.
            
            \item For every $i\in [n]$, the height of every node in the $i$'th level is $i$.
            
            \item For every $i\in [M]$, the height of every node in the $(i+n)$'th level is $V(i) + n$.
        \end{itemize}

        \begin{figure}
            \centering
            \resizebox{\linewidth}{!}{
                \begin{tikzpicture}[
    aa/.style={draw, circle, inner sep=2pt, font=\small, minimum size = 1.1cm, fill=Cerulean!3!white},
    text label/.style={draw=none, font=\large, inner sep=0pt, minimum size = 0.7cm},
]

\node[aa] {1, 1} [sibling distance = 3cm, level distance = 1.2cm]
    child {node[aa] {2, 1}[sibling distance = 1.5cm]
        child {node[aa] {3, 1}}
        child {node[aa] {3, 2}}}
    child {node[aa] {2, 2}[sibling distance = 1.5cm]
        child {node[aa] {3, 3}}
        child {node[aa] {3, 4}}};
\node[text label, rotate=90, anchor=center] at (0, -3.5) {$\cdots$};
\node[aa] at (-3.5, -4.5) {\tiny $n+1, 1$} [level distance = 1.2cm]
    child {node {}};
\node[aa] at ( 3.5, -4.5) {\tiny $n+1, N$} [level distance = 1.2cm]
    child {node {}};

\node[aa] (IJ) at (1.2, -6.1) {\tiny $n+i, j$};
\draw (IJ) -- (1.2, -5);
\draw (IJ) -- (0.7, -7.2);
\draw (IJ) -- (1.7, -7.2);

\draw[dashed] (0.5, -5.4) -- (0.5, -6.8) -- (1.9, -6.8) -- (1.9, -5.4) -- (0.6, -5.4);

\node[text label, rotate=90, anchor=center] at (-3.5, -6.35) {$\cdots$};
\node[text label, rotate=90, anchor=center] at ( 3.5, -6.35) {$\cdots$};

\node at (-3.5, -7.1) {} [level distance = 1.2cm]
    child {node[aa] {}};
\node at ( 3.5, -7.1) {} [level distance = 1.2cm]
    child {node[aa] {}};

\node[text label, anchor=center] at (0, -4.5) {$\cdots$};
\node[text label, anchor=center] at (0, -8.3) {$\cdots$};

\draw[decoration={brace, amplitude=10pt, mirror}, decorate]
    (-4.2, 0) -- node[text label, left=0.4cm, align=center, anchor=east, font=\small] {complete\\ binary tree} (-4.2, -4.4);
\draw[decoration={brace, amplitude=10pt, mirror}, decorate]
    (-4.2, -4.6) -- node[text label, left=0.4cm, align=center, anchor=east, font=\small] {$M$ levels of\\ $N$ vertices each} (-4.2, -8.4);

\draw[thick, dashed] (7.5, 1) -- (19.5, 1) -- (19.5, -9) -- (7.5, -9) -- (7.5, 1);
\draw[thick, dotted] (1.9, -5.4) -- (7.5, 1);
\draw[thick, dotted] (1.9, -6.8) -- (7.5, -9);

\node [aa] (a) at (13, 0) {$\lceil f(j) / 2\rceil$};
\node [aa, dashed, fill=white] (b) at (13, -2) {$f(j)$};
\node [aa] (c) at (13, -4) {$j$};
\node [aa, dashed, fill=white] (dl) at (11.5, -6) {$2j-1$};
\node [aa, dashed, fill=white] (dr) at (14.5, -6) {$2j$};
\node [aa] (el) at (11.5, -8) {$g(2j-1)$};
\node [aa] (er) at (14.5, -8) {$g(2j)$};

\path [dashed, ->]
    (c) edge (b)
    (b) edge (a)
    (el) edge (dl)
    (dl) edge (c)
    (er) edge (dr)
    (dr) edge (c);

\node (L1) [text label, anchor = center] at (9, 0) {level $P(i) + n$};
\node (L2) [text label, anchor = center] at (9, -4) {level $i+n$};
\node (L3) [text label, anchor = center] at (9, -8) {level $S(i) + n$};

\path [->]
    (L2) edge node [text label, fill = white] {children} (L3)
    (L2) edge node [text label, fill = white] {parent} (L1);

\node (R1) [text label, anchor = center] at (17, 0) {$[N]$};
\node (R2) [text label, anchor = center] at (17, -2) {$[2N]$};
\node (R3) [text label, anchor = center] at (17, -4) {$[N]$};
\node (R4) [text label, anchor = center] at (17, -6) {$[2N]$};
\node (R5) [text label, anchor = center] at (17, -8) {$[N]$};

\path [->]
    (R2) edge node [text label, align = center, anchor = west, right = 0.1cm, fill = white] {$x\mapsto \lceil x/2\rceil$} (R1)
    (R3) edge [bend left = 30] node [text label, anchor = east] {$g$} (R2)
    (R2) edge [bend left = 30] node [text label, anchor = west] {$f$} (R3)
    (R3) edge [bend right = 30] node [text label, align = center, anchor = east, left = 0.2cm, fill = white] {$x\mapsto$\\ $2x-1$} (R4)
    (R3) edge [bend left = 30] node [text label, align = center, anchor = west, right = 0.2cm, fill = white] {$x\mapsto 2x$} (R4)
    (R4) edge [bend left = 30] node [text label, anchor = west] {$g$} (R5)
    (R5) edge [bend left = 30] node [text label, anchor = east] {$f$} (R4);

\end{tikzpicture}
            }
            \label{fig:Lossy-SoPL-completeness}
            \caption{The ``tree'' constructed in \autoref{thm: rwPHP-cap-EOPL to empty-child}. \textbf{\textit{Left}}: The first $n+1$ levels form a complete binary tree and the last $M$ levels are computed from the $\lossycode$ instance $(f, g)$ and the $\SoML$ instance $(S, P, V)$. \textbf{\textit{Right}}: For each node $(i+n, j)$ in the last $M$ levels, the levels of its parent and children are computed from $(S, P)$, and the positions of these nodes within their levels are computed from $(f, g)$.}
        \end{figure}
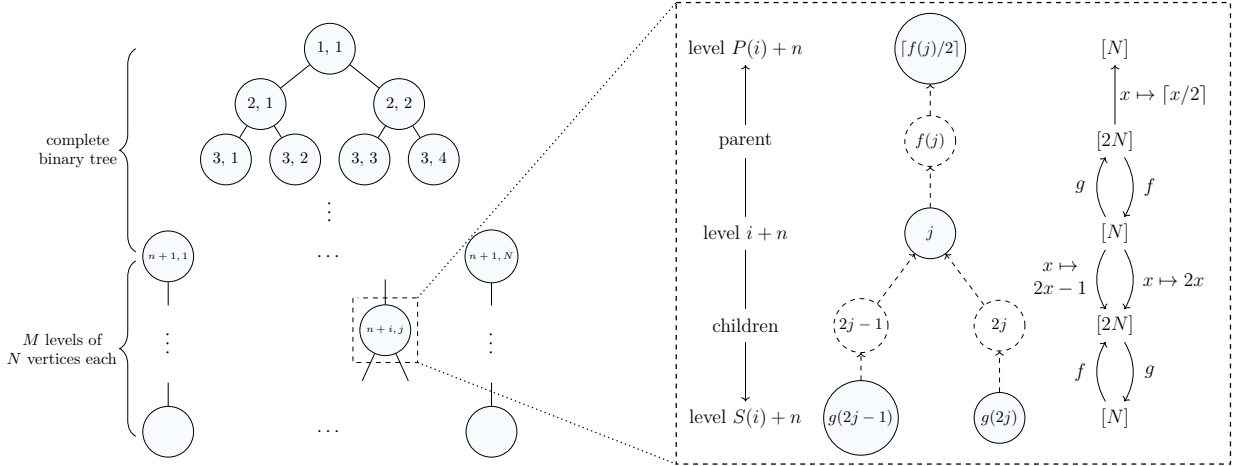

        Now, given any solution $(i', j)$ of $\ECwH$, we show how to obtain a solution of either $(f, g)$ (for $\lossycode$) or $(S, P, V)$ (for $\SoML$). In fact, from the above definitions of $F, L, R$ one can see that it must be the case that $i' > n$; moreover, if $(i', j)$ has \ref{item: (Wrong Height)}, then it must be the case that $i' > n + 1$ (unless $V(1) = 1$ and $1$ is a \ref{item: (Bad Source)} for $\SoML$). Now we define $i \coloneqq i' - n$, hence $i \in [M]$. We can see that 
        \begin{align*}
            F(L(i', j)) =&\, (P(S(i)) + n, \lceil f(g(2j-1))/2\rceil);&(\text{if }S(i)\ne 1)\\
            F(R(i', j)) =&\, (P(S(i)) + n, \lceil f(g(2j))/2\rceil);&(\text{if }S(i)\ne 1)\\
            H(i', j) - H(F(i', j)) =&\, V(i) - V(P(i)). &(\text{if }i > 1)
        \end{align*}
        
        Consider the following case analysis.\begin{itemize}
            \item Suppose that $(i', j)$ has an \ref{item: (Empty Child)}. Clearly, $V(i) \ne 0$. If $F(L(i', j)) \ne (i', j)$, then either $S(i) = 1$, or $P(S(i)) \ne i$, or $\lceil f(g(2j-1))/2\rceil \ne j$. If $F(R(i', j)) \ne (i', j)$, then either $S(i) = 1$, or $P(S(i)) \ne i$, or $\lceil f(g(2j))/2\rceil \ne j$. If $L(i', j) = R(i', j) \ne (i', j)$, then $g(2j-1) = g(2j)$. Hence, one of the following holds:\begin{itemize}
                \item $S(i) = 1$. If $P(1) \ne 1$ or $S(1) = 1$, then $1$ is a \ref{item: (Bad Source)} for $\SoML$; otherwise, since $P(S(i)) = 1 \ne i$, $i$ is a \ref{item: (Sink of Line)} for $\SoML$.
                \item $P(S(i)) \ne i$. In this case, $i$ is a \ref{item: (Sink of Line)} for $\SoML$.
                \item $\lceil f(g(2j-1))/2\rceil \ne j$. In this case, $f(g(2j-1)) \ne 2j-1$, hence $2j-1$ is an answer for $\lossycode$.
                \item $\lceil f(g(2j))/2\rceil \ne j$. In this case, $f(g(2j)) \ne 2j$, hence $2j$ is an answer for $\lossycode$.
                \item $g(2j-1) = g(2j)$. In this case, either $f(g(2j-1)) \ne 2j-1$ or $f(g(2j)) \ne 2j$, hence we can find an answer for $\lossycode$.
            \end{itemize}
            \item Clearly $(1,1)$ is not a \ref{item: (Wrong Root)}, so we do not need to care about this case.
            \item Suppose that $(i', j)$ has \ref{item: (Wrong Height)}. Then $V(i) - V(P(i)) \ne 1$ and $F(i',j)\ne (i',j)$, which implies that $V(i)\ge 1$.
            If $V(i)>1$, then $i$ is a \ref{item: (Bad Meter II)} for $\SoML$; otherwise $i$ is a \ref{item: (Bad Meter I)} for $\SoML$.%
        \end{itemize}
        It follows that given a solution for relaxed $\ECwH$, we could find a solution for either $\lossycode$ or $\SoML$.
    \end{proof}

    Since $\ECwH$ is hard for $\lossycode\cap \SoML$ (\autoref{thm: rwPHP-cap-EOPL to empty-child}), is in $\SOPL$ (\autoref{lemma: ECwH in SOPL}), and is in $\Lossy$ (\autoref{lemma: emptychild in lossy}), we have:
    \begin{corollary}
        $\ECwH$ is complete for $\SOPL\cap \Lossy$.
    \end{corollary}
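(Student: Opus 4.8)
The plan is to verify the two directions of completeness, each of which follows quickly from results already established in this section. Recall that ``$\ECwH$ is complete for $\SOPL\cap\Lossy$'' means: (i) $\ECwH\in\SOPL$ and $\ECwH\in\Lossy$; and (ii) every total search problem lying in both $\SOPL$ and $\Lossy$ reduces to $\ECwH$ under $\leq_{dt}$.

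For (i), membership in $\SOPL$ is exactly \autoref{lemma: ECwH in SOPL}. For membership in $\Lossy$, I would note that the random-walk reduction in the proof of \autoref{lemma: emptychild in lossy} only ever queries the $F,L,R$ components of its instance and ignores the height function; running it on an $\ECwH$ instance $(V,F,L,R,H)$ thus yields a $\lossycode$ instance each of whose solutions decodes either to an \ref{item: (Empty Child)} of $(V,F,L,R)$ or to the vertex $1$ satisfying $L(1)=1$ or $R(1)=1$. The former is the \ref{item: (Empty Child)} solution of $\ECwH$ and the latter is a special case of \ref{item: (Wrong Root)}, so both are valid $\ECwH$ solutions; hence $\ECwH\leq_{dt}\lossycode$, i.e.\ $\ECwH\in\Lossy$. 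Combining, $\ECwH\in\SOPL\cap\Lossy$.

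For (ii), let $Q$ be any total search problem in $\SOPL\cap\Lossy$. Since $Q\in\SOPL$ and $\SoML$ is $\SOPL$-complete \cite{HubacekY20,FearnleyGMS20}, we have $Q\leq_{dt}\SoML$; since $Q\in\Lossy$ we have $Q\leq_{dt}\lossycode$ by definition of $\Lossy$. I would then combine these two reductions in parallel: on an instance $x$ of $Q$, form the $\lossycode\cap\SoML$ instance consisting of the pair of instances produced by the two reductions, and decode a returned solution using whichever of the two reductions' solution maps applies (the depth is the maximum of the two depths and the size is their sum, so the combined reduction is still efficient). This gives $Q\leq_{dt}\lossycode\cap\SoML$, and \autoref{thm: rwPHP-cap-EOPL to empty-child} then gives $\lossycode\cap\SoML\leq_{dt}\ECwH$, so $Q\leq_{dt}\ECwH$. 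Together with (i) this proves the corollary.

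I do not expect a genuine obstacle here: every ingredient is already in hand, and the only points requiring (routine) care are the observation that the decoded solutions in (i) are genuinely solutions of $\ECwH$ rather than merely of $\emptychild'$, and the standard bookkeeping of composing two $\leq_{dt}$ reductions into a single reduction to the ``and-composition'' $\lossycode\cap\SoML$.
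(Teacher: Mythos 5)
Your proposal is correct and matches the paper's argument, which derives the corollary from exactly the same three ingredients: membership in $\SOPL$ (\autoref{lemma: ECwH in SOPL}), membership in $\Lossy$ via the reduction of \autoref{lemma: emptychild in lossy} (valid since every $\emptychild$/$\emptychild'$ solution is also an $\ECwH$ solution), and hardness for $\lossycode\cap\SoML$ (\autoref{thm: rwPHP-cap-EOPL to empty-child}). The only thing you spell out that the paper leaves implicit is the routine step of pairing the two reductions of a problem in $\SOPL\cap\Lossy$ into a single $\lossycode\cap\SoML$ instance, which is exactly the intended bookkeeping.
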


    Similarly, we can also reduce $\lossycodep \cap \EoML$ to \BECwH. Recall that $\EoML$ is defined as follows:

    \begin{description}
        \item[$\EoML$.]  The input is functions $S, P: [N]\to [N]$ and $V: [N]\to [N]\cup\{0\}$ (i.e., same as $\SoML$). Besides the solutions of $\SoML$, we also accept the following solutions:
        \begin{description}
            \item [s5.] A vertex $x\in [N]\setminus \{1\}$ such that $S(P(x)) \ne x$.\mylabel{End of Line}{item: (End of Line)}
            \item [s6.] A vertex $x\in [N]$ such that ($V(x) > 0$ and $V(S(x)) - V(x) \ne 1$). \mylabel{Bad Meter III}{item: (Bad Meter III)}
        \end{description}
    \end{description}

    \begin{corollary}\label{cor: InjLC cap EOML to BECwH}
        $\lossycodep \cap \EoML$ can be reduced to \BECwH.
    \end{corollary}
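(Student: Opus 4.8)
The plan is to re-run the construction from \autoref{thm: rwPHP-cap-EOPL to empty-child} essentially verbatim. The key observation is that a $\lossycodep$ instance has exactly the same syntactic form as a $\lossycode$ instance (a pair $f:[N]\to[2N]$, $g:[2N]\to[N]$) and an $\EoML$ instance has exactly the same form as a $\SoML$ instance (functions $S,P:[M]\to[M]$ and $V:[M]\to[M]\cup\{0\}$); only the sets of accepted solutions grow. So, given a $\lossycodep$ instance $(f,g)$ and an $\EoML$ instance $(S,P,V)$, I would form the $(n+M)$-level ``binary tree'' gadget $(F,L,R,H)$ exactly as in the proof of \autoref{thm: rwPHP-cap-EOPL to empty-child} and regard it as an instance of $\BECwH$. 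Recall that $N$ is taken to be a power of $2$ by the robustness of $\lossycode$ (\autoref{lemma: robustness of lossy-code}), which applies equally to $\lossycodep$.

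Given a solution of this $\BECwH$ instance, case on its type. If it is an \ref{item: (Empty Child)}, a \ref{item: (Wrong Root)}, or a \ref{item: (Wrong Height)}, apply the case analysis from the proof of \autoref{thm: rwPHP-cap-EOPL to empty-child} unchanged; this produces a solution of $\lossycode$ (hence of $\lossycodep$) or of $\SoML$ (hence of $\EoML$). The only genuinely new case is a \ref{item: (Wrong Father)} solution, i.e.\ a vertex $u\ne(1,1)$ with $u\notin\{L(F(u)),R(F(u))\}$, and this is the heart of the argument. First, inspecting the top $n+1$ levels of the gadget shows that there $u$ always lies in $\{L(F(u)),R(F(u))\}$: its parent's two children are $(i',2\lceil j/2\rceil-1)$ and $(i',2\lceil j/2\rceil)$, and $j\in\{2\lceil j/2\rceil-1,2\lceil j/2\rceil\}$. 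Hence $u=(i+n,j)$ with $i\ge 2$, and moreover $V(i)\ne 0$, since otherwise $u$ is an isolated vertex with $L(F(u))=R(F(u))=F(u)=u$. Writing $k:=P(i)$ and $m:=\lceil f(j)/2\rceil$, we have $F(u)=(k+n,m)$, and there are two cases. If $V(k)=0$, then $V(i)>0=V(P(i))$; if $V(i)>1$ this makes $i$ a \ref{item: (Bad Meter II)}, and if $V(i)=1$ it makes $i$ (which is $\ge 2$) a \ref{item: (Bad Meter I)}, either way a $\SoML$ solution. If $V(k)\ne 0$, then $L(F(u))=(S(k)+n,g(2m-1))$ and $R(F(u))=(S(k)+n,g(2m))$; since $u=(i+n,j)$ lies in neither, either $S(P(i))=S(k)\ne i$ — in which case $i\ge 2$ is an \ref{item: (End of Line)} solution of $\EoML$ — or $S(P(i))=i$ and $g(2m-1)\ne j\ne g(2m)$, so, as $f(j)\in\{2m-1,2m\}$, we get $g(f(j))\ne j$ and $j\in[N]$ is the ``reverse'' solution of $\lossycodep$.

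Since the construction and every case but \ref{item: (Wrong Father)} are imported directly from \autoref{thm: rwPHP-cap-EOPL to empty-child}, there is no serious obstacle: the only real work is the bookkeeping above, which is precisely where the two extra solution types — the reverse collision $g(f(y))\ne y$ of $\lossycodep$ and \ref{item: (End of Line)} of $\EoML$ — come into play, mirroring the way \ref{item: (Wrong Father)} extends the solution set of $\emptychild$ to $\BEC$. (The remaining new $\EoML$ solution, \ref{item: (Bad Meter III)}, is never needed, being already subsumed by $\SoML$'s \ref{item: (Bad Meter II)}.) The one point requiring a moment's care is checking that a \ref{item: (Wrong Father)} vertex cannot sit inside the complete-binary-tree portion of the gadget, which is what lets us take $i\ge 2$ so that the meter machinery of $\SoML$/$\EoML$ applies; everything else is routine.
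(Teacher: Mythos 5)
Your proposal is correct and follows the paper's own proof essentially verbatim: it reuses the construction of \autoref{thm: rwPHP-cap-EOPL to empty-child} unchanged and handles the one new case, a \ref{item: (Wrong Father)} vertex, by splitting on whether $S(P(i)) \ne i$ (an \ref{item: (End of Line)} solution) or $g(f(j)) \ne j$ (the reverse $\lossycodep$ solution), exactly as in the paper. The only difference is that you additionally spell out the subcase $V(P(i)) = 0$, extracting a meter violation there, which the paper's proof sketch leaves implicit.
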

    \begin{proof}[Proof Sketch]
        The reduction is exactly the same, except that we also need to handle the case that $(i', j)$ has a \ref{item: (Wrong Father)}. We can also see that $i' > n+1$, hence $i = i'-n \ge 2$. We also have that $V(i) \ne 0$. Hence,
        \begin{align*}
            L(F(i', j)) =&\, (S(P(i)) + n, g(2\lceil f(j) / 2\rceil - 1)); &(\text{since }i > 1)\\
            R(F(i', j)) =&\, (S(P(i)) + n, g(2\lceil f(j) / 2\rceil)). &(\text{since }i > 1)
        \end{align*}
        Hence, the case when $(i', j)$ has a \ref{item: (Wrong Father)} is argued as follows.
        \begin{itemize}
            \item Suppose that $(i', j)$ has a \ref{item: (Wrong Father)}. Letting $v \coloneqq 2\lceil f(j)/2\rceil$, then
            \[(i', j) \notin \{(S(P(i)) + n, g(v - 1)), (S(P(i)) + n, g(v))\}.\]
            Hence, one of the following holds: \begin{itemize}
                \item $i \ne S(P(i))$. In this case, $i$ is an \ref{item: (End of Line)} for $\EoML$.
                \item $j\notin \{g(v-1), g(v)\}$. However $f(j) \in \{v-1, v\}$, hence $g(f(j)) \ne j$ and $j$ is an answer for $\lossycodep$.\qedhere
            \end{itemize}
        \end{itemize}
    \end{proof}

	\subsection{\texorpdfstring{$\Lossy$}{Lossy}-Completeness of \texorpdfstring{$\emptychild$}{Empty-Child}} \label{sec: lossycode to empty-child}
    
	Next, we adapt the above proof to show that $\lossycode$ reduces to $\emptychild$, establishing the equivalence between the two problems. Roughly speaking, our reduction makes use of two ideas:
    \begin{enumerate}
        \item $\lossycode$ reduces to $\PPADS$. In fact, a $\lossycode$ instance $f: [N]\to [2N]$ and $g: [2N] \to [N]$ can be seen as a $\textsc{Sink-Of-Line}$ instance where $g$ is the successor function, $f$ is the predecessor function, and there are $N$ distinguished sources numbered from $N+1$ to $2N$.
        \item One can reduce $\lossycode \cap \PPADS$ to $\emptychild$ by slightly adapting the reduction in \autoref{sec: empty child with height}, using the $\PPADS$ instance to handle the ``heights''.
    \end{enumerate}

    Our reduction starts by constructing a forest with $2N$ levels, where each level contains $N$ vertices. The children and parents of each vertex can then be defined in the same way as those in the bottom $M$ levels of the tree from \Cref{thm: rwPHP-cap-EOPL to empty-child}, except that a vertex at the $i$-th level has its father at the $f(i)$-th level and children at the $g(i)$-th level. This is already a ``forest'' where every vertex in levels $N+1\sim 2N$ are distinguished roots (hence there are $N^2$ distinguished roots). %
    Finally, we create a complete binary tree of depth $\log(N^2)$ at the top to connect them.%
    
    \begin{theorem}\label{thm: lossycode to emptychild}
        There is a reduction from $\lossycode$ to $\emptychild$.
    \end{theorem}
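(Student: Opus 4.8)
The plan is to adapt the construction of \autoref{thm: rwPHP-cap-EOPL to empty-child}, collapsing the two separate inputs used there---a \lossycode instance for the branching and a \textsc{Sink-Of-Metered-Line} instance for the heights---into the single \lossycode instance $(f,g)$. First I would invoke \autoref{lemma: robustness of lossy-code} to assume $N = 2^n$, and use the $\PPADS$-style reading of \lossycode: on the vertex set $[2N]$, view $g$ as a successor map and $f$ as a predecessor map, so that the levels $N+1,\dots,2N$ have no predecessor (they will be the ``roots'' of the forest) while every $x$ with $f(g(x)) \neq x$ is a ``sink'' (it will be a leaf, i.e.\ an \ref{item: (Empty Child)}), which is exactly a \lossycode solution.

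Next I would build the \emptychild instance. Its ``lower part'' has vertex set $[2N] \times [N]$, where $i \in [2N]$ is a \emph{level} and $j \in [N]$ a \emph{position}; mirroring the bottom $M$ levels of the tree in \autoref{thm: rwPHP-cap-EOPL to empty-child}, but now using $f$ and $g$ to navigate \emph{both} coordinates, I would set, for every $i \in [2N]$,
\[ L(i,j) := (g(i),\, g(2j-1)), \qquad R(i,j) := (g(i),\, g(2j)), \qquad F(i,j) := (f(i),\, \lceil f(j)/2 \rceil) \text{ when } i \le N. \]
For $i \in \{N+1,\dots,2N\}$ the vertex $(i,j)$ has no father in the lower part; instead I would attach on top a complete binary tree of depth $2n$ whose $N^2$ leaves are identified with these $N^2$ vertices via a fixed bijection, whose root is vertex $1$, and all of whose internal $F,L,R$ pointers are set consistently (packing $([2N]\times[N]) \cup \{\text{top tree}\}$ into $[|V|]$ is a routine encoding, as in the earlier proof). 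The lower part is then a forest with $N^2$ roots, together with possibly some disjoint ``cycles with trees hanging off them''; attaching the top tree makes vertex $1$ the genuine root of a rooted tree, so the \ref{item: (Wrong Root)} clause never applies and, by construction, no vertex of the top tree is an \ref{item: (Empty Child)}.

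For correctness I would show that every solution of this \emptychild instance yields a \lossycode solution. A solution must be an \ref{item: (Empty Child)} at some lower vertex $(i',j')$, since vertex $1$ is not a \ref{item: (Wrong Root)} and the top tree is solution-free; I would then split on the three ways $(i',j')$ can be an \ref{item: (Empty Child)}. If $F(L(i',j')) \neq (i',j')$, then $L(i',j') = (g(i'), g(2j'-1))$ with $g(i') \le N$, so $F(L(i',j')) = (f(g(i')),\, \lceil f(g(2j'-1))/2\rceil)$; hence either $f(g(i')) \neq i'$---and $i'$ is a \lossycode solution---or $f(g(2j'-1)) \notin \{2j'-1, 2j'\}$---and $2j'-1$ is one. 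The case $F(R(i',j')) \neq (i',j')$ is symmetric with $2j'$ in place of $2j'-1$, and $L(i',j') = R(i',j')$ forces $g(2j'-1) = g(2j')$, a collision that immediately yields a \lossycode solution in $\{2j'-1, 2j'\}$. Since all of these extractions, and every bit of the \emptychild instance, are computed by depth-$O(\log N)$ decision trees making $O(1)$ queries to $f$ and $g$, and the instance has $O(N^2)$ vertices, the reduction has complexity $O(\log N) = \poly\log(N)$; together with \autoref{lemma: emptychild in lossy} and the trivial reduction $\emptychild \leq_{dt} \emptychild'$, this also completes \autoref{thm: empty child main}.

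The step I expect to take the most care is the merging of the two roles of $f$ and $g$. One must check that the lower graph really is a forest---each vertex $(c,d)$ has at most one valid parent, which holds because $F(c,d) = (f(c),\, \lceil f(d)/2\rceil)$ already pins down its unique candidate parent---and that no inconsistency is ``silently absorbed'' without producing a \lossycode solution: the degenerate configurations ($g(i) = i$, $f(i) = i$, a vertex that equals one of its own children, and so on) that in \autoref{thm: rwPHP-cap-EOPL to empty-child} required explicit self-loop handling via the meter all fall out of the case analysis above and need no separate treatment here. I do not expect to need a separate argument that the constructed instance has a solution---\emptychild is total---so the whole content of the proof lies in the extraction direction.
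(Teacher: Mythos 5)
Your proposal is correct and follows essentially the same route as the paper's proof: a top complete binary tree of depth $2n$ rooted at vertex $1$ whose $N^2$ leaves serve as the distinguished roots of a bottom structure of $2N$ levels of $N$ vertices each, with $g$ driving the child pointers and $f$ the parent pointers in both the level and position coordinates, and the identical three-way case analysis ($F(L)\ne u$, $F(R)\ne u$, $L=R$) extracting a $\lossycode$ solution from any \ref{item: (Empty Child)}. The only differences are cosmetic (indexing the lower part as $[2N]\times[N]$ and identifying the roots with the top-tree leaves rather than hanging them one level below), and your observation that totality makes the forest-structure claim unnecessary matches the paper, which likewise only argues the extraction direction.
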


    \begin{proof}
        Given $f:[N]\to [2N]$ and $g:[2N]\to [N]$ as the inputs to $\lossycode$, we construct an instance $(F,L,R)$ of $\emptychild$ as follows:
        As in the proof of~\Cref{thm: rwPHP-cap-EOPL to empty-child}, we may assume without loss of generality that $N=2^n$ is a power of $2$.
        The ``binary tree'' has $2(n+N)$ levels: the first $2n$ levels form a perfect binary tree, while each of the remaining $2N$ levels contains exactly $N$ vertices.
        Following the notation in the proof of~\Cref{thm: rwPHP-cap-EOPL to empty-child}, we write $(i,j)$ for the $j$-th vertex in the $i$-th level.
        Slightly abusing notation, we also use $(i,j)$ to denote the index of this vertex, namely
        \[
        (i, j) = \begin{cases}
            j+2^{i-1}-1 & \text{if }i \le 2n,\\
            j+N\cdot (i-2n-1)+N^2-1 & \text{otherwise}.
        \end{cases}
        \]

        The structure of the tree is described below:
        \begin{itemize}
            \item The father of the root is $F(1,1)\coloneqq (1,1)$.
            
            \item For every $2\le i\le 2n$ and $j\in [2^{i-1}]$, $F(i, j) \coloneqq (i-1, \lceil j/2\rceil)$.

            \item For every $1\le i\le N$ and $j\in [N]$, $F(i+2n+N,j)\coloneqq (2n, \lceil ((i-1)\cdot N+j)/2\rceil)$.

            \item For every $1\le i\le N$ and $j\in [N]$, $F(i + 2n, j) \coloneqq (f(i) + 2n, \lceil f(j) / 2\rceil)$.

            \item For every $i\in [2n-1]$ and $j\in [2^{i-1}]$, the left and right children of $(i, j)$ are $L(i, j) \coloneqq (i+1, 2j-1)$ and $R(i, j) \coloneqq (i+1, 2j)$ respectively.

            \item For every $1\le i\le N$ and $j\in [N/2]$, $L(2n,(i-1)N/2+j)\coloneqq (i+2n+N,2j-1)$ and $R(2n,(i-1)N/2+j)\coloneqq (i+2n+N,2j)$.

            \item For every $i\in [2N]$ and $j\in [N]$, $L(i+2n,j)\coloneqq (g(i)+2n,g(2j-1))$ and $R(i+2n,j)\coloneqq (g(i)+2n,g(2j))$.

        \end{itemize}

        \begin{figure}
            \centering
            \resizebox{\linewidth}{!}{
                \begin{tikzpicture}[
    aa/.style={draw, circle, inner sep=2pt, font=\small, minimum size = 1.1cm, fill=Cerulean!3!white},
    text label/.style={draw=none, font=\large, inner sep=0pt, minimum size = 0.7cm},
    edge from parent/.style={draw,-latex}
]

\node[aa] {1, 1} [sibling distance = 3cm, level distance = 1.2cm]
    child {node[aa] {2, 1}[sibling distance = 1.5cm]
        child {node[aa] {3, 1}}
        child {node[aa] {3, 2}}}
    child {node[aa] {2, 2}[sibling distance = 1.5cm]
        child {node[aa] {3, 3}}
        child {node[aa] {3, 4}}};
\node[text label, rotate=90, anchor=center] at (-0.826, -3.4) {$\cdots$};
\node[aa] (LNode) at (-3.5, -4.5) {\tiny $2n, 1$};
\node[aa] (MNode) at (-0.826, -4.5) {};
\node[aa] (RNode) at ( 3.5, -4.5) {\tiny $2n, N^2/2$};

\node[aa] (Y0) at (-2, -6) {\tiny $2n+1, 1$};
\node[aa] (Z0) at (2, -6) {\tiny $2n+1, N$};

\node[aa] (IJ) at (0.4, -7.6) {\tiny $2n+i, j$};
\draw (IJ) -- (0.4, -6.8);
\draw (IJ) -- (-0.1, -8.4);
\draw (IJ) -- (0.9, -8.4);

\draw[dashed] (-0.3, -6.9) -- (-0.3, -8.3) -- (1.1, -8.3) -- (1.1, -6.9) -- (-0.3, -6.9);

\node[text label, rotate=90, anchor=center] at (-2, -7.25) {$\cdots$};
\node[text label, rotate=90, anchor=center] at ( 2, -7.25) {$\cdots$};
\node[text label, rotate=90, anchor=center] at (-2, -11.6) {$\cdots$};
\node[text label, rotate=90, anchor=center] at ( 2, -11.6) {$\cdots$};

\node[aa] (Y1) at (-2, -8.6) {};
\node[aa] (Z1) at (2, -8.6) {};
\node[aa] (Y2) at (-2, -10.1) {};
\node[aa] (Z2) at (2, -10.1) {};
\node[aa] (Y3) at (-2, -13.1) {};
\node[aa] (Z3) at (2, -13.1) {};

\node[text label, anchor=center] at (-2.163, -4.5) {$\cdots$};
\node[text label, anchor=center] at (1.337, -4.5) {$\cdots$};
\node[text label, anchor=center] at (0, -6) {$\cdots$};
\node[text label, anchor=center] at (0, -8.6) {$\cdots$};
\node[text label, anchor=center] at (0, -10.1) {$\cdots$};
\node[text label, anchor=center] at (0, -13.1) {$\cdots$};

\path [->]
    (LNode) edge [bend right = 65] (Y2)
    (LNode) edge [bend right = 85] (-2.3, -11.3)
    (RNode) edge [bend left = 25] (2.3, -11.8)
    (RNode) edge [bend left = 35] (Z3)
    (MNode) edge (-0.9, -10.7)
    (MNode) edge (-0.4, -10.7);

\draw[decoration={brace, amplitude=10pt, mirror}, decorate]
    (-4.2, 0) -- node[text label, left=0.4cm, align=center, anchor=east, font=\small] {complete\\ binary tree} (-4.2, -4.5);
\draw[decoration={brace, amplitude=10pt, mirror}, decorate]
    (-2.7, -6) -- node[text label, left=0.4cm, align=center, anchor=east, font=\small] {$N$ levels} (-2.7, -8.7);
\draw[decoration={brace, amplitude=10pt, mirror}, decorate]
    (-2.7, -10) -- node[text label, left=0.4cm, align=center, anchor=east, font=\small] {$N$ levels} (-2.7, -13.1);

\fill[fill=gray, fill opacity = 0.1, text opacity = 1] (-2.9, -9.35) rectangle (2.9, -13.75) node[pos=.5, color=black, align = center] {distinguished\\roots};

\draw[thick, dashed] (7.3, -1) -- (19.5, -1) -- (19.5, -11) -- (7.3, -11) -- (7.3, -1);
\draw[thick, dotted] (1.1, -6.9) -- (7.3, -1);
\draw[thick, dotted] (1.1, -8.3) -- (7.3, -11);

\node [aa] (a) at (13, -2) {$\lceil f(j) / 2\rceil$};
\node [aa, dashed, fill=white] (b) at (13, -4) {$f(j)$};
\node [aa] (c) at (13, -6) {$j$};
\node [aa, dashed, fill=white] (dl) at (11.5, -8) {$2j-1$};
\node [aa, dashed, fill=white] (dr) at (14.5, -8) {$2j$};
\node [aa] (el) at (11.5, -10) {$g(2j-1)$};
\node [aa] (er) at (14.5, -10) {$g(2j)$};

\path [dashed, ->]
    (c) edge (b)
    (b) edge (a)
    (el) edge (dl)
    (dl) edge (c)
    (er) edge (dr)
    (dr) edge (c);

\node (L1) [text label, anchor = center] at (9, -2) {level $f(i) + 2n$};
\node (L2) [text label, anchor = center] at (9, -6) {level $i+2n$};
\node (L3) [text label, anchor = center] at (9, -10) {level $g(i) + 2n$};

\path [->]
    (L2) edge node [text label, fill = white] {children} (L3)
    (L2) edge node [text label, fill = white] {parent} (L1);

\node (R1) [text label, anchor = center] at (17, -2) {$[N]$};
\node (R2) [text label, anchor = center] at (17, -4) {$[2N]$};
\node (R3) [text label, anchor = center] at (17, -6) {$[N]$};
\node (R4) [text label, anchor = center] at (17, -8) {$[2N]$};
\node (R5) [text label, anchor = center] at (17, -10) {$[N]$};

\path [->]
    (R2) edge node [text label, align = center, anchor = west, right = 0.1cm, fill = white] {$x\mapsto \lceil x/2\rceil$} (R1)
    (R3) edge [bend left = 30] node [text label, anchor = east] {$g$} (R2)
    (R2) edge [bend left = 30] node [text label, anchor = west] {$f$} (R3)
    (R3) edge [bend right = 30] node [text label, align = center, anchor = east, left = 0.2cm, fill = white] {$x\mapsto$\\ $2x-1$} (R4)
    (R3) edge [bend left = 30] node [text label, align = center, anchor = west, right = 0.2cm, fill = white] {$x\mapsto 2x$} (R4)
    (R4) edge [bend left = 30] node [text label, anchor = west] {$g$} (R5)
    (R5) edge [bend left = 30] node [text label, anchor = east] {$f$} (R4);

\end{tikzpicture}
            }
            \label{fig:Lossy-completeness}
            \caption{The ``tree'' constructed in \autoref{thm: lossycode to emptychild} where solid arrows represent tree edges. The first $2n$ levels along with the last $N$ levels form the perfect binary tree. Every vertex in the last $N$ levels are distinguished roots of the bottom forest, and hence are the leaves of the perfect binary tree. The last $2N$ levels form the bottom forest, whose structure is computed from the $\lossycode$ instance $(f, g)$.}
        \end{figure}
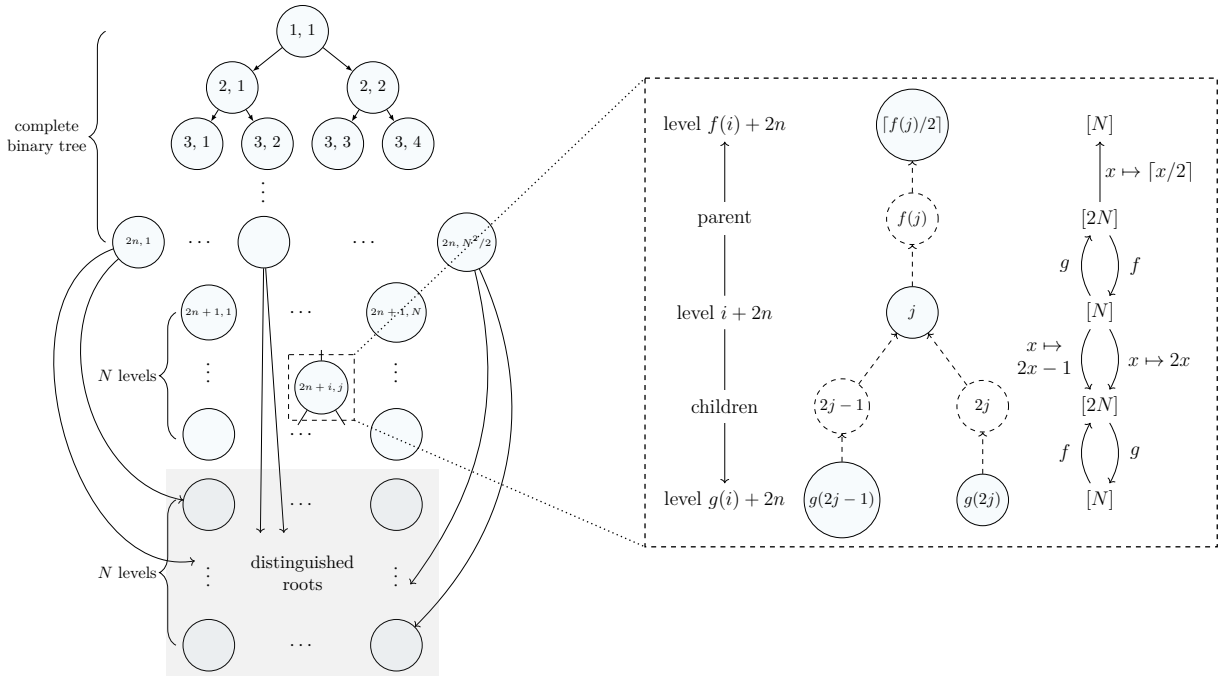

         Given any solution $(i', j)$ of $\emptychild$, we show how to obtain a solution of $\lossycode$. Note that our $\emptychild$ instance only has solutions with \ref{item: (Empty Child)}: \ref{item: (Wrong Father)} and \ref{item: (Wrong Height)} are not allowed, and \ref{item: (Wrong Root)} does not apply since our tree has a ``correct'' root. Hence, we have that $F(L(i', j)) \ne (i', j)$, or $F(R(i', j)) \ne (i', j)$, or $L(i', j) = R(i', j) \ne (i', j)$. It follows from the structure of our tree that $i' > 2n$; define $i \coloneqq i' - 2n$. We have
         \begin{align*}
            F(L(i', j)) =&\, F(g(i) + 2n, g(2j-1)) = (f(g(i)) + 2n, \lceil f(g(2j-1)) / 2\rceil);\\
            F(R(i', j)) =&\, F(g(i) + 2n, g(2j))   = (f(g(i)) + 2n, \lceil f(g(2j)) / 2\rceil).
        \end{align*}
        There are three cases:\begin{itemize}
            \item {\bf Case I:} $F(L(i', j)) \ne (i', j)$. Then either $f(g(i)) \ne i$, or $\lceil f(g(2j-1)) / 2\rceil \ne j$. It follows that either $i$ or $2j-1$ is a valid solution for $\lossycode$.
            \item {\bf Case II:} $F(R(i', j)) \ne (i', j)$. Then either $f(g(i)) \ne i$, or $\lceil f(g(2j)) / 2\rceil \ne j$. It follows that either $i$ or $2j$ is a valid solution for $\lossycode$.
            \item {\bf Case III:} $L(i', j) = R(i', j)$. Then $g(2j-1) = g(2j)$, which means that either $2j$ or $2j-1$ is a valid solution for $\lossycode$.\qedhere
        \end{itemize} 
        
    \end{proof}

    Similarly, $\lossycodep$ reduces to $\BEC$:
	\begin{corollary}\label{cor: InjLC to BEC}
		There is a reduction from $\lossycodep$ to $\BEC$.
	\end{corollary}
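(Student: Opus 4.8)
The plan is to reuse the reduction of \autoref{thm: lossycode to emptychild} essentially verbatim, feeding it a $\lossycodep$ instance $(f,g)$ instead of a $\lossycode$ instance, and then to account for the one extra solution type that $\BEC$ offers, namely \ref{item: (Wrong Father)}. Concretely, I would build exactly the same ``tree'' $(F,L,R)$ on $2(n+N)$ levels (assuming $N=2^n$ as in \autoref{thm: lossycode to emptychild}), so that a solution of the $\BEC$ instance is one of \ref{item: (Wrong Root)}, \ref{item: (Empty Child)}, or \ref{item: (Wrong Father)}. This mirrors the structure of the proof of \autoref{cor: InjLC cap EOML to BECwH}, which added the \ref{item: (Wrong Father)} analysis on top of \autoref{thm: rwPHP-cap-EOPL to empty-child} in the same way.

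First I would dispose of the two cases already handled in \autoref{thm: lossycode to emptychild}. A \ref{item: (Wrong Root)} solution never occurs, since $(1,1)$ is a genuine root. An \ref{item: (Empty Child)} solution is treated by the identical case analysis: it lies at a level $i'>2n$, and writing $i\coloneqq i'-2n\in[2N]$ one finds that $f(g(i))\ne i$, or $\lceil f(g(2j-1))/2\rceil\ne j$, or $\lceil f(g(2j))/2\rceil\ne j$, or $g(2j-1)=g(2j)$, each of which produces an $x\in[2N]$ with $f(g(x))\ne x$, hence a valid $\lossycodep$ solution.

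The new content is the \ref{item: (Wrong Father)} case, and here the key observation I would establish is that a \ref{item: (Wrong Father)} can only occur at the ``first $N$ levels'' $2n+1,\dots,2n+N$ of the bottom forest. Indeed, the $2n$ levels of the perfect binary tree, together with the level-$2n$ vertices and the $N$ distinguished-root levels $2n+N+1,\dots,2n+2N$, form a consistent perfect binary tree: the identities $F(i,j)=(i-1,\lceil j/2\rceil)$ against $L(i-1,\cdot),R(i-1,\cdot)$, and $F(i+2n+N,j)=(2n,\lceil((i-1)N+j)/2\rceil)$ against $L(2n,\cdot),R(2n,\cdot)$, agree because $N$ is even, so no such vertex satisfies $u\notin\{L(F(u)),R(F(u))\}$. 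Therefore a \ref{item: (Wrong Father)} solution has the form $(i+2n,j)$ with $i,j\in[N]$, whose father is $(f(i)+2n,\lceil f(j)/2\rceil)$ with children $(g(f(i))+2n,g(2\lceil f(j)/2\rceil-1))$ and $(g(f(i))+2n,g(2\lceil f(j)/2\rceil))$. If $g(f(i))\ne i$, then $i\in[N]$ is a $\lossycodep$ solution of the ``$g\circ f$'' type. Otherwise, setting $v\coloneqq 2\lceil f(j)/2\rceil$ so that $f(j)\in\{v-1,v\}$, the failure $j\notin\{g(v-1),g(v)\}$ forces $g(f(j))\ne j$, and $j\in[N]$ is again a $\lossycodep$ solution.

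Finally I would note that all the maps $f_i,g_o$ of the reduction are computed by constant-query decision trees in the appropriate local neighbourhoods, so the reduction is efficient, establishing $\lossycodep\leq_{dt}\BEC$. The only real obstacle is the level-versus-position bookkeeping needed to certify that the perfect-binary-tree portion of the construction is genuinely father-child consistent, so that every \ref{item: (Wrong Father)} solution is one of the ``useful'' ones living in levels $2n+1,\dots,2n+N$; everything else is a routine transcription of \autoref{thm: lossycode to emptychild} and \autoref{cor: InjLC cap EOML to BECwH}.
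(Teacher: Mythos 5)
Your proposal is correct and follows essentially the same route as the paper: reuse the construction of \autoref{thm: lossycode to emptychild} verbatim and only add the \ref{item: (Wrong Father)} analysis, which you resolve by the same dichotomy (either $g(f(i))\ne i$, giving the level mismatch, or $f(j)\in\{v-1,v\}$ with $j\notin\{g(v-1),g(v)\}$, giving $g(f(j))\ne j$), exactly as in the paper's proof sketch. Your explicit verification that \ref{item: (Wrong Father)} can only occur in levels $2n+1,\dots,2n+N$ is a detail the paper compresses into ``Clearly, $i=i'-2n\in[N]$,'' so it is a welcome but not divergent addition.
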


        \begin{proof}[Proof Sketch]
            The reduction is exactly the same as in \autoref{thm: lossycode to emptychild}, except that we also need to handle the case that $(i', j)$ has a \ref{item: (Wrong Father)}. Clearly, $i = i' - 2n \in [N]$. Letting $v := 2\lceil f(j)/2\rceil$, we have
        \begin{align*}
            L(F(i', j)) =&\, L(f(i)+2n, \lceil f(j) / 2\rceil) = (g(f(i))+2n, g(v-1));\\
            R(F(i', j)) =&\, R(f(i)+2n, \lceil f(j) / 2\rceil) = (g(f(i))+2n, g(v)).
        \end{align*}
        Since $(i', j)\not\in\{L(F(i', j)), R(F(i', j))\}$, there are two cases:
        \begin{itemize}
            \item {\bf Case I:} $i'\ne g(f(i))+2n$. This implies that $g(f(i)) \ne i$, hence $i$ is a valid solution for $\lossycodep$.
            \item {\bf Case II:} $j\not\in\{g(v-1), g(v)\}$. However $f(j) \in \{v-1, v\}$, hence $g(f(j))\ne j$ and $j$ is a valid solution for $\lossycodep$.\qedhere
        \end{itemize}
        \end{proof}

    It is unclear whether $\BEC$ is equivalent to $\lossycodep$. We are unable to reduce the former to the latter, and the proof strategy of \autoref{lemma: emptychild in lossy} does not work here. We conjecture that (in the black-box setting) $\BEC$ is strictly harder than $\lossycodep$.

\subsection{\texorpdfstring{\emptychild}{Empty-Child} Reduces to \texorpdfstring{\ILP}{Nephew}} \label{sec: EC: REC}

Recall once more the definition of $\ILP$:

\begin{description}
    \Nephewitem 
\end{description}

$\ILP$ is at least as powerful as $\emptychild$. We show a reduction from an $\emptychild$ instance $(V, F, L, R)$ to a $\ILP$ instance $(V', f, g)$, where $V' = V \times \{0, 1\}$. If there is an $\emptychild$ solution at $v \in V$, then both of $(v, 0), (v, 1)$ will be self-loops for both $f$ and $g$, giving a solution in the $\ILP$ instance. Otherwise, we will construct a valid $\ILP$ substructure. Special care will need to be taken in the case of isolated vertices in the $\emptychild$ instance, as well as for the distinguished root vertex $1 \in V$.

The procedure $\emptychildreduction$ in \cref{alg: emptychild reduction} is the reduction.  It takes in a vertex $(v, i) \in V'$ and returns a tuple containing $f(v, i)$ and $g(v, i)$. See \cref{fig: emptychild to nephew} for an illustration of the $\ILP$ instance returned by this procedure.

Before we prove its validity, we provide some intuition for the reduction. In the $\ILP$ instance, we construct a structure where, for each vertex $v \in V$, the vertices corresponding to $F(v)$ are pointed to via the $f$ function and the vertices corresponding to $L(v), R(v)$ are pointed to via the $g$ function. In order to do this, we need to double the number of vertices: $(v, 0)$ will point via $g$ to $R(v)$ and be pointed at via $f$ by $L(v)$; and $(v, 1)$ will point via $g$ to $L(v)$ and be pointed at via $f$ by $R(v)$. Isolated vertices and the root vertex will be handled separately.

\begin{algorithm}[h!] 
        \caption{Procedure $\emptychildreduction_{F, L, R}(v, i)$ } \label{alg: emptychild reduction}
        \begin{algorithmic}[1]
            \If{$F(L(v)) \neq v \lor F(R(v)) \neq v \lor L(v) = R(v) \neq v$}
                \State \Return $((v, i), (v, i))$ \Comment{\ref{item: (Empty Child)}, so self-loop}
            \ElsIf{$v = 1$ and $L(1) = 1 \lor R(1) = 1 \lor F(1) \neq 1$}
                \State \Return $((1, i), (1, i))$ \Comment{\ref{item: (Wrong Root)}, so self-loop}
            \ElsIf{$F(v) = L(v) = R(v) = v$} 
                \State \Return $((1, 0), (v, 1-i))$ \Comment{Isolated vertex, treat specially}
            \Else  
                \If{$v=1$}
                    \State $f \gets (1, 0)$
                \ElsIf{$v = R(F(v))$}
                    \State $f \gets (F(v), 1)$
                \Else \Comment{$v\ne 1$ and $v$ is not a right child}
                    \State $f \gets (F(v), 0)$
                \EndIf
                \If{$(v, i) = (1, 0)$}
                    \State $g \gets (L(L(v)), 0)$
                \ElsIf{$(v, i) = (1, 1)$}
                    \State $g \gets (1, 1)$
                \ElsIf{$i = 0$}
                    \State $g \gets (R(v), 0)$
                \Else \Comment{$v \neq 1$ and $i = 1$}
                    \State $g \gets (L(v), 0)$
                \EndIf
                \State \Return $(f, g)$
            \EndIf
        \end{algorithmic}
    \end{algorithm}

\begin{figure}
  \centering
  \begin{subfigure}[b]{0.45\textwidth}
    \centering
    \begin{tikzpicture}[level2/.style={fill=Cerulean!40!white, text opacity=1},level0/.style={fill=Cerulean!2!white}]

        \node[v,level0] (v0) at (0, 0) {$v, 0$};
        \node[v,level0] (v1) at (2, 0) {$v, 1$};
        \node[v,level0] (Lv) at (0, 2) {\footnotesize $Lv, 0$};
        \node[v,level0] (Rv) at (2, 2) {\footnotesize $Rv, 0$};
        \node[v,level0] (Fv) at (0, -2) {\footnotesize $Fv, i$};

        \draw[farrow] (Lv) to (v0);
        \draw[farrow] (Rv) to (v1);
        \draw[farrow] (v0) to (Fv);
        \draw[farrow] (v1) to (Fv);

        \draw[garrow] (v0) to (Rv);
        \draw[garrow] (v1) to (Lv);
    \end{tikzpicture}
    \caption{Local $\ILP$ structure for a typical node $v$. The value of $i$ depends on whether or not $v$ is a right child.}
    \label{fig: level 0}
  \end{subfigure}
  \quad \quad 
  \begin{subfigure}[b]{0.45\textwidth}
    \centering
    \begin{tikzpicture}[level2/.style={fill=Cerulean!40!white, text opacity=1},level0/.style={fill=Cerulean!2!white}]
        \node[v,level0] (vx0) at (0, 0) {};
        \node[v,level0] (vx1) at (2, 0) {};
        \node[v,level0] (Lv) at (0, 2) {\scriptsize $LL1, 0$};
        \node[v,level0] (Rv) at (2, 2) {};
        \node[v,level0] (one) at (0, -2) {$1, 0$};
        \node[v,level0] (oneone) at (2, -2) {$1, 1$};
        \node[v,level0] (v0) at (-2, 0) {$v, 0$};
        \node[v,level0] (v1) at (-2, -2) {$v, 1$};

        \draw[farrow] (Lv) to (vx0);
        \draw[farrow] (Rv) to (vx1);
        \draw[farrow] (vx0) to (one);
        \draw[farrow] (vx1) to (one);
        \draw[farrow] (v0) to (one);
        \draw[farrow] (v1) to (one);
        \draw[farrow] (one) to[loop right] (one);
        \draw[farrow] (oneone) to [bend right] (one);

        \draw[garrow] (vx0) to (Rv);
        \draw[garrow] (vx1) to (Lv);
        \draw[garrow] (one) to[bend left] (Lv);
        \draw[garrow] (v0) to[bend left] (v1);
        \draw[garrow] (v1) to[bend left] (v0);
        \draw[garrow] (oneone) to[loop above] (oneone);
    \end{tikzpicture}
    \caption{Local $\ILP$ structure around node 1, with a node $v$ that is a self-loop in $\emptychild$.} \label{fig: emptychild to nephew root}
    \label{fig: level 1}
  \end{subfigure}
  \caption{Illustration of $\emptychildreduction_{F, L, R}$. Solid arrows represent $f$ and dashed arrows represent $g$. Parentheses are omitted in labels.}
  \label{fig: emptychild to nephew}
\end{figure}
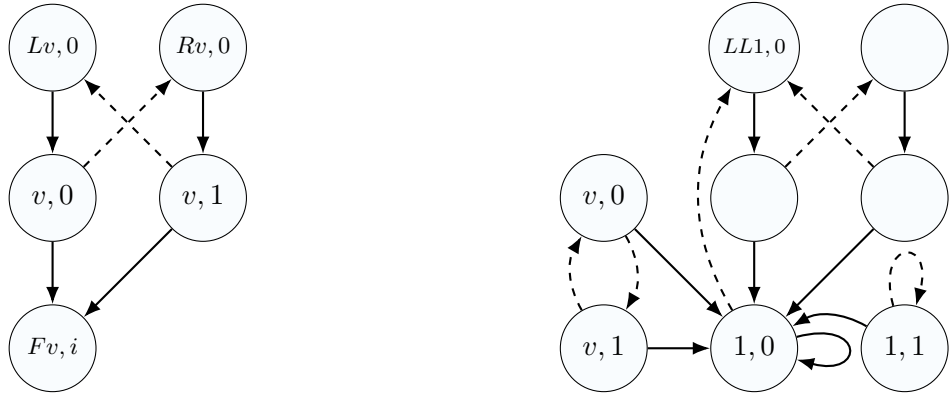

\begin{theorem} \label{thm: emptychild to nephew}
    $(f(v, i), g(v, i)) \gets \emptychildreduction_{F, L, R}$ is a reduction from $\emptychild$ to $\ILP$.
\end{theorem}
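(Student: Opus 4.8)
The plan is to verify the two requirements of a decision-tree reduction: that the map $(V,F,L,R)\mapsto(V',f,g)$ given by $\emptychildreduction_{F,L,R}$ has small complexity, and that from any $\ILP$ solution of $(V',f,g)$ one can recover, with a constant number of further queries to $F,L,R$, an $\emptychild$ solution of $(V,F,L,R)$. The first requirement is immediate: on input $(v,i)$ the procedure queries $F,L,R$ at a constant number of vertices (a few of them obtained by one composition, such as $F(L(v))$, $R(F(v))$, $L(L(v))$), so each bit of the description of $f$ or $g$ is a function of $O(\log|V|)$ input bits, and since $|V'|=2|V|$ the reduction has complexity $O(\log|V|)=\polylog(n)$. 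The remaining work is the correctness of the solution-translation.

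I would set this up through a structural lemma. Call a vertex $v\in V$ \emph{simple} if $\emptychildreduction_{F,L,R}(v,\cdot)$ does not return a self-loop; inspecting the first two conditional checks in the procedure, being simple fails exactly when $v$ is an \ref{item: (Empty Child)} or when $v=1$ is a \ref{item: (Wrong Root)}, and in the non-simple case $v$ (respectively $1$) is itself an $\emptychild$ solution that the decision tree can output after $O(1)$ queries. Before the lemma I would record two identities: (a) $f(1,0)=(1,0)$ holds \emph{unconditionally}, since whichever branch $\emptychildreduction(1,\cdot)$ takes, the returned father has first coordinate $1$ and index $0$; and (b) $f(v,0)=f(v,1)$ for every simple $v$, since the father computed by $\emptychildreduction$ never depends on the index. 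I would also split a simple $v$ by its branch into \emph{root} ($v=1$), \emph{isolated} ($F(v)=L(v)=R(v)=v$, which forces $v\neq1$ and is never an empty child), and \emph{ordinary}, and observe that for an ordinary $v$ one has $L(v)\neq R(v)$, $F(L(v))=F(R(v))=v$, $L(v)$ is a left child and $R(v)$ a right child of $v$, and neither child is isolated.

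The core step is: \emph{if $(v,i)\in V'$ is such that $v$ and the first coordinates of $g(v,i)$ and of $f(g(v,i))$ are all simple, then $(v,i)$ is not an $\ILP$ solution}, i.e.\ $f(g(v,i))\neq(v,i)$ and $f(f(g(v,i)))=f(v,i)$. I would prove this by substituting the explicit definitions, organized by the shape of $v$ and the value of $i$. For ordinary $v\neq1$ with $i=0$: $g(v,0)=(R(v),0)$, and since $R(v)$ is a right child of $v$ and is simple, $f(R(v),0)=(v,1)$, so $f(g(v,0))=(v,1)\neq(v,0)$ and, using (b), $f(f(g(v,0)))=f(v,1)=f(v,0)$; the case $i=1$ is symmetric via $g(v,1)=(L(v),0)$ and $f(L(v),0)=(v,0)$. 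For isolated $v$ the two copies of $v$ form a $2$-cycle under $g$ with $f(v,0)=f(v,1)=(1,0)$, and the claim follows from (a) alone. For the root $v=1$ (simple, not isolated): $(1,1)$ is handled by (a)/(b) since $g(1,1)=(1,1)$; for $(1,0)$ one uses $g(1,0)=(L(L(1)),0)$ together with the facts that $L(1)$ and $L(L(1))$ are simple, $L(L(1))\neq1$, $f(L(L(1)),0)=(L(1),0)$, and $f(L(1),0)=(1,0)$. Assembling: given any $\ILP$ solution $(v,i)$, the decision tree evaluates $\emptychildreduction$ at $(v,i)$, at the first coordinate of $g(v,i)$, and at the first coordinate of $f(g(v,i))$; if any of these vertices is non-simple it outputs the associated \ref{item: (Empty Child)}/\ref{item: (Wrong Root)} solution, and otherwise the structural lemma contradicts $(v,i)$ being a solution, so this case never arises.

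The main obstacle is the bookkeeping inside the structural lemma. The branches of $\emptychildreduction$ near vertex $1$ and near isolated vertices interact, and there are a few degenerate configurations — for instance $R(v)=1$, or $v=L(1)$ with $F(v)=1$, or $L(L(1))=1$ — in which a careless reading of the s1/s2 conditions would seem to produce a spurious $\ILP$ solution; each must be eliminated by chasing the parent/child equalities forced by ``simple'' back to a contradiction with $v\neq1$ or with $L(v)\neq R(v)$. Keeping track of which of the two copies $(w,0),(w,1)$ plays the role of an $f$-father versus a $g$-nephew — and why $f$ ignores the index while $g$ flips it — is what makes the case analysis delicate, even though the gadget idea (double every vertex so that fathers are reached by $f$ and children by $g$) is simple.
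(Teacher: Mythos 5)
Your overall strategy (case analysis by the type of the queried vertex, the unconditional identity $f(1,0)=(1,0)$, and the fact that $f$ ignores the index) is the same as the paper's, but your structural lemma is too weak in the root case, and this is a genuine gap. For $(v,i)=(1,0)$ your hypothesis only guarantees simplicity of $1$, of the first coordinate of $g(1,0)=(L(L(1)),0)$, and of the first coordinate of $f(g(1,0))$; your proof of this case, however, invokes the simplicity of $L(1)$ (to get $F(L(L(1)))=L(1)$, $L(L(1))\neq 1$, and $f(L(L(1)),0)=(L(1),0)$), and that is not implied by the hypothesis. Concretely, take $L(1)=w\neq 1$, $F(w)=1$, $L(w)=1$, $F(1)=1$, and $R(1)=u\notin\{1,w\}$ with $F(u)=1$. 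Then $1$ is simple, $w$ is an \ref{item: (Empty Child)} solution (since $F(L(w))=F(1)=1\neq w$), and in the constructed instance $g(1,0)=(L(w),0)=(1,0)$ and $f(1,0)=(1,0)$, so $(1,0)$ is a solution of type \textbf{s2} for $\ILP$. Yet every vertex your decision tree examines --- $v=1$, the first coordinate of $g(1,0)$, and the first coordinate of $f(g(1,0))$ --- is the vertex $1$, which is simple; your lemma would falsely conclude that $(1,0)$ is not a solution, and your translation procedure finds nothing, while the actual $\emptychild$ solution sits at the unexamined vertex $L(1)$. The underlying issue is circular: which vertex ``the first coordinate of $f(g(1,0))$'' actually is depends on whether $L(1)$ is simple, so evaluating it cannot serve as a substitute for checking $L(1)$ directly.

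The repair is what the paper does: make the blanket no-solution hypothesis (and the solution-translation) cover the fixed set $\{v, L(v), R(v), L(L(1))\}$ --- in particular $L(1)$ when $v=1$ --- rather than a set computed through the constructed maps $f$ and $g$. With that stronger hypothesis the degenerate configurations you list ($R(v)=1$, $L(L(1))=1$, and so on) really are refutable by the parent/child chases you sketch, and the remaining cases of your analysis (ordinary $v$, isolated $v$, and $(1,1)$) go through as written; indeed for ordinary $v$ the facts you need about $L(v)$ and $R(v)$ already follow from the simplicity of $v$ together with the simplicity of the first coordinate of $g(v,i)$, so it is only the $(1,0)$ case that forces the larger check set.
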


\begin{proof}
    Consider any $(v, i) \in V'$ such that $v$ is a solution to $\emptychild$. Then there is a self-loop on $(v, i)$ with both $f$ and $g$, meaning that $f(g(v, i)) = (v, i)$, and so $(v, i)$ is a solution to the $\ILP$ instance. Given such an $\ILP$ solution, then, it is easy to find a solution to $\emptychild$ by checking $v$. Additionally, it is possible for there to be a $\ILP$ solution if an $\emptychild$ solution exists at $L(v)$ or $R(v)$, or at $v = 1$ if a solution exists at $L(L(1))$. Again, it is easy to check if any of these is the case.

    We will show that those scenarios are the only solutions to the $\ILP$ instance. Consider any $(v, i) \in V'$ and assume there is no solution at $v$, $L(v)$, $R(v)$, or $L(L(1))$. Then we have the following cases:

    \begin{itemize}
        \item $F(v) = L(v) = R(v) = v$. Note then $v \neq 1$. 
        
        There is no solution of type 1: \[f(f(g(v, i))) = f(f(v,1-i)) = f(1,0) = (1, 0) = f(v, i),\] where the second-to-last equality comes from the fact that either $1$ is not a solution, so $f(1, 0) = (1, 0)$, or $1$ is a solution and $(1, 0)$ self-loops. 
        
        There is no solution of type 2: \[f(g(v, i)) = f(v, 1-i) = (1, 0) \neq (v, i).\]

        For the remaining cases, assume $F(v) = L(v) = R(v) = v$ does not hold.
        \item $(v, i) = (1, 0)$. 

        There is no solution of type 1: \begin{align*}
            f(f(g(1, 0))) &= f(f(L(L(1)), 0)) = f( F(L(L(1))), 0) = f( L(1), 0) = (F(L(1)), 0) = (1, 0) \\&= f(1, 0),
        \end{align*}
        where we used the fact that $L(1)$ and $L(L(1))$ are not solutions.

        There is no solution of type 2: \[f(g(1, 0)) = f(L(L(1)), 0) = (F(L(L(1))), 0) = (L(1), 0) \neq (1, 0). \]

        \item $(v, i) = (1, 1)$.

        There is no solution of type 1: \[f(f(g(1, 1))) = f(f(1, 1)) = f(1, 0) = (1, 0) = f(1, 1).\]

        There is no solution of type 2: \[f(g(1, 1)) = f(1, 1) = (1, 0) \neq (1, 1).\]

        \item $v \neq 1$ and $i = 0$. Observe that $R(F(R(v))) = R(v)$, as $F(R(v)) = v$.

        There is no solution of type 1: \[f(f(g(v, i))) = f(f(R(v), 0)) = f(F(R(v)), 1) = f(v, 1) = f(v, 0), \] where the last inequality comes from the fact that $(v, 0)$ and $(v, 1)$ will always map to the same value under $f$.

        There is no solution of type 2: \[f(g(v)) = f(R(v), 0) = (F(R(v)), 1) = (v, 1) \neq (v, 0).\]

        \item $v \neq 1$ and $i = 1$.

        There is no solution of type 1: \[f(f(g(v, i))) = f(f(L(v), 0)) = f(F(L(v)), 0) = f(v, 0) = f(v, 1), \] where the last inequality comes from the fact that $(v, 0)$ and $(v, 1)$ will always map to the same value under $f$.

        There is no solution of type 2: \[f(g(v, i)) = f(L(v), 0) = (F(L(v)), 0) = (v, 0) \neq (v, 1).\]
    \end{itemize}

    We have ruled out $\ILP$ solutions other than those listed before the case analysis. Thus, every solution in the $\ILP$ instance can be used to efficiently map back to an $\emptychild$ solution as required.
\end{proof}

Combining \cref{thm: emptychild to nephew} with \cref{thm: lossycode to emptychild}, we obtain the following.

\begin{corollary}
    There is a reduction from $\lossycode$ to $\ILP$.
\end{corollary}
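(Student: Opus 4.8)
The plan is simply to compose the two reductions already in hand. By \cref{thm: lossycode to emptychild} we have $\lossycode \leq_{dt} \emptychild$, and by \cref{thm: emptychild to nephew} the map $\emptychildreduction$ witnesses $\emptychild \leq_{dt} \ILP$. Since $\leq_{dt}$ is transitive---decision tree reductions compose, with the depth of the composed reduction bounded by the product of the depths and the size by the product of the sizes, so the complexity stays $\polylog$---we obtain $\lossycode \leq_{dt} \ILP$.

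Concretely, I would spell out the composition as follows. Fix a $\lossycode$ instance $(f,g)$ on $[N]\rightleftharpoons[2N]$. The reduction of \cref{thm: lossycode to emptychild} produces an $\emptychild$ instance $(F,L,R)$ on a vertex set of size $O(N^2)$ whose three functions are each computed, bit by bit, by decision trees of depth $\polylog(N)$ querying the bits of $(f,g)$; and from any $\emptychild$ solution it recovers a $\lossycode$ solution by a $\polylog(N)$-depth computation (Cases I--III of that proof). Then apply $\emptychildreduction_{F,L,R}$ from \cref{thm: emptychild to nephew}: this produces a $\ILP$ instance $(V',f',g')$ with $V' = V\times\{0,1\}$, where each bit of $f'$ and $g'$ is a constant-depth function of the bits of $F,L,R$ (inspect \cref{alg: emptychild reduction}: it evaluates $F,L,R$ at $v$ only a bounded number of times), hence a $\polylog(N)$-depth function of the bits of $(f,g)$; and from any $\ILP$ solution $(v,i)$ it recovers an $\emptychild$ solution by checking $v$, $L(v)$, $R(v)$, and $L(L(1))$---again a $\polylog(N)$-depth computation---and then recovers a $\lossycode$ solution via the first reduction.

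There is essentially no obstacle here: the only thing to verify is the routine fact that substituting one decision-tree reduction of depth $d_1$ and size $m_1$ into another of depth $d_2$ and size $m_2$ yields a reduction of depth $O(d_1 d_2)$ and size $m_1 m_2$ (the query-bit functions compose, and the solution-recovery maps compose in the obvious way), so that composing two $\polylog$-complexity reductions again has $\polylog$ complexity. Hence $\lossycode \leq_{dt} \ILP$, as claimed. (Together with \cref{thm: tfzpp-and-pwpp-containment}, this places $\lossycode$ below $\ILP$, matching \autoref{fig:tfzpp-zoo}.)
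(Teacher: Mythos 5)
Your proposal is correct and is exactly the paper's argument: the corollary is obtained by composing the reduction of \cref{thm: lossycode to emptychild} ($\lossycode \leq_{dt} \emptychild$) with that of \cref{thm: emptychild to nephew} ($\emptychild \leq_{dt} \ILP$), using the routine closure of decision-tree reductions under composition. Your additional bookkeeping about depths and solution-recovery maps is fine but not needed beyond what the paper states.
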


\subsection{\texorpdfstring{\emptychild}{Empty-Child} and \texorpdfstring{\ILP}{Nephew} with Inverse} \label{sec: nephew with inverse}

It seems to be difficult to reduce $\ILP$ to $\emptychild$. It is possible that $\ILP$ is strictly more powerful than $\emptychild$. However, a proof of this seems elusive. To demonstrate this, we consider a natural modification to $\ILP$ by adding an \emph{inverse} to $f$. Interestingly, we show that this results in a problem that is \emph{equivalent} to $\emptychild$. If we believe that this modification is superficial, we could take this as evidence that $\ILP$ and $\emptychild$ are in fact equivalent. Alternatively, it could indicate that any proof that shows that $\ILP$ does not reduce to $\emptychild$ needs to argue how an inverse to $f$ makes $\ILP$ significantly easier.

We need to be a little bit careful in our definition of the inverse function. In the previous subsection we reduced $\emptychild$ to $\ILP$ and created vertices that were not solutions yet were not pointed at via the $f$ function: for an illustration see \cref{fig: emptychild to nephew root}. This situation only arises in the case where $f(f(v)) = f(v)$. Therefore we allow $f^{-1}(v) = \bot$ in instances where $f(f(v)) = f(v)$.

\begin{description}
    \item[\ILPwithInverse.] Given the same inputs as in $\ILP$ and a function $f^{-1}: V \rightarrow V \cup \{\bot\}$, in addition to the solutions of $\ILP$, we accept the following solutions:
    \begin{description}
        \item[s3.] $v \in V$ such that $f^{-1}(v) \neq \bot$ and $f(f^{-1}(v)) \neq v$. \mylabel{Wrong Inverse}{item: Inverse}
        \item[s4.] $v \in V$ such that $f^{-1}(v) = \bot$ and $f(f(v)) \neq f(v)$. \mylabel{Bad $\bot$}{item: Bad Bot}
    \end{description}
\end{description}

First we show that $\emptychildreduction$ can be augmented to include the inverse function, and therefore $\emptychild$ reduces to $\ILPwithInverse$. This augmented procedure is shown in \cref{alg: emptychild reduction inverse}. It returns a tuple $(f, f^{-1}, g)$. New additions are in blue and underlined.

\newcommand{\newmark}[1]{\textcolor{MidnightBlue}{\underline{#1}}}

\begin{algorithm}[ht] 
    \caption{Procedure $\emptychildreductionInv_{F, L, R}(v, i)$ } \label{alg: emptychild reduction inverse}
    \begin{algorithmic}[1]
        \If{$F(L(v)) \neq v \lor F(R(v)) \neq v \lor L(v) = R(v) \neq v$}
            \State \Return $((v, i), \newmark{(v, i)}, (v, i))$ \Comment{\ref{item: (Empty Child)}, so self-loop}
        \ElsIf{$v = 1$ and $L(1) = 1 \lor R(1) = 1 \lor F(1) \neq 1$}
            \State \Return $((1, i), \newmark{(1, i)}, (1, i))$ \Comment{\ref{item: (Wrong Root)}, so self-loop}
        \ElsIf{$F(v) = L(v) = R(v) = v$} 
            \State \Return $((1, 0), \newmark{\bot}, (v, 1-i))$ \Comment{Isolated vertex, treat specially}
        \Else  
            \If{$v=1 $}
                \State $f \gets (1, 0)$
            \ElsIf{$v = R(F(v))$}
                \State $f \gets (F(v), 1)$
            \Else \Comment{$v\ne 1$ and $v$ is not a right child}
                \State $f \gets (F(v), 0)$
            \EndIf
            \If{$(v, i) = (1, 0)$}
                \State $g \gets (L(L(v)), 0)$
                \State \newmark{$f^{-1} \gets (L(v), 0)$}
            \ElsIf{$(v, i) = (1, 1)$}
                \State $g \gets (1, 1)$
                \State \newmark{$f^{-1} \gets \bot$}
            \ElsIf{$i = 0$}
                \State $g \gets (R(v), 0)$
                \State \newmark{$f^{-1} \gets (L(v), 0)$}
            \Else \Comment{$v \neq 1$ and $i = 1$}
                \State $g \gets (L(v), 0)$
                \State \newmark{$f^{-1} \gets (R(v), 0)$}
            \EndIf
            \State \Return $(f, \newmark{f^{-1}}, g)$
        \EndIf
    \end{algorithmic}
\end{algorithm}

\begin{theorem}
    $\emptychild$ reduces to $\ILPwithInverse$.
\end{theorem}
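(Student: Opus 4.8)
The plan is to verify that the augmented procedure $\emptychildreductionInv$ of \cref{alg: emptychild reduction inverse} is a valid reduction from $\emptychild$ to $\ILPwithInverse$. The first observation is that $\emptychildreductionInv$ produces exactly the same functions $f$ and $g$ as $\emptychildreduction$; the only change is the extra output $f^{-1}$. Hence, for an $\ILPwithInverse$ solution $(v,i)$ that is in fact an $\ILP$-solution (i.e.\ $f(f(g(v,i)))\ne f(v,i)$ or $f(g(v,i))=(v,i)$), the back-map in the proof of \cref{thm: emptychild to nephew} applies verbatim and recovers an $\emptychild$ solution among $\{v, L(v), R(v), L(L(1))\}$. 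So it only remains to handle the two genuinely new solution types \ref{item: Inverse} and \ref{item: Bad Bot}.

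For \ref{item: Bad Bot} I would show that such a solution never occurs, so there is nothing to map back. The value $f^{-1}(v,i)=\bot$ is returned only in the ``isolated vertex'' branch and in the $(v,i)=(1,1)$ sub-branch of the final else, and in both of these $f(v,i)=(1,0)$. It therefore suffices to check that $f((1,0))=(1,0)$ in every instance: when $\emptychildreductionInv$ is called on $(1,0)$ it either reports that $1$ is an empty child or wrong root (returning a self-loop on $(1,0)$), or it reaches the final else with $v=1$ and sets $f\gets (1,0)$; the isolated branch is unreachable for the vertex $1$ since $L(1)=1$ would already trigger \ref{item: (Wrong Root)}. Thus $f(f(v,i))=f((1,0))=(1,0)=f(v,i)$ whenever $f^{-1}(v,i)=\bot$.

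For \ref{item: Inverse} I would do a case analysis over the branches of $\emptychildreductionInv$ that set $f^{-1}(v,i)\ne\bot$. If $v$ is an empty child, or $v=1$ is a wrong root, the procedure self-loops $(v,i)$, so $f(f^{-1}(v,i))=(v,i)$; and anyway $v$ (resp.\ $1$) is an $\emptychild$ solution. In the remaining, final-else cases, $f^{-1}(v,i)$ is one of $(L(1),0)$, $(L(v),0)$, or $(R(v),0)$. Being in the final else means $v$ is not an empty child, so $F(L(v))=F(R(v))=v$, and moreover $L(v)\ne R(v)$ and, unless $1$ is a wrong root, $L(v),R(v)\ne 1$. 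Using these facts I would trace which branch $\emptychildreductionInv$ takes when next applied to the vertex named by $f^{-1}(v,i)$ and confirm it returns $f$-value exactly $(v,i)$ --- e.g.\ for $i=1$ one checks that $R(v)$ satisfies the test $R(v)=R(F(R(v)))$, landing in the branch $f\gets (F(R(v)),1)=(v,1)$. The only obstructions are (i) the named vertex ($L(1)$, $L(v)$, or $R(v)$) is itself an empty child, or (ii) it equals the vertex $1$ and $1$ is a wrong root; in either case we simply output that vertex as the $\emptychild$ solution. Hence the back-map for a purported $\ILPwithInverse$ solution $(v,i)$ inspects the constantly-many candidates $\{v, L(v), R(v), L(L(1)), L(1), 1\}$ and outputs whichever is an $\emptychild$ solution --- the case analysis guarantees at least one is.

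The main obstacle is the bookkeeping in the last step: one must carefully follow the nested conditionals of $\emptychildreductionInv$ to determine which branch fires on the vertex pointed to by $f^{-1}$, and handle the boundary situations where that vertex is the root $1$, an isolated vertex, or a self-loop in the $\emptychild$ instance, each time invoking the appropriate ``not-a-solution'' hypothesis (such as $F(L(v))=v$ or $L(v)\ne R(v)$). None of these steps is conceptually hard, and the argument largely re-uses the structure of the proof of \cref{thm: emptychild to nephew}.
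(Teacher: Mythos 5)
Your proposal is correct and follows essentially the same route as the paper: keep the $f,g$ analysis from the $\emptychild\to\ILP$ reduction and then verify, branch by branch of $\emptychildreductionInv$, that the new solution types \ref{item: Inverse} and \ref{item: Bad Bot} only arise when a nearby vertex ($v$, $L(v)$, $R(v)$, $L(1)$, or $L(L(1))$) is already an $\emptychild$ solution, with the same identities $f(f^{-1}(v,i))=(v,i)$ and $f(f(v,i))=(1,0)=f(v,i)$ that the paper computes. Your reorganization (showing \ref{item: Bad Bot} never occurs unconditionally, and phrasing \ref{item: Inverse} via explicit obstructions) is a cosmetic variant of the paper's case analysis, not a different argument.
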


\begin{proof}
    \cref{thm: emptychild to nephew} shows that $\emptychildreduction$ is a reduction from $\emptychild$ to $\ILP$. The analysis in that proof still holds, as the $f$ and $g$ pointers returned are unmodified by the augmentation to $\emptychildreductionInv$. Thus, all that is left to do is to prove that no additional solutions are introduced by the addition of the $f^{-1}$ pointer.

    As in the proof of \cref{thm: emptychild to nephew}, consider $(v, i) \in V'$ and assume there is no solution at $v, L(v), R(v)$, or $L(L(1))$. We know already that there are no solutions of types 1 or 2. All that is left to prove is that no solutions of type 3 \ref{item: Inverse} or type 4 \ref{item: Bad Bot} exist. 

    \begin{itemize}
        \item $F(v) = L(v) = R(v) = v$. Note then $v \neq 1$. $f^{-1}(v) = \bot$, so we need only to consider solutions of type 4. None exist:
        \[ f(f(v, i)) = f((1, 0)) =  (1, 0) = f(v, i). \]

        \item $(v, i) = (1, 0)$. $f^{-1}(v) \neq \bot$, so we need only to consider solutions of type 3. None exist:
        \[ f(f^{-1}(1, 0)) = f(L(1), 0) = (F(L(1)), 0) = (1, 0). \]

        \item $(v, i) = (1, 1)$. $f^{-1}(v) = \bot$, so we need only to consider solutions of type 4. None exist:
        \[ f(f(1, 1)) = f(1, 0) = (1, 0) = f(1, 1). \]

        \item $v \neq 1$ and $i = 0$. $f^{-1}(v) \neq \bot$, so we need only to consider solutions of type 3. None exist:
        \[ f(f^{-1}(v, 0)) = f(L(v), 0) = (F(L(v)), 0) = (v, 0). \]

        \item $v \neq 1$ and $i = 1$. $f^{-1}(v) \neq \bot$, so we need only to consider solutions of type 3. None exist:
        \[ f(f^{-1}(v, 1)) = f(R(v), 0) = (F(R(v)), 1) = (v, 1). \qedhere \]
    \end{itemize}
\end{proof}

We will now show that $\ILPwithInverse$ reduces to $\emptychild$. To do so, we will reduce it to $\emptychild'$.

\begin{theorem}
    $\ILPwithInverse$ reduces to $\emptychild'$.
\end{theorem}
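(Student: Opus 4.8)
The plan is to construct, from an instance $(V,f,g,f^{-1})$ of $\ILPwithInverse$, an instance $(V',F,L,R)$ of $\emptychild'$ whose binary tree is read off from the functional graph $G_f$. The guiding structural fact (used implicitly already in the proof that $\ILP\in\TFZPP$) is that, absent an s1 or s2 solution, every vertex of $G_f$ has either no preimage under $f$ or at least two: if $f(u)=w$ then $f(g(u))$ is a preimage of $w$ (since $f(f(g(u)))=f(u)=w$ by the no-s1 identity) and $f(g(u))\neq u$ (by the no-s2 identity). So $G_f$ is "morally binary", and the inverse pointer lets us name, for each vertex $v$, a canonical first preimage $L(v):=f^{-1}(v)$ and then a canonical second preimage $R(v):=f(g(f^{-1}(v)))$, with father $F(v):=f(v)$. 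This makes the tree identities hold on the nose: $F(L(v))=f(f^{-1}(v))=v$ and $F(R(v))=f(f(g(f^{-1}(v))))=f(f^{-1}(v))=v$, using no-s1 and the promise $f(f^{-1}(v))=v$ (no s3).

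First I would verify the "honest" direction: any $\emptychild'$ solution of this instance maps back. An empty child at $u$ asserts $F(L(u))\neq u$, or $F(R(u))\neq u$, or $L(u)=R(u)\neq u$. Reading these through the definitions: $F(L(u))\neq u$ becomes $f(f^{-1}(u))\neq u$, which (when $f^{-1}(u)\neq\bot$) is precisely a Wrong-Inverse solution at $u$; $F(R(u))\neq u$ becomes $f(f(g(w)))\neq f(w)$ or $f(f^{-1}(u))\neq u$ with $w=f^{-1}(u)$, i.e.\ an s1 solution at $w$ or an s3 solution at $u$; and $L(u)=R(u)\neq u$ becomes $f(g(w))=w$ with $w=f^{-1}(u)$, i.e.\ an s2 solution at $w$. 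So away from $\bot$'s and fixed points the reduction is a direct translation, and the whole difficulty lies in the degenerate parts of $G_f$.

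The main obstacle is handling those degenerate parts: fixed points, longer cycles, and above all vertices with $f^{-1}(v)=\bot$. Note that $\ILPwithInverse$ only guarantees $f^{-1}(v)=\bot\Rightarrow f(f(v))=f(v)$, i.e.\ that $f(v)$ is a fixed point — it does \emph{not} guarantee $v$ is a leaf, so one cannot naively make such $v$ a leaf. The plan: (i) treat each fixed point $p$ by hand, using $p$ as one of its own children ($L(p):=p$, $R(p):=f(g(p))$, a genuine preimage of $p$ distinct from $p$ by no-s2) and $F(p):=p$; (ii) observe that on a cycle of length $\geq 2$ one has $f(f(c))\neq f(c)$, so $f^{-1}$ is never $\bot$ there (else an s4 solution), the normal rule applies, and the cycle just becomes a permitted cycle-with-trees in the $\emptychild'$ graph; (iii) for a non-fixed-point $v$ with $f^{-1}(v)=\bot$, use that its father $p=f(v)$ is a fixed point, so either $v=f(g(p))$ — in which case $g(p)$ is an explicit preimage of $v$ and we may give $v$ two proper children $L(v):=g(p)$, $R(v):=f(g(g(p)))$ with $F(v):=p$ — or $v$ is never reached from any vertex of the construction and we make it isolated ($L(v)=R(v)=F(v)=v$). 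A (somewhat lengthy) case analysis then checks that every $\emptychild'$ solution, including those created by these overrides, still maps back to an s1/s2/s3/s4 solution; the delicate point is ruling out a bogus empty child at the interface between a $\bot$-vertex and its fixed-point father, which is exactly where the odd hypothesis $f(f(v))=f(v)$ in the definition of s4 gets used. Finally, $\emptychild'$ needs a legitimate vertex $1$ (with $L(1)\neq 1\neq R(1)$, else it is a Wrong-Internal-Vertex solution): starting from an arbitrary $v$, one of $v$, $f(g(v))$, or $f(g(f(v)))$ is always a non-fixed-point $v_0$ with $f(g(f(v_0)))=v_0$, and the construction forces $L(v_0)\neq v_0$ for such a $v_0$ (or else an s2 solution falls out), so we designate it as vertex $1$; alternatively one can graft a constant-depth perfect binary tree on top, as in the reduction from $\lossycode$ to $\emptychild$, and route its leaves into the $G_f$-derived structure. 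I expect the $\bot$/fixed-point bookkeeping, rather than the root, to be the bulk of the proof.
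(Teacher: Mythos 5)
Your construction follows the same basic route as the paper's: both reductions build the $\emptychild'$ instance directly on the functional graph of $f$, set $F(v)=f(v)$, take as children of $v$ two distinct $f$-preimages of $v$ located via $f^{-1}(v)$ and $g$, and recover an $\ILPwithInverse$ solution by inspecting a constant-size neighbourhood whenever a bogus empty child appears. The differences are in the details. The paper's $\FindChildrenAndParent$ uses $v'=f^{-1}(v)$ only as a handle and makes the children $f(g(v'))$ and $f(g(f(g(v'))))$, turns every vertex with $f^{-1}=\bot$ into an isolated self-loop, and gives fixed points of $f$ no special treatment; you take $f^{-1}(v)$ itself as the left child and spend most of your effort on fixed points and $\bot$-vertices. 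That emphasis is well placed, and is where your write-up is actually more careful than the paper's. Since a non-solution vertex $a$ with $f^{-1}(a)=\bot$ is only guaranteed to satisfy $f(f(a))=f(a)$, a child $a$ of $v$ can have $f^{-1}(a)=\bot$ without any nearby $\ILPwithInverse$ solution precisely when $v$ is a fixed point of $f$; in that situation isolating $a$ produces an empty child at $v$ whose designated check set ($v$, $v'$, $f(g(v'))$) contains no solution. The paper's verification passes from $f(L(v))=v$ (\cref{claim: findchildren inverse properties}) to $F(L(v))=v$, which tacitly assumes $f^{-1}(L(v))\neq\bot$; your special rules for fixed points $p$ and for the designated $\bot$-child $f(g(p))$ of a fixed point are exactly the missing ingredient for that case, so your "lengthy case analysis," if carried out, yields a reduction that is robust where the paper's argument is glossed over.

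Two points in your sketch still need repair. First, the root: the property you state ("one of $v$, $f(g(v))$, $f(g(f(v)))$ is a non-fixed-point $v_0$ with $f(g(f(v_0)))=v_0$") is not the right invariant. What you need is a vertex that, under your rules, is neither a fixed point nor isolated (i.e., either a generic vertex with $f^{-1}\neq\bot$, or the designated child $f(g(p))$ of a fixed point $p$), together with a constant-size set of vertices whose non-solutionhood certifies this; an analogue of \cref{claim: one not bot} for your construction should be stated and proved. Second, the fallback of grafting a perfect binary tree on top does not work as described for $\emptychild'$: the grafted leaves would need children inside the $G_f$-derived structure whose $F$-pointers point back to those leaves, conflicting with $F=f$ on the real vertices, so the grafted leaves themselves become unrecoverable empty-child solutions. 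Commit to your first route and make the root selection precise.
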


\begin{proof}
    We use a variant of the $\FindChildren$ procedure named $\FindChildrenAndParent$ that in addition also returns the parent vertex. Here, we define $\Checksol(u)$ to be the procedure that returns $\mathsf{True}$ iff $u$ is a solution to the $\ILPwithInverse$ instance.

    \begin{algorithm}[ht] 
        \caption{Procedure $\FindChildrenAndParent_{f, f^{-1}, g}(v)$ } \label{alg: findchildren inverse}
        \begin{algorithmic}[1]
            \If{$f^{-1}(v) = \bot$}
                \State \Return(v, v; v) \Comment{Self-loop on vertices with no $f$-inverse.}
            \Else
                \State $v' \gets f^{-1}(v)$ \Comment{We will use the $f$-inverse of $v$ as our starting point.}
                \State $h(v') \gets g(f(g(v')))$ 
                \Comment{Rename for notational brevity}
                \If{$\Checksol(v) \lor \Checksol(v') \lor \Checksol \left(f(g(v'))\right)$}
                    \State \Return $(\bot, \bot; f(v))$ 
                    \Comment{We have found a solution, but \underline{always} return $f(v)$.}
                \Else
                    \State \Return $(f(g(v')), f(h(v')); f(v))$
                    \Comment{The two children of $v$ are $f(g(v'))$ and $f(h(v'))$; the parent of $v$ is $f(v)$.}
                \EndIf
            \EndIf
        \end{algorithmic}
    \end{algorithm}

    \begin{figure}
        \centering
        \begin{tikzpicture}[level2/.style={fill=Cerulean!40!white, text opacity=1},level0/.style={fill=Cerulean!2!white}] %
            \node[v,level0] (fv) at (0, 0) {$f v$};
            \node[v,level2] (v) at (0, 2) {$v$};
            \node[v,level2] (fgvv) at (-2, 4) {$f g v'$};
            \node[v,level2] (vv) at (0, 4) {$v'$};
            \node[v,level0] (fhvv) at (2, 4) {$f h v'$};
            \node[v,level0] (gvv) at (-2, 6) {$g v'$};
            \node[v,level0] (hvv) at (2, 6) {$h v'$};
            
            \draw[dotted, very thick,gray] (-2.8, 4.8) rectangle (-1.2, 3.2);
            \draw[dotted, very thick,gray] (2.8, 4.8) rectangle (1.2, 3.2);
            \draw[dotted, very thick,gray] (0.8, 0.8) rectangle (-0.8, -0.8);
            
            \draw[farrow] (v) to (fv);
            \draw[farrow] (vv) to[bend left] (v);
            \draw[farrow] (gvv) to (fgvv);
            \draw[farrow] (hvv) to (fhvv);
            \draw[farrow] (fgvv) to (v);
            \draw[farrow] (fhvv) to (v);

            \draw[farrow, dotted] (v) to[bend left] (vv);

            \draw[garrow] (vv) to (gvv);
            \draw[garrow] (fgvv) to (hvv);
        \end{tikzpicture}
        \caption{The procedure performed by $\FindChildrenAndParent_{f, f^{-1}, g}(v)$.  Solid arrows represent $f$, the dotted arrow represents $f^{-1}$, and dashed arrows represent $g$. Parentheses are omitted in labels. The dotted boxes indicate the vertices that will be returned. The procedure will check if the shaded vertices are $\ILPwithInverse$ solutions, and in doing so will visit the unshaded vertices (but will not detect if these are solutions). Note that $f(h(v')) = v'$ is possible.} \label{fig: findchildren inverse}
    \end{figure}
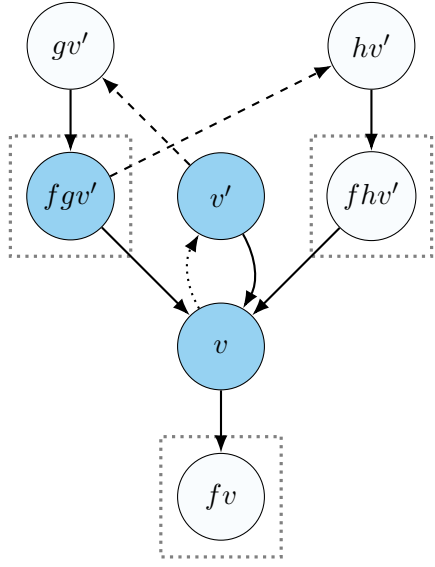

    $\FindChildrenAndParent$ is given in \cref{alg: findchildren inverse}. See \cref{fig: findchildren inverse} for an illustration. We show some useful properties of this procedure.

    \begin{claim} \label{claim: findchildren inverse properties}
        Let $\FindChildrenAndParent_{f, f^{-1}, g}(v) = (a, b; c)$. Suppose $f^{-1}(v) \neq \bot$. If $(a, b) \neq (\bot, \bot)$, then
        \begin{enumerate}[label=(\roman*)]
            \item \label{claim: findchildren inverse properties distinct} $a \neq b$,
            \item \label{claim: findchildren inverse properties same children} $f(a) = f(b) = v$.
        \end{enumerate}
    \end{claim}
    \begin{claimproof}[Proof of \cref{claim: findchildren inverse properties}.]
        \leavevmode %
        \begin{enumerate}[label=(\roman*),beginpenalty=10000] %
            \item As $a$ is not a solution, we know that $b = f(g(a)) \neq a$.

            \item As $v$ is not a solution, $f(v') = f(f^{-1}(v)) = v$. As $v'$ is not a solution, $f(a) = f(f(g(v'))) = f(v') = v$. As $a$ is not a solution, $f(b) = f(f(g(a))) = f(a) = v$. \qedhere
        \end{enumerate}
    \end{claimproof}

    Now we show the reduction. First, we need to choose a vertex from the $\ILPwithInverse$ instance to be our distinguished vertex 1 in $\emptychild'$. We will select it such that $f(f(1)) \neq f(1)$. 
    
    Let $v^{\star}$ be the lexicographically first $\ILPwithInverse$ vertex. Before performing the rest of the reduction, check the vertices $\{v^{\star}, f(v^{\star}), g(f(v^{\star})\}$ to see if they are solutions. If any are, halt the reduction by returning some easily-falsified $\emptychild'$ instance (e.g. a single vertex that self-loops on $F, L, R$). Given such an $\emptychild'$ instance we can easily find a solution to the $\ILPwithInverse$ instance by ensuring that our reduction always checks these specified vertices.

    \begin{claim} \label{claim: one not bot}
        If none of $\{v^{\star}, f(v^{\star}), g(f(v^{\star})\}$ are $\ILPwithInverse$ solutions, then either $f(f(v^{\star})) \neq f(v^{\star})$ or $f(f(g(f(v^{\star}))) \neq f(g(f(v^{\star}))$.
    \end{claim}
    \begin{claimproof}
        Assume otherwise. $f(v^{\star})$ is not a solution, so $f(f(g(f(v^{\star})))) = f(f(v^{\star}))$ and $f(g(f(v^{\star}))) \neq f(v^{\star})$. Then
        \[ f(f(v^{\star})) = f(f(g(f(v^{\star})))) = f(g(f(v^{\star}))) \neq f(v^{\star}), \]
        a contradiction.
    \end{claimproof}
    
    Set vertex 1 to be $v^{\star}$ if $f(f(v^{\star})) \neq f(v^{\star})$ and $g(f(v^{\star}))$ otherwise. In the following, assume we have relabeled $V$ such that the vertex chosen above is vertex 1.
    
    For every vertex $v$, if $\FindChildrenAndParent_{f, f^{-1}, g}(v)$ returns $(\bot, \bot, c)$, let $v^{*}$ be any vertex other than $v$, and assign $L(v) = v^{*}$, $R(v) = v^{*}$, and $F(v) = c$. This will be an $\emptychild'$ solution of type \ref{item: (Empty Child)} and we can easily compute an $\ILPwithInverse$ solution if such an $\emptychild'$ solution is detected: check $v$, $v'$, and $f(g(v'))$ for solutions. Otherwise, assign
    \[ (L(v), R(v); F(v)) \leftarrow \FindChildrenAndParent_{f, f^{-1}, g}(v). \]
    
    We will show that the only solutions to the $\emptychild'$ instance are those that are associated with $(\bot, \bot, c)$ outputs as described above. Consider any $v$ in the $\emptychild'$ instance and split into two cases: one where $F(v) = L(v) = R(v) = v$ and one where this does not hold.

    In the first case:
    \begin{itemize}
        \item $F(L(v)) = F(v) = v$ and $F(R(v)) = F(v) = v$.
        \item $L(v) = R(v) = v$.
        \item It is not the case that $v = 1$. We must have $f^{-1}(v) = \bot$ but we used \cref{claim: one not bot} to choose vertex 1 such that $f(f(1)) \neq f(1)$. Together, these would imply that $1$ is a solution of type \ref{item: Bad Bot}, but we also chose 1 to not be a solution.
    \end{itemize}
    
    In the second case, we may assume there is no $\ILPwithInverse$ solution at $v$, $v'$, and $f(g(v'))$. Then:
    \begin{itemize}
        \item $F(L(v)) = v$ and $F(R(v)) = v$ by \cref{claim: findchildren inverse properties} \cref{claim: findchildren inverse properties same children}. Note that this is why we need to always return a value for $F$ in $\FindChildrenAndParent$: even if $L(v)$ or $R(v)$ are solutions, they will still correctly point back to $v$.
        \item $L(v) \neq R(v)$ by \cref{claim: findchildren inverse properties} \cref{claim: findchildren inverse properties distinct}.
        \item $L(1) \neq 1$ and $R(1) \neq 1$. We assigned vertex 1 such that $f(f(1)) \neq f(1)$, which implies $f(1) \neq 1$. Thus, $F(L(1)) = 1$ but $F(1) \neq 1$, implying $L(1) \neq 1$ and the same argument works for $R$. \qedhere
    \end{itemize}
\end{proof}

\section{The Strength of \texorpdfstring{\lossycode}{Lossy-Code}}\label{sec: APC1}

    We start with the following candidate problem that, at the beginning of this research, was conjectured to be strictly harder than $\lossycode$. %
    
    \begin{description}
        \item[$c$-\AMGMLC.] Let $c > 1$ be a constant, $V := [2N]$ and $P := [c\cdot N^2]$.  
    
        The input is a coloring function $C : V \rightarrow \{0,1\}$ and two mappings $F : P \rightarrow V \times V$, $G: V \times V \rightarrow P$. Let $H := C^{-1}(0) \times C^{-1}(1)$. The goal is to find solutions of either type:

        \begin{description}
            \item[s1.] a pigeon $x \in P$ such that $G(F(x)) \neq x$; \mylabel{Wrong Encoding-Decoding}{item: (Wrong Enc-Dec)}
            \item[s2.] a pigeon $x \in P$ such that $F(x) \notin H$; \mylabel{Invalid Hole}{item: (Invalid Hole)}
        \end{description}
    \end{description}

	The main result in this section is that $c$-\AMGMLC is, in fact, reducible to $\lossycode$. Moreover, our techniques allow us to reduce problems similar to (and sometimes more complicated than) $c$-\AMGMLC to $\lossycode$ in a \emph{systematic} way; we provide two additional examples later ($c$-Dual-\AMGMLC in \autoref{sec: dual AM GM LC} and a problem capturing the Inclusion-Exclusion principle in \autoref{sec: incl excl}).

	\def\PV{\mathsf{PV}}
	\def\dwPHP{\mathrm{dwPHP}}
	Some of the results in this section are already known in the world of bounded arithmetic: they follow from the machineries underlying \Jerabek's theory of (additive) approximate counting $\APC_1 := \PV_1 + \dwPHP(\PV)$~\cite{Jerabek07}. For example, it is not hard to formalize the AM-GM inequality in $\APC_1$ (more precisely, prove in $\APC_1$ that $c$-\AMGMLC is total); Wilkie's witnessing theorem~(\cite{Thapen-PhD}, \cite[Proposition 1.14]{Jerabek04}) implies that every $\NP$ search problem provably total in $\APC_1$ (including $c$-\AMGMLC) reduces to $\lossycode$. 

	One of the goals of this section is to introduce the ideas of $\APC_1$ to audiences who are less familiar with $\APC_1$ (or bounded arithmetic in general). In \autoref{sec: reconstructive PRG}, we introduce reconstructive pseudorandom generators with \emph{feasible witnesses}, which is the central technique underlying \cite{Jerabek07}. This technique allows us to put a wide range of problems similar to $c$-\AMGMLC inside $\lossycode$.

    \subsection{Basics of \texorpdfstring{$\APC_1$}{APC1}}\label{sec: reconstructive PRG}

This subsection presents some background of $\APC_1$ \cite{Jerabek07} that is needed in our reductions. In \cite{Jerabek07}, \Jerabek uses the \emph{Nisan--Wigderson generator} \cite{NisanW94} to approximate the size of feasible sets. Looking ahead, we will abstract the properties needed for the Nisan--Wigderson generator as \emph{reconstructive PRGs} with ``feasible witnesses'' in some sense.

    \def\PRG{\mathsf{PRG}}
    \def\calN{\mathcal{N}}
    \def\eps{\varepsilon}
    \def\Recon{\mathsf{Recon}}
    \def\seed{\mathsf{seed}}
    \def\cupdot{\mathbin{\dot{\cup}}}
    \def\Comp{\mathsf{Comp}}
    \def\Decomp{\mathsf{Decomp}}

We first define the notion of \emph{injection-surjection pairs}. Let $A, B$ be two sets. We say that there is an \emph{injection-surjection pair} certifying $|A| \le |B|$ if there are polynomial-time computable functions $f: A\to B$ and $g: B\to A$ such that for every $x\in A$, $g(f(x)) = x$. This will be denoted by the following notation
\[A \xrightleftharpoons[g]{f} B.\]
(Note that in the above notation, the injection-surjection pair is implicitly assuming that the left-hand side ($A$) is smaller than the right-hand side ($B$); hence $A\rightleftharpoons B$ and $B\rightleftharpoons A$ have very different meanings.)

Injection-surjection pairs are the basic primitive used in $\APC_1$ to compare the sizes of two sets; roughly speaking, this is because the underlying principle is the \emph{retraction (weak) pigeonhole principle} (over polynomial-time functions; i.e., the totality of $\lossycode$). 

For any sets $A, B$ and two functions $A\xrightleftharpoons[g]{f} B$, we say that an input $x\in A$ \emph{witnesses} that $A\rightleftharpoons B$ is not an injection-surjection pair, if $g(f(x)) \ne x$. Otherwise (i.e., $g(f(x)) = x$), we say that $x$ \emph{maps to itself} via $A\rightleftharpoons B$. We also use $X \cupdot Y$ to denote the \emph{disjoint union} of two sets $X, Y$.

\paragraph{$\APC_1$-provably reconstructive pseudorandom generators.} Let $N = 2^n, D = 2^d, M = 2^m$, a function $\PRG: [N] \times [D] \to [M]$ is called a $(k, \eps)$-reconstructive PRG if for every subset $S\subseteq [M]$, for all but at most $K := 2^k$ values of $f \in [N]$, we have
\begin{equation}
    \mleft|\frac{|S \cap \PRG_f|}{D} - \frac{|S|}{M}\mright| \le \eps, \label{eq: extractor property}
\end{equation}
where we write $\PRG_f = \{\PRG(f, \seed): \seed\in [D]\}$ for convenience. Furthermore, we want this generator to have ``feasible witnesses'' in the following sense:
\begin{definition}
    Let $G_<^{(-)}, H_<^{(-)}, G_>^{(-)}, H_>^{(-)}$ be polynomial-time oracle algorithms. We say that $f \in [N]$ \emph{provides an $\eps$-(additive) approximation of $|S|$ feasibly}, if letting $val := |S\cap \PRG_f| \cdot (M/D)$, then we have the following two injection-surjection pairs:
    \begin{align}
        [val]\times [D]\xrightleftharpoons[G_<^S(f, -)]{H_<^S(f, -)}&\, (S\cupdot [\eps M])\times [D],\text{ and}\label{eq: PRG inj surj pair I}\\
        S\times [D]\xrightleftharpoons[G_>^S(f, -)]{H_>^S(f, -)}&\, ([val]\cupdot [\eps M])\times [D].\label{eq: PRG inj surj pair II}
    \end{align}
    Roughly speaking, these injection-surjection pairs certify \eqref{eq: extractor property}: $(G_<, H_<)$ certifies $\frac{|S\cap \PRG_f|}{D} \le \frac{|S|}{M} + \eps$, while $(G_>, H_>)$ certifies $\frac{|S|}{M} \le \frac{|S\cap\PRG_f|}{D} + \eps$.
\end{definition}

Fix $S \subseteq [M]$, we want that all but $K$ values $f \in [N]$ provide an $\eps$-additive approximation of $|S|$ feasibly. In fact, we can construct a pair of \emph{reconstruction algorithm} $\Comp$ and $\Decomp$, which are deterministic oracle algorithms satisfying the following. Given any witness $w$ that \eqref{eq: PRG inj surj pair I} or \eqref{eq: PRG inj surj pair II} is \emph{not} an injection-surjection pair, $\Comp^S(f, w)$ compresses $f$ into an element $\tilde{f} \in [K]$ (i.e., compresses $f$ into $k$ bits), and $\Decomp^S(\tilde{f})$ decompresses $\tilde{f}$ back to $f$.

The following theorem asserts that some explicit generator (in fact, the Nisan--Wigderson generator~\cite{NisanW94}) has ``feasible witnesses'' in the above sense. It is implicit in \cite{Jerabek07}; for completeness, we provide a proof in \autoref{sec: proof of NW in APC1}.

\def\EitherNotInjSurjPair{either \eqref{eq: PRG inj surj pair I} or \eqref{eq: PRG inj surj pair II}\xspace}
\begin{restatable}{theorem}{ThmNisanWigdersoninAPC}\label{thm: certified Nisan--Wigderson}
    Let $n, m \in \N$, $\eps > 0$, and $\rho > 1$ be parameters. Let $d := O\mleft(\frac{\log^2 (nm/\eps)}{\log\rho}\mright)$ and $k := d + (\rho + 1)(m - 1) + O(\log(m/\eps))$. Let $N := 2^n$, $M := 2^m$, $K := 2^k$, and $D := 2^d$.
    
    Then there is a $(k, \eps)$-reconstructive generator $\PRG: [N] \times [D] \to [M]$ with $\poly(\rho mn/\eps)$-time deterministic oracle algorithms $G_<, H_<, G_>, H_>, \Comp, \Decomp$ such that the following holds. For every set $S\subseteq [M]$, every $f\in [N]$, and every input $w$ witnessing that \EitherNotInjSurjPair is not an injection-surjection pair (where $val := |S\cap \PRG_f|\cdot \frac{M}{D}$), we have $\Comp^S(f, w) \in [K]$ and
    \[\Decomp^S(\Comp^S(f, w)) = f.\]
\end{restatable}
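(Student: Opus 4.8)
The plan is to realize the Nisan--Wigderson generator together with a \emph{feasibilized} version of its reconstruction argument, essentially carrying out the construction of \cite{Jerabek07}. First I would fix a Nisan--Wigderson design: a family $\mathcal{D}=\{I_1,\dots,I_m\}$ of subsets of a ground set $[d]$, each of size $\ell:=\lceil\log_2(nm/\eps)\rceil$ (viewing elements of $[N]$ as truth tables of Boolean functions on $\ell$ bits, padding the domain when $2^\ell>n$), with pairwise intersections $|I_i\cap I_j|\le a$ for $a:=\lceil\log_2\rho\rceil$. A standard polynomial-time construction of such a design requires $d=O(\ell^2/a)=O\!\left(\log^2(nm/\eps)/\log\rho\right)$, which is the quantitative origin of the bound on $d$; I then set $\PRG(f,z):=\bigl(f(z|_{I_1}),\dots,f(z|_{I_m})\bigr)\in[M]$ and $D:=2^d$.

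The substance of the proof is the construction of the oracle algorithms $G_<,H_<,G_>,H_>$ realizing \eqref{eq: PRG inj surj pair I} and \eqref{eq: PRG inj surj pair II}, along with $\Comp$ and $\Decomp$. I would do this by a telescoping hybrid over the $m$ output coordinates: introduce intermediate generators $\PRG^{(0)},\dots,\PRG^{(m)}$ over $[M]$, where $\PRG^{(i)}$ uses the true $f$-generated bit in coordinates $1,\dots,i$ and a fresh bit in coordinates $i+1,\dots,m$ (so $\PRG^{(0)}$ is uniform and $\PRG^{(m)}=\PRG_f$), and build, for each $i$, a single-coordinate injection--surjection pair certifying that the $S$-densities of $\PRG^{(i-1)}$ and $\PRG^{(i)}$ differ by at most $\eps M/m$ after clearing denominators by a factor of $D$ --- this is precisely why the $[D]$ factors occur in the statement. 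Composing these $m$ pairs and inserting the $[val]$ bookkeeping yields \eqref{eq: PRG inj surj pair I} and \eqref{eq: PRG inj surj pair II}, and by \eqref{eq: extractor property} these are genuine injection--surjection pairs whenever $f$ is ``good''. Given a witness $w$ that one composite pair fails to round-trip, $\Comp^S$ re-runs the telescoping to locate the coordinate $i$ and the concrete point where a single-coordinate pair fails, and reads off the reconstruction advice: the index $i$ ($O(\log m)$ bits), the seed bits outside $I_i$ ($d$ bits), the fresh bits used for coordinates $>i$ ($\le m-1$ bits, overcounting), a direction bit plus $O(\log(m/\eps))$ bits locating the hybrid slack, and --- using $|I_j\cap I_i|\le a$, so that fixing the seed outside $I_i$ turns each earlier generator bit into a function of at most $a$ of the bits of $z|_{I_i}$ --- the $\le i-1$ truth tables of those functions, of total length $\le(i-1)2^{a}=O(\rho m)$. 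Summing gives $\Comp^S(f,w)\in[K]$ with $k=d+(\rho+1)(m-1)+O(\log(m/\eps))$ as required. Then $\Decomp^S$ parses the advice and, for each $y\in\{0,1\}^\ell$, evaluates the stored small functions at $y|_{I_i\cap I_j}$, assembles an element of $[M]$, queries $S$, and outputs the forced value of $f(y)$, thereby reconstructing the full truth table of $f$.

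The step I expect to be the main obstacle is proving \emph{exactness}: that $\Decomp^S(\Comp^S(f,w))=f$ holds on every input, not merely on most inputs. A naive hybrid argument only yields a next-bit predictor with advantage $\eps/m$, which is useless for errorless recovery, so the single-coordinate injection--surjection pairs must be engineered so that an individual round-trip failure is a genuinely \emph{local, deterministic} event. Concretely, the failure should pin down, for the located coordinate $i$ and the recorded fixing of the seed outside $I_i$, the value $f(y)$ for every $y$ for which the conditional $S$-slice $b\mapsto\bigl[(\text{assembled }[M]\text{-element with }i\text{-th bit }b)\in S\bigr]$ is non-constant, while the cases where this slice \emph{is} constant are absorbed into the $\eps M/m$ slack rather than into undetermined bits of $f$. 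Making the $[D]$-factors carry exactly the averaging over the fixed seed coordinates, so that ``failure'' upgrades from an $\eps/m$-bias statement to a statement that determines $f(y)$ on the nose, is the delicate heart of the argument. Once this is in place, verifying that $G_<,H_<,G_>,H_>,\Comp,\Decomp$ run in time $\poly(\rho m n/\eps)$ given the $S$-oracle, and that the claimed injection--surjection identities hold, is routine bookkeeping; these details are carried out in \autoref{sec: proof of NW in APC1}.
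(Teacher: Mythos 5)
You have correctly identified the main obstacle — errorless reconstruction — but your proposed way around it does not work, and the paper's proof uses an ingredient you have omitted: a list-decodable encoding of $f$. In the paper's construction the generator is applied to $\tilde f := \Enc(f)$ for an $(L,1/2-\varepsilon')$-list-decodable code (with $\ell = O(\log(nm/\varepsilon))$ and $L = \mathrm{poly}(m/\varepsilon)$), not to $f$ itself. The reason is that a round-trip failure of a per-coordinate injection--surjection pair at hybrid index $i$, with the off-$I_i$ seed and the fresh bits fixed, certifies only that the two slice counts $|X_{i,\cdot}|$ and $|Y_{i,\cdot}|$ differ by more than $\varepsilon'2^{\ell+1}$; via the identity $|X|-|Y| = \Delta(\tilde f^1,\tilde f)-\Delta(\tilde f^0,\tilde f)$ this yields a string agreeing with the (encoded) truth table on a $1/2+\varepsilon'$ fraction of positions, and nothing more. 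Your plan to engineer the pairs so that a failure deterministically pins down $f(y)$ at every $y$ whose conditional $S$-slice is non-constant cannot succeed: the advice you record (index $i$, outside seed, fresh bits, and the $\le\rho(m-1)$ truth-table bits of the intersection functions) information-theoretically determines only a \emph{predictor}, and its errors are not confined to constant slices — they are exactly the $(1/2-\varepsilon')$ fraction of mispredictions inherent in the hybrid argument, and recording their locations would cost $\Omega(2^{\ell})$ bits, far beyond the $O(\log(m/\varepsilon))$ you budget. No rearrangement of the $[D]$-factors changes this. The paper gets exactness $\mathsf{Decomp}^S(\mathsf{Comp}^S(f,w))=f$ by recovering $\tilde f$ only approximately and then spending $\log L = O(\log(m/\varepsilon))$ further advice bits to select $f$ from the list $\Dec(\tilde f_{\mathsf{apx}})$; this is precisely where the $O(\log(m/\varepsilon))$ term of $k$ is spent, and it is the step missing from your proposal. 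Note also that the per-coordinate pairs need no delicate engineering: since each slice lives in a domain of size $2^{\ell+1} = \mathrm{poly}(nm/\varepsilon)$, the paper simply builds them by brute force (canonical order-preserving maps computed with the $S$-oracle), and a failure is used only to certify a large gap that triggers compression.

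A secondary issue is the combinatorial design. You take a standard design with pairwise intersections at most $\lceil\log_2\rho\rceil$ and claim seed length $d=O(\ell^2/\log\rho)$ for $m$ sets; the known constructions achieve this only when $\log\rho\gtrsim\log m$, whereas the theorem allows any $\rho>1$. The paper instead uses the weak designs of Raz--Reingold--Vadhan, whose guarantee $\sum_{j<i}2^{|I_i\cap I_j|}\le\rho(m-1)$ holds for all $\rho>1$ with $d=O(\ell^2/\log\rho)$, and is also exactly the quantity that bounds the truth-table advice by $\rho(m-1)$ bits. Apart from these two points, your hybrid structure and the accounting of $k$ track the paper's argument in \autoref{sec: proof of NW in APC1}.
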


In particular, the above theorem implies that if we take $f$ to be ``hard enough'' (i.e., $f$ is not in the range of $\Decomp^S$) then $f$ provides an $\eps$-additive approximation of $|S|$ feasibly, analogous to the classical theorem that we can use a hard truth table to derandomize $\BPP$. This is reminiscent of the ``compress-or-random'' technique in catalytic computing~\cite{Pyne24, CookLMP25, KMPS25, AgarwalaM25}: Any string $f$ is either ``random'', which means it can be used for derandomization, or ``compressible'' and admits a short description.

\subsection{Reductions to Lossy-Code}
\subsubsection{\texorpdfstring{\AMGMLC}{AMGM-LC}}    

We first show that $c$-\AMGMLC is in $\lossycode$.

\begin{theorem}
    For every constant $c > 1$, there is a decision tree reduction of $\polylog(N)$ query complexity from $c$-AMGM-LC to $\lossycode$.
\end{theorem}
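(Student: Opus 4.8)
The plan is to reduce $c$-$\AMGMLC$ to $\lossycode$ by formalizing the totality of $c$-$\AMGMLC$ using the "compress-or-random" machinery of $\APC_1$ developed in \autoref{sec: reconstructive PRG}. The key point is that the totality of $c$-$\AMGMLC$ is, at its heart, the AM-GM inequality $|C^{-1}(0)|\cdot|C^{-1}(1)| \le N^2$, which is an inequality between cardinalities of sets. By \autoref{thm: certified Nisan--Wigderson}, we can feasibly estimate the sizes $a := |C^{-1}(0)|$ and $b := |C^{-1}(1)| = 2N - a$ via a reconstructive PRG, with the property that any violation of the certifying injection-surjection pairs is witnessed by a short description of an otherwise-hard $f$. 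The strategy is then: pick a suitably hard $f$ (as a hole in a $\lossycode$ instance), use it to estimate $|H| = ab \le N^2$ with small additive error $\eps' M$; since $P = [cN^2]$ has size strictly more than $(1+\delta)|H|$ for an appropriate $\delta = \delta(c) > 0$, the map $x \mapsto G(F(x))$ composed with a (hardwired) injection from the $\eps'$-approximating set $[val]$ of $H$ into $P$ yields a $\lossycode_{N'\to(1+\delta)N'}$ instance, which reduces to $\lossycode$ by \autoref{lemma: robustness of lossy-code}.

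Concretely, here is the order of steps. First, I would set up parameters: write $M := 2^{\lceil\log(2N)^2\rceil}$ or similar so that $[H] \subseteq [M]$ via the natural indexing of $C^{-1}(0)\times C^{-1}(1)$, choose $\eps$ small enough (a constant depending on $c$) and invoke \autoref{thm: certified Nisan--Wigderson} with $\rho$ a large constant to obtain $\PRG$, $G_<, H_<, G_>, H_>, \Comp, \Decomp$, with $k$ still $O(\log^{O(1)} N)$—crucially, $K = 2^k = \poly(N)$, which is what keeps the final $\lossycode$ instance of polynomial size. Second, I would build the master $\lossycode$ instance whose "holes" are the pairs $(f, \text{auxiliary data})$: on one side we have $f \in [N_{\PRG}]$ together with the counting-structure of $P$, on the other the compressed witnesses produced by $\Comp$. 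The bookkeeping here is to combine (i) the $\Comp/\Decomp$ pair certifying that a good $f$ exists, and (ii) the composed retraction $[val]\times[D] \rightleftharpoons (H \cupdot [\eps M])\times[D] \xrightarrow{} P$ (using the given $F, G$ on the $H$-part and a fixed injection on the slack $[\eps M]$-part) — a violation of the latter is exactly a solution of types \ref{item: (Wrong Enc-Dec)} or \ref{item: (Invalid Hole)} of $c$-$\AMGMLC$. Third, I would verify that $|[val]\times[D]\cupdot(\text{slack})| < |P\times[D]|$ using the AM-GM bound $ab \le N^2$ together with the additive error bound and the factor-$c$ gap, so the composed instance genuinely has a collision; chase the $\lossycode$ solution back through the reduction to extract either an $\AMGMLC$ solution or a violation of an injection-surjection pair (which then feeds into $\Comp$). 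Finally, apply \autoref{lemma: robustness of lossy-code} to normalize the stretch.

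The main obstacle I anticipate is the careful composition of the two "levels" of injection-surjection pairs and ensuring that a $\lossycode$ solution at the top level is correctly decoded all the way down — in particular, that a failure in the $\Comp/\Decomp$ layer (meaning we guessed an $f$ that is \emph{not} hard) is handled gracefully: one must either show such an $f$ still provides a usable estimate, or wrap the whole construction so that the hardness of $f$ is itself enforced by a $\lossycode$ sub-instance, so that non-hardness of $f$ directly yields a $\lossycode$ solution rather than a spurious failure. This is precisely the step where the structure of $\APC_1$ proofs (replacing "there exists a hard string" by "either here is a hard string or here is a short description") does the work, and getting the decision-tree depth bound ($\polylog N$) right requires checking that all the oracle algorithms $G, H, \Comp, \Decomp, F, G$ are composed a constant number of times. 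A secondary subtlety is that the $\times[D]$ factors on both sides of the retractions must be handled uniformly (they are there only to amplify the PRG's fooling probability into an exact retraction), and one should double-check they cancel correctly when composing with $F,G$, which do not have this factor — the standard fix is to tensor the $F,G$ maps with the identity on $[D]$.
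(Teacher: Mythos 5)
Your proposal follows essentially the same route as the paper's proof: invoke the certified Nisan--Wigderson theorem, use a purported hard $f$ to estimate $|C^{-1}(0)|$ (and hence $|C^{-1}(1)| = 2N - v_0$), apply the numeric AM-GM inequality to the \emph{estimates} to bound $|H|$ by $(1+O(\eps))N^2$, compose with the given $(F,G)$, and fold the bad-$f$ case into the Lossy-Code instance itself via $\Comp/\Decomp$ — exactly the paper's construction of a single pair $(F^*, G^*)$ on $X = [N'] \times P \times [D]^2$ with a separate compression branch $Y_2$ on which $G^*\circ F^*$ is always the identity. Two points in your sketch should be tightened to match this: the composed pair must be oriented from $P$ into the approximating set (using $F$ with retraction $G$, since violations of $F\circ G = \mathrm{id}$ are \emph{not} AMGM-LC solutions), and $|H|$ must be bounded through the product of the two certified estimates of the color classes rather than by a direct PRG estimate of $|H|$ itself, for which no feasible certificate of the bound $\le N^2$ would be available.
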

\begin{proof}
    Let $\eps := (c-1)/3$, $\rho := \lceil\log N\rceil$, $d := O(\log\log N)$, $k := d + (\rho + 1)\log |V| + O(\log (|V|/\eps)) \le O(\log^2 N)$, and $n' := 10\lceil \log N\rceil^2$. Let $D := 2^d$, $K := 2^k$, and $N' := 2^{n'}$. Let $\PRG: [N']\times [D] \to V$ be defined in \autoref{thm: certified Nisan--Wigderson}.
    
    Let $f \in [N']$; for now, it would be convenient to assume that $f$ successfully provides $\eps$-approximations feasibly. Then we can estimate the size of $C^{-1}(0)$ as $v_0 := |C^{-1}(0) \cap \PRG_f|\cdot (2N) / D$; the estimation of $|C^{-1}(1)|$ is thus $2N - v_0 = |C^{-1}(1) \cap \PRG_f|\cdot (2N) / D$. Furthermore, we can obtain injection-surjection pairs (that depend on $f$)
    \begin{align}
        C^{-1}(0) \times [D] \rightleftharpoons &\, [v_0 + 2\eps N] \times [D]~\text{and}\label{eq: inj-surj pair for 0}\\
        C^{-1}(1) \times [D] \rightleftharpoons &\, [2N - v_0 + 2\eps N] \times [D].\label{eq: inj-surj pair for 1}
    \end{align}
    Since
    \[|C^{-1}(0)|\cdot |C^{-1}(1)| \le (v_0 + \eps \cdot 2N) \cdot (2N - v_0 + \eps \cdot 2N) \le (1+2\eps)N^2,\]
    we obtain an injection-surjection pair
    \begin{equation}\label{eq: inj-surj pair for 0 and 1}
        C^{-1}(0) \times C^{-1}(1) \times [D]^2 \rightleftharpoons [(1+2\eps)N^2]\times [D]^2.
    \end{equation}
    Combining this with $(F, G)$, the purported injection-surjection pair from $P$ to $H$, we obtain an injection-surjection pair
    \begin{equation}\label{eq: final inj surj pair}
        P\times [D]^2 \rightleftharpoons [(1+2\eps)N^2] \times [D]^2.
    \end{equation}
    Since $|P| = cN^2 > (1+\Omega(1)) \cdot (1+2\eps)N^2$, this would be a contradiction to the retraction weak pigeonhole principle. Hence, by solving $\lossycode$ we can find a witness that \eqref{eq: final inj surj pair} is not an injection-surjection pair, which is also an answer of the original $c$-\AMGMLC instance.

    We have shown that given some $f$ that successfully provides $\eps$-approximations, we can reduce $c$-\AMGMLC to $\lossycode$. But what if $f$ does not provide such approximations? In this case, $\Comp$ allows us to compress such $f$ into a $k$-bit string. In particular:
    \begin{itemize}
        \item For any input $(x, y)$ witnessing that \eqref{eq: inj-surj pair for 0} is not an injection-surjection pair, $\Comp^{C^{-1}(0)}(f, x, y)$ returns some $\tilde{f} \in [K]$ such that $\Decomp^{C^{-1}(0)}(\tilde{f}) = f$.
        \item For any input $(x, y)$ witnessing that \eqref{eq: inj-surj pair for 1} is not an injection-surjection pair, $\Comp^{C^{-1}(1)}(f, x, y)$ returns some $\tilde{f} \in [K]$ such that $\Decomp^{C^{-1}(1)}(\tilde{f}) = f$.
    \end{itemize}

    Now we are ready to formally present our reduction from AMGM-LC to $\lossycode$. Let $X := [N'] \times P \times [D]^2$, $Y_1 := [N'] \times [(1+2\eps)N^2] \times [D]^2$, $Y_2 := [2] \times [K] \times P \times [D]^2$, and $Y := Y_1 \cupdot Y_2$. Then
    \begin{align*}
        |Y| =&\, N' (1+2\eps)N^2 \cdot D^2 + 2K\cdot cN^2 \cdot D^2\\
        \le&\, N'(1+2.1\eps)N^2D^2\\
        \le&\, (1-\Omega(1))N'\cdot cN^2\cdot D^2 = (1-\Omega(1))|X|.
    \end{align*}

    We reduce the AMGM-LC instance to a $\lossycode$ instance that consists of a pair of functions $F^*: X\to Y$ and $G^*: Y\to X$.\begin{itemize}
        \item The function $F^*: X\to Y$ takes as inputs $f \in [N']$ and $(p, u_1, u_2) \in P\times [D]^2$. It considers the injection defined in \eqref{eq: final inj surj pair} that corresponds to $f$ and computes its output $(q, v_1, v_2)$ on input $(p, u_1, u_2)$. Then it checks whether $f$ is ``good'': if $f$ is ``good'' then it outputs $(f, q, v_1, v_2) \in Y_1$, otherwise it outputs some value in $Y_2$ containing a compression of $f$.
        
        More precisely, let $(a, b) := F(p)$, assume that $a \in C^{-1}(0)$ and $b\in C^{-1}(1)$. To evaluate \eqref{eq: final inj surj pair} on input $(p, u_1, u_2)$, we need to evaluate $(a, u_1)$ on \eqref{eq: inj-surj pair for 0} and $(b, u_2)$ on \eqref{eq: inj-surj pair for 1}. If $(a, u_1)$ witnesses that \eqref{eq: inj-surj pair for 0} is not an injection-surjection pair, then $f$ is not ``good'', and $f$ can be compressed into $\tilde{f} := \Comp^{C^{-1}(0)}(f, a, u_1) \in [K]$. We return $(0, \tilde{f}, p, u_1, u_2) \in Y_2$ in this case. Similarly, if $(b, u_2)$ witnesses \eqref{eq: inj-surj pair for 1} is not an injection-surjection pair, then we let $\tilde{f} := \Comp^{C^{-1}(1)}(f, b, u_2) \in [K]$ and return $(1, \tilde{f}, p, u_1, u_2) \in Y_2$. If none of the above happens, then $f$ is ``good'' and we return $(f, q, v_1, v_2) \in Y_1$.
        
        \item The function $G^*: Y\to X$ can be decomposed into functions $G_1^*: Y_1 \to X$ and $G_2^*: Y_2 \to X$. In either case, we are given a ``compressed'' form of some $(f, p, u_1, u_2) \in X$ and need to recover $(f, p, u_1, u_2)$.

        For $G_1^*: Y_1 \to X$, the ``compressed form'' is $(f, q, v_1, v_2) \in Y_1$. This corresponds to the case that $f$ is ``good''. We consider the surjection defined in \eqref{eq: final inj surj pair} corresponding to $f$ and compute its output $(p, u_1, u_2)$ given input $(q, v_1, v_2)$. Then we output $(f, p, u_1, u_2) \in X$.

        For $G_2^*: Y_2 \to X$, the ``compressed form'' is $(b, \tilde{f}, p, u_1, u_2) \in Y_2$. This corresponds to the case that $f$ is ``not good'', and we can directly compute $f := \Decomp^{C^{-1}(b)}(\tilde{f})$. Then we output $(f, p, u_1, u_2) \in X$.
    \end{itemize}

    To see the correctness of this reduction, consider any input $(f, p, u_1, u_2)$ witnessing that $X\xrightleftharpoons[G^*]{F^*}Y$ is not an injection-surjection pair. There are two cases:
    \begin{itemize}
        \item Suppose that $F^*(f, p, u_1, u_2) = (f, q, v_1, v_2) \in Y_1$. Letting $(a, b) := H(p)$, this means that $(a, u_1)$ maps to itself via \eqref{eq: inj-surj pair for 0}, and $(b, u_2)$ maps to itself via \eqref{eq: inj-surj pair for 1}. In other words, $(a, b, u_1, u_2)$ maps to itself via \eqref{eq: inj-surj pair for 0 and 1}. Hence, it has to be the case that $P\xrightleftharpoons[G]{F} C^{-1}(0) \times C^{-1}(1)$ does not map $p$ to itself, which means that $p$ is a solution to AMGM-LC.
        \item The other case is that $F^*(f, p, u_1, u_2) = (b, \tilde{f}, p, u_1, u_2) \in Y_2$. We claim that this case would not have happened. Indeed, in this case, $G_2^*$ always maps $(b, \tilde{f}, p, u_1, u_2)$ back to $(f, p, u_1, u_2)$.
    \end{itemize}
    
    Finally, it is easy to see that both $F^*$ and $G^*$ run in deterministic $\polylog(N)$ time.
\end{proof}

\subsubsection{\texorpdfstring{Dual-\AMGMLC}{Dual-AMGM-LC}}\label{sec: dual AM GM LC}

Now we define the \emph{dual} of the problem $c$-\AMGMLC, which expresses the AM-GM inequality in a different way:
\begin{description}
    \item[$c$-Dual-\AMGMLC.] 
    Let $V = [2N], H = [c\cdot N^2]$. The input consists of a coloring function $C : V \rightarrow \{0,1\}$ and two mappings $F : H \rightarrow V \times V$, $G: V \times V \rightarrow H$. Let
    \[P := \{(u, v) : C(u) = C(v)\} = (C^{-1}(0) \times C^{-1}(0)) \cup (C^{-1}(1) \times C^{-1}(1)).\]
    The goal is to find a pigeon $x\in P$ such that $F(G(x)) \ne x$.
\end{description}

\begin{theorem}
    For every constant $0 < c < 2$, there is a decision tree reduction of $\polylog(N)$ query complexity from $c$-Dual-\AMGMLC to $\lossycode$.
\end{theorem}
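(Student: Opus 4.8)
The plan is to dualize the reduction from $c$-\AMGMLC to \lossycode of the previous subsection, swapping the roles of the monochromatic and bichromatic parts of $V\times V$. Write $a:=|C^{-1}(0)|$ and $b:=|C^{-1}(1)|=2N-a$, and use the exact, syntactically-computable partition
\[
V\times V \;=\; P \;\cupdot\; \bigl(C^{-1}(0)\times C^{-1}(1)\bigr) \;\cupdot\; \bigl(C^{-1}(1)\times C^{-1}(0)\bigr),
\]
so that $4N^2=|P|+2ab$. By the AM--GM inequality $ab\le N^2$, hence $|P|\ge 2N^2>cN^2=|H|$ since $c<2$, and combined with the purported injection-surjection pair $P\rightleftharpoons H$ with injection $G$ and surjection $F$ this contradicts the retraction weak pigeonhole principle, i.e.\ the totality of \lossycode. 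The task is only to make this contradiction witness feasibly.

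Fix $\varepsilon:=(2-c)/100$, so that $\delta:=8\varepsilon<1-c/2$, and instantiate the Nisan--Wigderson generator $\PRG:[N']\times[D]\to V$ of \autoref{thm: certified Nisan--Wigderson} with $n'=10\lceil\log N\rceil^2$, $d=O(\log\log N)$ and $\rho=\lceil\log N\rceil$, exactly as in the previous subsection; this makes the compression bound satisfy $K=2^{O(\log^2 N)}\ll N'$. For a candidate seed $f\in[N']$ the generator provides the injection-surjection pairs $C^{-1}(0)\times[D]\rightleftharpoons[v_0+2\varepsilon N]\times[D]$ and $C^{-1}(1)\times[D]\rightleftharpoons[2N-v_0+2\varepsilon N]\times[D]$ (the pairs \eqref{eq: inj-surj pair for 0} and \eqref{eq: inj-surj pair for 1}, with $v_0$ the usual estimate of $|C^{-1}(0)|$), and via $\Comp,\Decomp$ it lets us compress into $[K]$ every seed for which one of these pairs fails. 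Multiplying the two pairs and using the elementary bound $(v_0+2\varepsilon N)(2N-v_0+2\varepsilon N)\le N^2(1+2\varepsilon)^2\le(1+\delta)N^2$ gives
\[
\bigl(C^{-1}(0)\times C^{-1}(1)\bigr)\times[D]^2\;\rightleftharpoons\;[(1+\delta)N^2]\times[D]^2
\]
and symmetrically $\bigl(C^{-1}(1)\times C^{-1}(0)\bigr)\times[D]^2\rightleftharpoons[(1+\delta)N^2]\times[D]^2$. Taking the disjoint union of these two pairs with the pair on $P\times[D]^2$ obtained by composing the identity with $(G,F)$ --- where the surjection sends $(h,u)\mapsto(F(h),u)$, landing in all of $(V\times V)\times[D]^2$ --- and using the partition above yields an injection-surjection pair
\[
(V\times V)\times[D]^2\;\rightleftharpoons\;\bigl(H\;\cupdot\;[(1+\delta)N^2]\;\cupdot\;[(1+\delta)N^2]\bigr)\times[D]^2,
\]
whose left side has size $4N^2D^2$ and whose right side has size $(c+2+2\delta)N^2D^2<4N^2D^2$.

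To turn this into a genuine \lossycode instance, follow the packaging of the previous subsection verbatim: put $X:=[N']\times(V\times V)\times[D]^2$, $Y_1:=[N']\times\bigl(H\cupdot[(1+\delta)N^2]\cupdot[(1+\delta)N^2]\bigr)\times[D]^2$, $Y_2:=[2]\times[K]\times(V\times V)\times[D]^2$ (the $[2]$ recording which color class's estimate produced the compression), and $Y:=Y_1\cupdot Y_2$; since $K\ll N'$ and $c+2+2\delta<4$ we get $|Y|\le(1-\Omega(1))|X|$. On input $(f,z)$, $F^*$ runs the injection above with seed $f$; if it ever detects a witness that one of the generator-derived pairs fails it instead compresses $f$ with $\Comp$ and outputs into $Y_2$, and otherwise it outputs into $Y_1$. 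The map $G^*$ applies the corresponding surjection on $Y_1$ and $\Decomp$ on $Y_2$. A \lossycode solution $(f,z)$ with $G^*(F^*(f,z))\ne(f,z)$ cannot map into $Y_2$, since $\Decomp$ inverts $\Comp$; hence $F^*(f,z)\in Y_1$, where the generator-derived pairs all fixed $z$, so the failure must come from the $P$-block, meaning the first coordinate of $z$ is a pigeon $p\in P$ with $F(G(p))\ne p$ --- a solution of the original $c$-Dual-\AMGMLC instance. Finally the instance has stretch $|X|/|Y|=1+\Omega(1)$, so \autoref{lemma: robustness of lossy-code} reduces it to the standard \lossycode, and every map is computable in $\polylog(N)$ time.

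As in the previous subsection, the routine-but-delicate part is arranging the $\Comp/\Decomp$ bookkeeping so that solutions decode correctly; the one genuinely new point is that $F$ maps into $V\times V$ rather than $P$, which would spoil the cleanliness of the composed pair if we demanded that the surjection return to $P$. This is harmless: the overall surjection has codomain $(V\times V)\times[D]^2$, and whenever the composed pair fails on a $P$-block point $p$ we have $F(G(p))\ne p$ regardless of whether $F(G(p))$ is monochromatic, which is precisely the solution condition of $c$-Dual-\AMGMLC. The only quantitative check is that the AM--GM slack fits, i.e.\ that $8\varepsilon<1-c/2$, which holds for every $c<2$ by our choice of $\varepsilon$.
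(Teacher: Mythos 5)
Your reduction is correct, but it takes a genuinely different route from the paper's. The paper handles the dual problem by working \emph{inside} $P$: it uses the PRG pairs in the direction $[v_b]\times[D]\rightleftharpoons (C^{-1}(b)\cupdot[2\eps N])\times[D]$ (pair \eqref{eq: PRG inj surj pair I} of \autoref{thm: certified Nisan--Wigderson}), multiplies them to inject $[v_0^2+v_1^2]\times[D]^2$ into $(P\cupdot[8\eps(1+\eps)N^2])\times[D]^2$, composes with the input pair $P\rightleftharpoons H$, and exploits the arithmetic fact $v_0^2+v_1^2\ge 2N^2$ (with $v_0+v_1=2N$) to get a Lossy-Code instance on $[2N^2]\times[D]^2\rightleftharpoons[c'N^2]\times[D]^2$; a failure then either decodes through $F_f$ to a pigeon $p\in P$ with $F(G(p))\ne p$ or compresses $f$. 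You instead argue by \emph{complementary counting}: partition $V\times V$ into $P$ and the two bichromatic blocks, send $P$ through $G$ into $H$, send the bichromatic blocks through the opposite-direction pairs $C^{-1}(b)\times[D]\rightleftharpoons[v_b+2\eps N]\times[D]$ (pair \eqref{eq: PRG inj surj pair II}, the same direction used in the primal \AMGMLC proof) into overflow blocks of size $(1+\delta)N^2$, and compare $4N^2$ against $c+2+2\delta<4$. Both routes use the same certified-NW, compress-or-use machinery and the same $X,Y_1,Y_2$ packaging; yours has the mild advantage of literally dualizing the primal reduction (so the only new verification is your observation that the surjection may land in $V\times V\setminus P$, which is harmless since the only solution type is $x\in P$ with $F(G(x))\ne x$), while the paper's avoids any case split on the color of the input pair by making the Lossy-Code domain purely numeric. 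Two bookkeeping points to make explicit when writing this up: the two overflow copies of $[(1+2\eps)^2N^2]$ must be tagged (together with the $[2]$ tag in $Y_2$) so that $G^*$ knows which ordered bichromatic block and which oracle $C^{-1}(b)$ to use, and the embedding of $[v_0+2\eps N]\times[v_1+2\eps N]$ into $[(1+\delta)N^2]$ requires computing $v_0$, which costs $D=\polylog(N)$ queries per evaluation of $F^*$ or $G^*$ --- both fine, and your quantitative check $2\delta<2-c$ is the right one.
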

\begin{proof}
    Let $\eps := (2-c)/32$ and $c' := c + 8\eps(1+\eps)$, then $c' < 2$. Let $\rho := \lceil\log N\rceil$, $d := O(\log \log N)$, $n' := 10\lceil \log N\rceil^2$, and $k := d + (\rho + 1)(\log|V|-1) + O(\log (|V|/\eps)) \le 2\lceil\log N\rceil^2$. Let $D := 2^d = \polylog(N)$, $K := 2^k$, $N' := 2^{n'}$. Consider the generator $\PRG: [N']\times D \to V$ in \autoref{thm: certified Nisan--Wigderson}.

    Let $f \in [N']$. Let $v_0 := |C^{-1}(0)\cap \PRG_f| \cdot (2N/D)$ and $v_1 := 2N - v_0$ be the estimations of $|C^{-1}(0)|$ and $|C^{-1}(1)|$ provided by $\PRG_f$, respectively. Then we obtain (purported) injection-surjection pairs
    \begin{align*}
        [v_0]\times [D] &\rightleftharpoons (C^{-1}(0) \cupdot [2\eps N])\times [D]\text{ and}\\
        [v_1]\times [D] &\rightleftharpoons (C^{-1}(1) \cupdot [2\eps N])\times [D].
    \end{align*}
    This implies an injection-surjection pair
    \begin{align*}
        [v_0^2 + v_1^2]\times [D]^2\xrightleftharpoons[G_f]{F_f}&~\mleft((C^{-1}(0)\cupdot [2\eps N])^2 \cupdot (C^{-1}(1)\cupdot [2\eps N])^2\mright) \times [D]^2\\
        =&~ \mleft(P \cupdot (C^{-1}(0)\cupdot C^{-1}(1))\times [4\eps N])\cupdot [2(2\eps N)^2]\mright)\times [D]^2\\
        =&~ (P\cupdot [8\eps(1+\eps) N^2])\times [D]^2.
    \end{align*}
    Composing this with our input $P\xrightleftharpoons[G]{F}H$ and noting that $H = [cN^2]$ and $v_0^2 + v_1^2 \ge 2N^2$, we obtain
    \[[2N^2]\times [D]^2 \xrightleftharpoons[G_f\circ G]{F\circ F_f}[(c+8\eps(1+\eps))N^2]\times [D]^2 = [c'N^2]\times [D]^2.\]
    (Note that we are abusing notation here in ``$G_f\circ G$'' and ``$F\circ F_f$''; for example, we are extending the domain of $G$ to $H\cupdot [8\eps(1+\eps)N^2]$ so that $G$ is the identity map on $[8\eps(1+\eps)N^2]$.)
    
    Since $(c+8\eps(1+\eps)) < 2 - \Omega(1)$, $(F\circ F_f, G\circ G_f)$ is a valid $\lossycode$ instance. Solving this $\lossycode$ instance gives us a witness that $(F\circ F_f, G\circ G_f)$ is not an injection-surjection pair. This implies either a witness that $(F, G)$ is not an injection-surjection pair (which is what we need), or a witness that $(F_f, G_f)$ is not an injection-surjection pair (which will allow us to compress $f$).

    Now we formally define the complete reduction from $c$-Dual-\AMGMLC to $\lossycode$. Let $X := [N'] \times [2N^2]\times [D]^2$, $Y_1 := [N']\times [c'N^2]\times [D]^2$, $Y_2 := [K+O(1)]\times [2N^2]\times [D]^2$, and $Y := Y_1 \cupdot Y_2$. Note that
    \begin{align*}
        |Y| =&\, |Y_1| + |Y_2| \le (c'+o(1))N' N^2D^2 \le (2-\Omega(1))N' N^2D^2 \le (1-\Omega(1))|X|.
    \end{align*}
    
    We reduce the $c$-Dual-\AMGMLC instance $(C, F, G)$ to the $\lossycode$ instance $X\xrightleftharpoons[G^*]{F^*}Y$, defined as follows.
    \begin{itemize}
        \item The function $F^*: X\to Y$ takes as inputs $f\in [N']$ and $u\in [2N^2]\times [D]^2$. It tests if $u$ is a witness that $(F_f, G_f)$ is not an injection-surjection pair. If this is the case, then it uses $\Comp$ to compress $f$ into $K + O(1)$ bits as $\tilde{f}$, and returns $(\tilde{f}, u) \in Y_2$. Otherwise it computes $v := F(F_f(u))\in [c'N^2]\times [D]^2$ and returns $(f, v)\in Y_1$.
        \item The function $G^*: Y\to X$ takes either $(f, v)\in Y_1$ or $(\tilde{f}, u)\in Y_2$ as inputs. If the input is $(f, v)\in Y_1$ where $f\in [N']$ and $v\in [c'N^2]\times [D]^2$, then it computes $u := G(G_f(v)) \in [2N^2]\times [D]^2$ and returns $(f, u)\in X$. If the input is $(\tilde{f}, u)\in Y_2$ where $\tilde{f}\in [K + O(1)]$ and $u\in [2N^2]\times [D]^2$, then it decompresses $f\in [N']$ from $\tilde{f}$ and returns $(f, u) \in X$.
    \end{itemize}
    Given any $(f, u)\in X$ witnessing that $(F^*, G^*)$ is not an injection-surjection pair, we have that $F^*(f, u) \in Y_1$ and we can find a witness that $(F, G)$ is not an injection-surjection pair in deterministic $\polylog(N)$ time. Finally, $F^*, G^*$ can be computed in deterministic $\polylog(N)$ time as well.
\end{proof}

\subsubsection{The Inclusion-Exclusion Principle}\label{sec: incl excl}

Finally, as a proof-of-concept, we consider the following more complicated total search problem capturing the \emph{inclusion-exclusion principle} and show that it is in $\Lossy$. Given that \Jerabek already proved this principle in $\APC_1$~\cite[Proposition 2.19]{Jerabek07}, it should come as no surprise that this problem is in $\Lossy$. Needless to say, our proof is just a translation of \Jerabek's proof into the language of black-box total search problems. We include this example as an additional demonstration of how to reduce more complicated problems to $\lossycode$ via \autoref{thm: certified Nisan--Wigderson}.

Consider the following inequality expressing the inclusion-exclusion principle: Let $S_1, S_2, \dots, S_\ell \subseteq [N]$ be sets, $a_i := |S_i|$, $a_{i,j} := |S_i \cap S_j|$, and
\begin{equation}\label{eq: def of tilde a}
	\tilde{a} := \sum_{i=1}^\ell a_i - \sum_{1\le i < j \le \ell} a_{i,j},
\end{equation}
then $\mleft|\bigcup_{i\in [\ell]}S_i\mright| \ge \tilde{a}$.

Now we formalize the above inequality as a $\TFZPP$ problem $\textsc{Inclusion-Exclusion}$. 

\begin{description}
    \item[$\textsc{Inclusion-Exclusion}$.] The input consists of:\begin{itemize}
        \item Parameters $N$, $\ell \le \polylog(N)$, and $\eps > 1/\polylog(N)$.
        \item A table $T \subseteq [\ell]\times [N]$ where $T_{i,j} = 1$ if and only if $j \in S_i$.
        \item Numbers $0\le a_i \le N$ for each $i\in [\ell]$, and $0\le a_{i,j} \le N$ for each $1\le i < j\le \ell$, which are purported estimates for $|S_i|$ and $|S_i \cap S_j|$ respectively. Let $\tilde{a}$ be defined as in \eqref{eq: def of tilde a} and assume $\tilde{a} \ge \eps N$.
        \item Injection-surjection pairs $f_i: [a_i] \to [N]$ and $g_i: [N] \to [a_i]$.
        \item Injection-surjection pairs $f_{i,j}: [a_{i,j}] \to [N]$ and $g_{i,j}: [N] \to [a_{i,j}]$.
        \item Finally, an injection-surjection pair $\tilde{f}: [\tilde{a} - \eps N] \to [N]$ and $\tilde{g}: [N] \to [\tilde{a} - \eps N]$.
    \end{itemize}
    The goal is to find any solution of the following types:
    \begin{description}
        \item [s1.] some $x\in [a_i]$ such that $T_{i, f_i(x)} = 0$ or $g_i(f_i(x)) \ne x$;\mylabel{Violation of $|S_i|\ge a_i$}{item: SOL1}

        \item [s2.] some $x\in [N]$ such that ($T_{i, x} = T_{j, x} = 1$) but $f_{i,j}(g_{i,j}(x)) \ne x$; or
        
        \mylabel{Violation of $|S_i\cap S_j| \le a_{i,j}$}{item: SOL2}

        \item [s3.] some $x\in [N]$ such that $\tilde{f}(\tilde{g}(x)) \ne x$, and there is some $j\in [\ell]$ such that $T_{j, x} = 1$.
            
        \mylabel{Violation of $\mleft|\bigcup_{i\in [\ell]}S_i\mright| \le \tilde{a} - \eps N$}{item: SOL3}
    \end{description}
\end{description}

    \begin{theorem}
        $\textsc{Inclusion-Exclusion}$ reduces to $\lossycode$.
    \end{theorem}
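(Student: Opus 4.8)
The plan is to follow the template of the proof that $c$-$\AMGMLC$ reduces to $\lossycode$, replacing the single application of AM--GM by an \emph{inclusion--exclusion} estimate that is carried out over the small explicit set $\PRG_f$. The one genuinely new point is that, unlike AM--GM, the Bonferroni inequality mixes additions and subtractions and does not come with an obvious constructive injection--surjection pair; I will sidestep this by relocating its combinatorial content onto $\PRG_f$, where it is a certificate-free arithmetic fact, and letting the certified Nisan--Wigderson generator (\autoref{thm: certified Nisan--Wigderson}) handle the transfer between $|S|$ and $|S\cap\PRG_f|$. \emph{Setup.} We may assume $\ell\ge 2$ (for $\ell=1$ the input pairs $[a_1]\rightleftharpoons S_1$ and $S_1\rightleftharpoons[\tilde a-\varepsilon N]$ already compose into an impossible pair whose witnesses are solutions of type s1 or s3, which is already a $\lossycode$ instance). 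Put $M:=N$, $\varepsilon':=\varepsilon/(2\ell^2)$ (so $\varepsilon'>1/\polylog(N)$), $\rho:=\lceil\log N\rceil$, let $d=\polylog(N)$ and $k=O(\log^2 N)$ be the parameters produced by \autoref{thm: certified Nisan--Wigderson} on inputs $\log N'$, $\log N$, $\varepsilon'$, $\rho$, and take $n':=C\log^2 N$ for a constant $C$ large enough that $K:=2^k$ satisfies $K=N'\cdot 2^{-\Omega(\log^2 N)}$; set $D:=2^d$. This gives $\PRG\colon[N']\times[D]\to[N]$ together with the $\polylog(N)$-time oracle algorithms $G_<,H_<,G_>,H_>,\Comp,\Decomp$.

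\emph{The chain for a fixed $f$.} Fix $f\in[N']$ and let $\PRG_f$ be the size-$D$ multiset $\{\PRG(f,\sigma):\sigma\in[D]\}$. Set $v_i:=|S_i\cap\PRG_f|\cdot N/D$, $v_{i,j}:=|S_i\cap S_j\cap\PRG_f|\cdot N/D$ and $v_\cup:=|(\bigcup_i S_i)\cap\PRG_f|\cdot N/D$; these numbers are computable exactly in $\polylog(N)$ time from $f$. Over $\PRG_f$ ordinary Bonferroni holds: for a seed $\sigma$ with $k_\sigma:=|\{i:\PRG(f,\sigma)\in S_i\}|$ we have $k_\sigma-\binom{k_\sigma}{2}\le 1$, with value $0$ when $k_\sigma=0$, so summing over seeds and rescaling gives the (always valid, certificate-free) inequality
\[ \textstyle\sum_i v_i-\sum_{i<j}v_{i,j}\;\le\;v_\cup . \]
For each $i$, \autoref{thm: certified Nisan--Wigderson} furnishes the pair $S_i\times[D]\rightleftharpoons([v_i]\cupdot[\varepsilon' N])\times[D]$; composing with the input pair $[a_i]\rightleftharpoons S_i$ (a failure of which is a type-s1 solution) and disjoint-unioning over $i$ yields $\bigsqcup_i([a_i]\times[D])\rightleftharpoons([\sum_i v_i]\cupdot[\ell\varepsilon' N])\times[D]$. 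For each $i<j$ the generator furnishes $[v_{i,j}]\times[D]\rightleftharpoons((S_i\cap S_j)\cupdot[\varepsilon' N])\times[D]$; composing with $(S_i\cap S_j)\rightleftharpoons[a_{i,j}]$ (failure $=$ type s2) and disjoint-unioning gives $[\sum_{i<j}v_{i,j}]\times[D]\rightleftharpoons([\sum_{i<j}a_{i,j}]\cupdot[\binom{\ell}{2}\varepsilon' N])\times[D]$. For the union, the generator furnishes $[v_\cup]\times[D]\rightleftharpoons((\bigcup_i S_i)\cupdot[\varepsilon' N])\times[D]$; composing with $(\bigcup_i S_i)\rightleftharpoons[\tilde a-\varepsilon N]$ (failure $=$ type s3) gives $[v_\cup]\times[D]\rightleftharpoons([\tilde a-\varepsilon N]\cupdot[\varepsilon' N])\times[D]$. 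Finally the displayed inequality gives the trivial pair $[\sum_i v_i]\rightleftharpoons[v_\cup]\cupdot[\sum_{i<j}v_{i,j}]$. Splicing these together produces
\[ \textstyle\bigsqcup_i([a_i]\times[D])\;\rightleftharpoons\;\big([\tilde a-\varepsilon N]\cupdot[\varepsilon' N]\cupdot[\sum_{i<j}a_{i,j}]\cupdot[\binom{\ell}{2}\varepsilon' N]\cupdot[\ell\varepsilon' N]\big)\times[D], \]
and since $\sum_i a_i=\tilde a+\sum_{i<j}a_{i,j}$ while the right-hand side has size $(\tilde a+\sum_{i<j}a_{i,j}-\varepsilon N+(1+\ell+\binom{\ell}{2})\varepsilon' N)D$, using $1+\ell+\binom{\ell}{2}\le\ell^2$ for $\ell\ge 2$ the left side is larger by at least $\tfrac12\varepsilon N D$; the pair is therefore impossible.

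\emph{Packaging.} Now I copy the bookkeeping of the $c$-$\AMGMLC$ proof to incorporate $f$. Let $X:=[N']\times\bigsqcup_i([a_i]\times[D])$, let $Y_1:=[N']\times(\text{the right-hand set above})$, and let $Y_2$ be the disjoint union over the $\ell+\binom{\ell}{2}+1$ generator applications of $[K]\times\bigsqcup_i([a_i]\times[D])$, so $Y:=Y_1\cupdot Y_2$. Because $\tilde a\ge\varepsilon N$ and each $a_i,a_{i,j}\le N$ we get $|Y_1|\le(1-\varepsilon/(3\ell))|X|$, and the choice of $n'$ makes $|Y_2|/|X|\le O(\ell^2)K/N'=2^{-\Omega(\log^2 N)}$, so $|Y|\le(1-1/\polylog(N))|X|$. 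The $\lossycode$ instance is $F^*\colon X\to Y$, $G^*\colon Y\to X$: $F^*$ evaluates the composed pair above, but whenever evaluating a generator sub-pair yields a witness that it is not an injection--surjection pair, $F^*$ instead outputs the $\Comp$-compression of $f$ into $[K]$ in the appropriate $Y_2$-block; $G^*$ inverts $Y_1$ by running the composed pair backwards and inverts $Y_2$ via $\Decomp$. Given $x\in X$ with $G^*(F^*(x))\ne x$, the image must lie in $Y_1$ (the $Y_2$-blocks are inverted by construction), and tracing the witness back through the composition yields a failure of one of the three input pairs, i.e.\ a solution of type s1, s2, or s3 of the original instance. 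Finally $|X|/|Y|=1+1/\polylog(N)$, so by \autoref{lemma: robustness of lossy-code} this instance reduces to the default $\lossycode_{N\to 2N}$ with $\polylog(N)$ extra query complexity; as $F^*,G^*$ run in $\polylog(N)$ time and make $\polylog(N)$ queries, $\textsc{Inclusion-Exclusion}\leq_{dt}\lossycode$.

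\emph{Main obstacle.} The crux is the inclusion--exclusion step itself: because it involves subtraction there is no direct constructive injection--surjection pair for $\sum_i|S_i|-\sum_{i<j}|S_i\cap S_j|\le|\bigcup_i S_i|$, so the exact combinatorics must be pushed onto the $\polylog(N)$-size set $\PRG_f$ (where it is certificate-free) and glued to the approximate, but feasibly witnessed, size estimates from the generator. The surrounding accounting --- keeping the number of generator applications $O(\ell^2)=\polylog(N)$ so that the compression term $|Y_2|$ stays negligible, and arranging the $\varepsilon'$-slack so that the final multiplicative gap is $1/\polylog(N)$ and \autoref{lemma: robustness of lossy-code} applies --- is routine and parallels the $c$-$\AMGMLC$ argument.
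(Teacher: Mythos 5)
Your proposal is correct and rests on exactly the same pillars as the paper's proof: the certified Nisan--Wigderson generator of \autoref{thm: certified Nisan--Wigderson} to obtain feasibly witnessed estimates $v_i, v_{i,j}, v_\cup$ of $|S_i|$, $|S_i\cap S_j|$, $|\bigcup_i S_i|$; the observation that the Bonferroni inequality holds \emph{unconditionally} on the $\polylog(N)$-size sample $\mathsf{PRG}_f$ (so it needs no certificate); and the compress-or-witness packaging in which a failed generator pair lets you shrink $f$ into $[K]$ while a failed input pair yields a solution of type s1, s2, or s3. The one structural difference is in how the final instance is assembled. The paper argues by a three-way case analysis: for each $f$, the sample inequality forces at least one of the numeric violations $v_i+2\varepsilon'N<a_i$, $v_{i,j}>a_{i,j}+2\varepsilon'N$, or $v_\cup>\tilde a-\varepsilon N+2\varepsilon'N$, and the Lossy-Code instance is built (over the fixed domain $[N']\times[2N]\times[D]$) from the single violated inequality. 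You instead splice \emph{all} $O(\ell^2)$ certified estimates and all input pairs into one long injection--surjection chain from $\bigsqcup_i([a_i]\times[D])$, inserting the sample inequality as a free interval embedding $[\sum_i v_i]\hookrightarrow[v_\cup]\mathbin{\dot\cup}[\sum_{i<j}v_{i,j}]$, and win on a global size count; this removes the case analysis entirely and is arguably a more direct formalization of the counting argument, at the price of a longer chain (hence $O(\ell^2)$ compression blocks, which your $Y_2$ accounting handles) and of a domain whose size depends on the input values $a_i$ rather than only on the parameters $N,\ell,\varepsilon$ --- for a decision-tree reduction the instance shape must be input-independent, so you should pad each $[a_i]$ up to $[N]$ with identity-mapped elements on both sides (the additive gap $\varepsilon ND/2$ is unaffected and the relative gap stays $\Omega(\varepsilon/\ell)$), after which \autoref{lemma: robustness of lossy-code} applies exactly as you say. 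Your separate treatment of $\ell=1$ and the slack bookkeeping with $\varepsilon'=\varepsilon/(2\ell^2)$ are fine.
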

    \begin{proof}
        Let $\eps' := \eps / (10\ell^2)$, $\rho := \lceil \log N\rceil$, $d := O(\log\log N)$, $k := d + (\rho + 1) \log N + O(\log (N/\eps)) \le 3\lceil \log N\rceil^2$, and $n' := \log(100\ell^4\eps^{-1}) + k \le 4\lceil\log N\rceil^2$. Let $D := 2^d$, $K := 2^k$, and $N' := 2^{n'}$. Note that $d' \ge O(\frac{\log^2 (n' \cdot\log N / \eps')}{\log\rho})$, hence it is valid to apply \autoref{thm: certified Nisan--Wigderson} to obtain $\PRG: [N']\times [D] \to [N]$. Recall that we define $S_i := \{j \in [N]: T_{i, j} = 1\}$. We also define $S := \bigcup_{i\in [\ell]} S_i$.

        Let $f\in [N']$. We use $f$ to estimate each $|S_i|$, $|S_i\cap S_j|$, and $|S|$ within additive error $\eps'\cdot N$. Let $v_i := |S_i\cap \PRG_f|\cdot (N/D)$, $v_{i,j} := |S_i\cap S_j \cap \PRG_f|\cdot (N/D)$, and $v := |S\cap \PRG_f|\cdot (N/D)$, then we obtain (purported) injection-surjection pairs
        \begin{align}
            S_i\times [D] &\,\xrightleftharpoons[G_i]{F_i} ([v_i]\cupdot [\eps' N])\times [D],\label{eq: estimate Si}\\
            [v_{i,j}]\times [D] &\,\xrightleftharpoons[F_{i,j}]{G_{i,j}} ((S_i \cap S_j)\cupdot [\eps' N])\times [D],\label{eq: estimate Sij}\\
			[v]\times [D] &\,\xrightleftharpoons[F]{G} (S\cupdot [\eps' N])\times [D].\label{eq: estimate S}
        \end{align}

        If $v_i + 2\eps' N < a_i$, then we can compose \eqref{eq: estimate Si} with the injection-surjection pair $(f_i, g_i)$ to obtain the $\lossycode$ instance
        \begin{equation}
			[a_i]\times [D] \xrightleftharpoons[{g_i\times [D]}]{f_i\times [D]}S_i\times [D] \xrightleftharpoons[G_i]{F_i} [v_i + \eps' N] \times [D].
			\label{eq: lossy-code instance I}
		\end{equation}
        Here, we use the notation $f_i\times [D]$ to denote the function mapping $(x, y)$ to $(f_i(x), y)$, where $x\in [a_i]$ and $y\in [D]$; $g_i\times [D]$ is defined similarly. Note that since $\frac{(v_i + \eps' N)\cdot D}{a_i\cdot D} \le 1-\frac{\eps' N}{a_i} \le \eps'$, \eqref{eq: lossy-code instance I} is a valid $\lossycode$ instance with good stretch; in particular, it reduces to standard $\lossycode$ (of stretch $[2t]\rightleftharpoons [t]$) in decision tree depth $O(1/\eps')$. Solving the instance \eqref{eq: lossy-code instance I} gives us an element $(x, y) \in [a_i]\times [D]$ such that either (1) $x$ is a \ref{item: SOL1} or (2) $(f_i(x), y) \in S_i\times [D]$ witnesses that \eqref{eq: estimate Si} is not a valid injection-surjection pair.

		Similarly, if $v_{i,j} > a_{i,j} + 2\eps' N$, then we compose \eqref{eq: estimate Sij} with the injection-surjection pair $(f_{i, j}, g_{i, j})$ to obtain the $\lossycode$ instance 
		\begin{equation}
			[v_{i,j}]\times [D]\xrightleftharpoons[F_{i,j}]{G_{i,j}} ((S_i\cap S_j)\cupdot[\eps' N]) \times [D] \xrightleftharpoons[{f_{i,j}\times [D]}]{g_{i,j}\times [D]}[a_{i,j} + \eps' N]\times [D].
			\label{eq: lossy-code instance II}
		\end{equation}
		Again, we abuse notation to extend $f_{i,j}$ and $g_{i,j}$ into functions $f_{i,j}:[a_{i,j}]\cupdot[\eps' N]\to [N]\cupdot[\eps' N]$ and $g_{i,j}:[N]\cupdot[\eps' N] \to [a_{i,j}]\cupdot[\eps' N]$ such that they are the identity function over $[\eps' N]$. \eqref{eq: lossy-code instance II} is a valid $\lossycode$ instance that reduces to the standard $\lossycode$ in decision tree depth $O(1/\eps')$. Solving the instance \eqref{eq: lossy-code instance II} gives us an element $(x, y) \in [v_{i,j}]\times [D]$ such that either (1) $x' := G_{i,j}(x)$ is a \ref{item: SOL2} or (2) $(x, y)$ witnesses that \eqref{eq: estimate Sij} is not a valid injection-surjection pair.

		Finally, if $v > \tilde{a} -\eps N + 2\eps' N$, then we compose \eqref{eq: estimate S} with the injection-surjection pair $(\tilde{f}, \tilde{g})$ to obtain the $\lossycode$ instance
		\begin{equation}
			[v]\times [D] \xrightleftharpoons[F]{G}(S\cupdot[\eps' N])\times [D] \xrightleftharpoons[{\tilde{f}\times [D]}]{\tilde{g}\times [D]}[\tilde{a} - \eps N + \eps' N]\times [D].
			\label{eq: lossy-code instance III}
		\end{equation}
		Solving the instance \eqref{eq: lossy-code instance III} gives us an element $(x, y)\in [v]\times [D]$ such that either (1) $x' := G(x)$ is a \ref{item: SOL3} or (2) $(x, y)$ witnesses that \eqref{eq: estimate S} is not a valid injection-surjection pair.

		We now observe that one of the above three cases must happen. This is exactly due to the inclusion-exclusion principle itself: Let $\tilde{S}_i := S_i\cap \PRG_f$, then
		\[\mleft|\bigcup_{i\in [\ell]}\tilde{S}_i\mright| \ge \sum_{i=1}^\ell |\tilde{S}_i| - \sum_{1\le i < j \le \ell} |\tilde{S}_i\cap \tilde{S}_j|.\]
		Recall that $v_i = |\tilde{S}_i|(N/D)$, $v_{i,j} = |\tilde{S}_i\cap \tilde{S}_j|(N/D)$, and $v = \mleft|\bigcup_{i\in [\ell]}\tilde{S}_i\mright| (N/D)$, hence $v \ge \sum_{i=1}^\ell v_i - \sum_{1\le i < j\le \ell} v_{i, j}$. If all three cases above do not happen, then
		\begin{align*}
			\tilde{a} \ge&\, v + \eps N - 2\eps' N & (\because v > \tilde{a} - \eps N + 2\eps'N\text{ does not happen})\\
			\ge&\, \sum_{i=1}^\ell v_i - \sum_{1\le i < j\le \ell} v_{i,j} + \eps N - 2\eps' N\\
			\ge&\, \sum_{i=1}^\ell (a_i - 2\eps' N) - \sum_{1\le i < j\le \ell}v_{i, j} + \eps N - 2\eps' N & (\because v_i + 2\eps' N < a_i\text{ does not happen})\\
			\ge&\, \sum_{i=1}^\ell (a_i - 2\eps' N) - \sum_{1\le i < j\le \ell} (a_{i, j} + 2\eps' N) + \eps N - 2\eps' N & (\because v_{i, j} > a_{i, j} + 2\eps' N\text{ does not happen})\\
			\ge&\, \sum_{i=1}^\ell a_i - \sum_{1\le i < j\le \ell} a_{i, j} + (\eps - 5\ell^2 \eps') N > \tilde{a},
		\end{align*}
        a contradiction.

        Now we are ready to describe our reduction from $\textsc{Inclusion-Exclusion}$ to $\lossycode$. Let $X := [N'] \times [2N]\times [D]$ and $Y$ be some set that we will define later, we will produce a $\lossycode$ instance $X\xrightleftharpoons[G^*]{F^*} Y$. Given an element in $X$, we parse it as $(f, x, y)$ where $f \in [N']$, $x\in [2N]$, and $y\in [D]$. To compute $F^*(f, x, y)$, we find out which of the above three cases happens for $f$.
        \begin{itemize}
            \item {\bf Case I:} Suppose that there exists $i\in [\ell]$ such that $v_i + 2\eps' N < a_i$. If $x\not\in [a_i]$, then we let $x' := x - \eps' N$ and $y' := y$; otherwise we feed $(x, y)$ into
            \begin{equation}
				[a_i]\times [D] \xrightleftharpoons[{g_i\times [D]}]{f_i\times [D]}S_i\times [D] \xrightleftharpoons[G_i]{F_i} [v_i + \eps' N] \times [D]
			\tag{\ref{eq: lossy-code instance I}}
			\end{equation}
            to obtain $(x', y') \in [v_i + \eps' N]\times [D]$. We define $Y_0 := [N'] \times [2N - \eps' N] \times [D]$ and map $F^*(f, x, y) = (f, x', y') \in Y_0$.
            
            Note that the evaluation of \eqref{eq: lossy-code instance I} might be ill-behaved in the following cases. Suppose $x\in [a_i]$ and let $x' := f_i(x)$. If $x'\not\in S_i$ or $g_i(x') \ne x$, then we know that $x'$ is a \ref{item: SOL1} and we simply let $F^*(f, x, y) := \bot$. Otherwise, if $(x', y)$ witnesses that $(F_i, G_i)$ is not an injection-surjection pair, then $S_i$ is a distinguisher for $\PRG_f$. Let $Y_1 := [K]\times [\ell]\times [2N]\times [D]$, we define $F^*(f, x, y) := (\Comp^{S_i}(f, x', y), i, x, y) \in Y_1$. If both cases above do not happen, we define $F^*(f, x, y) := (f, x', y') \in Y_0$ as usual.
            \item {\bf Case II:} Suppose that there exists $1\le i < j\le \ell$ such that $v_{i,j} > a_{i,j} + 2\eps' N$. If $x\not\in [v_{i,j}]$, then we let $x' := x - \eps' N$ and $y' := y$; otherwise we feed $(x, y)$ into
            \begin{equation}
				[v_{i,j}]\times [D]\xrightleftharpoons[F_{i,j}]{G_{i,j}} ((S_i\cap S_j)\cupdot[\eps' N]) \xrightleftharpoons[{f_{i,j}\times [D]}]{g_{i,j}\times [D]}[a_{i,j} + \eps' N]\times [D]
				\tag{\ref{eq: lossy-code instance II}}
			\end{equation}
            to obtain $(x', y') \in [a_{i,j} + \eps' N] \times [D]$. We let $F^*(f, x, y) := (f, x', y') \in Y_0$.

            Similarly, the evaluation of \eqref{eq: lossy-code instance II} might be ill-behaved in the following cases. Suppose $x\in [v_{i, j}]$. If $(x, y)$ witnesses that $(G_{i,j}, F_{i,j})$ is not a valid injection-surjection pair, then $S_i\cap S_j$ is a distinguisher for $\PRG_f$; in this case, letting $Y_2 := [K] \times [\ell]^2 \times [2n]\times [D]$, we map $F^*(f, x, y) := (\Comp^{S_i \cap S_j}(f, x, y), i, j, x, y) \in Y_2$. Otherwise, let $(x', y') := G_{i,j}(x, y)$, if $x' \in S_i\cap S_j$ but $x'$ witnesses that $(g_{i,j}, f_{i,j})$ is not a valid injection-surjection pair, then $x'$ is a \ref{item: SOL2} and we simply let $F^*(f, x, y) := \bot$. If both cases above do not happen, we define $F^*(f, x, y) = (f, x', y') \in Y_0$ as usual.
            \item {\bf Case III:} Suppose that $v > \tilde{a} - \eps N + 2\eps' N$. If $x\not\in [v]$, then we let $x' := x - \eps' n$ and $y' := y$; otherwise we feed $(x, y)$ into
	 		\begin{equation}
				[v]\times [D] \xrightleftharpoons[F]{G}(S\cupdot[\eps' N])\times [D] \xrightleftharpoons[{\tilde{f}\times [D]}]{\tilde{g}\times [D]}[\tilde{a} - \eps N + \eps' N]\times [D]
				\tag{\ref{eq: lossy-code instance III}}
			\end{equation}
			to obtain $(x', y') \in [\tilde{a} - \eps N + \eps' N]\times [D]$. We let $F^*(f, x, y) = (f, x', y') \in Y_0$.

			Similarly, the evaluation of \eqref{eq: lossy-code instance III} on $(x, y)$ might be ill-behaved in the following cases. Suppose $x\in [v]$. If $(x, y)$ witnesses that $(G, F)$ is not a valid injection-surjection pair, then $S$ is a valid distinguisher for $\PRG_f$; in this case, letting $Y_3 := [K]\times [2N]\times [D]$, we map $F^*(f, x, y) := (\Comp^S(f, x, y), x, y) \in Y_3$. Otherwise, let $(x', y') := G(x, y)$, if $x'\in S$ but $x'$ witnesses that $(\tilde{g}, \tilde{f})$ is not a valid injection-surjection pair, then $x'$ is a \ref{item: SOL3} and we let $F^*(f, x, y) := \bot$. If both cases above do not happen, we define $F^*(f, x, y) := (f, x', y') \in Y_0$ as usual.
        \end{itemize}

		Let $Y$ be the disjoint union $Y := Y_0 \cupdot Y_1 \cupdot Y_2 \cupdot Y_3 \cupdot \{\bot\}$, then
		\[\frac{|Y|}{|X|} \le \frac{N'(2N-\eps' N)D + K\cdot 2\ell^2\cdot 2N \cdot D + 1}{N'\cdot 2N\cdot D} \le 1-\eps'/2 + (K\cdot 2\ell^2 + 1)/N' \le 1-\eps'/4.\]
		We have described a function $F^*: X \to Y$ and it remains to describe $G^*: Y\to X$. The $\lossycode$ instance $(F^*, G^*)$ reduces to a standard $\lossycode$ instance (of stretch $[2t]\rightleftharpoons[t]$) in decision tree depth $O(\log (1/\eps')) \le \polylog(N)$. Given an input in $Y$:\begin{itemize}
			\item If the input is $(f, x', y') \in Y_0$, then one of the above three cases happen for $f$; depending on this, we map $(x', y')$ back to $(x, y)$ via the appropriate surjection. For example, in case I: If $x' \not\in [v_i + \eps' n]$ then we let $x := x' + \eps' n$ and $y' := y$; otherwise we let $(x'', y) := G_i(x', y')$ and $x := g_i(x')$.
			\item If the input is $(\hat{f}, i, x, y)\in Y_1$, then we recover $f := \Decomp^{S_i}(\hat{f})$ and return $(f, x, y)$.
			\item If the input is $(\hat{f}, i, j, x, y) \in Y_2$, then we recover $f := \Decomp^{S_i\cap S_j}(\hat{f})$ and return $(f, x, y)$.
			\item If the input is $(\hat{f}, x, y)\in Y_3$, then we recover $f := \Decomp^S(\hat{f})$ and return $(f, x, y)$.
			\item We do not care about the value of $G^*(\bot)$; we map it to an arbitrary value.
		\end{itemize}
		Finally, we need to argue that this is a correct reduction. Let $(f, x, y) \in X$. If $F^*(f, x, y)\in Y_0$, then the injection-surjection pairs (\eqref{eq: lossy-code instance I}, \eqref{eq: lossy-code instance II}, \eqref{eq: lossy-code instance III}) are well-behaved, hence $G^*(F^*(f, x, y)) = (f, x, y)$; if $F^*(f, x, y)\in Y_1\cup Y_2\cup Y_3$, then the compression ($\Comp^O(f, x, y)$ for the suitable oracle $O$) works correctly, hence $G^*(F^*(f, x, y)) = (f, x, y)$ as well. Therefore, if $(f, x, y)$ is a solution of the $\lossycode$ instance, then $F^*(f, x, y) = \bot$. As discussed above, in this case, we will find a solution of $\textsc{Inclusion-Exclusion}$.
    \end{proof}

    \subsubsection{Conclusion: \texorpdfstring{$\Lossy$}{Lossy}-Completeness}
    Finally, it is easy to see that the problems we considered ($\AMGMLC$, Dual-$\AMGMLC$, and $\textsc{Inclusion-Exclusion}$) are at least as hard as $\lossycode$, hence they are $\Lossy$-complete.
    \begin{theorem}
        The following problems are $\Lossy$-complete:
        \begin{itemize}
            \item $c$-$\AMGMLC$ for every constant $c > 1$;
            \item $c$-dual-$\AMGMLC$ for every $0 < c < 2$;
            \item $\textsc{Inclusion-Exclusion}$.
        \end{itemize}
    \end{theorem}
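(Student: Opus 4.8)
The plan is straightforward: the preceding theorems of this section already establish that $c$-\AMGMLC (for every constant $c>1$), $c$-Dual-\AMGMLC (for every constant $0<c<2$), and $\textsc{Inclusion-Exclusion}$ each reduce to $\lossycode$, so for $\Lossy$-completeness I only need the reverse direction: $\lossycode \leq_{dt}$ each of the three problems. The uniform strategy is to take a (suitably padded) $\lossycode$ instance and hide it inside the encoding/decoding pair of the target problem, while choosing every other component of the target instance to be ``manifestly correct'', so that it can never contribute a solution. Since the target problems are total, the unique remaining solution type is forced, and by construction it decodes back to a solution of the embedded $\lossycode$ instance.

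Concretely, for $c$-\AMGMLC I would take $V=[2N]$ with the balanced coloring $C^{-1}(0)=\{1,\dots,N\}$, $C^{-1}(1)=\{N+1,\dots,2N\}$, so that $H$ has exactly $N^{2}$ elements and carries an explicit bijection $\beta\colon[N^{2}]\to H$; given a $\lossycode_{N^{2}\to cN^{2}}$ instance $(\psi,\chi)$ with $\psi\colon[N^{2}]\to[cN^{2}]$ and $\chi\colon[cN^{2}]\to[N^{2}]$, I set $F(x):=\beta(\chi(x))$ and $G(u,v):=\psi(\beta^{-1}(u,v))$ on $H$ (and arbitrary off $H$). Then $F(x)\in H$ always, so no \ref{item: (Invalid Hole)} solution can occur, and $G(F(x))=\psi(\chi(x))$, so a \ref{item: (Wrong Enc-Dec)} solution is exactly a solution of the embedded $\lossycode$. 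For $c$-Dual-\AMGMLC the construction is symmetric: the balanced coloring makes $P=\{(u,v):C(u)=C(v)\}$ have exactly $2N^{2}$ elements with a bijection $\beta\colon[2N^{2}]\to P$; from a $\lossycode_{cN^{2}\to 2N^{2}}$ instance $(\psi,\chi)$ I set $G(u,v):=\chi(\beta^{-1}(u,v))$ and $F(h):=\beta(\psi(h))$, so the single solution type of $c$-Dual-\AMGMLC maps precisely onto a solution of the embedded $\lossycode$. For $\textsc{Inclusion-Exclusion}$ I would use $\ell=1$, $\varepsilon:=1/\lceil\log N\rceil$, the all-ones table (so $S_{1}=[N]$, which kills the table check in \ref{item: SOL1}), $a_{1}:=N$, and $(f_{1},g_{1})$ the identity bijection on $[N]$ (which kills the bijection check in \ref{item: SOL1}); then $\tilde a=N\ge\varepsilon N$, there are no \ref{item: SOL2} solutions since $\ell=1$, and $(\tilde f,\tilde g)$ has signature $[(1-\varepsilon)N]\to[N]$, $[N]\to[(1-\varepsilon)N]$, i.e.\ a $\lossycode_{(1-\varepsilon)N\to N}$ instance, into which I embed a standard $\lossycode$ instance; a \ref{item: SOL3} solution is then any $x$ with $\tilde f(\tilde g(x))\neq x$ (the side condition $\exists j\,T_{j,x}=1$ being vacuously true), exactly a solution of the embedded instance.

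The only step requiring genuine care is verifying that standard $\lossycode=\lossycode_{M\to 2M}$ reduces with $\polylog$ overhead to the ``weakly stretched'' variants above: $\lossycode_{N^{2}\to cN^{2}}$ (stretch $c>1$), $\lossycode_{cN^{2}\to 2N^{2}}$ (stretch $2/c>1$), and $\lossycode_{(1-\varepsilon)N\to N}$ (stretch $1/(1-\varepsilon)$ with $\varepsilon=1/\polylog$). For the two constant-stretch cases this is immediate from \autoref{lemma: robustness of lossy-code}. For the sub-constant case I would argue directly with a hole-padding gadget: given $f\colon[M]\to[2M]$ and $g\colon[2M]\to[M]$, adjoin a common dummy set $D$ of size $s$, extend $f,g$ by the identity on $D$, and note that the resulting $\lossycode_{(M+s)\to(2M+s)}$ instance has exactly the same solutions as $(f,g)$; its stretch is $1+M/(M+s)$, which is $1+\delta$ for $s$ of order $M/\delta$, so $s=\polylog(M)\cdot M$ handles $\delta=1/\polylog$ at the cost of only $\polylog$ extra queries. (For the $\AMGMLC$ problems one pads slightly further so that the hole count is a perfect square; the gaps between consecutive squares are negligible against the padding.) Combining these with the reductions of the previous subsections yields $\Lossy$-completeness of all three problems. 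I do not expect a real obstacle here — the substance of the section is the reductions \emph{to} $\lossycode$, and the hardness direction is a routine embedding; the one thing genuinely worth double-checking is exactly that the ``invalid hole'', table-consistency, and auxiliary-bijection solution types are ruled out by construction, which the balanced-coloring and identity-bijection choices ensure.
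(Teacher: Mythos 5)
Your proposal is correct and matches the paper's route: membership in $\Lossy$ is exactly the preceding reductions of this section, and the hardness direction is the routine embedding that the paper dismisses as ``easy to see'' (your balanced coloring kills the \ref{item: (Invalid Hole)} solutions, the identity/all-ones choices kill \ref{item: SOL1} and \ref{item: SOL2}, and the identity-padding gadget correctly handles the sub-constant stretch $1/(1-\eps)$ case that \autoref{lemma: robustness of lossy-code} does not directly cover). Your explicit constructions simply fill in the details the paper leaves implicit, so there is nothing to correct.
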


\section{Dense Linear Ordering}\label{sec: DLO}

    Finally, we show that the dense version of the Linear Ordering Principle reduces to $\lossycode$. The Linear Ordering Principle was recently studied by Korten and Pitassi~\cite{KortenP24}, where they showed that $\Avoid$ reduces to Linear Ordering. In the $\TFNP$ world, we show the opposite reduction and prove that Dense Linear Ordering reduces to Lossy Code. This problem has been studied in proof complexity \cite{Riis01, AtseriasD08, Gryaznov19, DBLP:conf/focs/ConnerydRNPR23} as well.

    \begin{description}
        \item[$\DLO$.]
        The input consists of the descriptions of a linear ordering $\prec$ over $N$ elements and a \emph{median function} $\med: [N]\times [N] \to [N]$. Without loss of generality, we may assume that for $x\ne y \in [N]$, exactly one of $(x\prec y)$ and $(y\prec x)$ is true, and that $\med(x, y) = \med(y, x)$. (That is, $\prec$ is represented by a string of $\binom{N}{2}$ bits and $\med$ is represented by a list of $\binom{n}{2}$ elements in $[N]$.) A solution is one of the following.
        \begin{description}
            \item[s1.] $x, y, z \in [N]$ such that $x\prec y$, $y\prec z$, and $z\prec x$; or \mylabel{Transitivity Violation}{item: (Transitivity Violation)}
            \item[s2.] $x, y\in [N]$ such that $x\prec y$, but neither $x\prec \med(x, y)$ nor $\med(x, y)\prec y$.
            
            \mylabel{Invalid Median}{item: (Invalid Median)}
        \end{description}
    \end{description}

    \def\Left{\mathsf{left}}
    \def\Right{\mathsf{right}}
    \def\NonTransitivity{{\bf (Violation of transitivity)}\xspace}
    \def\BadMedian{{\bf (Bad median)}\xspace}
    \begin{theorem}
        $\DLO$ reduces to $\lossycode$.
    \end{theorem}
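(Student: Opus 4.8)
The plan is to turn the classical proof that dense linear orders have no finite model into a decision-tree reduction: by iterating the median function one can embed a complete binary tree of depth $\Theta(\log N)$ into the ordered universe, and the pigeonhole principle underlying $\lossycode$ is exactly what forces this embedding to collapse and expose a violation.

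First I would fix a starting comparable pair $\ell_0 \prec r_0$ --- for instance the smaller and the larger of the elements $1,2 \in [N]$ under $\prec$ (if $N$ is below a fixed constant the instance is handled directly). Fix $k = \Theta(\log N)$ large enough that the complete binary tree of depth $k$ has more internal nodes, namely $2^k - 1$, than the universe has elements, and identify the internal nodes with the strings over $\{\LL, \RR\}$ of length less than $k$. To each internal node $w$ I associate an interval $I(w) = (\ell(w), r(w))$, defined by starting from $(\ell_0, r_0)$ and, reading $w$ one symbol at a time, replacing the current interval $(\ell, r)$ by $(\ell, \med(\ell, r))$ on an $\LL$ and by $(\med(\ell, r), r)$ on an $\RR$; the label of $w$ is $m(w) := \med(\ell(w), r(w))$. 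The $\lossycode$ instance is then the pair consisting of the ``encoding'' map $g$ that sends an internal node $w$ to $m(w)$ --- computable by iterating $\med$ along $w$, hence a decision tree of depth $O(k)$ --- and the ``decoding'' map $f$ that, on input an element $e$, walks down from the root: at the current node $w$ with interval $(\ell, r)$ it sets $\mu := \med(\ell, r)$, returns $w$ if $\mu = e$, descends to the left child if $e \prec \mu$ and to the right child if $\mu \prec e$, and returns the deepest node it has visited if it reaches depth $k-1$ without a match ($f$ queries both $\med$ and $\prec$ and also has depth $O(k)$). Choosing $k$ so that $2^k - 1 \ge 2N$, this is a $\lossycode_{N \to M}$ instance with $M \ge 2N$, which is a standard $\lossycode$ instance up to discarding part of its domain.

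For correctness, I would first note that if the $\DLO$ instance has no solution then $\prec$ is a genuine linear order (there is no transitivity violation) and $\med(x,y)$ lies strictly between $x$ and $y$ whenever $x \prec y$ (there is no invalid median); an induction over the tree then shows that the intervals $I(w)$ nest correctly, so the labels $m(w)$ are pairwise distinct and $f(g(w)) = w$ for every internal node $w$. Hence every $\lossycode$ solution --- an internal node $w$ with $f(g(w)) \ne w$ --- certifies that the $\DLO$ instance does have a solution, and it remains to produce one explicitly. Writing $e := g(w) = m(w)$ and $u := f(e)$, I would compare the root-to-node paths $u$ and $w$. The case ``$w$ is a prefix of $u$'' cannot occur, since upon reaching $w$ the decoder recomputes exactly the value $m(w) = e$ that the encoder produced and therefore halts at $w$. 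If $u$ is a proper prefix of $w$, then $m(u) = m(w)$, two distinct nodes with the same label; and if $u$ and $w$ first disagree at some node $p$, then the single $\prec$-comparison that $f$ made at $p$ places $e$ on the opposite side of the median $m(p)$ from the side on which validity would force it. In either of these two cases, walking along the $O(\log N)$ tree nodes between the point of disagreement and $w$ and checking each median encountered for strict betweenness either turns up an invalid median or leaves a short chain of strict inequalities which, closed up by the conflicting comparison, pins down a transitivity-violating triple.

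The hard part will be this last extraction step. Since the bisection tree carries no parent pointers, the decoder $f$ has to rediscover paths by binary search, so a discrepancy $f(g(w)) \ne w$ may originate from a defect anywhere along two a priori unrelated root-to-node paths; the bookkeeping needed to localize the defect, and to verify that only the two intended solution types ever arise, is the technical core of the argument --- close in spirit to, but somewhat more intricate than, the case analysis used to show $\emptychild' \leq_{dt} \lossycode$.
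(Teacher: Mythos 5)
Your construction is essentially the paper's: the same bisection tree driven by $\med$ from a fixed pair $l_0\prec r_0$, the same encoding/decoding maps between elements of the universe and strings in $\{\LL,\RR\}^{\le O(\log N)}$, and the same case split (the case where the encoder's string is a prefix of the decoder's output is impossible, and otherwise one compares the two root-to-node paths at their first disagreement). The extraction step you flag as the remaining hard part is exactly the paper's short ``scenario'' lemma --- walk down the encoder's path from the disagreement point tracking on which side of the current interval endpoints the element $e$ lies, and at the index where this flips you read off either an invalid median or a transitivity-violating triple --- so your sketch is the intended argument and no new idea is missing.
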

    \begin{proof}
        Let $(\prec, \med)$ be an input of $\DLO$ over a universe $\calU$ of size $N$. %
        
        Fix some arbitrary $l_0, r_0$ in advance such that $l_0 \prec r_0$, and let $\ell := 4\log N$. Consider the following $\lossycode$ instance $(f, g)$ between $\calU$ and $\{\LL, \RR\}^{\le \ell}$. 
        \begin{figure}[H]
            \begin{minipage}{0.50\linewidth}
                \begin{algorithm}[H]
                    \caption{$f: \calU \to \{\LL, \RR\}^{\le \ell}$}
                    \begin{algorithmic}[1]
                        \Function{$f$}{$m$}
                        \State $\sigma\gets$ the empty string
                        \For {$i\gets 1$ to $\ell$}
                            \State {$m_{i-1} \gets \med(l_{i-1}, r_{i-1})$}
                            \If {$m_{i-1} = m$}
                                \State \Return {$\sigma$}
                            \ElsIf {$m \prec m_{i-1}$}
                                \State $(l_i, r_i)\gets (l_{i-1}, m_{i-1})$, $\sigma \gets \sigma \circ \LL$
                            \Else \Comment{$m_{i-1} \prec m$}
                                \State $(l_i, r_i)\gets (m_{i-1}, r_{i-1})$, $\sigma \gets \sigma \circ \RR$
                            \EndIf
                        \EndFor
                        \State \Return $\sigma$ \Comment{regardless of whether we have arrived at $m$}
                        \EndFunction
                    \end{algorithmic}
                \end{algorithm}
            \end{minipage}
            \hfill
            \begin{minipage}{0.40\linewidth}
                \begin{algorithm}[H]
                    \caption{$g: \{\LL, \RR\}^{\le \ell} \to \calU$}
                    \begin{algorithmic}[1]
                        \Function{$g$}{$\sigma$}
                            \For {$i\gets 1$ to $|\sigma|$}
                                \State {$m_{i-1} \gets \med(l_{i-1}, r_{i-1})$}
                                \If {$\sigma_i = \LL$}
                                    \State $(l_i, r_i)\gets (l_{i-1}, m_{i-1})$
                                \Else \Comment{$\sigma_i = \RR$}
                                    \State $(l_i, r_i)\gets (m_{i-1}, r_{i-1})$
                                \EndIf
                            \EndFor
                            \State \Return $\med(l_{|\sigma|}, r_{|\sigma|})$
                        \EndFunction
                    \end{algorithmic}
                \end{algorithm}
            \end{minipage}
        \end{figure}
    
        In the following discussion, we assume that throughout the executions of $f$ and $g$, we always have $l_i \prec r_i$ for each valid $i$. This is because otherwise we can find an \ref{item: (Invalid Median)}: Let $i$ be the smallest index such that $l_i\not\prec r_i$, then $i\ge 1$ and $l_{i-1}\prec r_{i-1}$. Either we have $(l_i, r_i) = (l_{i-1}, m_{i-1})$, or we have $(l_i, r_i) = (m_{i-1}, r_{i-1})$. In both cases, $(l_{i-1}, r_{i-1})$ has an \ref{item: (Invalid Median)}.
        
        Consider the following scenario. Suppose that $g(\sigma) = v$, but during the execution of $g(\sigma)$, we encountered some interval $(l_i, r_i)$ such that either ($l_i \not\prec v$) or ($v \not\prec r_i$). We argue that in this scenario, it is easy to find a solution of $(\prec, \med)$. Suppose $l_i\not\prec v$; the case that $v\not\prec r_i$ can be handled symmetrically.
        \begin{itemize}
            \item If $l_{|\sigma|} \not \prec v$ then, $(l_{|\sigma|}, r_{|\sigma|})$ has an \ref{item: (Invalid Median)}.
            \item Otherwise there is an index $j \in [i, |\sigma|)$ such that $l_j\not\prec v$ but $l_{j+1} \prec v$. It follows that $l_j \ne l_{j+1}$, which implies that $l_{j+1} = \med(l_j, r_j)$.
            \begin{itemize}
                \item If $l_j \not\prec l_{j+1}$ then we have an \ref{item: (Invalid Median)} $(l_j, r_j)$.
                \item Otherwise, since $l_{j+1}\prec v$ and $l_j\prec l_{j+1}$, we have $v\ne l_j$. Since $l_j\not\prec v$, we have $v\prec l_j$. Now we have a \ref{item: (Transitivity Violation)} $(l_j, l_{j+1}, v)$ where $v\prec l_j$, $l_j\prec l_{j+1}$, and $l_{j+1} \prec v$.
            \end{itemize}
        \end{itemize}
    
        Now we prove the correctness of our reduction. That is, given any string $\sigma \in \{\LL, \RR\}^{\le \ell}$ such that $f(g(\sigma)) \ne \sigma$, we can find a solution for $\DLO$. Let $v := g(\sigma)$ and $\sigma' := f(v)$, then $\sigma' \ne \sigma$. Let $i$ be the smallest index such that $\sigma'_i \ne \sigma_i$. In particular, if $\sigma'$ is a prefix of $\sigma$ then we define $i := |\sigma'| + 1$. (Note that it cannot be the case that $\sigma$ is a prefix of $\sigma'$, as otherwise the execution of $f(v)$ would have reached the element $v$ in the $|\sigma|$-th step and returned $\sigma$ instead of $\sigma'$.) It follows that the first $i-1$ rounds of $f(v)$ and $g(\sigma)$ produce the same intervals $\{(l_j, r_j)\}_{0\le j < i}$. Assume that $\sigma_i = \LL$; the case that $\sigma_i = \RR$ is symmetric. Let $(l^g_i, r^g_i)$ denote the interval $(l_i, r_i)$ in the execution of $g(\sigma)$, then $l^g_i = l_{i-1}$ and $r^g_i = m_{i-1} = \med(l_{i-1}, r_{i-1})$.
        \begin{itemize}
            \item If $i = |\sigma'| + 1$, then $m_{i-1} = v = g(\sigma)$.
            \item Otherwise, $\sigma'_i = \RR$ and hence $m_{i-1} \prec v = g(\sigma)$.
        \end{itemize}
        In either case, we have found some $(l^g_i, r^g_i)$ during the execution of $g(\sigma)$ such that $v\not\prec r^g_i$. By the previous discussion, we can find a solution of $(\prec, \med)$.
    \end{proof}

\section{Open Problems}

We end with some future directions. The main problem left open by this work is to exhibit a natural problem in $\TFZPP^{dt}$ which is not reducible to $\lossycode$; we conjecture that $\ILP$ is such a problem. Some additional open questions are the following:

\begin{itemize}
    \item Find a $\TFZPP$ upper bound for $\textsc{Bertrand-Chebyshev}$, i.e., a natural problem in $\TFZPP$ to which $\textsc{Bertrand-Chebyshev}$ reduces. The best upper bound we are aware of is only $\Lossy^{\textsc{Factoring}}$~\cite{ParisWW88, DBLP:conf/coco/Korten22}, which is in $\Lossy^\PPA$ and $\Lossy^\PWPP$ under the generalized Riemann Hypothesis~\cite{Jerabek16}. Unfortunately, none of these upper bound classes are in $\TFZPP$.
    \item Find a $\TFZPP$ upper bound for the following problem, which we call \textsc{Razborov-Smolensky} \cite{Razborov87, Smolensky87}: The input is an $\AC^0[2]$ circuit $C$ of depth $d$ and size at most $2^{n^{1/10d}}$ that attempts to compute $\mathrm{MAJ}$ (the Majority function), and the goal is to output an instance $x\in \{0, 1\}^n$ such that $C(x) \ne \mathrm{MAJ}(x)$. Since $\mathrm{MAJ}$ is average-case hard against such circuits, this problem sits in $\TFZPP$. This problem is trivially solvable in deterministic quasi-polynomial time (note that the na\"ive algorithm runs in $2^{O(n)}$ time while the input size is $2^{n^{\Omega(1)}}$), hence we are interested in the regime where only polynomial-time reductions are allowed. We are not aware of any syntactic subclass of $\TFZPP$ that contains this problem. %
    \item Is $\lossycodep\in \PPAD$? Observe that $\lossycodep$ is equivalent to $\eolLong$ with half of the vertices designated as distinguished sources. 
    The simple argument showing $\lossycode\in \PPADS$, that is, output the solution returned by the $\PPAD$ solver on the original graph, does not work here, since the $\PPAD$ solver may return one of the distinguished sources.
    Recently, Goldberg and Hollender~\cite{HairyBallPPADComplete} gave an involved proof that $\textsc{End-Of-Line}$ with $k$ distinguished sources belongs to $\PPAD$ when $k=\polylog(N)$, where $N$ denotes the size of the graph.
    However, their argument does not extend to the case that $k=\poly(N)$ since the $\PPAD$ instance they constructed has size $N^{\Omega(k)}$.
    \item Can  $\BEC$ be separated from $\lossycodep$?
    \item What is the relationship between $\DLO$ and $\lossycodep$?
\end{itemize}

\DeclareUrlCommand{\Doi}{\urlstyle{tt}}
\renewcommand{\path}[1]{\small\Doi{#1}}
\renewcommand{\url}[1]{\href{#1}{\Doi{#1}}}

\section*{Acknowledgments} 

We thank Robert Robere, Yuhao Li, and Ben Davis for extensive discussions about the Nephew problem and $\TFZPP$. As well, we thank the reviewers for suggestions which improved the presentation of this paper. Noah Fleming was supported by an NSERC Discovery grant and the Swedish Research Council under grant number 2025-06762. Stefan Grosser was supported by the NSERC CGS D fellowship. Siddhartha Jain was supported by Scott Aaronson's Berkeley CIQC grant and an Amazon AI Fellowship. Jiawei Li was supported by Scott Aaronson's Open Philanthropy grant. 
Morgan Shirley was supported by an NSERC grant and by Knut and Alice Wallenberg grant KAW 2023.0116. Weiqiang Yuan was supported by the Swiss State Secretariat for Education, Research and Innovation (SERI) under contract number MB22.00026.

\bibliographystyle{alphaurl}
{\bibliography{refs}}

\appendix

\section{Herbrandization}\label{appendix: Herbrandization}
Herbrandization is a basic construction in logic.
Roughly speaking, any $\mathsf{TF}\Sigma_2^P$ problem $L$ can be captured by a logical formula $\forall x\exists y\forall z~\varphi(x, y, z)$,\footnote{Formally, they are called ``$\forall\Sigma_2^b$-formulas'', where the subscript $2$ stands for two alternations ($\forall\exists\forall$) and the superscript $b$ stands for ``bounded'', i.e., the lengths of $x, y, z$ are polynomially related and $\varphi$ is a polynomial-time predicate.} 
where the task of $L$ is, given $x$, to find a $y$ such that $\forall z~\varphi(x, y, z)$ holds. To Herbrandize this formula, we add another function $h$ (treated as an input oracle) and consider the formula $\forall x\exists y~\varphi(x, y, h(y))$. One can then define a $\TFNP$ problem by treating $h$ as a first-order variable like $x$, where the task is, given $x$ and $h$, to find a $y$ such that $\varphi(x, y, h(y))$ holds.

\begin{table}[H]
    \begin{tabular}{|c|c|}
        \hline
        \textbf{$\mathsf{TF}\Sigma_2^P$ search problem} & \textbf{$\TFNP$ problem via Herbrandization}\\
        \hline
        $\forall x\exists y\forall z~\varphi(x, y, z)$ & $\forall x, h\exists y~\varphi(x, y, h(y))$\\
        \hline
        \makecell{$\Avoid$:\\ $\forall D\exists x\forall y~D(y) \ne x$} & \makecell{$\lossycode$:\\ $\forall C, D\exists x~D(C(x)) \ne x$}\\
        \hline
        \makecell{\eqref{eq: ilp-from-borger}:\\ $\forall F\exists x \forall y~\mleft(F(F(y)) \ne F(x) \lor F(y) = x\mright)$} & \makecell{$\ILP$:\\ $\forall F, G\exists x~\mleft(F(F(G(x))) \ne F(x) \lor F(G(x)) = x\mright)$}\\
        \hline
    \end{tabular}
    \caption{Some examples of $\TFNP$ problems via Herbrandization.}
\end{table}

\section{Proof Complexity Characterizations of Randomized Reductions}
\label{sec:appendixPC}

In this section we prove \autoref{thm:mainChar}, which we restate next.

\begin{theorem}
    \label{thm:metaRandomProof}
    If a proof system ${\cal P}$ is characterized by the total search problems reducible to $R \in \TFNP^{dt}$, then $r{\cal P}$ is characterized by the total search problems that are randomized-reducible to $R$.
\end{theorem}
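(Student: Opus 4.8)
The plan is to mimic the proof of \autoref{lem: rtR equal TFZPP} essentially verbatim, replacing the two uses of the equivalence between decision trees and tree-like resolution proofs by the two directions of the hypothesis "$\mathcal{P}$ is characterized by the search problems reducible to $R$", and replacing "$\FP^{dt}$" by "problems reducible to $R$". Throughout I would use freely the two standard translations recalled just before \autoref{def:proofSystem}: from a search problem $S'$ to its totality CNF $F_{S'}$ (whose width equals the depth of the verifiers for $S'$), and from a CNF $F$ to the false-clause search problem $\Search_F$ (which reduces to any $S'$ with $F = F_{S'}$ by decision trees of depth at most the width of $F$). These change all relevant parameters by at most a polynomial factor, so they are harmless for a characterization, provided one tracks that every CNF involved has $\polylog(n)$ width; this is why one must note that a search problem $S$ randomized-reducible to $R$ with $\polylog(n)$ complexity is itself in $\TFNP^{dt}$, and that the auxiliary formulas $B$ produced below have width bounded by the reduction/proof complexity.

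For the direction "reductions give proofs", suppose $S$ is randomized-reducible to $R$ with complexity $c=\polylog(n)$, witnessed by a distribution $\mathcal{D}$ over deterministic reductions $\mathcal{T}=(T,\{T_o\})$. For a fixed $\mathcal{T}$ in the support, I would let $B_{\mathcal{T}}$ be the CNF consisting of the negations of all root-to-leaf paths ending in a $\bot$-leaf, taken over all the trees $T_o$; since each $T_o$ has depth $\le c$, each clause of $B_{\mathcal{T}}$ has width $\le c$, and $B_{\mathcal{T}}(x)=1$ iff $\mathcal{T}$ outputs no $\bot$ on $x$ for any $o$ with $(T(x),o)\in R$. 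Relabelling the $\bot$-leaves of the $T_o$ by the corresponding clauses of $B_{\mathcal{T}}$ then yields a depth-$\le c$ deterministic reduction from $\Search_{F_S\wedge B_{\mathcal{T}}}$ to $R$, because a solution $o$ of the instance $T(x)$ is mapped either to a genuine $S$-solution (a violated clause of $F_S$) or to a violated clause of $B_{\mathcal{T}}$. By the hypothesis, $F_S\wedge B_{\mathcal{T}}$ then has a $\mathcal{P}$-proof $\Pi_{\mathcal{T}}$ of complexity $\poly(c)$, and the distribution over $(\Pi_{\mathcal{T}},B_{\mathcal{T}})$ induced by $\mathcal{D}$ is an $r\mathcal{P}$-proof of $F_S$: conditions (1)--(2) of a randomized proof hold since $\Pi_{\mathcal{T}}$ proves $F_S\wedge B_{\mathcal{T}}$ and $\Pr_{\mathcal{T}}[B_{\mathcal{T}}(x)=1]=\Pr_{\mathcal{T}}[\mathcal{T}\text{ does not err on }x]\ge 1-\varepsilon$. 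Hence $r\mathcal{P}(F_S)=\poly(c)$.

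For the converse, "proofs give reductions", suppose $r\mathcal{P}(F_S)\le c$, witnessed by a distribution $\mathcal{D}$ over pairs $(\Pi,B)$ with $\Pi$ a $\mathcal{P}$-proof of $F_S\wedge B$ of complexity $\le c$ and $\Pr_{(\Pi,B)\sim\mathcal{D}}[B(x)=1]\ge 1-\varepsilon$. For a fixed $(\Pi,B)$, the hypothesis gives a deterministic reduction $(T,\{T_o\})$ from $\Search_{F_S\wedge B}$ to $R$ of complexity $\poly(c)$, where each $T_o$ outputs the index of a clause of $F_S\wedge B$ violated by $x$. I would transform this into a deterministic reduction from $S$ to $R$ of the same order of complexity: keep $T$; for each $o$, graft onto every leaf of $T_o$ labelled by a violated $F_S$-clause the depth-$O(\mathrm{width}(F_S))$ tree converting that clause into an $S$-solution, and relabel by $\bot$ every leaf labelled by a violated $B$-clause. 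Sampling $(\Pi,B)\sim\mathcal{D}$ and using the transformed reduction then gives a randomized reduction from $S$ to $R$: for condition (1), whenever $(T(x),o)\in R$ the value $T_o(x)$ is a genuine solution of $\Search_{F_S\wedge B}$, hence either a violated $F_S$-clause (mapped to an $S$-solution) or a violated $B$-clause (mapped to $\bot$); for condition (2), $T_o(x)=\bot$ forces $B(x)=0$, which happens with probability $\le\varepsilon$. Thus $S$ is randomized-reducible to $R$ with complexity $\poly(c)$.

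The conceptual content here is light; the work is bookkeeping. The step I expect to require the most care is the second direction: one must verify that the indices of clauses of $F_S$ and of clauses of $B$ remain syntactically distinguishable once the reduction's output trees are built, so that the relabelling to $\bot$ is well defined, and that grafting the "violated-clause $\to$ solution" trees onto the output trees does not push the depth past $\polylog(n)$. A secondary subtlety is to keep all of $F_S$, $F_S\wedge B_{\mathcal{T}}$, and $F_{\Search_{F_S\wedge B}}$ of $\polylog(n)$ width, so that every translation between a formula and its search problem stays within the polynomial slack permitted by the word "characterized"; this uses the width bounds on the $B$'s noted in the first paragraph together with the fact that $S\in\TFNP^{dt}$.
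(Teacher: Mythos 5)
Your ``proofs give reductions'' direction is correct and is essentially the paper's argument. The gap is in the other direction, in your definition of $B_{\mathcal T}$. You take $B_{\mathcal T}$ to be the negations of \emph{all} root-to-$\bot$ paths of all output trees $T_o$, and you assert that $B_{\mathcal T}(x)=1$ iff $\mathcal T$ does not err on $x$, hence $\Pr_{\mathcal T}[B_{\mathcal T}(x)=1]\ge 1-\varepsilon$. That equivalence is false as stated: $B_{\mathcal T}(x)=1$ says that \emph{no} $T_o$ reaches a $\bot$-leaf on $x$, whereas the error event in the definition of a randomized reduction is the existence of an $o$ with $(T(x),o)\in R$ \emph{and} $T_o(x)=\bot$. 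Nothing in that definition constrains $T_o(x)$ on coordinates $o$ that are not solutions of the instance $T(x)$. A perfectly legitimate zero-error reduction may have every $T_o$ begin by checking whether $o$ is a solution of $T(x)$ (this costs only $\poly(c)$ depth, by composing $R$'s verifier with the input trees $T_i$) and output $\bot$ whenever the check fails; for such a reduction your $B_{\mathcal T}$ is falsified by essentially every $x$, so condition (3) of an $r\mathcal P$-proof fails even though the reduction has error $0$. So the probability bound you need does not follow from the definition you are given.

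The missing ingredient is that each clause of $B$ must also contain the negation of the event ``$o$ is a solution of the reduced instance,'' so that falsifying a clause means exactly ``$o$ is a solution of $T(x)$ and $T_o(x)=\bot$.'' This is what the paper's proof does (following Buss et al.) via the reduced formula $F_{\mathcal T}$: its clauses have the form $\neg p^*\lor\neg p$, where $p^*$ is a $0$-path of the tree $T^{C_o}$ evaluating the clause $C_o$ of the complete problem through the trees $T_i$, and $p$ is a $\bot$-path of $T_o$; $B$ consists of those clauses not subsumed by the original formula. With this $B$ the error bound of the randomized reduction immediately gives $\Pr[B(x)=1]\ge 2/3$, and the width stays $O(c^2)$ since the solutionhood check touches only the $O(c)$ variables of $C_o$, each computed by a depth-$c$ tree (for a general complete problem $R$ one composes $R$'s verifier with the $T_i$'s). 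Once $B_{\mathcal T}$ is corrected in this way, the rest of your bookkeeping in that direction (the relabelled trees forming a deterministic reduction from $\Search_{F_S\wedge B_{\mathcal T}}$ to $R$, and the width accounting) goes through.
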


\begin{proof}
Fix a complexity $c$ $r{\cal P}$ proof $\cal D$ of a CNF formula $H$. We will construct a complexity $\Theta(c)$ randomized reduction from $\Search_H$ to a complete problem $\Search_F$ for $\cal C$. On input $x \in \{0,1\}^n$ the reduction first samples $(\Pi, B)$ from the distribution $\cal D$ given by the $r{\cal P}$ proof, where $\Pi$ is a complexity $c$ $\cal P$-proof of $H \wedge B$. Since by assumption $\cal P$ is characterized by ${\cal C}$, this implies that there is a complexity $O(c)$ reduction ${\cal T}=(T, \{T_o\})$ from $\Search_{H \wedge B}$ to $\Search_F$. Relabel each leaf of an output decision tree $T_o$ in $\cal T$, which is labelled with a clause of $B$, by $\bot$, indicating a failure event of the randomized reduction. 

Observe that property (1) (correctness) of a randomized reduction to $\Search_H$ is satisfied. It remains to argue that property (2) (error probabilities) is also satisfied. This follows since the reduction was constructed from a $r \cal P$ proof. Indeed, the probability that the reduction fails is the probability that we sampled a reduction ${\cal T}=(T, \{T_o\})$ and the leaf of the output decision tree that we arrive at when following $x$ is labelled by $\bot$. By construction, we arrive at a $\bot$ leaf only if we falsify a clause of the corresponding CNF $B$. However, by the definition of an $r {\cal P}$ proof, for every $x$ the probability that every clause in $B$ is satisfied by $x$ is at least $2/3$.

For the converse, let $\Search_F$ be a ${\cal C}$-complete problem and let ${\cal D}$ be a randomized reduction from $\Search_H$ to 	$\Search_F$ of complexity $c$. We will argue that each ${\cal T} \sim {\cal D}$ is a deterministic reduction from $\Search_{H \wedge B}$ to $\Search_F$ for some CNF formula $B$ of width $O(c^2)$. Because $\cal C$ is characterized by ${\cal P}$, we will obtain a $\cal P$ proof of $H \wedge B$, and putting these together, a $r {\cal P}$ proof of $H$. 

Let $F =C_1 \wedge \ldots \wedge C_m$ and let ${\cal T} \sim {\cal D}$ where ${\cal T} := (\{T_i\}_{i \in [n]}, \{T_o\}_{o \in [m]})$.
Following \cite{BussFI23}, let the \emph{reduced formula} $F_{\cal T}$ be the CNF formula obtained as follows: for each clause $C \in F$, let the decision tree $T^C$ be obtained by sequentially running the decision trees $T_i$ for each $i \in vars(C)$ to obtain an assignment $\alpha \in \{0,1\}^{vars(C)}$. 
If $C(\alpha)=b \in \{0,1\}$ then label this a ``$b$-leaf'', for $b \in \{0,1\}$. Say that a root-to-leaf path $p \in T^C$ is a $b$-path if it ends at a $b$-leaf. Define 
\begin{align*} 
C({\cal T})&:= \bigwedge_{0\mbox{-path }p \in T^C} \neg p %
\end{align*}
In words, $C(T)$ says that the clause $C$, after substituting $\cal T$ for the variables, is never falsified. The reduced CNF formula is 
\[F_{\cal T}:= \bigwedge_{o \in [m]} \left( C_o({\cal T})  \lor \bigwedge_{p \in T_o} \neg p \right). \]
$F_{\cal T}$ formalizes the definition of a reduction (\autoref{eq:red}) by $\cal T$ to $\Search_F$---if we falsify $C_o({\cal T})$ and follow path $p$ in $T_o$ then the label of the leaf of $p$ is a valid solution to the search problem reducing to $\Search_F$ by ${\cal T}$. Hence,
$\cal T$ is a reduction from a CNF formula $H \wedge B$ to $F$ iff each clause of $F_{\cal T}$ is a weakening\footnote{A clause $C$ is a weakening of a clause $D$ if the literals of $D$ are a subset of the literals of $D$.} of a clause of $H \wedge B$.

Then, knowing $H$, we can recover $B$ as follows: let $B$ be the set of all clauses of $F_{\cal T}$ which are not a weakening of any clause of $H$. Then ${\cal T}$ is a reduction from $\Search_{H \wedge B}$ to $\Search_F$. Note that the width of the clauses in $B$ is at most $O(c^2)$ as every clause in $F$ has width $O(c)$, the trees in ${\cal T}$ have depth $O(c)$, and we have substituted the decision trees for the variables of the clause.

It remains to argue that for every $x \in \{0,1\}^n$, $\Pr_{(\Pi,B) \sim {\cal D}}[B(x)=1] \geq 2/3$. This is immediate from the fact that $\cal T$ is a randomized reduction to $\Search_F$. Indeed, each clause $K \in F_{\cal T}$ comes from some $C_o({\cal T}) \lor \neg p$ for $p \in T_o$ and some $o \in [m]$. That is, $K= \neg p^* \lor \neg p$ for some $0$-path $p^* \in T^C$. %
Hence, by the definition of a randomized reduction,
\begin{align*}
   2/3 &\leq  \Pr_{{\cal T} \sim {\cal D}} \left[ \forall o \in [m] : (o,T(x)) \not \in \Search_F \lor T_o(x) \neq \bot \right]  \\
   &=\Pr_{{\cal T} \sim {\cal D}} \left[ \forall o \in [m] : C_o({\cal T}(x)) \neq 0 \lor T_o(x) \neq \bot  \right]\\
   &= \Pr_{{\cal T} \sim {\cal D}} \left[ \forall o \in [m], \forall~\mbox{$0$-paths }  p^* \in T^{C_o}, \forall~\mbox{$\bot$-paths } p \in T_o :  \neg p^*(x)=1  \lor \neg p(x)=1  \right] \\ 
   &=\Pr_{{\cal T} \sim {\cal D}} \left[ \forall K \in B: K(x)=1 \right]. \tag{$B$ are clauses of $F_{\cal T}$}
\end{align*}
\end{proof}

\section{Proof of \autoref{thm: certified Nisan--Wigderson}}\label{sec: proof of NW in APC1}

\paragraph{Weak designs.} We say that $I_1, I_2, \dots, I_m\subseteq [d]$ is a \emph{weak $(\ell, \rho)$-design} if:\begin{itemize}
    \item For every $i\le m$, $|I_i| = \ell$; and
    \item For every $i\le m$, 
    \[\sum_{j < i}2^{|S_i\cap S_j|} \le \rho \cdot (m-1).\]
\end{itemize}
\begin{theorem}[{\cite{DBLP:journals/jcss/RazRV02}}]\label{thm: weak designs}
    For every $\ell, m\in\N$ and $\rho > 1$, there is a weak $(\ell, \rho)$-design $S_1, S_2, \dots, S_m\subseteq [d]$ with
    \[d = \mleft\lceil \frac{\ell}{\ln \rho} \mright\rceil \cdot \ell.\]
    Moreover, such a family can be found in deterministic time $\poly(m, d)$.
\end{theorem}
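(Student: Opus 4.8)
The plan is to use the standard greedy construction behind the cited result of Raz, Reingold, and Vadhan. Set $a := \lceil \ell/\ln\rho \rceil$ and $d := a\ell$, and partition $[d]$ into $\ell$ consecutive blocks $B_1,\dots,B_\ell$, each of size $a$. I will only consider sets $S\subseteq[d]$ that contain exactly one element of each block; such an $S$ is identified with a function $s\colon[\ell]\to[a]$, where $s(k)$ indexes the chosen element of $B_k$, and for two such sets $S,S'$ with associated functions $s,s'$ we have $|S\cap S'| = |\{k : s(k)=s'(k)\}|$. Write $s_j$ for the function associated with $S_j$.

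The idea is to build $S_1,\dots,S_m$ one at a time, maintaining the invariant that each $S_i$ is of the restricted form above and satisfies $\Phi_i(S_i)\le\rho(m-1)$, where $\Phi_i(S):=\sum_{j<i}2^{|S\cap S_j|}$; this is exactly the weak-design condition for $S_i$. For the inductive step, suppose $S_1,\dots,S_{i-1}$ have been chosen and let $S$ be a uniformly random restricted set, i.e. $s(1),\dots,s(\ell)$ are independent and uniform in $[a]$. By independence across blocks,
\[
\mathbb{E}\bigl[2^{|S\cap S_j|}\bigr]
= \prod_{k=1}^{\ell}\mathbb{E}\bigl[2^{[\,s(k)=s_j(k)\,]}\bigr]
= \prod_{k=1}^{\ell}\Bigl(\tfrac{1}{a}\cdot 2 + \bigl(1-\tfrac1a\bigr)\cdot 1\Bigr)
= \Bigl(1+\tfrac1a\Bigr)^{\ell}.
\]
Since $a \ge \ell/\ln\rho$, this is at most $e^{\ell/a}\le e^{\ln\rho}=\rho$, so $\mathbb{E}[\Phi_i(S)]=\sum_{j<i}\mathbb{E}[2^{|S\cap S_j|}]\le (i-1)\rho\le(m-1)\rho$. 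Hence some restricted $S_i$ attains $\Phi_i(S_i)\le(m-1)\rho$, which closes the induction and yields a weak $(\ell,\rho)$-design over $[d]$ with $d = a\ell = \lceil \ell/\ln\rho\rceil\cdot\ell$ as claimed.

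For the algorithmic claim I would derandomize this by the method of conditional expectations: at step $i$, choose $s(1),\dots,s(\ell)$ greedily, one block at a time, always taking the value in $[a]$ that minimizes the conditional expectation of $\Phi_i$. After fixing a prefix $s(1),\dots,s(t)$, this conditional expectation equals $\sum_{j<i}\bigl(\prod_{k\le t}2^{[\,s(k)=s_j(k)\,]}\bigr)(1+1/a)^{\ell-t}$, a sum of at most $m$ terms with explicit closed-form per-block factors on $O(\ell+\log m)$-bit numbers, hence computable exactly in time $\poly(m,\ell,a)=\poly(m,d)$ (using $a,\ell\le d$). Since the unconditioned expectation is $\le(m-1)\rho$, the final deterministic choice of $S_i$ also satisfies $\Phi_i(S_i)\le(m-1)\rho$; iterating over $i=1,\dots,m$ gives the whole family in $\poly(m,d)$ time.

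I do not expect a genuine obstacle here. The only delicate point is parameter bookkeeping — choosing the block size $a=\lceil\ell/\ln\rho\rceil$ precisely so that $(1+1/a)^{\ell}\le e^{\ell/a}\le\rho$ — together with verifying that the conditional-expectation evaluations stay polynomial time. The reason the conditional-expectations step is needed at all (rather than a plain union/averaging argument) is that the theorem demands the design be produced \emph{explicitly}, not merely shown to exist.
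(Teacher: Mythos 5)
Your proof is correct: the block-partition of $[d]$ into $\ell$ blocks of size $\lceil \ell/\ln\rho\rceil$, the expectation bound $(1+1/a)^{\ell}\le e^{\ell/a}\le\rho$, and the block-by-block derandomization via conditional expectations all go through, and the greedy step at stage $i$ indeed guarantees $\sum_{j<i}2^{|S_i\cap S_j|}\le\rho(m-1)$. The paper itself gives no proof of this statement — it imports it from Raz, Reingold, and Vadhan — and your argument is essentially the same greedy construction used in that cited source, so there is nothing to add beyond noting that the parameter bookkeeping you flag is handled exactly as you describe.
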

    
\paragraph{List-decodable codes.} A pair of functions $(\Enc, \Dec)$ is called an \emph{$(L, 1/2-\eps)$-list-decodable code} if:\begin{itemize}
    \item $\Enc: \{0, 1\}^n \to \{0, 1\}^{2^\ell}$ and $\Dec: \{0, 1\}^{2^\ell} \to (\{0, 1\}^n)^L$ are computable in deterministic polynomial time, and
    \item for every $x \in \{0, 1\}^n$ and $y\in\{0, 1\}^{2^\ell}$ such that $y$ is $(1/2 - \eps)$-close to $\Enc(x)$, $\Enc(x)$ appears in the list $\Dec(y)$.
\end{itemize}

\begin{theorem}[{\cite{DBLP:journals/jcss/SudanTV01}}]\label{thm: list decodable codes}
    For every $n\in\N$ and $\eps > 0$, there exists an $(L, 1/2-\eps)$-list-decodable code with $\ell = O(\log(n/\eps))$ and $L = \poly(1/\eps)$.
\end{theorem}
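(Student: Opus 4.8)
The plan is to exhibit the standard \emph{concatenated} code: an outer Reed--Solomon code whose field symbols are re-encoded by the Walsh--Hadamard code, with parameters tuned so that the overall block length is $\poly(n/\eps)$ while the relative distance stays just below $\tfrac12$, and then to list-decode it by combining a brute-force decoding of the (tiny) inner blocks with the Guruswami--Sudan list-recovery algorithm for the outer code.

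Concretely, first I would fix a power of two $q = \poly(n/\eps)$ large enough (the exact polynomial is pinned down at the end), identify $\mathbb{F}_q$ with $\{0,1\}^{\log_2 q}$, and pack a message $x \in \{0,1\}^n$ into $k := \lceil n/\log_2 q\rceil$ field elements, viewed as the coefficients of a polynomial $p_x$ of degree $< k$. The outer codeword is $(p_x(a))_{a \in \mathbb{F}_q} \in \mathbb{F}_q^q$; the final codeword $\Enc(x)$ replaces each coordinate $p_x(a)$ by its Walsh--Hadamard encoding in $\{0,1\}^q$. The block length is $q \cdot q = q^2 = \poly(n/\eps)$, so $\ell := \log_2(q^2) = O(\log(n/\eps))$, and both $\Enc$ and the decoder $\Dec$ below run in time $\poly(q) = \poly(n/\eps)$. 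By the distance of concatenated codes, $\Enc$ has relative distance at least $(1 - k/q) \cdot \tfrac12 \ge \tfrac12 - \tfrac{k}{2q}$.

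For the decoder, given a received word $r \in \{0,1\}^{2^\ell}$ agreeing with $\Enc(x)$ on a $(\tfrac12+\eps)$-fraction of coordinates, split $r$ into its $q$ inner blocks $r_1, \dots, r_q \in \{0,1\}^q$. For each $i$, brute-force over $\mathbb{F}_q$ to build the list $\Lambda_i$ of all symbols $\sigma$ whose Hadamard encoding agrees with $r_i$ on a $\ge (\tfrac12+\tfrac\eps2)$-fraction of positions; since Hadamard has relative distance $\tfrac12$, the Johnson bound gives $|\Lambda_i| = O(1/\eps^2)$. A Markov-type averaging over the per-block agreements (whose average is $\ge \tfrac12 + \eps$) shows $p_x(i) \in \Lambda_i$ for at least an $\eps$-fraction of the outer positions $i$. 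It then suffices to solve the list-recovery task: enumerate all degree-$<k$ polynomials $p$ with $p(i) \in \Lambda_i$ for $\ge \eps q$ values of $i$, where each $|\Lambda_i| \le \ell' := O(1/\eps^2)$. The Guruswami--Sudan algorithm does this in polynomial time, outputting $\poly(1/\eps)$ candidates, whenever $\eps q > \sqrt{k\, q\, \ell'}$, i.e.\ whenever $q = \Omega(k\ell'/\eps^2) = \Omega(k/\eps^4)$; since $k = O(n/\log q)$, picking $q$ to be the least power of two above this threshold (and also $\ge k/\eps^2$, so that the distance bound yields relative distance $> \tfrac12 - \eps^2/2$ and the Johnson bounds above apply) keeps $q = \poly(n/\eps)$. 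Finally, for each candidate polynomial I unpack the implied message, explicitly test whether its encoding is within relative distance $\tfrac12 - \eps$ of $r$, and discard the rest; the surviving list has size $L = \poly(1/\eps)$ and provably contains $x$.

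The main obstacle is the simultaneous balancing of parameters: one needs the inner alphabet $q$ small enough that the block length $q^2$ stays polynomial in $n/\eps$ (hence $\ell = O(\log(n/\eps))$), yet large enough that (a) the concatenated relative distance exceeds $\tfrac12 - O(\eps^2)$ so the Johnson bound yields an $O(1/\eps^2)$ list size, and (b) the residual outer agreement $\eps q$ clears the Guruswami--Sudan list-recovery radius $\sqrt{k q \ell'}$ with $\ell' = O(1/\eps^2)$. Verifying that a single $q = \poly(n/\eps)$ meets all three constraints at once, and arranging $q$ to be a power of two so the Hadamard inner code is clean, is the only place real care is needed; the distance of concatenated codes, the Johnson bound, and the correctness and running time of Guruswami--Sudan list recovery are standard and may be cited.
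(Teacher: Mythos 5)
Your construction is correct and is essentially the standard argument behind the cited result: the paper itself only invokes \cite{DBLP:journals/jcss/SudanTV01}, whose codes are exactly of this concatenated polynomial-plus-Hadamard form with Sudan-style list recovery for the outer code and a final Johnson-bound pruning. Your parameter balancing ($q \ge C\,k/\eps^4$, inner radius $\tfrac12+\tfrac\eps2$, pruning at radius $\tfrac12-\eps$ against a code of relative distance $\ge \tfrac12-\eps^2/2$) checks out and yields $\ell=O(\log(n/\eps))$ and $L=O(1/\eps^2)=\poly(1/\eps)$ as required.
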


In what follows, for notational convenience, we will think of length-$2^\ell$ strings $f \in \{0, 1\}^{2^\ell}$ as (the truth tables of) $\ell$-bit Boolean functions $f: \{0, 1\}^\ell\to\{0, 1\}$.

\def\Hyb{{\mathsf{Hyb}}}
{
\def\EitherNotInjSurjPair{either
    \[
        [val]\times [D]\xrightleftharpoons[G_<^S(f, -)]{H_<^S(f, -)} (S\cupdot [\eps M])\times [D]\quad\text{or}\quad
        S\times [D]\xrightleftharpoons[G_>^S(f, -)]{H_>^S(f, -)} ([val]\cupdot [\eps M])\times [D]
    \]
}
\ThmNisanWigdersoninAPC*
}

\begin{proof}
    Let $\eps'$ be the biggest (inverse) power of $2$ such that $\eps' \le \eps / m$. Let $(\Enc, \Dec)$ be an $(L, 1/2-\eps')$-list-decodable code guaranteed by \autoref{thm: list decodable codes} with $\ell = O(\log n/\eps') = O(\log (nm/\eps))$ and $L \le \poly(1/\eps') \le \poly(m/\eps)$. Let $I_1, I_2, \dots, I_m\subseteq [d]$ be a weak $(\ell, \rho)$-design guaranteed by \autoref{thm: weak designs} with $d = O(\ell^2 / \log\rho)$. Given $f \in \{0, 1\}^n$ and $z \in \{0, 1\}^d$ as inputs, the generator first computes $\tilde{f} := \Enc(f)$, and then outputs
	\[\PRG(f, z) := (\tilde{f}(z|_{I_1}), \tilde{f}(z|_{I_2}), \dots, \tilde{f}(z|_{I_m})).\]
    The classical proof of Nisan--Wigderson \cite{NisanW94} shows that for every $S\subseteq \{0, 1\}^m$ and every $f \in \{0, 1\}^n$ that is worst-case hard against $S$-oracle circuits, $f$ provides an additive approximation of $|S|$, i.e.,
    \[\mleft|\frac{|S\cap \PRG_f|}{D} - \frac{|S|}{M}\mright| \le \eps.\]
    The \emph{proof} of the above fact goes through a \emph{hybrid} argument and considers the following intermediate generators. For each $0\le i\le m$, let $\Hyb_i:\{0, 1\}^d \times \{0, 1\}^m \to \{0, 1\}^m$ denote the generator that takes $z\in\{0, 1\}^d$ and $r\in\{0, 1\}^m$ as inputs, and outputs
	\[\Hyb_i(z, r) := (\tilde{f}(z|_{I_1}), \dots, \tilde{f}(z|_{I_i}), r_{i+1}, \dots, r_m).\]
	Let $V_i := \{(z, r): \Hyb_i(z, r) \in S\}$; intuitively, $|V_i|$ is an estimation of $|S|\cdot 2^d$ by the generator $\Hyb_i$. Suppose $f$ is hard, then for every $1\le i \le m$, $|V_{i-1}|$ is close to $|V_i|$. It follows that $|V_0|$ is close to $|V_m|$. Since $|V_0| = 2^d\cdot |S|$ and $|V_m| = |S\cap \PRG_f| \cdot 2^m = val\cdot 2^d$, $val$ is a good estimation of $|S|$.

    In a nutshell, the proof of \cite[Theorem 2.7]{Jerabek07} proceeds by constructing injection-surjection pairs witnessing $|V_{i-1}|\lesssim |V_i|$ and $|V_{i-1}|\gtrsim |V_i|$ for each $i$. Composing all these injection-surjection pairs gives the final injection-surjection pairs witnessing $|V_0| \lesssim |V_m|$ and $|V_0|\gtrsim |V_m|$ respectively.

    \def\diff{{\mathit{diff}}}
    Fix $1\le i \le m$, we now aim to construct injection-surjection pairs witnessing the inequalities $|V_{i-1}|\lesssim |V_i|$ and $|V_{i-1}| \gtrsim |V_i|$. That is, letting $\diff := \eps' \cdot 2^{m+d}$, we will construct injection-surjection pairs $V_i \xrightleftharpoons[G_i]{H_i} V_{i-1}\cupdot[\diff]$ and $V_{i-1} \xrightleftharpoons[G'_i]{H'_i} V_i\cupdot[\diff]$ that depends on $f$. Moreover, if $f$ is indeed a ``hard function'', then these injection-surjection pairs will be valid, i.e., $G_i\circ H_i$ is the identity map on $V_i$ and $G'_i \circ H'_i$ is the identity map on $V_{i-1}$.

    \def\aux{{\mathsf{aux}}}
    Let $z\in\{0, 1\}^d$ and $r\in\{0, 1\}^m$, we denote $z' := z|_{I_i}$ and $\aux := (z|_{[d]\setminus I_i}, r_{[m]\setminus i})$. Note that there is a one-to-one correspondence between $(z, r)$ and $(z', r_i, \aux)$. Now given $i, \aux$, we define 
	\begin{align*}
	X_{i, \aux} :=&\,\{(z', r_i): (\tilde{f}(z|_{I_1}), \tilde{f}(z|_{I_2}), \dots, \tilde{f}(z|_{I_{i-1}}), r_i, r_{i+1}, \dots, r_n) \in S\}\text{ and}\\
	Y_{i, \aux} :=&\, \{(z', b): (z', \tilde{f}(z')) \in X_{i, \aux}\}.
	\end{align*}
	
	Then, $V_{i-1} = \bigcup_\aux \mleft(X_{i, \aux}\times \{\aux\}\mright)$ and $V_i = \bigcup_\aux \mleft(Y_{i, \aux}\times \{\aux\}\mright)$.
	To show that $|V_{i-1}|\approx |V_i|$, it suffices to show that $|X_{i, \aux}| \approx |Y_{i, \aux}|$ for every $\aux$ (under the assumption that $f$ is ``hard''); this is exactly what the next claim shows.
    \begin{claim}
		If $\mleft||X_{i, \aux}| - |Y_{i, \aux}|\mright| > \eps'2^{\ell+1}$, then given $(i, \aux)$ and additional $\rho\cdot (m-1) + 2$ advice bits, it is possible to recover a string $\tilde{f}_{\sf apx}$ that is $(1/2 + \eps')$-close to $\tilde{f}$ in deterministic $\poly(\rho m, 2^\ell)$ time with oracle access to $S$. Moreover, given $(i, \aux)$ and $\tilde{f}$, the additional advice bits can be computed in deterministic $\poly(\rho m, 2^\ell)$ time with oracle access to $S$.
	\end{claim}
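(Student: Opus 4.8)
The plan is to run a feasible, advice-bounded version of the Nisan--Wigderson next-bit predictor argument. First I would analyze the membership test for $X_{i,\aux}$. Deciding whether $(z',r_i)\in X_{i,\aux}$ amounts to forming the $m$-bit string $(\tilde f(z|_{I_1}),\dots,\tilde f(z|_{I_{i-1}}),r_i,r_{i+1},\dots,r_m)$, where $z$ is reconstructed from $z'$ on $I_i$ and from $\aux$ on $[d]\setminus I_i$ and $(r_{i+1},\dots,r_m)$ is the part of $\aux$, and then querying the oracle $S$ on it. Since the coordinates of $z|_{I_j}$ lying outside $I_i$ are fixed by $\aux$, the bit $\tilde f(z|_{I_j})$ depends on $z'$ only through $z'|_{I_i\cap I_j}$, so it is determined by a lookup table $T_j$ of $2^{|I_i\cap I_j|}$ bits. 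The weak $(\ell,\rho)$-design property gives $\sum_{j<i}2^{|I_i\cap I_j|}\le\rho(m-1)$, so all these tables together use at most $\rho(m-1)$ advice bits; and since each $T_j$ is literally a restriction of $\tilde f=\Enc(f)$, it can be written down from $(i,\aux)$ and $\tilde f$ with no queries to $S$.

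Next I would extract the predictor. Put $D_b(z'):=\mathbf 1[(z',b)\in X_{i,\aux}]$ for $b\in\{0,1\}$; each $D_b(z')$ is computable from $(z',\aux)$, the tables $T_1,\dots,T_{i-1}$, and one $S$-query. Expanding the definition of $Y_{i,\aux}$ gives the identity
\[ |X_{i,\aux}|-|Y_{i,\aux}| \;=\; \sum_{z'\in\{0,1\}^\ell}\bigl(D_{1-\tilde f(z')}(z')-D_{\tilde f(z')}(z')\bigr), \]
with every summand in $\{-1,0,+1\}$; a summand is nonzero exactly when $D_0(z')\ne D_1(z')$. Use one advice bit to record $\mathrm{sgn}\bigl(|X_{i,\aux}|-|Y_{i,\aux}|\bigr)$ and assume it is $+$ (the other case is symmetric). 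Define $P(z')$ to output, when $D_0(z')\ne D_1(z')$, the unique $b$ with $D_b(z')=0$, and otherwise to output a constant $c$ recorded as the second advice bit, taken to be the majority value of $\tilde f$ over $\{z':D_0(z')=D_1(z')\}$. On the event $D_0\ne D_1$ the summand equals $+1$ precisely when $P(z')=\tilde f(z')$, so counting correct predictions against the displayed identity yields
\[ \Pr_{z'}\bigl[P(z')=\tilde f(z')\bigr]\;\ge\;\tfrac12+\tfrac{1}{2^{\ell+1}}\bigl(|X_{i,\aux}|-|Y_{i,\aux}|\bigr)\;>\;\tfrac12+\eps' \]
under the hypothesis $|X_{i,\aux}|-|Y_{i,\aux}|>\eps'2^{\ell+1}$. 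Thus the truth table $\tilde f_{\sf apx}$ of $P$ agrees with $\tilde f$ on a $(\tfrac12+\eps')$-fraction of inputs, which is exactly what the claim asserts (and is what $\Dec$ needs to place $f$ in its output list of size $L$).

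Finally I would check the resource bounds. A single evaluation $P(z')$ reconstructs $z$, does at most $i-1$ table lookups, and forms and queries two $m$-bit strings against $S$, in $\poly(\rho m,2^\ell)$ time (the tables have total size $\le\rho(m-1)$); materializing all $2^\ell$ entries of $\tilde f_{\sf apx}$ then costs $\poly(\rho m,2^\ell)$ time with an $S$-oracle. For the ``moreover'' direction, given $(i,\aux)$ and $\tilde f$ one obtains the $T_j$ by restriction (no $S$-queries), the sign bit by computing $|X_{i,\aux}|$ and $|Y_{i,\aux}|$ via a sweep over all $(z',r_i)$ with $S$-queries, and the bit $c$ by a second sweep over $z'$, again in $\poly(\rho m,2^\ell)$ time. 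The step that needs the most care is the bookkeeping: verifying that the $X_{i,\aux}$-test decomposes cleanly into the $2^{|I_i\cap I_j|}$-size tables so the advice count is exactly $\rho(m-1)+2$, and pinning down the case analysis for $P$ so the agreement strictly exceeds $\tfrac12+\eps'$. Both are routine but must be exact, since the downstream composition of these per-hybrid bounds into \Jerabek's injection--surjection pairs for $|V_{i-1}|\approx|V_i|$, and ultimately into the $\Comp/\Decomp$ construction, inherits these parameters.
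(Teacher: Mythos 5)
Your proposal is correct and matches the paper's proof in all essentials: the same design-based lookup tables account for the $\rho(m-1)$ advice bits, and your identity $|X_{i,\mathsf{aux}}|-|Y_{i,\mathsf{aux}}|=\sum_{z'}\bigl(D_{1-\tilde f(z')}(z')-D_{\tilde f(z')}(z')\bigr)$ is equivalent to the paper's Hamming-distance identity $\Delta(\tilde f^1,\tilde f)-\Delta(\tilde f^0,\tilde f)$. The only (cosmetic) difference is how the two extra advice bits are used: you record a sign bit plus a majority fallback for a next-bit predictor, whereas the paper records which of $\tilde f^0,\tilde f^1$ falls outside the band $[(1/2-\varepsilon')2^\ell,(1/2+\varepsilon')2^\ell]$ and whether to complement it --- both packagings give the same $(1/2+\varepsilon')$ agreement with $\tilde f$ and the same resource bounds.
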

	\begin{claimproof}[Proof Sketch]
        For every $b \in \{0, 1\}$, define a string $\tilde{f}^b \in \{0, 1\}^{2^\ell}$, where for every $z' \in \{0, 1\}^\ell$, the $z'$-th bit of $\tilde{f}^b$ is $1$ if and only if $(z', b) \in X_{i, \aux}$. It can be shown that
        \[|X_{i, \aux}| - |Y_{i, \aux}| = \Delta(\tilde{f}^1, \tilde{f}) - \Delta(\tilde{f}^0, \tilde{f}),\]
        where $\Delta(\cdot, \cdot)$ denotes the Hamming distance of two binary strings.
        Since $\mleft||X_{i, \aux}| - |Y_{i, \aux}|\mright| > \eps'2^{\ell+1}$, there must be some $b \in \{0, 1\}$ such that $\Delta(\tilde{f}^b, \tilde{f}) \notin [(1/2-\eps')2^\ell, (1/2+\eps')2^\ell]$.

        Now suppose that $\tilde{f}$ is fixed and $(i, \aux)$ is given. For each $j < i$, $\tilde{f}(z|_{I_j})$ is a function over $z'$ that only depends on $|S_i\cap S_j|$ bits of $z'$. Hence, the truth table of this function can be recorded in $2^{|S_i\cap S_j|}$ bits. If we write down the truth tables of $\tilde{f}(z|_{I_j})$ for every $j < i$ as advice, this only costs
        \[\sum_{j < i}2^{|S_i \cap S_j|}\le \rho\cdot (m-1)\]
        advice bits. We append two additional advice bits $b, b' \in \{0, 1\}$, where $b$ indicates that $\Delta(\tilde{f}^b, \tilde{f}) \notin [(1/2-\eps')2^\ell, (1/2+\eps')2^\ell]$ and $b'$ indicates whether $\Delta(\tilde{f}^b, \tilde{f})$ is above $1/2$ or not. It is easy to see that we can recover a string that is $(1/2+\eps')$-close to $\tilde{f}$ given $(i, \aux)$ and these advice bits; moreover, these advice bits can be computed in deterministic polynomial time given $(i, \aux)$ and $f$ as inputs.
	\end{claimproof}
    Composing the above claim with the list-decodable code $(\Enc, \Dec)$, we obtain the following corollary:
    \begin{corollary}\renewcommand\qedsymbol{$\diamond$}\label{cor: Nisan Wigderson compression}
        If $||X_{i, \aux}|-|Y_{i, \aux}|| > \eps'2^{\ell+1}$, then given $(i, \aux)$ and additional $\rho\cdot (m-1) + \log L + 2$ advice bits, it is possible to compute $f$ in deterministic $\poly(\rho m, 2^\ell)$ time with oracle access to $S$. Moreover, given $(i, \aux)$ and $f$, the additional advice bits can be computed in deterministic $\poly(\rho m, 2^\ell)$ time with oracle access to $S$.
        \qed
    \end{corollary}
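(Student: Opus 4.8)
The plan is to derive this corollary directly from the preceding Claim by composing its output with the list-decoding guarantee of the code $(\Enc,\Dec)$; there is no new idea needed beyond chaining the two and checking that parameters line up.

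For the forward direction, assume the hypothesis $\mleft||X_{i,\aux}| - |Y_{i,\aux}|\mright| > \eps' 2^{\ell+1}$, so that the Claim applies. Given $(i,\aux)$ together with the $\rho\cdot(m-1)+2$ advice bits produced by the Claim, I would first run the Claim's recovery procedure to obtain, in deterministic $\poly(\rho m, 2^\ell)$ time with oracle access to $S$, a string $\tilde f_{\mathsf{apx}}\in\{0,1\}^{2^\ell}$ that agrees with $\tilde f := \Enc(f)$ on at least a $(1/2+\eps')$ fraction of coordinates (i.e.\ is $(1/2-\eps')$-close to $\Enc(f)$ in the sense of \autoref{thm: list decodable codes}); the sign advice bit $b'$ used inside the Claim is exactly what lets us normalize so that $\tilde f_{\mathsf{apx}}$, rather than its complement, is the close string. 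Since $(\Enc,\Dec)$ has decoding radius $1/2-\eps'$, \autoref{thm: list decodable codes} guarantees that $f$ appears among the at most $L$ entries of the list $\Dec(\tilde f_{\mathsf{apx}})$. I would then spend $\lceil\log L\rceil$ further advice bits to record the index of $f$ in this list and simply output the corresponding entry. The total advice is $\rho\cdot(m-1)+\log L+2$ bits, and every step---the Claim's procedure and the single evaluation of $\Dec$, which runs in $\poly(2^\ell)$ time---is deterministic and runs in $\poly(\rho m, 2^\ell)$ time with an $S$-oracle; note that $2^\ell = \poly(nm/\eps)$ and $L = \poly(m/\eps)$, so $\log L$ fits comfortably inside the stated advice budget.

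For the ``moreover'' direction, given $(i,\aux)$ and $f$ I would first compute $\tilde f = \Enc(f)$ in $\poly(2^\ell)$ time and invoke the ``moreover'' half of the Claim to produce the $\rho\cdot(m-1)+2$ advice bits from $(i,\aux)$ and $f$ in deterministic $\poly(\rho m, 2^\ell)$ time with an $S$-oracle. To obtain the extra $\lceil\log L\rceil$ bits I would re-run the forward recovery procedure on $(i,\aux)$ together with those $\rho\cdot(m-1)+2$ bits to reconstruct $\tilde f_{\mathsf{apx}}$, evaluate $\Dec(\tilde f_{\mathsf{apx}})$, scan its $\le L$ entries for the one equal to $f$ (which exists by the list-decoding guarantee, since $\tilde f_{\mathsf{apx}}$ is $(1/2-\eps')$-close to $\Enc(f)$), and output that index in binary---again deterministic and in $\poly(\rho m, 2^\ell)$ time with an $S$-oracle.

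I do not expect any real obstacle here; the only points that need care are (i) matching the approximation parameter delivered by the Claim, namely $\tilde f_{\mathsf{apx}}$ within relative distance $1/2-\eps'$ of $\tilde f$, to the decoding radius $1/2-\eps'$ of the code, which is exactly why $(\Enc,\Dec)$ was instantiated with that radius via \autoref{thm: list decodable codes}, and (ii) verifying that $\lceil\log L\rceil$ does not exceed the advice bound, which holds since $L=\poly(m/\eps)$ under the parameter settings $\ell=O(\log(nm/\eps))$ fixed at the start of the proof of \autoref{thm: certified Nisan--Wigderson}.
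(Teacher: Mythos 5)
Your proposal is correct and matches the paper's argument: the corollary is obtained exactly by composing the Claim's recovery procedure (which yields a string within relative distance $1/2-\eps'$ of $\Enc(f)$) with the list-decoding guarantee of \autoref{thm: list decodable codes}, spending $\lceil\log L\rceil$ extra advice bits to select $f$ from the list, and the "moreover" part follows by running the same pipeline forward given $f$. The parameter checks you flag (decoding radius matching $\eps'$, and $\log L$ fitting in the advice budget) are the same ones implicit in the paper's choice of $(\Enc,\Dec)$ and $k$.
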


    Let $\Decomp^S(i, \aux, \alpha)$ be the procedure for computing $f$ from $(i, \aux)$ and the advice bits $\alpha$ as asserted in \autoref{cor: Nisan Wigderson compression}. We now say $f$ is ``hard'' if $f$ is not in the range of $\Decomp^S$. (Note that $\Decomp^S$ takes $k := \log m + d-\ell+m-1+\rho(m-1) + \log L + 2 \le d + (\rho + 1)(m-1) + O(\log(m/\eps))$ bits. As long as this is less than $n$ bits, a hard $f$ must exist.)
	
	Suppose that we are given some $f$ that is hard. We can create injection-surjection pairs between $X_{i, \aux}$ and $Y_{i, \aux}$ by brute force; this only takes deterministic $\poly(2^\ell)$ time with oracle access to $S$. As an example, we construct 
    \[Y_{i, \aux} \xrightleftharpoons[G_{i, \aux}]{H_{i, \aux}} X_{i, \aux}\cupdot [\eps' \cdot 2^{\ell+1}].\]
    \begin{itemize}
        \item $G_{i, \aux}(v)$: If $v \in X_{i, \aux}$ then let $p$ be the integer such that $v$ is the (lexicographically) $p$-th smallest element of $X_{i, \aux}$ (the smallest element is the $0$-th); if $v\in [\eps' \cdot 2^{\ell+1}]$ then let $p := |X_{i, \aux}| + v$. Return the $p$-th smallest element of $Y_{i, \aux}$; if $|Y_{i, \aux}| \ge p$ then return an arbitrary element (say the smallest one).
        \item $H_{i, \aux}(v)$: Suppose that $v$ is the $p$-th smallest element of $Y_{i, \aux}$. If $p < |X_{i, \aux}|$ then return the $p$-th smallest element of $X_{i, \aux}$; otherwise return $p - |X_{i, \aux}| \in [\eps' \cdot 2^{\ell+1}]$.
        \item It is straightforward to verify that $G_{i, \aux} \circ H_{i, \aux}$ is the identity map and that $G_{i, \aux}$ and $H_{i, \aux}$ are computable in deterministic $\poly(2^\ell)$ time.
    \end{itemize}

	We can similarly construct $X_{i, \aux} \xrightleftharpoons[G'_{i, \aux}]{H'_{i, \aux}} Y_{i, \aux} \cupdot [\eps' 2^{\ell+1}]$ such that $G'_{i, \aux} \circ H'_{i, \aux}$ is the identity map and $G'_{i, \aux}$ and $H'_{i, \aux}$ are computable in deterministic $\poly(2^\ell)$ time.
	
	Now we describe the functions $G_i, H_i, G'_i, H'_i$.
	\begin{itemize}
		\item Let $v$ be the input of $G_i$. If $v\in V_{i-1}$ then write $v = (z, r) = (z', r_i, \aux)$; if $v \in [\diff]$ then let $\aux := \lfloor v / (\eps' 2^{\ell+1})\rfloor$ (treated as both a number in $[2^{m+d-\ell-1}]$ and a length-$(m+d-\ell-1)$ string) and $v' := v - \aux \cdot \eps' 2^{\ell+1}$. Assuming $f$ is hard, we have $v \in X_{i, \aux} \cupdot [\eps' 2^{\ell+1}]$. Let $u := G_{i, \aux}(v) \in Y_{i, \aux}$, write $u = (z^u, r_i^u)$ and return $G_i(v) := (z^u, r_i^u, \aux)$.
		\item Let $u \in V_i$ be the input of $H_i$, and write $u := (z', r_i, \aux)$ where $(z', r_i) \in Y_{i, \aux}$. We can compute $v := H_{i, \aux}(z', r_i) \in X_{i, \aux} \cupdot [\eps' 2^{\ell+1}]$. If $v\in X_{i, \aux}$ then we write $v = (z^v, r_i^v)$ and return $(z^v, r_i^v, \aux) \in V_{i-1}$; if $v\in [\eps' 2^{\ell+1}]$ then we return $\aux \cdot \eps' 2^{\ell+1} + v \in [\diff]$.
		\item The definitions of $G'_i, H'_i$ are analogous.
	\end{itemize}

    It is easy to see that if $f$ is indeed hard, then $G_i\circ H_i$ is the identity map. However, if $f$ is \emph{not hard}, there is no guarantee that $G_i\circ H_i$ is the identity map. Nevertheless we still gain something: Given any witness $u \in V_i$ such that $G_i(H_i(u)) \ne u$, if we write $u = (z', r_i, \aux)$ then we have $||X_{i, \aux}| - |Y_{i, \aux}|| > \eps' 2^{\ell+1}$. By \autoref{cor: Nisan Wigderson compression}, we can compute $(i, \aux, \alpha)$ from this witness $u$ deterministically such that $\Decomp^S(i, \aux, \alpha) = f$, i.e., we found a \emph{witness for the non-hardness of $f$} as well! In summary, let $f$ be a purported hard function, we can \emph{either} use $f$ to perform approximate counting and obtain injection-surjection pairs $(G_i, H_i)$ \emph{or}, if $(G_i, H_i)$ fails to be an injection-surjection pair, \emph{exploit this failure} to compress $f$.

	Finally, we can compose the functions $\{G_i\}$, $\{H_i\}$, $\{G'_i\}$, and $\{H'_i\}$ to obtain $G_<, H_<, G_>, H_>$.
	
	\begin{itemize}
		\item Let $v \in (S\cupdot [\eps M])\times [D] = V_0 \cupdot [m\cdot \diff]$ be the input of $G_<$. For each $i$ from $1$ to $m$, if currently we have $v\in V_{i-1} \cupdot [\diff]$, then we update $v \gets G_i(v)$; otherwise $v\in [\diff, m\cdot \diff)$ and we update $v\gets v - \diff$.
		\item Let $v \in V_m = [val] \times [D]$ be the input of $H_<$. For each $i$ from $m$ downto $1$, if currently we have $v\in V_i$, then we update $v\gets H_i(v)$; otherwise $v$ is some number in $[m\cdot \diff]$ and we update $v\gets v + \diff$.
		\item The definitions of $G_>, H_>$ are analogous.
	\end{itemize}

	It is easy to see that the functions
	\[[val] \times [D]\xrightleftharpoons[G_<]{H_<}(S\cupdot[\eps M])\times [D]\]
	satisfy the following property: given any $w \in [val]\times [D]$ such that $G_<(H_<(w)) \ne w$, we can compute in deterministic $\poly(n, 2^\ell) = \poly(n/\eps)$ time a ``compression'' $\Comp^S(f, w)$ of $f$ that decompresses to $f$ via $\Decomp^S$. Furthermore, $G_<$ and $H_<$ themselves can be computed in deterministic $\poly(n/\eps)$ time. The conclusions for $G_>$ and $H_>$ can be proved similarly.
\end{proof}

\end{document}